\documentclass[twoside,12pt]{article} 
\usepackage[margin=1.0in]{geometry}

\usepackage{amssymb,amsmath,amsthm}
\usepackage{mathtools}
\usepackage{bm,lscape}
\usepackage{bbm}
\usepackage{algpseudocode}
\usepackage[ruled,vlined]{algorithm2e}
\usepackage{mathrsfs,dsfont}
\usepackage{wasysym}
\RequirePackage[
    colorlinks=true,
    linkcolor=blue,
    urlcolor=blue,
    citecolor=blue]{hyperref}
\RequirePackage{natbib}
\usepackage[utf8]{inputenc}
\usepackage[english]{babel}

\usepackage{graphicx}
\usepackage{color}
\usepackage{enumitem}
\usepackage{array}
\usepackage{rotating}
\usepackage{authblk}
\usepackage{appendix}

\allowdisplaybreaks

\def\ab{\mathbf{a}}

\def\xb{\mathbf{x}}

\def\vb{\mathbf{v}}

\def\ub{\mathbf{u}}

\def\thetab{\boldsymbol{\theta}}

\def\simind{\stackrel{\mbox{\scriptsize{ind}}}{\sim}}
\def\simiid{\stackrel{\mbox{\scriptsize{iid}}}{\sim}}

\def\thetab{{\boldsymbol \theta}}

\def\vb{\mathbf{v}}
\def\xb{\mathbf{x}}
\def\kb{\mathbf{k}}

\def\gb{\mathbf{g}}
\def\hb{\mathbf{h}}
\def\rb{\mathbf{r}}

\def\Mb{\mathbf{M}}
\def\Hb{\mathbf{H}}
\def\epsilonb{{\boldsymbol\epsilon}}

\def\ab{\mathbf{a}}

\def\vb{\mathbf{v}}

\def\ub{\mathbf{u}}

\def\thetab{\boldsymbol{\theta}}

\allowdisplaybreaks

\def\Xb{\mathbf{X}}
\def\Sb{\mathbf{S}}

\newcommand{\Pn}{\mathbb{P}^{(n)}}

\newcommand{\Z}{\mathbb{Z}}

\newcommand{\R}{\mathbb{R}}
\newcommand{\E}{\mathds{E}}

\newcommand{\N}{\mathbb{N}}
\renewcommand{\P}{\mathds{P}}

\newcommand{\PP}{\mathds{P}}
\newcommand{\ddr}{\mathrm{d}}
\newcommand{\EE}{\mathds{E}}

\newcommand{\indicator}{\mathrm{I}}

\DeclareMathOperator*{\argmax}{arg\,max}

\newtheorem{thm}{Theorem}[section]
\newtheorem{lem}[thm]{Lemma}

\newtheorem{prp}[thm]{Proposition}

\newtheorem{remark}[thm]{Remark}

\providecommand{\keywords}[1]
{
  \small	
  \textbf{\textit{Keywords:}} #1
}

\begin{document}

\title{Quasi-Bayes empirical Bayes estimation of sums of random variables}


\author[1]{Stefano Favaro\thanks{stefano.favaro@unito.it}}
\author[2]{Sandra Fortini\thanks{sandra.fortini@unibocconi.it}}
\affil[1]{\small{Department of Economics and Statistics, University of Torino and Collegio Carlo Alberto, Italy}}
\affil[2]{\small{Department of Decision Sciences, Bocconi University, Italy}}

\maketitle

\begin{abstract}
The estimation of sums of functions of observable and unobservable variables is a long-standing problem in statistics with applications across many domains. Empirical Bayes methods provide a natural framework for this task under mixture models, but existing approaches often rely on restrictive parametric assumptions or apply only to limited classes of functionals in nonparametric settings. We propose a nonparametric methodology, referred to as quasi-Bayes empirical Bayes, that addresses these limitations through a recursive estimation of the mixing distribution based on Newton’s algorithm. The resulting plug-in estimate of the target sum is computationally efficient, scalable, and applicable to a broad class of utility functions, while enabling uncertainty quantification via asymptotic credible intervals derived from a Gaussian central limit theorem. We establish large sample asymptotic theoretical guarantees by proving a merging between the quasi-Bayes and Bayes estimates and by showing consistency under a correctly specified frequentist model. Synthetic-data and real-data analyses demonstrate the practical accuracy and stability of the method, with performance comparable to, and in some cases better than, existing empirical Bayes procedures.
\end{abstract}

\keywords{Empirical Bayes, Newton’s algorithm, quasi-Bayes, sum of random variables}


\section{Introduction}\label{sec1}
\subsection{Problem setting}

Consider $n\geq1$ i.i.d. real-valued random vectors $(X_{1},\theta_{1}),\ldots,(X_{n},\theta_{n})$ with unknown distribution, where the $X_{i}$'s are observable and the $\theta_{i}$'s are either unobservable or constants. Given a utility function $u:\mathbb{R}\times\mathbb{R}\rightarrow\mathbb{R}$, a broad class of statistical problems concerns the estimation of sums
\begin{equation}\label{eq:sum}
S_{n}=\sum_{i=1}^{n}u(X_{i},\theta_{i});
\end{equation}
see, for instance, \citet{Rob(88),Rob(00),Zha(05)}. As an illustration we recall the motorist example of \citet{Rob(77)}. For each individual $i$ in a pool of $n$ motorists, let $X_{i}$ denote the number of accidents this year, $Y_{i}$ the (unobservable) number of accidents the next year, and $\theta_{i}$ the expected number of accidents or intensity, $i=1,\ldots,n$. Depending on the choice $u$, a variety of quantities can be expressed in the form \eqref{eq:sum}. For example, if $u(x,\theta)=\theta I(x=\kappa)$ with $\kappa\in\mathbb{N}$, then $S_{n}$ is the total intensity of motorists with $\kappa$ accidents this year; if $u(x,\theta)=I(x>\theta)$ then $S_{n}$ is the number of motorists with a number of accidents this year above their intensity; if $u(x,y)=yI(x\leq\kappa)$ with $\kappa\in\mathbb{N}$, then $S_{n}$ is the total number of accidents next year for motorists with no more than $\kappa$ accidents this year. 

Early studies on the estimation of sums of random variables $S_{n}$ focused primarily on treatment effects in clinical trials  \citep{Robz(88),Rob(89),Rob(91)}. A much broader range of applications is reviewed in \citet{Zha(05)}, highlighting both the generality and practical relevance of this class of problems. Examples include species sampling problems in biological and physical sciences, e.g., the estimation of the number of species in a population, the missing mass, and the number of unseen species \citep{Goo(53),Goo(56),Efr(76),Bun(93),Mao(02)}; data confidentiality, e.g., the estimation of measures of disclosure risks   \citep{Bet(90),Ski(02),Roc(19)}; and network data analysis, e.g., the estimation of node degrees based on source-destination data \citep{Var(96),Teb(98),Zha(05)}.

\subsection{Background on empirical Bayes}
\citet{Rob(00)} and \cite{Zha(05)} developed a general theory for the efficient estimation of $S_{n}$, based on a mixture model for the unknown distribution of the $(X_{i},\theta_{i})$'s. Specifically, let
\begin{align}\label{eq:model}
X_i\mid\theta_{i} & \quad\simind\quad k(\cdot\mid\theta_{i})\qquad i=1,\ldots,n\\[-0.2cm]
\notag\theta_{i} & \quad\simiid\quad G,
\end{align}
where $k(\cdot\mid\theta)$ is a known kernel density with respect to a dominant measure $\lambda$, and $G$ is the unknown mixing distribution on $\Theta\subseteq\mathbb{R}$. Within the model \eqref{eq:model}, empirical Bayes (EB) provides a principled approach to estimate $S_{n}$ \citep{Rob(56)}. Specifically, interpreting $G$ as a prior distribution on the $\theta_{i}$'s, the optimal estimate of $S_{n}$ under the squared-error loss is the posterior mean
\begin{equation}\label{eq:post_exp}
\hat{S}_{n}(G)=\E_{G}[S_{n}\mid X_{1:n}]=\sum_{i=1}^{n}\E_{G}[u(X_{i},\theta_{i})\mid X_{i}],
\end{equation}
i.e. the Bayes estimate. The EB approach then consists of estimating $G$ in $\eqref{eq:post_exp}$ from the observable data $X_{i}$'s, a strategy commonly referred to as $g$-modeling; see, for instance, \citet{Efr(14),Efr(19)}.

Parametric $g$-modeling assumes that $G$ belongs to the parametric family $\{G_{\tau}:\tau\in\mathcal{T}\}$, where $\mathcal{T}$ is a Euclidean space. An estimate of $G$ is then obtained by replacing $\tau$ with an estimate $\hat{\tau}_{n}$, yielding $G_{\hat{\tau}_{n}}$. Substituting $G_{\hat{\tau}_{n}}$ into \eqref{eq:post_exp} produces an EB estimate of $S_{n}$, which is asymptotically efficient provided that $\hat{\tau}_{n}$ is itself asymptotically efficient for $\tau$ \citep[Theorem 2.2]{Zha(05)}. While typically computationally tractable, this approach relies on strong (parametric) modeling assumptions on the form of $G_{\tau}$, making it vulnerable to model misspecification. Nonparametric $g$-modeling instead replaces $G$ in \eqref{eq:post_exp} with a nonparametric estimate $\hat{G}_{n}$, for instance based on maximum likelihood, minimum distance or Bayesian methods. While this is a natural extension of parametric $g$-modeling, EB estimates of the form $\hat{S}_{n}(\hat{G}_{n})$ have not been investigated in the literature; their theoretical properties are unknown, and the approach is expected to be computationally demanding due to the infinite-dimensional nature of estimating $G$. A more tractable nonparametric alternative is the ``$u,v$'' method of \citet{Rob(88)}, which is closely related to the so-called $f$-modeling strategy \citep{Efr(14)}. Although \citet[Theorem 2.5]{Zha(05)} shows that ``$u,v$'' estimates of $S_{n}$ are asymptotically efficient, their construction applies only to a restricted class of utility functions $u$.

\subsection{Preview of our contributions}

We propose a nonparametric $g$-modeling approach for estimating sums of random variables $S_{n}$ that combines modeling flexibility with computational scalability and provable theoretical guarantees, thereby addressing key limitations of existing parametric and nonparametric methods. 

The proposed approach relies on a recursive estimation of the mixing distribution $G$. Specifically, starting from an initial guess $G_0$ for $G$, for $n\geq1$ we iteratively update $G_{n-1}$ upon observing $X_{n}$ according to the procedure of \citet{Smi(78)}, referred to as Newton's algorithm \citep{New(98),Mar(08)}.  This recursive scheme can be interpreted as a quasi-Bayesian learning model and yields a predictive construction that is asymptotically equivalent, as $n\rightarrow+\infty$, to a Bayesian model \citep{For(20),For(25)}. A quasi-Bayes EB estimate of $S_{n}$ is then obtained from \eqref{eq:post_exp} by replacing $G$ with $G_{n}$. The resulting estimate $\hat{S}_{n}(G_{n})$ is straightforward to evaluate and computationally efficient, as the update of $G_{n}$ incurs a constant per-observation computational cost as data accrue. Viewing Newton’s algorithm as a learning model allows uncertainty quantification for $\hat{S}_{n}(G_{n})$; in particular, we construct asymptotic credible intervals based on a central limit theorem.

We establish theoretical guarantees for the quasi-Bayes EB estimate $\hat{S}_{n}(G_{n})$. First, as $n\rightarrow+\infty$, we prove a merging of the distributions induced by the quasi-Bayesian learning process and the corresponding Bayesian learning process, namely the posterior distribution. This result thus rigorously justifies the interpretation of $\hat{S}_{n}(G_{n})$ as an approximate Bayesian estimate of $S_{n}$ in large samples. Second, under the assumption that the observations $X_{i}$'s are i.i.d. from the model \eqref{eq:model} with a ``true'' mixing distribution $G^{\ast}$, we prove frequentist guarantees in terms of consistency of $\hat{S}_{n}(G_{n})$ relative to the oracle estimate $\hat{S}_{n}(G^{\ast})$, as $n\rightarrow+\infty$. In particular, we provide sufficient conditions under which $n^{-1}|\hat{S}_{n}(G_{n})-\hat{S}_{n}(G^{\ast})| \to 0$ almost surely as $n\rightarrow+\infty$, and we then derive a convergence rate under additional assumptions.

We empirically validate the proposed methodology on both synthetic and real datasets, focusing on Poisson and Gaussian mixture models. To this end, we consider examples of $S_{n}$ drawn from \citet[Sections 2.3 and 2.4]{Zha(05)}. The quasi-Bayes EB estimate $\hat{S}_{n}(G_{n})$ is compared with parametric EB estimates and, whenever available, with nonparametric ``$u,v$'' estimates. The results indicate that $\hat{S}_{n}(G_{n})$ attains accuracy that is generally comparable to, and in some cases improves upon, these alternatives, while remaining computationally efficient and accommodating a broader class of sums through the choice of the utility function $u$.

\subsection{Related work}

The term quasi-Bayes EB was introduced by \citet{Fav(24)}, who proposed a quasi-Bayesian $g$-modeling approach to mean estimation in Poisson mixture models \citep{Bro(13),Jan(24),She(24),Can(26)}. As in \citet{Fav(24)}, the methodology proposed here relies on Newton's algorithm, but it is conceptually and technically distinct, as it addresses the more challenging problem of estimating sums of random variables. Both \citet{Fav(24)} and the present paper contribute to bridging the EB literature with a growing body of work on predictive or recursive inference, often referred to as the “post-Bayes” framework. This line of research includes developments on predictive distributions and martingale posteriors \citep{Han(18), Fon(23), For(20), For(25), Bat(25)}, as well as approaches based on generalized Bayesian updates \citep{Bis(16), Kno(22)}.

\subsection{Organization of the paper}

Section~\ref{sec2} introduces the quasi-Bayes EB approach for estimating $S_n$, while Section~\ref{sec3} establishes its asymptotic frequentist guarantees. Section~\ref{sec4} presents examples of $S_n$ in the Poisson mixture model, together with numerical experiments based on synthetic data. For these examples, Section~\ref{sec5} provides a real-data illustration using data on the total number of goals scored in the National Hockey League during the 2017--2018 and 2018--2019 seasons, obtained from \url{https://www.hockey-reference.com/}. Section~\ref{sec6} concludes with a discussion of multidimensional extensions of the proposed quasi-Bayes EB approach, alternative nonparametric EB strategies, connections with species sampling problems, and directions for future research. Proofs, additional synthetic-data illustrations for both Poisson and Gaussian mixture models, and further real-data analyses are collected in the Supplementary Material.


\section{Quasi-Bayes EB for sums of random variables}\label{sec2}

\subsection{Preliminaries and modeling assumptions}

For $n\geq1$, let $(X_{1},\theta_{1}),\ldots,(X_{n},\theta_{n})$ be defined on a probability space $(\Omega,\mathcal{F},\mathbb{P})$ and distributed according to the mixture model \eqref{eq:model}, with unknown mixing distribution $G$ supported on $\Theta\subseteq\mathbb{R}$. We consider the problem of estimating sums of random variables $S_{n}$, as defined in \eqref{eq:sum}.

The Bayesian $g$-modeling approach to estimate $S_{n}$, here referred to as Bayes EB following the terminology of \citet{Dee(81)}, consists of placing a prior distribution on $G$. Equivalently, it is assumed the existence of a random probability distribution $\tilde{G}$ on $\Theta$, such that for $i=1,\ldots,n$
\begin{equation}\label{eq:model1bis}
X_i \mid \tilde{G}\simind f_{\tilde{G}}(x) = \int_{\Theta} k(\cdot \mid \theta) \tilde{G}(\ddr\theta)
\end{equation}
and 
\begin{equation}\label{eq:model1bis2}
\theta_{i}\mid X_{1:n},\, \tilde{G}\simind\frac{k(X_i \mid \theta)\tilde{G}(\ddr\theta)}{\int_{\Theta} k(X_i \mid \theta) \tilde{G}(\ddr\theta)}.
\end{equation}
The mixture model \eqref{eq:model1bis}-\eqref{eq:model1bis2} offers an equivalent formulation of \eqref{eq:model}, under which the Bayes estimate \eqref{eq:post_exp} becomes
\begin{equation}\label{eq:post_exp_new}
\hat{S}_{n}(\tilde{G})=\E_{\tilde{G}}[S_{n}\mid X_{1:n}]=\sum_{i=1}^{n}\E_{\tilde{G}}[u(X_{i},\theta_{i})\mid X_{i}],
\end{equation}
and $\tilde{G}$ is estimated through the posterior distribution of $\tilde{G}$ given $X_{1:n}$, e.g. the posterior mean. 

\begin{remark}
The term Bayes EB was introduced by \citet{Dee(81)}, who proposed a Bayesian $g$-modeling approach to the compound decision problems of \cite{Rob(51)}; see also \citet{Efr(14),Efr(19)}. Their treatment is largely parametric, although they briefly discuss the possibility of placing a nonparametric prior on $G$, through a Dirichlet process \citep{Fer(73)}. Recently, \citet{Ign(26)} pursued this direction by employing a Dirichlet process prior within the Bayes EB approach to mean estimation in Gaussian mixture models.
\end{remark}

Here, we develop a quasi-Bayesian $g$-modeling approach to estimate $S_{n}$. Like the Bayesian framework described above, the quasi-Bayesian framework treats the model  as a learning model rather than a data-generating process \citep{For(20),For(25)}. However, a fundamental difference lies in how the learning process for the $\theta_{i}$'s is specified: while the Bayesian framework makes use of the posterior distribution of $\tilde{G}$ given $X_{1:n}$,  yielding 
\begin{equation}\label{eq:model1bis22}
\theta_{i}\mid X_{1:n},\simind\EE\left[\frac{k(X_i \mid \theta)\tilde{G}(\ddr\theta)}{\int_{\Theta} k(X_i \mid \theta) \tilde{G}(\ddr\theta) }\mid X_{1:  n}\right],
\end{equation}
the quasi-Bayesian framework employs a recursive estimation of the mixing distribution $\tilde{G}$ through Newton's algorithm \citep{Smi(78),New(98)}, as we now describe.

The quasi-Bayes framework assumes the existence of an unknown probability distribution $ \tilde{G} $ that governs the distribution of $ X_{1:n}$, as specified in \eqref{eq:model1bis}. In addition, it specifies how to learn the $\theta_{i}$'s based on $ X_{1:n} $, by replacing the posterior-based learning process \eqref{eq:model1bis22} with  
\begin{equation}\label{cond_est}
 \theta_i \mid X_{1:n} \simind G_n(\theta \mid X_i) = \frac{k(X_i \mid \theta) G_n(\ddr\theta)}{\int_{\Theta} k(X_i \mid \theta) G_n(\ddr \theta)},
\end{equation}
where $G_n $ is a recursively updated estimate of $\tilde G$, computed via Newton’s algorithm, i.e. for $n\geq 0$
\begin{equation}\label{eq:newton}
G_{n+1}(\ddr\theta) = (1 - \alpha_{n+1}) G_n(\ddr\theta) + \alpha_{n+1} \frac{k(X_{n+1} \mid \theta) G_n(\ddr\theta)}{\int_{\Theta} k(X_{n+1} \mid \theta) G_n(\ddr \theta)},
\end{equation}
starting from an initial guess $ G_0 $, and using a sequence $ (\alpha_n)_{n \geq 1} $ of real numbers in $(0,1)$ satisfying $ \sum_{n \geq 1} \alpha_n = +\infty $ and $ \sum_{n \geq 1} \alpha_n^2 < +\infty $. From \eqref{eq:newton}, $G_n$ is updated as the observation $X_{n+1}$ becomes available, by taking a weighted average, with respect to $\alpha_{n}$, of  $G_n$ itself and of its posterior distribution based on $X_{n+1}$. The sequence $(\alpha_{n})_{n\geq1}$ is the learning rate \citep{For(20)}, for which a standard choice is $\alpha_{n}=(\alpha+n)^{-\gamma}$ for $\alpha>0$ and $\gamma\in (1/2,1]$.

The quasi-Bayesian learning process \eqref{cond_est}-\eqref{eq:newton} differs conceptually and technically from that developed in \citet{Fav(24)} for mean estimation in Poisson mixture models, though both rely on Newton's algorithm. The distinction lies not only in the target of estimation but also in the learning mechanism. In \citet{Fav(24)}, Newton's algorithm defines a sequential predictive model for the observations, as the recursive estimate of the mixing distribution determines the predictive distribution of future data. In this sense the learning proceeds in a ``forward direction", from the data toward prediction of future observations. In the present work, instead, Newton's recursion is used to approximate Bayesian learning of the latent $\theta_{i}$'s by replacing the unknown mixing distribution $\tilde{G}$ with its recursive estimate $G_{n}$ in posterior expectations, i.e. \eqref{cond_est}-\eqref{eq:newton}. The learning therefore proceeds in a ``backward direction", from the data toward inference on the latent parameters.

Under the mixture model \eqref{eq:model1bis}, let $\mu$ be a $\sigma$-finite measure on $\Theta$, and let $\mathbb G$ denote a class of probability measures on $\Theta$ that are absolutely continuous with respect to $\mu$. Let $\overline{\mathbb G}$ be the closure of $\mathbb G$ (with respect to the weak topology). With $\tilde{G}$ learned as in \eqref{cond_est}-\eqref{eq:newton}, we assume throughout:
\begin{itemize}
    \item[$(\mathcal A1)$] $\Theta$ is compact.
    \item[$(\mathcal A2)$] $G_0$ is absolutely continuous with respect to 
    $\mu$, with strictly positive density on $\Theta$.
      \item[$(\mathcal A3)$] The kernel $k(x\mid\theta)$ is strictly positive, bounded, continuous in $\theta$ for $\lambda$-a.e. $x$, and dominated by a function $h(x)$ in  $L^1(\lambda)$.
          \item[$(\mathcal A4)$] The mixture model $\{\int_\Theta k(x\mid\theta)G(\ddr\theta): G\in \overline{\mathbb G}\}$ is identifiable.
        \item[$(\mathcal A5)$]    \(
\displaystyle{\sup_{\theta,\theta',\theta''\in\Theta}\int_{\mathbb{R}}\frac{k^{2}(x\mid\theta)}{k^{2}(x\mid\theta')}k(x\mid\theta'')\lambda(\ddr x)<+\infty.}\)
          \end{itemize}
   
\subsection{Quasi-Bayes properties}

We denote by $\P^{(n)}$ and $\P$ the probability measures that specify the joint distribution of the $(X_{i},\theta_{i})$'s in the quasi-Bayesian learning process, i.e. \eqref{eq:model1bis} with  \eqref{cond_est}-\eqref{eq:newton}, and in the Bayesian learning process, i.e. \eqref{eq:model1bis} with \eqref{eq:model1bis2}. While the marginal distribution of the $X_{i}$'s coincides under $\P$ and $\P^{(n)}$, the conditional distribution of $(\theta_{1},\ldots,\theta_{n})$ given $X_{1:n}$ differs between $\P^{(n)}$ and $\P$. Although not necessary for the estimation of $S_n$, the distribution of the sequence $((X_i,\theta_i))_{i\geq1}$ under $\P^{(n)}$ can be specified consistently with the law of $((X_1,\theta_1),\dots, (X_n,\theta_n))$ by setting 
\begin{displaymath}
\theta_{n+k+1}\mid (X_1,\theta_1),\dots,(X_{n+k},\theta_{n+k})\sim G_{n+k}(\theta)\quad \mbox{ for } k\geq 0,
\end{displaymath}
where $G_{n+k}(\theta)$ is computed from Newton’s algorithm \eqref{eq:newton}, and the conditional distribution of $X_{n+k+1}$ given $((X_1,\theta_1),\dots,(X_{n+k},\theta_{n+k}),\theta_{n+k+1})$ is determined by the kernel $k(\cdot\mid\theta_{n+k+1})$ \citep{For(20)}. The next theorem shows that $\Pn$ directly connects to $\P$, as $n\rightarrow+\infty$.

\begin{thm}\label{th:convg}
Let $ G_n $ and $ G_n(\cdot \mid x) $ be defined as in  \eqref{eq:newton} and \eqref{cond_est}, respectively, and let $ f_{G_n} $ denote the marginal density function in \eqref{eq:model1bis} with $\tilde G $ being replaced by $ G_n $. Under the assumptions $(\mathcal A1)$-$(\mathcal A5)$ and $\tilde G\in\overline{\mathbb G}$,  it holds $\PP$-a.s. that, as $ n \rightarrow +\infty $, 
\begin{itemize}
\item[i)]
$
G_{n}\stackrel{w}{\longrightarrow}\tilde{G};
$
\item[ii)]
$
f_{G_{n}}\rightarrow f_{\tilde{G}}\qquad \lambda-a.e. \text{ and  in } L^1(\lambda);
$
\item[iii)]
$ 
G_{n}(\cdot \mid x)\stackrel{w}{\longrightarrow} \tilde G(\cdot \mid x)\qquad\mbox{for } \lambda\mbox{-a.e. } x,
$
\end{itemize}
where $\stackrel{w}{\longrightarrow}$ denotes the weak convergence.
\end{thm}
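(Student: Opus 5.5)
The plan is to reduce part i) to the known consistency theory of Newton's algorithm, applied conditionally on $\tilde G$. Then ii) and iii) follow from i) by continuity arguments that use $(\mathcal A3)$.

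Under $\PP$, conditionally on $\tilde G$, the $X_i$'s are i.i.d.\ with density $f_{\tilde G}$, and $\tilde G\in\overline{\mathbb G}$. It therefore suffices to show that, for every fixed $G^\ast\in\overline{\mathbb G}$, $G_n\stackrel{w}{\to}G^\ast$ almost surely when $X_i\simiid f_{G^\ast}$; integrating over the law of $\tilde G$ then gives $\PP$-a.s.\ convergence. For fixed $G^\ast$ I will follow the stochastic-approximation/supermartingale argument of \citet{Mar(08)} (in the form of Tokdar, Martin and Ghosh), using the Kullback--Leibler divergence $K_n=\int f_{G^\ast}\log(f_{G^\ast}/f_{G_n})\,\ddr\lambda$ as a Lyapunov function.

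\begin{itemize}
\item[1.] Expanding \eqref{eq:newton}, I write $f_{G_{n+1}}=f_{G_n}\bigl(1+\alpha_{n+1}H_{n+1}\bigr)$, with
\begin{displaymath}
H_{n+1}(x)=\frac{\int k(x\mid\theta)k(X_{n+1}\mid\theta)\,G_n(\ddr\theta)}{f_{G_n}(x)\,f_{G_n}(X_{n+1})}-1 .
\end{displaymath}
\item[2.] A second-order expansion of $\log(1+\alpha H)$ then gives
\begin{displaymath}
\EE[K_{n+1}\mid\mathcal F_n]\le K_n-\alpha_{n+1}T(G_n)+\alpha_{n+1}^2 R_n ,
\end{displaymath}
where $T(G_n)=\int\bigl(f_{G^\ast}/f_{G_n}-1\bigr)^2f_{G_n}\,\ddr\lambda\ge0$ (up to a positive-variance term) and $R_n$ is the quadratic remainder.
\item[3.] Assumption $(\mathcal A5)$ bounds $R_n$ uniformly, because ratios of mixtures are controlled by ratios of kernels: $\int k(\cdot\mid\theta)G(\ddr\theta)/\int k(\cdot\mid\theta')G(\ddr\theta')\le\sup_{\theta,\theta'}k(\cdot\mid\theta)/k(\cdot\mid\theta')$.
\item[4.] With $\sum\alpha_n^2<\infty$, the Robbins--Siegmund theorem yields that $K_n$ converges a.s.\ and that $\sum_n\alpha_{n+1}T(G_n)<\infty$ a.s. Since $\sum\alpha_n=\infty$, this forces $\liminf_n T(G_n)=0$.
\end{itemize}

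Compactness $(\mathcal A1)$ makes $\{G_n\}$ tight. Take a subsequence along which $T(G_{n_k})\to0$ and $G_{n_k}\stackrel{w}{\to}G'$. Continuity and domination in $(\mathcal A3)$ give $f_{G_{n_k}}\to f_{G'}$ pointwise and in $L^1$ (Scheffé). Then $f_{G'}=f_{G^\ast}$, and $K_{n_k}\to0$.

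\textbf{Main obstacle.} Two points need care here. First, lower semicontinuity/continuity of $K$ along the subsequence: I plan to use dominated convergence with the bound $\log(f_{G^\ast}/f_{G_n})\le\log\sup_{\theta,\theta'}k(x\mid\theta)/k(x\mid\theta')$, which is integrable by $(\mathcal A5)$ and Jensen. Second, identifying $G'$ with $G^\ast$ requires $G'\in\overline{\mathbb G}$ before applying $(\mathcal A4)$. I expect to obtain this because each $G_n$ has a density with respect to $\mu$ ($G_n\ll G_0\ll\mu$ by $(\mathcal A2)$), so limit points lie in the relevant closure provided $\mathbb G$ contains these iterates. I may need to interpret $\mathbb G$ accordingly.

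\textbf{Full-sequence convergence.} Since $K_n$ converges and has a subsequence tending to $0$, $K_n\to0$. Pinsker's inequality then gives $f_{G_n}\to f_{G^\ast}$ in $L^1$. Every weak limit point $G''$ of $G_n$ satisfies $f_{G''}=f_{G^\ast}$, so $G''=G^\ast$ by identifiability, and tightness gives i). For the $\lambda$-a.e.\ convergence in ii), use the fact that $\theta\mapsto k(x\mid\theta)$ is bounded and continuous on compact $\Theta$ for a.e.\ $x$, so $f_{G_n}(x)=\int k(x\mid\theta)G_n(\ddr\theta)\to f_{\tilde G}(x)$. The $L^1$ convergence then follows from Scheffé's lemma.

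For iii), fix such an $x$ and a bounded continuous $\phi$ on $\Theta$. Then $\phi(\cdot)k(x\mid\cdot)$ is bounded and continuous, so $\int\phi\,k(x\mid\theta)G_n(\ddr\theta)\to\int\phi\,k(x\mid\theta)\tilde G(\ddr\theta)$. Dividing by $f_{G_n}(x)\to f_{\tilde G}(x)>0$, which is positive by $(\mathcal A3)$, gives $G_n(\cdot\mid x)\stackrel{w}{\to}\tilde G(\cdot\mid x)$.
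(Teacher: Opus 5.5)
Your reduction to a fixed $G^*$ conditionally on $\tilde G$ matches the paper's, and your derivations of ii) from i) (continuity of $k(x\mid\cdot)$ plus Scheff\'e) and of iii) are sound. The paper, however, does not reprove i); it invokes Corollaries 4.6--4.7 of \citet{MarTok(09)} conditionally on $\tilde G$. Your self-contained proof of i) has a genuine gap at Steps 2 and 4: the conditional drift of the marginal KL is not the $\chi^2$ divergence. With $g_n(\theta)=\int k(x\mid\theta)f_{G^*}(x)/f_{G_n}(x)\,\lambda(\ddr x)$ one has $\int g_n\,\ddr G_n=1$ and
\[
\EE\Bigl[\int f_{G^*}H_{n+1}\,\ddr\lambda\,\Big|\,\mathcal F_n\Bigr]=\int (g_n-1)^2\,\ddr G_n=\chi^2(f_{G^*}\|f_{G_n})-\int \mathrm{Var}_{k(\cdot\mid\theta)}\bigl(f_{G^*}/f_{G_n}\bigr)\,G_n(\ddr\theta).
\]
So the variance term is subtracted, and the true drift $T(G)=\int(g_G-1)^2\,\ddr G$ vanishes at spurious points. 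For any point mass, $g_{\delta_{\theta_0}}(\theta_0)=\int f_{G^*}\,\ddr\lambda=1$, so $T(\delta_{\theta_0})=0$. In general $T(G')=0$ only says $g_{G'}=1$ $G'$-a.e., a stationarity condition relative to the support of $G'$. Hence $\liminf_n T(G_n)=0$ and $G_{n_k}\stackrel{w}{\to}G'$ do not give $f_{G'}=f_{G^*}$. Robbins--Siegmund on the marginal KL yields convergence of $K_n$, not that its limit is $0$.

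The missing ingredient is a second Lyapunov function at the mixing level; this is how the Tokdar--Martin--Ghosh and \citet{MarTok(09)} arguments close. Take $G_\varepsilon\in\mathbb G$ with $\mathrm{KL}(G_\varepsilon\|G_0)<\infty$ and set $L_n=\mathrm{KL}(G_\varepsilon\|G_n)\ge 0$. Its conditional drift is $-\alpha_{n+1}\bigl(\int f_{G_\varepsilon}f_{G^*}/f_{G_n}\,\ddr\lambda-1\bigr)\le-\alpha_{n+1}\bigl(K_n-\mathrm{KL}(f_{G^*}\|f_{G_\varepsilon})\bigr)$ by $\log t\le t-1$, plus an $O(\alpha_{n+1}^2)$ remainder controlled by $(\mathcal A5)$. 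Together with the convergence of $K_n$, this forces $\lim_n K_n\le \mathrm{KL}(f_{G^*}\|f_{G_\varepsilon})$. Choosing $G_\varepsilon\in\mathbb G$ with $\mathrm{KL}(f_{G^*}\|f_{G_\varepsilon})\to0$ then gives $K_n\to0$; the approximation is needed because $G^*\in\overline{\mathbb G}$ need not have a $\mu$-density. After that, your Pinsker/identifiability step works. A smaller point: $(\mathcal A5)$ bounds $\sup_{\theta,\theta',\theta''}\int\cdots$, not $\int\sup_{\theta,\theta'}\cdots$. So the remainder and integrability bounds in Step 3 and in your ``main obstacle'' should go through Jensen, e.g.\ $f_{G_n}(x)^{-2}\le\int k(x\mid\theta')^{-2}\,G_n(\ddr\theta')$, rather than through a pointwise supremum of kernel ratios.
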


See \ref{appB} for the proof of Theorem \ref{th:convg}. The next theorem, which is a consequence of \eqref{cond_est} and Theorem \ref{th:convg}, establishes the merging between the probability measures $\P^{(n)}$ and $\P$ as $n\rightarrow+\infty$. We denote by $\mathcal{L}_{\P^{(n)}}(Y\mid X)$ and $\mathcal{L}_{\P}(Y\mid X)$ the conditional distribution of $Y$ given $X$ under $\P^{(n)}(\cdot)$ and $\P(\cdot)$, respectively, where $X$ and $Y$ are random variables defined on $(\Omega,\mathcal{F})$. 

\begin{thm}\label{th:merg}
Under assumptions $(\mathcal A1)$-$(\mathcal A5)$ and $\tilde G\in\overline{\mathbb G}$, for every $m\geq1$ it holds $\PP$-a.s. that, as $n\rightarrow+\infty$
\begin{equation}\label{eq:approx}
  \rho(\mathcal L_{\Pn}((\theta_1,\dots,\theta_m)\mid X_{1:n}),\mathcal L_{\P}((\theta_1,\dots,\theta_m)\mid X_{1:n}))\rightarrow 0,
\end{equation}
where $\rho$ denotes the Prohorov distance.
\end{thm}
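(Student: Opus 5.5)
Under $\Pn$, by \eqref{cond_est} the $\theta_1,\dots,\theta_m$ are conditionally independent given $X_{1:n}$, with laws $G_n(\cdot\mid X_i)$. So $\mathcal L_{\Pn}((\theta_1,\dots,\theta_m)\mid X_{1:n})=\bigotimes_{i=1}^m G_n(\cdot\mid X_i)$. Under $\P$, by \eqref{eq:model1bis2}, the $\theta_i$ are conditionally independent given $(X_{1:n},\tilde G)$ with laws $\tilde G(\cdot\mid X_i)$. Hence $\mathcal L_{\P}((\theta_1,\dots,\theta_m)\mid X_{1:n})=\E[\bigotimes_{i=1}^m\tilde G(\cdot\mid X_i)\mid X_{1:n}]$, a mixture of product measures over the posterior of $\tilde G$. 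The plan is to introduce the intermediate (random) measure $\bigotimes_{i\le m}\tilde G(\cdot\mid X_i)$ and show that both sides converge weakly to it $\PP$-a.s. Because the Prohorov distance metrizes weak convergence on the compact space $\Theta^m$, this suffices.

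\textbf{Quasi-Bayes side.} Since $m$ is fixed and the $X_1,\dots,X_m$ do not change with $n$, Theorem~\ref{th:convg}(iii) applies at each $X_i$. Indeed, $\lambda$-a.e. convergence transfers to the points $X_i$ a.s., because the marginal law of $X_i$ is $f_{\tilde G}\,\lambda\ll\lambda$ and the exceptional null set can be handled jointly (one can simply argue directly through Theorem~\ref{th:convg}(i)). The direct route is as follows. For fixed $x$ with $k(x\mid\cdot)$ continuous and bounded, $G_n\stackrel{w}{\to}\tilde G$ gives $\int g(\theta)k(x\mid\theta)G_n(\ddr\theta)\to\int g k(x\mid\cdot)\,\ddr\tilde G$ for every continuous $g$. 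The denominator stays positive by $(\mathcal A3)$. This holds simultaneously for all $x$ outside a $\lambda$-null set, hence a.s. at $X_1,\dots,X_m$. Weak convergence of marginals implies weak convergence of finite products. Therefore $\rho(\bigotimes_i G_n(\cdot\mid X_i),\bigotimes_i\tilde G(\cdot\mid X_i))\to0$ a.s.

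\textbf{Bayes side.} Take a product test function $\phi=\prod_{i\le m}g_i$ with $g_i\in C(\Theta)$; these are convergence determining on compact $\Theta^m$ by Stone--Weierstrass. The quantity $Z:=\int\phi\,\ddr\bigotimes_i\tilde G(\cdot\mid X_i)=\prod_i \int g_i k(X_i\mid\cdot)\ddr\tilde G/f_{\tilde G}(X_i)$ is bounded by $\prod\|g_i\|_\infty$. It is also measurable with respect to $\sigma(X_{1:\infty})$, because $\tilde G$ is a.s. a limit of $G_n$ (Theorem~\ref{th:convg}(i)), and $G_n$ is a function of $X_{1:n}$. By L\'evy's upward martingale convergence theorem, $\E[Z\mid X_{1:n}]\to\E[Z\mid X_{1:\infty}]=Z$ a.s. A countable dense family of $g_i$'s then yields, on a single null-set complement, weak convergence of $\mathcal L_{\P}(\cdot\mid X_{1:n})$ to $\bigotimes_i\tilde G(\cdot\mid X_i)$. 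The triangle inequality for $\rho$ concludes.

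\textbf{Main obstacle.} The delicate step is the measurability claim: that $\tilde G$ is $\sigma(X_{1:\infty})$-measurable mod $\PP$-null sets. This is exactly where Theorem~\ref{th:convg}(i) is needed, together with identifiability $(\mathcal A4)$ implicitly. A further subtlety is making sure the countable family and the $\lambda$-null exceptional sets combine into one $\PP$-null set. This works because each $X_i$ has a law absolutely continuous with respect to $\lambda$ under $\PP$.
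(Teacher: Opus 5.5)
Your argument is correct, and its quasi-Bayes half is the paper's own implicit proof. The paper gives no separate derivation; it says only that the theorem is a consequence of \eqref{cond_est} and Theorem~\ref{th:convg}, i.e.\ $\mathcal L_{\Pn}((\theta_1,\dots,\theta_m)\mid X_{1:n})=\bigotimes_{i\le m}G_n(\cdot\mid X_i)$, which converges weakly to $\bigotimes_{i\le m}\tilde G(\cdot\mid X_i)$ by part (iii).

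You go beyond this in two places.

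First, the paper tacitly identifies the Bayesian side with $\bigotimes_{i\le m}\tilde G(\cdot\mid X_i)$. That is the law of $(\theta_1,\dots,\theta_m)$ given $(X_{1:n},\tilde G)$. Given $X_{1:n}$ alone, the law is the posterior mixture $\E[\bigotimes_{i\le m}\tilde G(\cdot\mid X_i)\mid X_{1:n}]$, which is in general not a product measure, despite the independence symbol in \eqref{eq:model1bis22}. Your L\'evy upward argument shows that this mixture has the same limit. It uses Theorem~\ref{th:convg}(i) to make $\tilde G$ measurable with respect to $\sigma(X_{1:\infty})$ modulo null sets. With this step your proof covers the literal reading of the statement.

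Second, plugging $x=X_i$ into Theorem~\ref{th:convg}(iii) is delicate. The $\lambda$-null exceptional set there depends on $\omega$, and hence on the same $X_1,\dots,X_m$ that feed into $G_n$. You instead route through part (i) and the deterministic null set of $(\mathcal A3)$, where $k(x\mid\cdot)$ fails to be continuous. This gives convergence for all good $x$ on a single event. Absolute continuity of the law of $X_i$ with respect to $\lambda$ then finishes the step.

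The paper's route buys brevity. Yours buys a complete argument at the cost of one martingale step.
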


By Theorem~\ref{th:merg}, and using the fact that the marginal distribution of $(X_1,\ldots,X_m)$ is the same under the probability measures $\PP$ and $\PP^{(n)}$ for all $n\geq1$,
we obtain that, for any $m\geq1$,
\begin{displaymath}
\P^{(n)}\bigl(((X_1,\theta_1),\ldots,(X_m,\theta_m))\in \cdot \bigr)
\stackrel{w}{\longrightarrow}
\int_{\cdot}\prod_{i=1}^m k(x_i\mid s_i)\,\tilde G(\ddr s_i)\,\lambda(\ddr x_i),
\qquad n\to+\infty .
\end{displaymath}
In other words, $\P^{(n)}$ takes on the interpretation of a large $n$ approximation of the probability measure $\P$.

\subsection{Estimation and credible intervals}
We estimate $S_{n}$ through its conditional expectation given $X_{1:n}$ under the quasi-Bayesian learning process $\Pn$. This expectation, denoted by $\E^{(n)}$, is obtained by replacing $\tilde{G}$ in \eqref{eq:post_exp_new} with $G_{n}$, i.e.
\begin{equation}\label{eq:est}
\hat{S}^{\text{\tiny{[Q-B]}}}_{n}:=\E^{(n)}[S_{n}\mid X_{1:n}]=\sum_{i=1}^{n}\int_{{\Theta}}u(X_{i},\theta)G_{n}(\ddr\theta\mid X_{i}).
\end{equation}
We refer to $\hat{S}^{\text{\tiny{[Q-B]}}}_{n}$ as the quasi-Bayes EB estimate of $S_{n}$. In particular, it is useful to observe that
\begin{displaymath}
S_{n}-\hat{S}^{\text{\tiny{[Q-B]}}}_{n}=\sum_{i=1}^{n}\left(u(X_{i},\theta_{i})-\int_{{\Theta}}u(X_{i},\theta)G_{n}(\ddr\theta\mid X_{i})\right),
\end{displaymath}
provides a sum of  independent  (and centered at zero) random variables under $\mathbb{P}^{(n)}(\cdot\mid X_{1:n})$. The next theorem follows from a conditional version of Lindeberg-Feller central limit theorem.
  
\begin{thm}\label{th:central}
Let $\hat{S}^{\text{\tiny[Q-B]}}_{n}$ be the quasi-Bayes EB estimate of $S_n$.
Assume that $(\mathcal A1)$--$(\mathcal A5)$ hold, $\tilde G\in\overline{\mathbb G}$, and that the utility function $u$ is bounded and satisfies:
\begin{enumerate}[label=(\roman*)]
\item for $\lambda$-a.e.\ $x$, $\theta \mapsto u(x,\theta)$ is $\tilde G$-a.s. continuous  $(\PP$-a.s.$)$;
\item the families $\{\int_\Theta u(x,\theta)\,G(\ddr\theta\mid x) : G \in \mathbb G\}$ and $\{\int_\Theta u^2(x,\theta)\,G(\ddr\theta\mid x) : G \in \mathbb G \}$ are locally equicontinuous on the support of $\lambda$;
\item
\(
\iint_\Theta(u(x,\theta)-\tilde u(x))^2k(x\mid\theta)\tilde G(\ddr\theta)\lambda(\ddr x)>0\)
$(\PP$-a.s.$)$,
where $\tilde u(x)=\int_\Theta u(x,\theta)\tilde G(\ddr\theta\mid x)$.
\end{enumerate}
Then, as $n \to \infty$,
\[
\P^{(n)}\!\left(
\frac{\hat{S}^{\text{\tiny[Q-B]}}_{n}-S_n}{\sqrt{B_n}}
\le z\,|\, X_{1:n}\right)
\longrightarrow \Phi(z),
\qquad \PP\text{-a.s.},\ \forall z\in\R,
\]
where $\Phi$ denotes the standard Gaussian (cumulative) distribution function, and $B_n = \sum_{i=1}^n \sigma_{n,i}^2$ with 
\begin{displaymath}
\sigma_{n,i}^2
= \int_\Theta u(X_i,\theta)^2\,G_n(\ddr\theta\mid X_i)
- \left(\int_\Theta u(X_i,\theta)\,G_n(\ddr\theta\mid X_i)\right)^2.
\end{displaymath}
\end{thm}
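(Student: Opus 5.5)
Under $\P^{(n)}(\cdot\mid X_{1:n})$ the variables $Y_{n,i}:=u(X_i,\theta_i)-\int_\Theta u(X_i,\theta)G_n(\ddr\theta\mid X_i)$, $i=1,\dots,n$, are independent and centered. Their variances are exactly $\sigma_{n,i}^2$, since by \eqref{cond_est} $\theta_i$ has law $G_n(\cdot\mid X_i)$. Hence $\hat{S}^{\text{\tiny[Q-B]}}_{n}-S_n=-\sum_i Y_{n,i}$ is a triangular-array row sum with total variance $B_n$. Since $\Phi$ is symmetric, the sign is irrelevant. The plan is to fix an $\omega$ in a $\PP$-probability-one set and verify the Lindeberg condition for this array, conditionally on $X_{1:n}(\omega)$. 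The Lindeberg--Feller theorem then gives the stated convergence for every $z$.

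The key step is to show that $B_n/n\to\sigma^2:=\iint(u(x,\theta)-\tilde u(x))^2k(x\mid\theta)\tilde G(\ddr\theta)\lambda(\ddr x)$ $\PP$-a.s.; by (iii) this limit is strictly positive. Write $m_G(x)=\int u(x,\theta)G(\ddr\theta\mid x)$ and $q_G(x)=\int u^2(x,\theta)G(\ddr\theta\mid x)$, so that $\sigma_{n,i}^2=q_{G_n}(X_i)-m_{G_n}(X_i)^2$. I would split
\[
\frac{B_n}{n}-\sigma^2=\frac1n\sum_{i=1}^n\bigl[(q_{G_n}-m_{G_n}^2)-(q_{\tilde G}-m_{\tilde G}^2)\bigr](X_i)+\Bigl(\frac1n\sum_{i=1}^n (q_{\tilde G}-m_{\tilde G}^2)(X_i)-\sigma^2\Bigr).
\]
The second term vanishes $\PP$-a.s. by the strong law of large numbers, since the $X_i$ are i.i.d. $f_{\tilde G}$ given $\tilde G$ and $u$ is bounded. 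Its expectation under $f_{\tilde G}$ equals $\sigma^2$, because $\int(u-\tilde u)^2\tilde G(\ddr\theta\mid x)\,f_{\tilde G}(x)=\int (u-\tilde u)^2k(x\mid\theta)\tilde G(\ddr\theta)$.

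For the first term, Theorem~\ref{th:convg}(iii) gives $G_n(\cdot\mid x)\stackrel{w}{\to}\tilde G(\cdot\mid x)$ for $\lambda$-a.e. $x$. Assumption (i) says $u(x,\cdot)$ is bounded and $\tilde G$-a.s. continuous. Since $\tilde G(\cdot\mid x)\ll\tilde G$, the portmanteau/continuous mapping theorem yields $m_{G_n}(x)\to m_{\tilde G}(x)$ and $q_{G_n}(x)\to q_{\tilde G}(x)$ for $\lambda$-a.e. $x$. The difficulty is that the $x$'s are the random, $n$-dependent sample points, so pointwise convergence is not enough. Here (ii) enters. Local equicontinuity of $\{m_G\},\{q_G\}$ over $G\in\mathbb G$, together with uniform boundedness ($u$ bounded), upgrades pointwise convergence on a dense countable set to uniform convergence on compacts of the support of $\lambda$, by an Arzelà--Ascoli argument. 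One also needs $G_n\in\mathbb G$, which follows from $(\mathcal A2)$ and the recursion \eqref{eq:newton}. Given $\varepsilon>0$, choose a compact $K$ with $\int_{K^c}f_{\tilde G}\,\ddr\lambda<\varepsilon$. On $K$ the summands are uniformly small for large $n$. Off $K$ they are bounded by $2\|u\|_\infty^2$, and their frequency $\frac1n\sum_i I(X_i\notin K)\to\int_{K^c}f_{\tilde G}\,\ddr\lambda<\varepsilon$ a.s. by the SLLN. Letting $\varepsilon\downarrow0$ along a countable sequence gives the claim. I expect this uniformity step to be the main obstacle, in particular the passage from local equicontinuity to uniform convergence on compacts.

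The conclusion is then immediate. Since $|Y_{n,i}|\le 2\|u\|_\infty$ and $B_n\sim n\sigma^2\to\infty$ a.s., for every $\eta>0$ the event $\{|Y_{n,i}|>\eta\sqrt{B_n}\}$ is empty for all $i$ once $n$ is large. The Lindeberg sum $B_n^{-1}\sum_i\E^{(n)}[Y_{n,i}^2I(|Y_{n,i}|>\eta\sqrt{B_n})\mid X_{1:n}]$ is therefore eventually zero. Applying the Lindeberg--Feller CLT to the conditional array, for $\PP$-a.e. $\omega$, yields the convergence to $\Phi(z)$ for all $z$.
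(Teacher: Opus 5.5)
Your proposal is correct and follows the paper's proof essentially step for step. The paper likewise proves $B_n/n\to\tilde\sigma^2>0$ (Lemma \ref{lemma:sigman}) by combining Theorem \ref{th:convg}(iii), the equicontinuity in (ii) with Arzel\`a--Ascoli on a compact set $K$ carrying most of the mass, and the conditional SLLN, and then notes that the Lindeberg condition holds trivially because $u$ is bounded and $B_n\to\infty$. The only difference is cosmetic: the paper picks $K$ by uniform tightness of $\{k(\cdot\mid\theta):\theta\in\Theta\}$ (via compactness of $\Theta$), whereas you use tightness of $f_{\tilde G}$ plus an explicit SLLN for the empirical frequency of $K^c$, a step the paper leaves implicit.
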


See \ref{app:central} for the proof of Theorem \ref{th:central}. Assumption (ii) is trivially satisfied when the kernel is discrete, since in this case equicontinuity is with respect to the discrete topology. From Theorem \ref{th:central}, denoting by $z_{1-\beta/2}$ the $(1-\beta/2)$-quantile of the standard Gaussian distribution, as $n\rightarrow\infty$
\begin{equation}\label{cred_interval}
\P^{(n)}(\hat{S}^{\text{\tiny{[Q-B]}}}_{n}-z_{1-\beta/2}\sqrt{B_n}\leq S_n\leq \hat{S}^{\text{\tiny{[Q-B]}}}_{n}+z_{1-\beta/2}\sqrt{B_n}\mid X_{1:n})\rightarrow 1-\beta,\qquad\PP\text{-a.s.}
\end{equation}
Thus, with respect to $\P^{(n)}$, an asymptotic credible interval for $S_n$ is given by $\hat{S}^{\text{\tiny{[Q-B]}}}_{n}\pm z_{1-\beta/2}\sqrt{B_n}$. In particular, the proof of Theorem~\ref{th:central} suggests that the interval length is of order $n^{1/2}$.


\section{Frequentist guarantees}\label{sec3}
\subsection{Consistency}
Consider the mixture model \eqref{eq:model} with the mixing distribution $G$ replaced by a ``true" distribution, say the oracle prior $G^*$ on $\Theta\subset\mathbb{R}$, such that the observations $X_{i}$'s are i.i.d. with density function 
\begin{equation}\label{eq:fg*}
f_{G^*}(x)=\int_{\Theta} k(x\mid\theta)G^*(\ddr\theta)\qquad x\in\mathbb{R}
\end{equation}
with respect to $\lambda$. If the distribution $G^*$ were known, one could estimate $S_n$ through the oracle estimate
\begin{displaymath}
\hat S_{n}(G^{\ast})=\sum_{i=1}^n \int_\Theta u(X_i,\theta)\, G^*(\ddr\theta\mid X_i).
\end{displaymath}
The next theorem establishes frequentist guarantees in terms of consistency of the quasi-Bayes EB estimate $\hat{S}^{\text{\tiny{[Q-B]}}}_{n}$ relative to $\hat S_{n}(G^{\ast})$, as $n\rightarrow+\infty$, under $G^{\ast}$. We denote by $\P^{\ast}$ the probability measure under which the $X_{i}$'s are i.i.d. with density function $f_{G^*}$ as in \eqref{eq:fg*} with respect to $\lambda$.

\begin{thm}\label{th:consist}
Under assumptions $(\mathcal A1)$-$(\mathcal A5)$, if $G^*\in{\mathbb G}$, the utility function $u(x,\theta)$ is $G$-a.e. continuous in $\theta$ for every $G\in\mathbb G$ and for $\lambda$-almost every $x$, and the following condition holds:
 there exist measurable non negative functions  
$a$ and $b$ such that
\begin{equation}\label{eq:cond1}
|u(x,\theta)| \le a(\theta)\,b(x)
\;\mbox{ for all }(x,\theta),\quad\quad\sup_{\theta\in\Theta}a(\theta)<\infty,\quad\int b(x)\, f_{G^*}(x)\lambda(\ddr x)<\infty.
\end{equation}
Then, as $n\rightarrow\infty$
    \begin{equation}\label{eq_consistency}
      \frac 1 n  |\hat{S}^{\text{\tiny{[Q-B]}}}_n-\hat S_{n}(G^{\ast})|\rightarrow 0\qquad \P^*\mbox{-a.s.}
    \end{equation}
\end{thm}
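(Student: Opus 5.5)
We reduce the statement to a convergence property of Newton's recursion under $\P^*$, followed by a strong law of large numbers with a domination argument. Write $\psi_G(x)=\int_\Theta u(x,\theta)\,G(\ddr\theta\mid x)$, so that
\[
\frac1n\bigl(\hat{S}^{\text{\tiny{[Q-B]}}}_n-\hat S_n(G^*)\bigr)=\frac1n\sum_{i=1}^n\bigl(\psi_{G_n}(X_i)-\psi_{G^*}(X_i)\bigr).
\]
The first step is to show that $G_n\stackrel{w}{\longrightarrow}G^*$ $\P^*$-a.s. This is the frequentist counterpart of Theorem~\ref{th:convg}(i). Under $\P^*$ the $X_i$ are i.i.d.\ from $f_{G^*}$, which is exactly the setting in which Newton's algorithm is known to be consistent (Martin--Ghosh/Tokdar type results). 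The proof of Theorem~\ref{th:convg} should go through verbatim conditionally on $\tilde G=G^*$, because that proof uses only (A1)--(A5) and the i.i.d.\ structure given $\tilde G$. The argument runs as follows. One builds the supermartingale-type recursion for $K(f_{G^*},f_{G_n})$ using (A5) to control the second-order term, combines it with $\sum\alpha_n^2<\infty$, $\sum\alpha_n=\infty$ and Robbins--Siegmund, and concludes with compactness (A1) and identifiability (A4). Since $G^*\in\mathbb G\subset\overline{\mathbb G}$, the hypotheses are met. As in Theorem~\ref{th:convg}(iii), continuity of $k$ in $\theta$ and positivity then give $G_n(\cdot\mid x)\stackrel{w}{\longrightarrow}G^*(\cdot\mid x)$ for $\lambda$-a.e.\ $x$. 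Because $u(x,\cdot)$ is $G^*$-a.e.\ continuous, hence $G^*(\cdot\mid x)$-a.e.\ continuous, and is bounded by $b(x)\sup a$, the continuous mapping theorem yields $\psi_{G_n}(x)\to\psi_{G^*}(x)$ for $\lambda$-a.e.\ $x$, $\P^*$-a.s.

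The main obstacle is that the same $G_n$ is evaluated at all $X_i$, $i\le n$, so the pointwise convergence in $x$ must be upgraded to something uniform enough to average. We plan a truncation argument. Fix a measurable set $C$ with $\int_{C^c} b\,f_{G^*}\,\ddr\lambda<\varepsilon$ and such that $\sup_{x\in C}|\psi_{G_n}(x)-\psi_{G^*}(x)|\to0$. By Egorov's theorem, applied to the finite measure $b f_{G^*}\lambda$, convergence a.e.\ gives uniform convergence off a set of small $b f_{G^*}\lambda$-measure. There is a subtlety here: the Egorov set depends on $\omega$. To handle it, we work on a fixed $\omega$ in the full-measure event of step one and note that the strong law for the countably many fixed functions used below does not depend on that set. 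More cleanly, we apply Egorov to the product measure $\P^*\otimes b f_{G^*}\lambda$ to get a deterministic $C$ with $\P^*$-probability close to one; alternatively, we simply use the bound below.

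Since $|\psi_G(x)|\le \sup a\cdot b(x)$ for every $G$, we split the average as
\[
\frac1n\sum_{i\le n}\bigl|\psi_{G_n}(X_i)-\psi_{G^*}(X_i)\bigr|
\le \sup_{x\in C}\bigl|\psi_{G_n}-\psi_{G^*}\bigr|\cdot\frac1n\sum_{i\le n}\mathbf 1\{X_i\in C\}
+\frac{2\sup a}{n}\sum_{i\le n}b(X_i)\mathbf 1\{X_i\notin C\}.
\]
The first term tends to $0$. By the strong law of large numbers, valid because $\int b f_{G^*}\,\ddr\lambda<\infty$ by \eqref{eq:cond1}, the second term converges a.s.\ to $2\sup a\int_{C^c}b f_{G^*}\,\ddr\lambda\le 2\sup a\,\varepsilon$. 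Letting $\varepsilon\downarrow0$ along a countable sequence gives \eqref{eq_consistency}.

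If the deterministic choice of $C$ proves delicate, we fall back on a local uniformity argument. Weak convergence $G_n\to G^*$ on compact $\Theta$, together with continuity of $k$ in $\theta$, gives convergence of $\int u\,k\,\ddr G_n$ and $\int k\,\ddr G_n$ that is uniform over $x$ in sets where $k(x\mid\cdot)$ is equicontinuous and bounded below. We would then choose $C$ as such a set, using strict positivity of $k$ and compactness of $\Theta$.
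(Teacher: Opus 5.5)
Your first two steps are sound and match the first half of the paper's argument. You show $G_n\stackrel{w}{\longrightarrow}G^*$ $\P^*$-a.s.\ (the paper simply cites Corollary 4.7 of Martin and Tokdar), and hence $\psi_{G_n}(x)\to\psi_{G^*}(x)$ for $\lambda$-a.e.\ $x$.

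The gap is at the step you identify as the main obstacle. For fixed $\omega$, Egorov gives a set $C=C(\omega)$ that depends on the whole sample path. The term $\frac1n\sum_{i\le n}b(X_i)\mathbf 1\{X_i\notin C(\omega)\}$ is then an average over a data-dependent set, and the strong law does not apply to it. When $\lambda$ is non-atomic (Gaussian kernel) this is fatal. The set $C(\omega)\setminus\{X_i(\omega):i\ge 1\}$ is an equally valid Egorov set, and with it your bound collapses to the trivial $\frac{2\sup a}{n}\sum_{i\le n}b(X_i)$. In other words, $\lambda$-a.e.\ convergence says nothing at the sample points themselves.

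Neither remedy repairs this. Egorov for $\P^*\otimes b f_{G^*}\lambda$ yields a set $E\subset\Omega\times\R$ whose $x$-sections still depend on $\omega$, and a set of large product measure need not contain a large rectangle $A\times C$. So no deterministic $C$ comes out. The fallback needs $\{\theta\mapsto u(x,\theta)k(x\mid\theta): x\in C\}$ to be equicontinuous, but $u$ is only assumed $G$-a.e.\ continuous in $\theta$. For $u(x,\theta)=I(x>\theta)$ the jump sits at $\theta=x$ and moves with $x$. Your argument does work when $\lambda$ is counting measure (Poisson kernel), since then $C$ can be a deterministic finite set.

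What is missing is uniformity over $G$ rather than over $x$. The paper, with $e_n$ the empirical measure and $u_G=\psi_G$, bounds $\frac1n|\hat{S}^{\text{\tiny{[Q-B]}}}_n-\hat S_n(G^*)|$ by the sum of three terms:
\begin{itemize}
\item $|\int u_{G_n}\,\ddr e_n-\int u_{G_n}\,\ddr F_{G^*}|$, controlled by a Glivenko--Cantelli property of $\mathcal U=\{u_G:G\in\mathbb G\}$, which the paper derives from continuity of $G\mapsto u_G$ and precompactness of $\mathbb G$;
\item $\int|u_{G_n}-u_{G^*}|\,\ddr F_{G^*}$, handled by dominated convergence using your pointwise limit and the envelope $\sup_\theta a(\theta)\,b(x)$;
\item $|\int u_{G^*}\,\ddr e_n-\int u_{G^*}\,\ddr F_{G^*}|$, handled by the strong law.
\end{itemize}

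A self-contained way to close your own argument is a Wald-type local bracket. Take $\rho$ a metric for weak convergence and let $\bar h_\delta(x)=\sup\{|\psi_G(x)-\psi_{G^*}(x)|: G\in\mathbb G,\ \rho(G,G^*)<\delta\}$. This has the properties you need:
\begin{itemize}
\item It is a fixed measurable function: reduce to a countable dense subset, using the continuity of $G\mapsto\psi_G(x)$ on $\mathbb G$ that your hypothesis on $u$ provides.
\item It is bounded by $2\sup_\theta a(\theta)\,b(x)$.
\item It dominates $|\psi_{G_n}-\psi_{G^*}|$ once $\rho(G_n,G^*)<\delta$, using $G_n\in\mathbb G$ as the paper also does.
\end{itemize}
By the strong law, $\limsup_n \frac1n\sum_{i\le n}|\psi_{G_n}(X_i)-\psi_{G^*}(X_i)|\le\int \bar h_\delta f_{G^*}\,\ddr\lambda$ $\P^*$-a.s. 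The right-hand side tends to $0$ as $\delta\downarrow 0$ by dominated convergence, because continuity of $G\mapsto\psi_G(x)$ at $G^*$ gives $\bar h_\delta(x)\to0$ for $\lambda$-a.e.\ $x$.
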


See \ref{app:proofconsistency} for the proof of Theorem \ref{th:consist}. 

\subsection{Rate of convergence}

Under the additional assumption that $G_{n}$ in Newton's algorithm \eqref{eq:newton} is supported on a finite grid $\Theta_\diamond\subset\Theta$, we establish a rate of convergence for $n^{-1}  |\hat{S}^{\text{\tiny{[Q-B]}}}_n-\hat S_{n}(G^*_\diamond)|$, where $G^*_\diamond$ is defined as
    \begin{equation}\label{eq:discr}
G^*_\diamond = \arg\min_{G \in \mathcal{S}_\diamond} \mathrm{KL}(f_{G^*} \,\|\, f_{G}),
\end{equation}
with $\mathcal{S}_\diamond$  denoting the class of probability distributions supported on $\Theta_\diamond$ and $\mathrm{KL}(\cdot ||\cdot)$ being the Kullback-Leibler divergence. Such a discretization of $\Theta$ is known to be a standard assumption in the analysis of convergence rates for Newton's algorithm, as it enables a quantification of asymptotic behavior without loss of generality \citep{Mar(08),Mar(12)}. 

For $n\geq1$, let $G_{n}$ in Newton's algorithm \eqref{eq:newton} be defined from an initial guess $G_{0}$ on $\Theta_\diamond$. Under $(\mathcal A4)$ and
\begin{equation}\label{eq:finitesecder}
\int_{\R}\frac{k(x\mid\theta)^2}{k(x\mid\theta')^2}f_{G^*}(x)\lambda(\ddr x)<+\infty\qquad\forall \theta,\theta'\in\Theta_\diamond,
\end{equation}
then, as $n\rightarrow+\infty$, $G_n$ converges $\P^*$-a.s. to $G^*_\diamond$ \citep[Corollary 4.7]{MarTok(09)}. Then, we denote by $\hat S_n(G^*_\diamond)$ the oracle estimate of $S_{n}$ based on the KL discretization $G_\diamond^*$ of the oracle prior $G^*$, i.e., 
\begin{equation}\label{eq:estg*}
    \hat S_n(G^*_\diamond)=\sum_{i=1}^n \int_{\Theta_\diamond} u(X_i,\theta)\, G_\diamond^*(\ddr\theta\mid X_i).
\end{equation}

\begin{thm}\label{th:regret}
Let $\hat{S}^{\text{\tiny{[Q-B]}}}_{n}$ be the quasi-Bayes EB estimate of $S_{n}$, with an initial guess $G_0$ supported on a grid $\Theta_\diamond$, with $\alpha_{n} = (\alpha + n)^{-\gamma}$ for some $\gamma\in(1/2,1)$. If the utility function $u$ is bounded and 
$(\mathcal A4)$ and \eqref{eq:finitesecder} hold, 
then
\begin{equation}\label{eq:ratedelta}
\bigl|\hat{S}^{\text{\tiny{[Q-B]}}}_{n}-\hat S_n(G^*_\diamond)\bigr|=o(n^{\delta})\qquad \P^{\ast}\text{-a.s.}
\end{equation}
for every $\delta>\tfrac{1}{2\gamma}$, where $\P^{\ast}$ is the probability measure under which the $X_{i}$’s are i.i.d. with density function $f_{G^*}$ as in \eqref{eq:fg*}.
\end{thm}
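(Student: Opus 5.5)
The plan is to reduce the bound on $|\hat S^{\text{\tiny[Q-B]}}_n-\hat S_n(G^*_\diamond)|$ to a convergence rate for $G_n\to G^*_\diamond$ on the finite grid $\Theta_\diamond$, and then sum over the $n$ observations. Write $\Theta_\diamond=\{t_1,\dots,t_K\}$ and identify $G_n$ and $G^*_\diamond$ with vectors $w_n,w^*\in\Delta_K$. Then
\[
\bigl|\hat S^{\text{\tiny[Q-B]}}_n-\hat S_n(G^*_\diamond)\bigr|\le \sum_{i=1}^n\Bigl|\int u(X_i,\theta)\,G_n(\ddr\theta\mid X_i)-\int u(X_i,\theta)\,G^*_\diamond(\ddr\theta\mid X_i)\Bigr|\le \|u\|_\infty\sum_{i=1}^n \bigl\|G_n(\cdot\mid X_i)-G^*_\diamond(\cdot\mid X_i)\bigr\|_{1}.
\]

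\textbf{Step 1: pass from posteriors to priors.} For fixed $x$, the posterior weights are $w_j k(x\mid t_j)/\sum_l w_l k(x\mid t_l)$. The first-order perturbation of the posterior in $w$ involves the ratios $k(x\mid t_j)/k(x\mid t_l)$. A crude but $x$-uniform bound is
\[
\|G_n(\cdot\mid x)-G^*_\diamond(\cdot\mid x)\|_1\le 2\sum_j |w_{n,j}-w^*_j|\,\frac{k(x\mid t_j)}{\sum_l w^*_l k(x\mid t_l)}.
\]
This is obtained by writing the difference of the two ratios and normalizing by the $w^*$ denominator, absorbing the other denominator by symmetry. The bound is problematic where $w^*_l=0$, so instead I would use the smallest-ratio form, summing over both denominators, and control it with $C(x)=\max_{j,l}k(x\mid t_j)/k(x\mid t_l)$. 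This gives $\|G_n(\cdot\mid x)-G^*_\diamond(\cdot\mid x)\|_1\le C(x)\,\|w_n-w^*\|_1\,\min\{1,\cdot\}$. By \eqref{eq:finitesecder}, $\E^*[C(X)^2]<\infty$, so the strong law gives $n^{-1}\sum_i C(X_i)=O(1)$ a.s. An Abel/Kronecker-type argument then yields
\[
\sum_{i\le n} C(X_i)\,\|w_n-w^*\|_1 \le \|w_n-w^*\|_1\cdot O(n).
\]
Alternatively one can use the trivial bound $2$ per term, which gives the same order since the rate applies to all $i$ simultaneously with the current $G_n$.

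\textbf{Step 2: the main obstacle, a rate for Newton's recursion.} I need $\|w_n-w^*\|_1=o(n^{\delta-1})$ a.s. for every $\delta>1/(2\gamma)$. I would write \eqref{eq:newton} on the grid as a Robbins--Monro scheme:
\[
w_{n+1}=w_n+\alpha_{n+1}\bigl(h(w_n)+\xi_{n+1}\bigr),\qquad h(w)=\E^*\bigl[w\,k(X\mid\cdot)/f_w(X)\bigr]-w,
\]
with martingale-difference noise $\xi_{n+1}$. The noise has bounded conditional second moment by \eqref{eq:finitesecder}. The key facts are that $w^*$ is the unique stable zero, since KL-minimizers are fixed points, (A4) gives identifiability, and by \citet{MarTok(09)} $w_n\to w^*$ a.s. Linearizing near $w^*$, I would use the Lyapunov function $\mathrm{KL}(f_{G^*}\|f_{w})-\mathrm{KL}(f_{G^*}\|f_{w^*})$, which is locally quadratic and strictly decreasing along $h$. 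Standard stochastic approximation rates, via a martingale LIL or Kronecker lemma, give $\|w_n-w^*\|=O\bigl(\sqrt{\alpha_n\log\log n}\bigr)$, or at least $o(n^{-\gamma/2+\epsilon})$. This is the step I expect to be technically hardest, because the zero may lie on the boundary of the simplex and the Jacobian there is not symmetric negative definite. To handle that, I would treat the off-support coordinates separately: for them $w_{n,j}$ decays multiplicatively at an exponential-in-$\sum\alpha_k$ rate.

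\textbf{Step 3: combine.} Steps 1 and 2 give $|\hat S^{\text{\tiny[Q-B]}}_n-\hat S_n(G^*_\diamond)|=O(n)\,o(n^{-\gamma/2+\epsilon})$. That is weaker than the claimed $o(n^{\delta})$ with $\delta>1/(2\gamma)$. The claimed rate suggests a cruder route: bound the distance by a martingale term $\sum_k\alpha_k\xi_k$ whose tail from $n$ is $o(n^{(1-2\gamma)/2+\epsilon})$, together with a drift term. I would therefore settle the exponent by matching to this structure, and choose whichever of the two bounds yields $o(n^\delta)$ for all $\delta>1/(2\gamma)$, noting that $1/(2\gamma)<1$. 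In the worst case, the a.s. convergence $w_n\to w^*$ plus Cesàro summation already gives $o(n)$, and the rate argument only needs to improve this by the factor $n^{\delta-1}$.
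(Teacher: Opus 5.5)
Your architecture is the paper's: Step~1 is Lemma~\ref{lem:ineq*} (weight $k(x\mid\theta)/f_{G^*_\diamond}(x)$ there, your $C(x)$ here, both integrable by \eqref{eq:finitesecder}), and Step~2 plays the role of Lemma~\ref{lem:stochapp3}. However, the proposal does not close, because the exponent comparison in Step~3 is backwards.

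What you need is $\|w_n-w^*\|_1=o(n^{-\beta})$ for every $\beta<1-\tfrac{1}{2\gamma}$. The rate you assert in Step~2, $o(n^{-\gamma/2+\epsilon})$, is more than enough, since $\tfrac{\gamma}{2}-\bigl(1-\tfrac{1}{2\gamma}\bigr)=\tfrac{(1-\gamma)^2}{2\gamma}>0$. So Steps~1--2 give $o(n^{\delta})$ for every $\delta>1-\gamma/2$, and $1-\gamma/2<\tfrac{1}{2\gamma}$. The ``cruder route'' you retreat to is the one that fails. A martingale tail of size $n^{1/2-\gamma}$ gives only $o(n^{3/2-\gamma+\epsilon})$ for the sum, and $\tfrac{3}{2}-\gamma-\tfrac{1}{2\gamma}=\tfrac{(2\gamma-1)(1-\gamma)}{2\gamma}>0$. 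By leaving the choice open (``whichever of the two bounds\dots''), you prove nothing beyond $o(n)$.

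This is also where you and the paper diverge. The paper uses Theorem~3.1.1 of Chen (2002), which is precisely the summability route: it checks convergence of $\sum_n\alpha_n^{1-\zeta/2}\epsilon_n$ for $\zeta<2-1/\gamma$. That theorem yields $\|\mathbf g_n-\mathbf g^*_\diamond\|=o(\alpha_n^{\zeta/2})=o(n^{-\gamma\zeta/2})$, which through Lemma~\ref{lem:ineq*} gives only $\delta>\tfrac{3}{2}-\gamma$. The exponent $\tfrac{1}{2\gamma}$ comes from Lemma~\ref{lem:stochapp3} recording the rate as $\|\mathbf g_n-\mathbf g^*_\diamond\|^2=o(n^{-\zeta})$, i.e.\ without the factor $\gamma$. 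So the sharp $\sqrt{\alpha_n}$-scale rate you propose is what the stated exponent actually requires; commit to it.

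That makes Step~2 the whole proof, and you only assert it as ``standard''. You need an a.s.\ rate theorem of LIL type for Robbins--Monro recursions whose mean field has a stable Jacobian at the limit; for $\gamma<1$ this gives $\|w_n-w^*\|=O\bigl(\sqrt{\alpha_n\log n}\bigr)$. The noise hypothesis is easy: each noise coordinate lies in $[-1,1]$, because $w_{n,j}k(x\mid t_j)/f_{w_n}(x)$ and its $f_{G^*}$-average both lie in $[0,1]$.

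Stability is where the boundary enters, and your plan should be sharpened. By the KKT conditions for the KL minimizer, $c_j=\int k(x\mid t_j)f_{G^*}(x)/f_{G^*_\diamond}(x)\,\lambda(\ddr x)$ equals $1$ on the support $S$ of $w^*$ and is at most $1$ off it. Hence the Jacobian of $h$ at $w^*$ is block triangular, with diagonal blocks $-\mathrm{diag}(w^*_S)M_{SS}$ and $\mathrm{diag}(c_j-1)_{j\notin S}$, where $M$ is the positive definite matrix of Lemma~\ref{lem:stochapp3}. Under strict complementarity ($c_j<1$ for $j\notin S$) it is stable, and the off-support coordinates need no separate treatment. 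If $c_j=1$ for some $j\notin S$, there is a zero eigenvalue, and neither a linearization rate nor your multiplicative-decay argument works without an extra assumption. The paper does not resolve this either: calling $\mathrm{diag}(\mathbf g^*_\diamond)$ positive definite assumes full support.

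Two minor points. In Step~1 the worry about $w^*_l=0$ is unfounded, since the denominator is $f_{G^*_\diamond}(x)>0$. No Abel/Kronecker step is needed either, since $\|w_n-w^*\|_1$ factors out of the sum and the SLLN suffices. Finally, the trivial bound $2$ per term does not ``give the same order'', as it carries no factor $\|w_n-w^*\|_1$.
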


See \ref{app:proofregret} for the proof of Theorem \ref{th:regret}. The next proposition constructs a finite grid for the Poisson and Gaussian kernels ensuring a KL approximation error of order at most $\eta$, together with asymptotics.

\begin{prp}\label{prp_grid}
Let  $k(\cdot\mid\theta)$ be either the Poisson kernel with mean (intensity) $\theta$ or the Gaussian kernel with mean $\theta$ and known variance $\sigma^2$. Furthermore, assume that $\int_{\mathbb{R}} |x|^k f_{G^*}(x)\,\lambda(dx) \le m_k < +\infty$ for some $k\ge2$ (Poisson kernel) or $k>2$ (Gaussian kernel). For $\eta>0$, let $\mathcal S_\diamond$ denote the class of probability distributions supported on a grid $\Theta_\diamond$, where,
\begin{itemize}
\item in the Poisson case
\begin{displaymath}
\Theta_\diamond=\left\{ \frac{i\eta}{2}\text{ : } i=1,\dots,d \right\}
\end{displaymath} 
with 
\begin{displaymath}
d=\inf\left\{j\in\mathbb N\text{ : }\dfrac{j\eta}{2}\ge e,\dfrac{\log(j\eta/2)\,m_k}{j^{k-1}} \le \dfrac{\eta^k}{2^k}\right\};
\end{displaymath}
\item in the Gaussian case
\begin{displaymath}
\Theta_\diamond=\{ i\sigma\sqrt{\eta}\text{ : } i=-d,\dots,d \}
\end{displaymath}
with
\begin{displaymath}
d=\inf\left\{j\in\mathbb N\text{ : }\dfrac{m_k}{j^{k-2}\sigma^k} \le \eta^{k/2}\right\}.
\end{displaymath}
\end{itemize}
Then the distribution $G^*_\diamond =\arg\min_{G\in\mathcal{S}_\diamond}\mathrm{KL}(f_{G^*}||f_G)$ exists, it is unique, and it satisfies $\mathrm{KL}(f_{G^*}||f_{G^*_\diamond })<\eta$. Moreover, if $\sup_{x,\theta}|u(x,\theta)|<+\infty$, $\alpha_n=(\alpha+n)^{-\gamma}$ with $\gamma\in(1/2,1)$, and 
\begin{equation}\label{eq:momentgf}
\int_{\mathbb{R}} h(x)f_{G^*}(x)\lambda(\ddr x)<+\infty 
\end{equation}
with $h(x)=d^{2x} $ and $h(x)=\exp\left\{\frac{4d\sqrt{\eta} |x|}{\sigma}\right\}$ in the Poisson and Gaussian kernel, respectively, then \eqref{eq:ratedelta} holds.
\end{prp}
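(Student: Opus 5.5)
The proof has three parts: existence and uniqueness of the KL projection; the bound $\mathrm{KL}(f_{G^*}\|f_{G^*_\diamond})<\eta$ via an explicit competitor supported on $\Theta_\diamond$; and the rate \eqref{eq:ratedelta}, obtained from Theorem~\ref{th:regret} by checking \eqref{eq:finitesecder}.

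\emph{Existence and uniqueness.} $\mathcal S_\diamond$ is the finite-dimensional simplex on $\Theta_\diamond$. The map $G\mapsto \mathrm{KL}(f_{G^*}\|f_G)=\int f_{G^*}\log f_{G^*}-\int f_{G^*}\log f_G$ is convex, because $G\mapsto -\log f_G(x)$ is convex. It is also lower semicontinuous on the simplex by Fatou's lemma. Hence a minimizer exists.

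For uniqueness, note that $-\log$ is strictly convex. So two minimizers $G_1,G_2$ must satisfy $f_{G_1}=f_{G_2}$ $f_{G^*}$-a.e. Since Poisson and Gaussian mixtures are analytic in $x$, and $f_{G^*}>0$ on a set with an accumulation point (or on all of $\mathbb N$), the densities agree everywhere. Identifiability of finite Poisson and Gaussian location mixtures then gives $G_1=G_2$.

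\emph{KL bound.} I would not analyse $G^*_\diamond$ directly. Instead I would build a competitor $\bar G\in\mathcal S_\diamond$ by moving the mass of $G^*$ to nearby grid points, and use $\mathrm{KL}(f_{G^*}\|f_{G^*_\diamond})\le \mathrm{KL}(f_{G^*}\|f_{\bar G})$. Concretely, $\theta$ is sent to the nearest grid point $\pi(\theta)$, with mass beyond the extreme grid points sent to the endpoints. By joint convexity of KL,
\[
\mathrm{KL}(f_{G^*}\|f_{\bar G})\le\int \mathrm{KL}\bigl(k(\cdot\mid\theta)\,\|\,k(\cdot\mid\pi(\theta))\bigr)G^*(\ddr\theta).
\]
In the Gaussian case this integrand equals $(\theta-\pi(\theta))^2/(2\sigma^2)$. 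Inside the grid range this is at most $\eta/8$, since the spacing is $\sigma\sqrt\eta$. For the tail, where $|\theta|>d\sigma\sqrt\eta$, I would bound $\int_{|\theta|>D}(|\theta|-D)^2G^*(\ddr\theta)$ through the $k$-th moment of $\theta$, which is controlled by $m_k$ because $X=\theta+\varepsilon$. A Markov-type bound then gives something of order $m_k/(D^{k-2}\sigma^2)$, and the choice of $d$ makes this at most $\eta/2$ up to constants.

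In the Poisson case the integrand is $\theta\log(\theta/\theta')-\theta+\theta'\le(\theta-\theta')^2/\theta'$. With spacing $\eta/2$, the interior contribution is $O(\eta)$ relative to $\theta'$. Near $0$, points in $(0,\eta/2)$ go to $\eta/2$, and the bound $\theta'-\theta+\theta\log(\theta/\theta')\le\eta/2$ handles them. The tail above $D=d\eta/2$ contributes roughly $\int_{\theta>D}\theta\log(\theta/D)\,G^*(\ddr\theta)$, which I would bound by $\log(D)\,m_k/D^{k-1}$ using $E[X^k]\ge E[\theta^k]$. The condition defining $d$ matches exactly this bound.

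The main obstacle is tracking constants so that the total is strictly less than $\eta$. I expect to need slightly sharper local bounds, or to exploit the factor $2^k$ built into the choice of $d$.

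\emph{Rate.} Theorem~\ref{th:regret} requires (A4) on the grid and \eqref{eq:finitesecder}. (A4) is identifiability of finite mixtures, which is classical. For \eqref{eq:finitesecder} I compute the likelihood ratios explicitly.

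In the Poisson case $k(x\mid\theta)^2/k(x\mid\theta')^2=e^{2(\theta'-\theta)}(\theta/\theta')^{2x}$. Since $\theta/\theta'\le d$ on the grid, this is at most $C\,d^{2x}$, which is integrable by \eqref{eq:momentgf}.

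In the Gaussian case the ratio equals $\exp\{((\theta-\theta')(2x-\theta-\theta'))/\sigma^2\}$. Using $|\theta-\theta'|\le 2d\sigma\sqrt\eta$, it is bounded by $C\exp\{4d\sqrt\eta|x|/\sigma\}$, again integrable by \eqref{eq:momentgf}.

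Boundedness of $u$ and the choice of $\alpha_n$ are assumed, so Theorem~\ref{th:regret} yields \eqref{eq:ratedelta}.
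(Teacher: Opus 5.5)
Your proposal is correct and follows essentially the paper's route. You build a competitor in $\mathcal S_\diamond$ by moving $G^*$-mass onto grid points (the paper uses the right endpoints of the cells $(\vartheta_{i-1},\vartheta_i]$ rather than nearest points). You then apply joint convexity of KL with the explicit per-$\theta$ Poisson/Gaussian KL bounds and a $k$-th moment tail bound through $\int|\theta|^kG^*(\ddr\theta)\le m_k$. Finally you use the same likelihood-ratio bounds $e^{(d-1)\eta}d^{2x}$ and $e^{4d^2\eta}e^{4d\sqrt{\eta}|x|/\sigma}$ to verify \eqref{eq:finitesecder}. The only real difference is that you prove existence and uniqueness directly (convexity, Fatou, strict convexity of $-\log$, identifiability of finite mixtures), where the paper simply cites Lemma 3.1 and Remark 3.2 of Martin and Tokdar (2009).

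Your concern about constants is unfounded, because the definition of $d$ is calibrated so that the tail term is at most $\eta/2$. This is $m_k\log D/D^{k-1}$ with $D=d\eta/2\ge e$ in the Poisson case, and $m_k/(2\sigma^2D^{k-2})$ with $D=d\sigma\sqrt{\eta}$ in the Gaussian case. The grid-range term is at most $\eta/2$ times the $G^*$-mass of the grid range. Since the tail term vanishes when that mass equals $1$, the total is strictly below $\eta$; the paper itself only records $\le\eta$.
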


See \ref{sec_ex_grid} for the proof of Proposition \ref{prp_grid}.  Explicit constructions of a grid that guarantees a KL approximation error of at most order $\eta$  can be given for other kernels, such that the exponential family, though the Poisson and Gaussian kernels are the most common \citep[Section 2]{Zha(05)}. A construction for a general kernel 
$k$, based on its structural properties, is presented in \ref{thm_grid}.


\section{Examples with Poisson kernel}\label{sec4}

For the Poisson mixture model, we present examples of $S_{n}$ drawn from \citet[Section 2.3]{Zha(05)}. Let $n\geq1$ and consider the real-valued random vectors $(X_{1},Y_{1},\theta_{1}),\ldots,(X_{n},Y_{n},\theta_{n})$ distributed as follows:
\begin{align}\label{model_distr_poi}
Y_i\mid X_{i},\,\theta_{i} & \quad\simind\quad \text{Poisson}(\cdot\mid\theta_{i})\qquad i=1,\ldots,n\\[-0.2cm]
\notag X_i\mid\theta_{i} & \quad\simind\quad \text{Poisson}(\cdot\mid\theta_{i})\\[-0.2cm]
\notag\theta_{i}& \quad\simiid\quad{G},
\end{align}
where $\text{Poisson}(\cdot\mid\theta)$ is the Poisson kernel of mean (intensity) $\theta\in\Theta$, and ${G}$ is the unknown mixing distribution on $\Theta\subset\mathbb{R}^{+}$. Given the observed $X_{i}$'s, we address the estimation of $S_{1,n}=\sum_{1\leq i\leq n}\theta_{i}I(X_{i}\leq\kappa)$, $S_{2,n}=\sum_{1\leq i\leq n}Y_{i}I(X_{i}\leq\kappa)$ and $S_{3,n}=\sum_{1\leq i\leq n}I(X_{i}>\theta_{i})$, where $\kappa\in\mathbb{N}$ is a given threshold \citep{Rob(77),Rob(88),Robz(88),Rob(89),Rob(91),Rob(00)}. Note that, under the model \eqref{model_distr_poi}, $\E_G[Y_i\mid X_i]=\E_G[\theta_i\mid X_i]$, for $i=1,\ldots,n$; accordingly the estimates of $S_{2,n}$ coincides with those of $S_{1,n}$ after replacing $G$ by its estimate.

\begin{remark}
In the celebrated motorist example \citep{Rob(77)}, $S_{1,n}$ denotes the total intensity of motorists with at most $\kappa$ accidents this year, $S_{2,n}$ denotes the total number of accidents next year among motorists with at most $\kappa$ accidents this year (for $\kappa\in\mathbb{N}$), and $S_{3,n}$ denotes the number of motorists whose number of accidents this year exceeds their intensity.
\end{remark}

\subsection{Parametric $g$-modeling and ``$u,v$" method}

Let ${G}$ in \eqref{model_distr_poi} be an Exponential distribution with rate (inverse scale) $\tau$. Under this assumption, we apply parametric $g$-modeling to obtain EB and Bayes EB estimates of $S_{r,n}$, $r=1,2,3$.

\begin{prp}\label{prop:poisson}
Let $n\geq1$ and let $(X_{1},Y_{1},\theta_{1}),\ldots,(X_{n},Y_{n},\theta_{n})$ be random vectors as in \eqref{model_distr_poi}, ${G}$ be an Exponential distribution with rate (inverse scale) $\tau$, here denoted by ${G}_{\tau}$. Then
\begin{equation}\label{s1_poisson}
\hat{S}_{r,n}({G}_{\tau})=\E_{\tilde{G}_{\tau}}[S_{r,n}\mid X_{1:n}]=\frac{1}{n}\sum_{i=1}^{n}\frac{1+X_{i}}{1+\tau}I(X_{i}\leq\kappa)\qquad r=1,2
\end{equation}
and
\begin{equation}\label{s3_poisson}
\hat{S}_{3,n}({G}_{\tau})=\E_{\tilde{G}_{\tau}}[S_{3,n}\mid X_{1:n}]=n-\sum_{i=1}^{n}\frac{\Gamma(1+X_{i},(1+\tau)X_{i})}{\Gamma(1+X_{i})},
\end{equation}
where $\Gamma(x)=\int_{\mathbb{R}^{+}}y^{x-1}\text{e}^{-y}\ddr y$ and $\Gamma(x,z)=\int_{(z,+\infty)}y^{x-1}\text{e}^{-y}\ddr y$. EB estimates of $S_{1,n}$, $S_{2,n}$ and $S_{3,n}$ are obtained as plug-in estimates by replacing $\tau$ in \eqref{s1_poisson} and \eqref{s3_poisson} with the maximum likelihood estimate
\begin{displaymath}
\hat{\tau}^{\text{\tiny{[ML]}}}_{n}=\frac{n}{\sum_{i=1}^{n}X_{i}}.
\end{displaymath}
By placing on $\tau/(1+\tau)$ a (standard) Beta prior with (hyper)-parameter $(\alpha,\beta)$, the Bayes EB estimate of $S_{1,n}$, $S_{2,n}$ and $S_{3,n}$ are obtained as plug-in estimates by replacing $\tau$ in \eqref{s1_poisson} and \eqref{s3_poisson} with the Bayesian estimate
\begin{displaymath}
\hat{\tau}^{\text{\tiny{[B]}}}_{n}=\frac{n+\alpha}{\beta+\sum_{i=1}^{n}X_{i}}.
\end{displaymath}
The resulting EB and Bayes EB estimates of $S_{r,n}$, for $r=1,2,3$, here denoted by $\hat{S}^{\text{\tiny{[ML]}}}_{r,n}$ and $\hat{S}^{\text{\tiny{[B]}}}_{r,n}$ respectively, are asymptotically efficient \citep[Theorem 2.2]{Zha(05)}.
\end{prp}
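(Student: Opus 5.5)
The plan is to exploit Poisson--Gamma conjugacy, compute each conditional expectation in \eqref{eq:post_exp} term by term, and then treat the estimation of $\tau$ through the geometric marginal law of the $X_i$'s.

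\emph{Posterior of $\theta_i$.} Under $G_\tau$ with density $\tau e^{-\tau\theta}$ on $\mathbb{R}^+$, the posterior of $\theta_i$ given $X_i=x$ is proportional to $\theta^{x}e^{-\theta}\,e^{-\tau\theta}$. This is a Gamma distribution with shape $1+x$ and rate $1+\tau$. Hence
\begin{displaymath}
\E_{G_\tau}[\theta_i\mid X_i]=\frac{1+X_i}{1+\tau}.
\end{displaymath}
Multiplying by $I(X_i\le\kappa)$ and summing over $i$ gives \eqref{s1_poisson} for $r=1$. I would remark that the prefactor $1/n$ in the display looks like a typographical slip: the sum itself is the posterior mean of $S_{1,n}$, and I would either drop the factor or read it as normalisation, consistent with \eqref{s3_poisson}.

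For $r=2$, the model \eqref{model_distr_poi} makes $Y_i$ independent of $X_i$ given $\theta_i$. The tower property then gives $\E_{G_\tau}[Y_i\mid X_i]=\E_{G_\tau}[\E(Y_i\mid\theta_i)\mid X_i]=\E_{G_\tau}[\theta_i\mid X_i]$, which is the remark made just before the proposition.

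For $r=3$, the term $\E_{G_\tau}[I(X_i>\theta_i)\mid X_i]$ equals the probability that a $\mathrm{Gamma}(1+X_i,1+\tau)$ variable is below $X_i$. Changing variable to $y=(1+\tau)\theta$, this is $1-\Gamma(1+X_i,(1+\tau)X_i)/\Gamma(1+X_i)$. Summing over $i$ yields \eqref{s3_poisson}.

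\emph{Estimation of $\tau$.} The marginal law of each $X_i$ is
\begin{displaymath}
\int_0^\infty e^{-\theta}\frac{\theta^x}{x!}\,\tau e^{-\tau\theta}\,\ddr\theta=\frac{\tau}{(1+\tau)^{x+1}},
\end{displaymath}
which is geometric with success probability $p=\tau/(1+\tau)$.

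The log-likelihood is $n\log\tau-(n+\sum_i X_i)\log(1+\tau)$. It is strictly concave in $\log\tau$, and its stationary equation $n/\tau=(n+\sum_iX_i)/(1+\tau)$ gives $\hat\tau^{\text{\tiny{[ML]}}}_n=n/\sum_iX_i$.

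In the $p$-parametrisation the likelihood is $p^n(1-p)^{\sum_iX_i}$. A $\mathrm{Beta}(\alpha,\beta)$ prior is therefore conjugate, with posterior $\mathrm{Beta}(\alpha+n,\beta+\sum_iX_i)$. Mapping the posterior means of $p$ and $1-p$ back through $\tau=p/(1-p)$, i.e.\ taking the ratio $\E[p\mid X_{1:n}]/\E[1-p\mid X_{1:n}]$, gives $\hat\tau^{\text{\tiny{[B]}}}_n=(n+\alpha)/(\beta+\sum_iX_i)$. This is the step needing the most care, since the proposition calls it ``the Bayesian estimate'' without specifying which functional. I would state this ratio-of-posterior-means definition explicitly.

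\emph{Efficiency.} I would invoke \citet[Theorem 2.2]{Zha(05)}, which requires an asymptotically efficient estimator of $\tau$. The MLE in the regular one-parameter geometric family is efficient. The difference satisfies $\hat\tau^{\text{\tiny{[B]}}}_n-\hat\tau^{\text{\tiny{[ML]}}}_n=O_p(n^{-1})$, because $\sum_iX_i/n\to 1/\tau>0$ almost surely. Hence the Bayes estimator is asymptotically equivalent to the MLE and also efficient, and plugging either into \eqref{s1_poisson}--\eqref{s3_poisson} gives efficient estimates of $S_{r,n}$.

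The only genuine obstacle is checking the regularity conditions of that theorem: smoothness in $\tau$ of the summands $(1+x)/(1+\tau)$ and $\Gamma(1+x,(1+\tau)x)/\Gamma(1+x)$, with derivatives dominated by functions having finite moments under the geometric law. This is routine, since the first derivative in $\tau$ is bounded by $(1+x)$ times a constant, and the second is bounded by $x^{x+1}e^{-x}/x!\lesssim \sqrt{x}$ times a constant.
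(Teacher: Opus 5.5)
Your proposal is correct. The paper gives no proof of this proposition; it cites \citet[Theorem 2.2]{Zha(05)} only for efficiency. Your Poisson--Gamma conjugacy computation, the geometric marginal likelihood for $\tau$, and the Beta--geometric conjugacy for $p=\tau/(1+\tau)$ are exactly the derivation its formulas rely on. Both of your flags are also right: the factor $1/n$ in \eqref{s1_poisson} is spurious, and $\hat\tau^{\text{\tiny{[B]}}}_n$ must be read as $\hat p/(1-\hat p)$ with $\hat p$ the posterior mean of $p$. That reading is the canonical one for $r=1,2$, because $1/(1+\tau)=1-p$ enters the estimate linearly. So there the plug-in coincides with the genuine Bayes EB posterior mean obtained by integrating $\tau$ out, whereas for $r=3$ it is only a plug-in approximation.
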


Now, we consider ${G}$ in \eqref{model_distr_poi} completely unknown, namely we do not specify any parametric assumption. In this nonparametric setting, the ``$u,v$" method of \citet{Rob(88)} estimates $S_{n}$ in \eqref{eq:sum} with $\hat{S}^{\text{\tiny[``u,v"]}}_{n}=\sum_{1\leq i\leq n}v(X_{i})$ if there exists a function $v$ that solve the ``$u,v$" equation
\begin{equation}\label{int_eq1}
\sum_{x\geq0}(v(x)-u(x,\theta))\frac{\text{e}^{-\theta}\theta^{x}}{x!}=0\qquad\forall\, \theta\in\Theta.
\end{equation}
For the quantities $S_{1,n}$ and $S_{2,n}$, Equation \eqref{int_eq1} admits the unique solution $v(x)=xI(x-1\leq\kappa)$ \citep{Rob(88),Rob(00)}. Hence, the ``$u,v$" estimates of $S_{1,n}$ and $S_{2,n}$ are
\begin{equation}\label{uv_poiss}
\hat{S}^{\text{\tiny[``u,v"]}}_{1,n}=\hat{S}^{\text{\tiny[``u,v"]}}_{2,n}=\frac{1}{n}\sum_{i=1}^{n}X_{i}I(X_{i}-1\leq\kappa).
\end{equation}
The estimates $\hat{S}^{\text{\tiny[``u,v"]}}_{1,n}$ and $\hat{S}^{\text{\tiny[``u,v"]}}_{2,n}$ are asymptotically efficient, as $n\rightarrow+\infty$ \citep[Theorem 2.5]{Zha(05)}. With regards to $S_{3,n}$, the ``$u,v$" equation \eqref{int_eq1} does not admit an explicit solution $v$, which makes not possible the derivation of a corresponding ``$u,v$" estimate \citep[Section 2.3]{Zha(05)}.


\subsection{Synthetic-data analysis}

For $i=1,\ldots,100$, let $\mathbf{X}_{i}=X_{1:100 i}$ denote a dataset of size $n=100i$ generated from the Poisson mixture model \eqref{model_distr_poi}, with Weibull prior  $G$ of scale $5$ and shape $3$. The $\mathbf{X}_{i}$'s are nested, so that the sample size increases progressively: at stage $i$, we have $n=100i$ data, obtained by adding $100$ new data at each step, starting from $n=100$. For each dataset $\mathbf{X}_{i}$ of size $n=100i$, we apply the quasi-Bayes EB approach to estimate $S_{1,n}$ and $S_{2,n}$, with $\kappa=2$, and $S_{3,n}$.

The implementation of Newton's algorithm \eqref{eq:newton} requires the numerical evaluation of an integral, i.e. the marginal likelihood, which we approximate via the trapezoidal rule. To this end, the density function of $G_{n}$ is represented through its values on a fixed grid of $d$ quadrature points over $\Theta$, with $d$ controlling the numerical resolution of the integration. This representation is used solely for numerical evaluation and does not impose any modeling restriction on $\Theta$. For a dataset $X_{1:n}$ of size $n$, letting $U_\Theta=\max\{\max\{X_{1:n}\},\lceil Q_{n,0.99}+4\sqrt{\max\{Q_{n,0.99},1\}}\rceil\}$, with $Q_{n,0.99}=\text{Quantile}(X_{1:n};0.99)$, we employ a uniform grid with $d=1{,}000$ over $\Theta=(0,U_\Theta)$, set $G_0$ to be Uniform on $\Theta$, and set $\alpha_n=(1+n)^{-0.99}$. Under this setting for Newton's algorithm, we obtain an estimate $G_n$ of the mixing distribution $G$, which, when substituted into \eqref{eq:est}, gives the quasi-Bayes EB estimates of $S_{r,n}$, $r=1,2,3$, i.e. 
\begin{displaymath}
\hat{S}^{\text{\tiny{[Q-B]}}}_{1,n}=\hat{S}^{\text{\tiny{[Q-B]}}}_{2,n}=\sum_{i=1}^n\E_{G_n}[\theta_i\mid X_i] I(X_i\leq \kappa)=\sum_{i=1}^{n}I(X_i\leq \kappa)
\int_{\Theta}\theta \,G_n(d\theta\mid X_i),
\end{displaymath}
and
\begin{displaymath}
\hat{S}^{\text{\tiny{[Q-B]}}}_{3,n}=\sum_{i=1}^n\E_{G_n}\!\left[I(X_i>\theta_i)\mid X_i\right]=\sum_{i=1}^n\int_{\Theta}I(X_i>\theta)\,G_n(d\theta\mid X_i).
\end{displaymath}
Note that $\hat{S}^{\text{\tiny{[Q-B]}}}_{1,n}=\hat{S}^{\text{\tiny{[Q-B]}}}_{2,n}$ because, under the model \eqref{model_distr_poi}, $\E_G[Y_i\mid X_i]=\E_G[\theta_i\mid X_i]$, for $i=1,\ldots,n$.

The estimates $\hat{S}^{\text{\tiny{[Q-B]}}}_{r,n}$, normalized by the sample size $n$, are reported in Figure~\ref{fig_weib_s1}-\ref{fig_weib_s3}, together with their mean absolute deviations (MAD) from the true values of $S_{r,n}$. See also Table~\ref{tab_weib_s1}-\ref{tab_weib_s3}. The estimate $\hat{S}^{\text{\tiny{[Q-B]}}}_{r,n}$ is compared with: the EB estimate $\hat{S}^{\text{\tiny{[ML]}}}_{r,n}$ and Bayes EB estimate $\hat{S}^{\text{\tiny{[B]}}}_{r,n}$ from Proposition \ref{prop:poisson}; ii) the ``$u,v$" estimate $\hat{S}^{\text{\tiny[``u,v"]}}_{r,n}$ in \eqref{uv_poiss}, which is available only for $S_{1,n}$. We also report the oracle Bayes estimate of $S_{r,n}$, namely $\hat{S}^{\text{\tiny{[O]}}}_{r,n}=\E_{G}[S_{r,n}\,|\,X_{1:n}]$ with $G$ being the Weibull distribution with scale $5$ and shape $3$.  An extended version of Table~\ref{tab_weib_s1}-\ref{tab_weib_s3} is reported in \ref{app_num1}. A sensitivity analysis of  $\hat{S}^{\text{\tiny{[Q-B]}}}_{1,n}$ with respect to the grid resolution $d$, which is provided in \ref{app_num1}, shows that the quasi-Bayes EB estimates is robust to the choice of $d$.

\begin{figure}[h!]
\begin{center}
\includegraphics[width=1\linewidth,height=0.36\textheight,keepaspectratio]{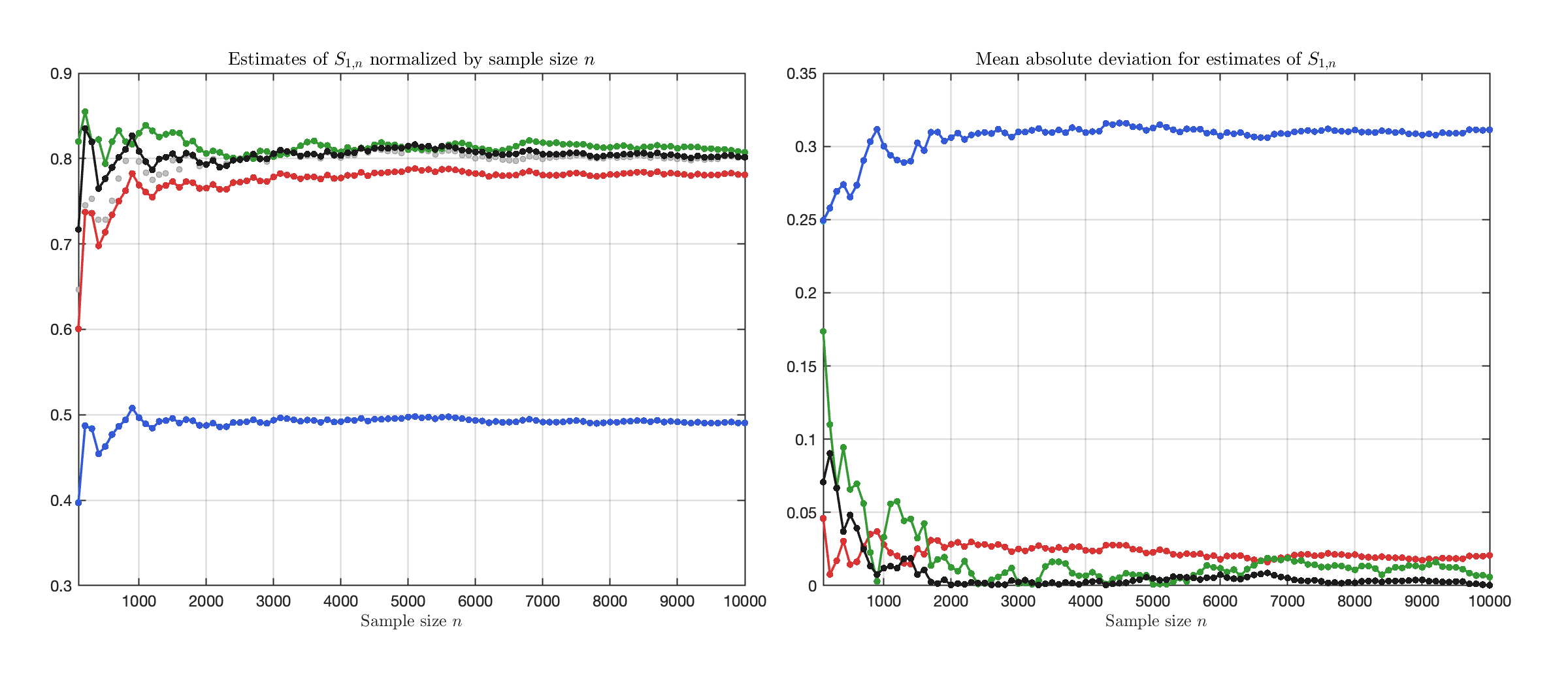}
\end{center}
\caption{\scriptsize{Weibull prior, $S_{1,n}$. Left panel: true values $n^{-1}S_{1,n}$ (Grey o-) and estimates $n^{-1}\hat{S}^{\text{\tiny{[O]}}}_{r,n}$ (Black .-); $n^{-1}\hat{S}^{\text{\tiny{[ML]}}}_{1,n}$ (Blue .-), $n^{-1}\hat{S}^{\text{\tiny{[B]}}}_{1,n}$ (Cyan .-), $n^{-1}\hat{S}^{\text{\tiny[``u,v"]}}_{1,n}$ (Green .-) and $n^{-1}\hat{S}^{\text{\tiny{[Q-B]}}}_{1,n}$ (Red .-). Right panel: MAD of $\hat{S}^{\text{\tiny{[O]}}}_{r,n}$ (Black .-); $\hat{S}^{\text{\tiny{[ML]}}}_{1,n}$ (Blue .-), $\hat{S}^{\text{\tiny{[B]}}}_{1,n}$ (Cyan .-), $\hat{S}^{\text{\tiny[``u,v"]}}_{1,n}$ (Green .-) and $\hat{S}^{\text{\tiny{[Q-B]}}}_{1,n}$ (Red .-)}}
\label{fig_weib_s1}
\end{figure}

\begin{table}[ht]
\centering
\caption{Weibull prior, $S_{1,n}$: MAD as $n$ varies}
{
\setlength{\tabcolsep}{0pt}
\begin{tabular}{@{}l@{\hspace{1.2cm}}*{5}{>{\centering\arraybackslash}p{2.15cm}}@{}}
\hline
\hline
& $\hat{S}^{\text{\tiny{[O]}}}_{1,n}$ & $\hat{S}^{\text{\tiny{[ML]}}}_{1,n}$ & $\hat{S}^{\text{\tiny{[B]}}}_{1,n}$ & $\hat{S}^{\text{\tiny{[``u,v'']}}}_{1,n}$ & $\hat{S}^{\text{\tiny{[Q-B]}}}_{1,n}$ \\[0.1cm]
\hline
\multicolumn{6}{@{}l}{\underline{$n=1{,}000$}} \\[0.05cm]
MAD & 0.0118 & 0.3002 & 0.3002 & 0.0331 & \textbf{0.0279} \\[0.2cm]

\multicolumn{6}{@{}l}{\underline{$n=3{,}000$}} \\[0.05cm]
MAD & 0.0022 & 0.3099 & 0.3099 & 0.0012 & \textbf{0.0250} \\[0.2cm]

\multicolumn{6}{@{}l}{\underline{$n=5{,}000$}} \\[0.05cm]
MAD & 0.0048 & 0.3126 & 0.3126 & 0.0007 & \textbf{0.0227} \\[0.2cm]

\multicolumn{6}{@{}l}{\underline{$n=7{,}000$}} \\[0.05cm]
MAD & 0.0051 & 0.3085 & 0.3085 & 0.0188 & \textbf{0.0193} \\[0.2cm]

\multicolumn{6}{@{}l}{\underline{$n=9{,}000$}} \\[0.05cm]
MAD & 0.0038 & 0.3079 & 0.3079 & 0.0124 & \textbf{0.0173} \\[0.1cm]
\hline
\hline
\end{tabular}
}
\label{tab_weib_s1}
\end{table}

\begin{figure}[h!]
\begin{center}
\includegraphics[width=1\linewidth,height=0.36\textheight,keepaspectratio]{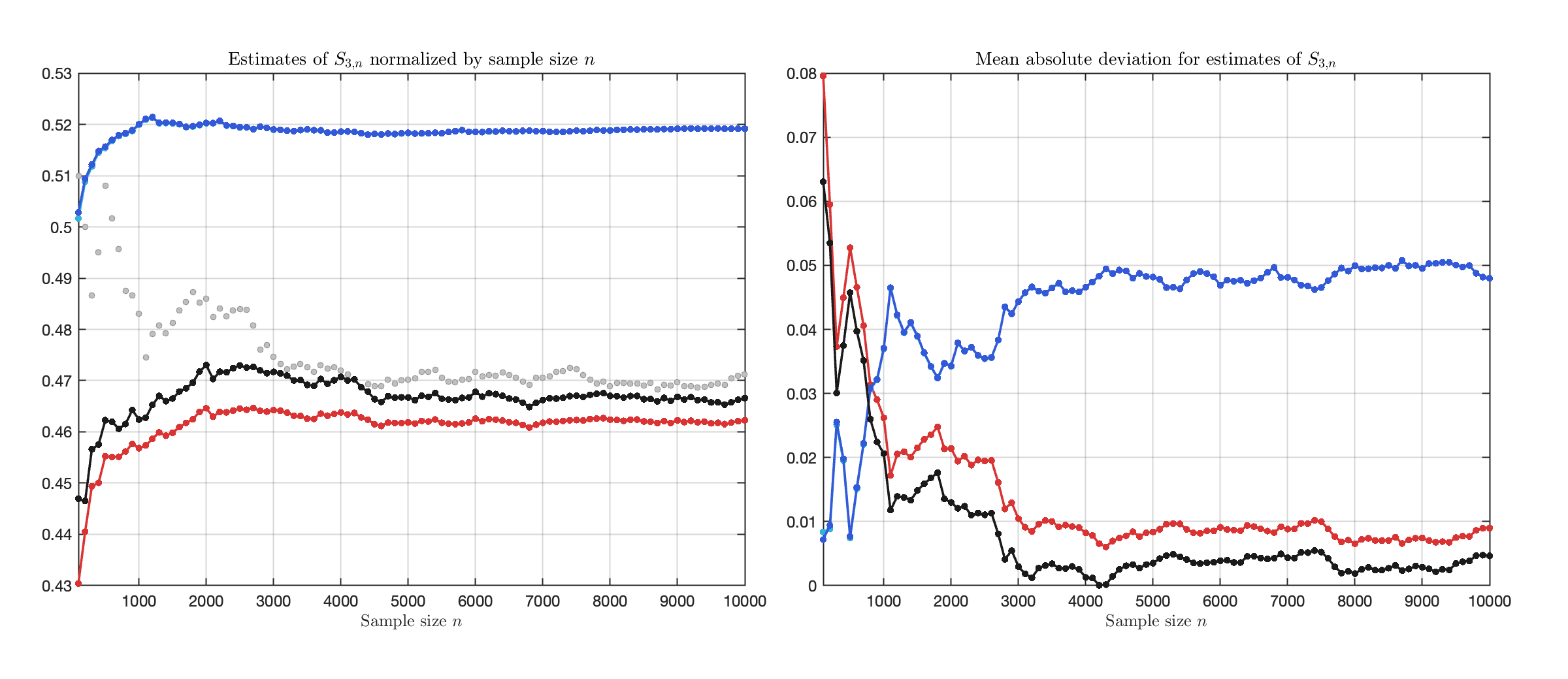}
\end{center}
\caption{\scriptsize{Weibull prior, $S_{3,n}$. Left panel: true values of $n^{-1}S_{3,n}$ (Grey o-) and estimates $n^{-1}\hat{S}^{\text{\tiny{[O]}}}_{3,n}$ (Black .-); $n^{-1}\hat{S}^{\text{\tiny{[ML]}}}_{3,n}$ (Blue .-), $n^{-1}\hat{S}^{\text{\tiny{[B]}}}_{3,n}$ (Cyan .-)  and $n^{-1}\hat{S}^{\text{\tiny{[Q-B]}}}_{3,n}$ (Red .-). Right panel: MAD of $\hat{S}^{\text{\tiny{[O]}}}_{3,n}$ (Black .-); $\hat{S}^{\text{\tiny{[ML]}}}_{3,n}$ (Blue .-), $\hat{S}^{\text{\tiny{[B]}}}_{3,n}$ (Cyan .-) and $\hat{S}^{\text{\tiny{[Q-B]}}}_{3,n}$ (Red .-)}}
\label{fig_weib_s3}
\end{figure}

\begin{table}[ht]
\centering
\caption{Weibull prior, $S_{3,n}$: MAD as $n$ varies}
{
\setlength{\tabcolsep}{0pt}
\begin{tabular}{@{}l@{\hspace{1.2cm}}*{4}{>{\centering\arraybackslash}p{2.15cm}}@{}}
\hline
\hline
& $\hat{S}^{\text{\tiny{[O]}}}_{3,n}$ & $\hat{S}^{\text{\tiny{[ML]}}}_{3,n}$ & $\hat{S}^{\text{\tiny{[B]}}}_{3,n}$ & $\hat{S}^{\text{\tiny{[Q-B]}}}_{3,n}$ \\[0.1cm]
\hline
\multicolumn{5}{@{}l}{\underline{$n=1{,}000$}} \\[0.05cm]
MAD & 0.0206 & 0.0371 & 0.0369 & \textbf{0.0262}\\[0.2cm]

\multicolumn{5}{@{}l}{\underline{$n=3{,}000$}} \\[0.05cm]
MAD & 0.0030 & 0.0444 & 0.0443 & \textbf{0.0104}\\[0.2cm]

\multicolumn{5}{@{}l}{\underline{$n=5{,}000$}} \\[0.05cm]
MAD & 0.0035 & 0.0482 & 0.0482 & \textbf{0.0084} \\[0.2cm]

\multicolumn{5}{@{}l}{\underline{$n=7{,}000$}} \\[0.05cm]
MAD & 0.0044 & 0.0481 & 0.0481 & \textbf{0.0088} \\[0.2cm]

\multicolumn{5}{@{}l}{\underline{$n=9{,}000$}} \\[0.05cm]
MAD & 0.0028 & 0.0495 & 0.0495 & \textbf{0.0074} \\[0.1cm]
\hline
\hline
\end{tabular}
}
\label{tab_weib_s3}
\end{table}

The results in Figure~\ref{fig_weib_s1}--\ref{fig_weib_s3} and Table~\ref{tab_weib_s1}--\ref{tab_weib_s3} highlight clear differences among the competing estimates; see also \ref{app_num1}. Focusing first on the estimation of $S_{1,n}$, the quasi-Bayes EB estimate exhibits a performance comparable to that of the ``$u,v$'' estimate, with both approaches providing more accurate estimates than the parametric EB and Bayes EB estimates, which are nearly indistinguishable. With regards to $S_{3,n}$, the ``$u,v$'' estimate is not available. In this case, the quasi-Bayes EB estimate outperforms both the parametric EB and Bayes EB estimates.

Turning to computation, the ``$u,v$'' estimate is computationally trivial whenever it is available, as it is based on a closed-form expression, and is therefore superior from a purely computational standpoint. The same holds for the parametric EB and Bayes EB estimates. By contrast, the quasi-Bayes EB estimate incurs the computational cost of updating the estimate of the mixing distribution $G_n$ via Newton’s algorithm \eqref{eq:newton}. This cost is constant per observation and depends only on the grid resolution $d$. For $d=1,000$, the CPU time required to update $G_n$ is approximately $0.001$ seconds, and it scales linearly with $d$. See \ref{app_num1} for details. Overall, in terms of the trade-off between accuracy (in terms of MAD), computational scalability, and generality of the class of functionals, the quasi-Bayes EB method is the most attractive compromise among the parametric and nonparametric methods considered.

A more detailed and exhaustive synthetic-data analysis is reported in \ref{app_num1}. This includes a study of the quasi-Bayes credible interval in \eqref{cred_interval}, together with further experiments under other choices of the prior $G$, i.e., the Uniform, half-Gaussian, and square-root of half-Cauchy distributions. 

\section{Real-data analysis}\label{sec5}

We apply the quasi-Bayes EB approach to a dataset of the total number of goals scored by National Hockey League players in the 2017--2018 and 2018--2019 seasons, available from \url{https://www.hockey-reference.com/}. We restrict attention to the $n=745$ players for whom goal totals are available in both seasons. We analyze these data under the Poisson mixture model \eqref{model_distr_poi}, where, for player $i$, $Y_{i}$ is the total number of goals scored in the 2018--2019 season, $X_{i}$ is the total number of goals scored in the 2017--2018 season, and $\theta_i$ represents the latent scoring intensity. In this setting, the $X_{i}$'s are observed data used for estimation, whereas the $Y_{i}$'s are treated as future unobserved data that are used only for out-of-sample validation. Thus, the 2017--2018 season is used to learn the distribution $G$ of latent scoring intensities, and the 2018--2019 season provides a natural future-season benchmark. 

Given the 2017--2018 season observed $X_{1:n}$, we consider the estimation of the following 2018--2019 season functionals: $T_{1,n}(\kappa)=\sum_{1\leq i\leq n} Y_i I(X_i\leq \kappa)$ and $T_{3,n}=\sum_{1\leq i\leq n} I(Y_i<X_i)$, where $\kappa\in\mathbb N$ is a fixed threshold. $T_{1,n}(\kappa)$ represents the total number of goals scored in 2018--2019 season by players whose 2017--2018 season goal totals are below $\kappa$, whereas $T_{3,n}$ represents the the number of players whose goal total decreases from the past season to the future season.

We apply the quasi-Bayes EB approach to the dataset $X_{1:n}$, obtaining estimates of $T_{1,n}(\kappa)$, for $\kappa\in\{2i\text{ : }i=0,1,\ldots,2^{-1}\max\{X_{1:n}\}\}$, and $T_{3,n}$. For the implementation of Newton's algorithm, which provides an estimate $G_n$ of the mixing distribution $G$, we employ the same setting as in Section~\ref{sec4}. Therefore, from \eqref{eq:est}, the quasi-Bayes EB estimates of $T_{1,n}$ and $T_{3,n}$ are
\begin{displaymath}
\hat T^{\text{\tiny{[Q-B]}}}_{1,n}(\kappa)=\sum_{i=1}^n\E_{G_n}[\theta_i\mid X_i] I(X_i\leq \kappa)=\sum_{i=1}^n
I(X_i\leq \kappa)
\int_{\Theta}
\theta \,
G_n(d\theta\mid X_i)
\end{displaymath}
and 
\begin{displaymath}
\hat T^{\text{\tiny{[Q-B]}}}_{3,n}=\sum_{i=1}^n\E_{G_n}\!\left[I(Y_i<X_i)\mid X_i\right]=\sum_{i=1}^n\int_{\Theta}\left\{\sum_{y=0}^{X_i-1}e^{-\theta}\frac{\theta^y}{y!}\right\}G_n(d\theta\mid X_i),
\end{displaymath}
respectively, with the convention that the inner sum in the estimate $\hat T^{\text{\tiny{[Q-B]}}}_{3,n}$ is equal to zero when $X_{i}=0$. Estimates $\hat T^{\text{\tiny{[Q-B]}}}_{1,n}(\kappa)$ and $\hat T^{\text{\tiny{[Q-B]}}}_{3,n}$ are reported in Figure~\ref{fig_realdata_s1} and Table~\ref{tab_realdata_s3}, respectively, together with their absolute deviations (AD) from the true values of $T_{1,n}(\kappa)$ and $T_{3,n}$. Quasi-Bayes EB estimates are compared with: i) the EB estimates $\hat{T}^{\text{\tiny{[ML]}}}_{1,n}(\kappa)$ and $\hat{T}^{\text{\tiny{[ML]}}}_{3,n}$, as well as the Bayes EB estimatse $\hat{T}^{\text{\tiny{[B]}}}_{1,n}(\kappa)$ and $\hat{T}^{\text{\tiny{[B]}}}_{3,n}$, which are obtained along the same lines as in Proposition \ref{prop:poisson}; ii) the ``$u,v$" estimate $\hat{T}^{\text{\tiny[``u,v"]}}_{1,n}(\kappa)$ obtained along the same lines as in \eqref{uv_poiss}.

\begin{figure}[h!]
\begin{center}
\includegraphics[width=1\linewidth,height=0.36\textheight,keepaspectratio]{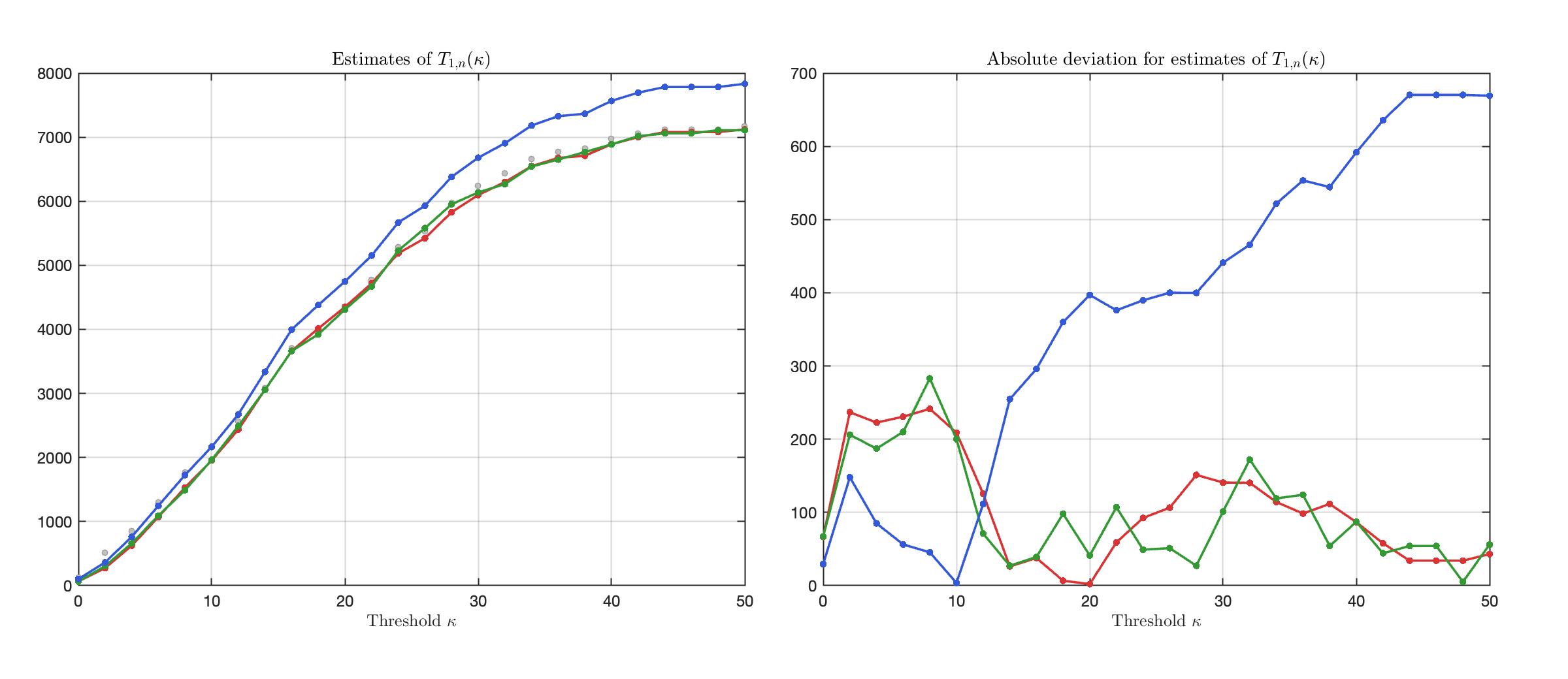}
\end{center}
\caption{\scriptsize{Left panel: true values $T_{1,n}(\kappa)$ (Grey o-) and estimates $\hat{T}^{\text{\tiny{[ML]}}}_{1,n}(\kappa)$ (Blue .-), $\hat{T}^{\text{\tiny{[B]}}}_{1,n}(\kappa)$ (Cyan .-), $\hat{T}^{\text{\tiny[``u,v"]}}_{1,n}(\kappa)$ (Green .-) and $\hat{T}^{\text{\tiny{[Q-B]}}}_{1,n}(\kappa)$ (Red .-). Right panel: AD of $\hat{T}^{\text{\tiny{[ML]}}}_{1,n}(\kappa)$ (Blue .-), $\hat{T}^{\text{\tiny{[B]}}}_{1,n}(\kappa)$ (Cyan .-), $\hat{T}^{\text{\tiny[``u,v"]}}_{1,n}(\kappa)$ (Green .-) and $\hat{T}^{\text{\tiny{[Q-B]}}}_{1,n}(\kappa)$ (Red .-)}}
\label{fig_realdata_s1}
\end{figure}

\begin{table}[ht]
\centering
\caption{Estimates of $T_{3,n}$ and AD}
{
\setlength{\tabcolsep}{0pt}
\begin{tabular}{@{}l@{\hspace{3.6cm}}*{3}{>{\centering\arraybackslash}p{2.15cm}}@{}}
\hline
\hline
& $\hat{T}^{\text{\tiny{[ML]}}}_{3,n}$ & $\hat{T}^{\text{\tiny{[B]}}}_{3,n}$ & $\hat{T}^{\text{\tiny{[Q-B]}}}_{3,n}$ \\[0.1cm]
\hline
\multicolumn{4}{@{}l}{\underline{True value = 318}} \\[0.05cm]
Estimates & 302.4713 & 302.5334 & \textbf{303.0321} \\
AD      & 15.5287 & 15.4666 & \textbf{14.9679} \\[0.1cm]
\hline
\hline
\end{tabular}
}
\label{tab_realdata_s3}
\end{table}

As $\kappa$ increases, Figure~\ref{fig_realdata_s1} shows how well each method estimates the cumulative future scoring contribution of players with increasingly large past-season goal totals. Figure~\ref{fig_realdata_s1} shows that the quasi-Bayes EB estimates remain close to the true values of  $T_{1,n}(\kappa)$ across most thresholds $\kappa$, and the ``$u,v$'' estimates display a similarly stable behavior. By contrast, the EB and Bayes EB estimates tend to overestimate $T_{1,n}(\kappa)$ for larger thresholds. Table~\ref{tab_realdata_s3} shows how well each method estimates the number of players whose goal total decreases from the 2017--2018 season to the 2018--2019 season; this is an aggregate measure of regression to the mean in scoring performance. The quasi-Bayes EB estimate is closer to the value of $T_{3,n}$ than the EB and Bayes EB estimates, yielding the smallest absolute deviation. It is also worth noting that the ``$u,v$'' estimate is not available for $T_{3,n}$, whereas the quasi-Bayes EB construction applies directly. Overall, both the analysis of $T_{1,n}(\kappa)$ and that of $T_{3,n}$ indicate that the quasi-Bayes EB approach provides stable and competitive out-of-sample estimation.

The same analysis reported in Figure~\ref{fig_realdata_s1} and Table~\ref{tab_realdata_s3}, is repeated after stratifying players by their role, namely center (C), left wing (LW), right wing (RW) and defenseman (D). In this case, the quasi-Bayes EB estimates and the competing estimates are applied separately to each role-specific subset of players, and the estimates are reported in Figure~\ref{fig_realdata_s1_role} and Table~\ref{tab_realdata_s3_role}.

\begin{figure}[h!]
\begin{center}
\includegraphics[width=1\linewidth,height=0.56\textheight,keepaspectratio]{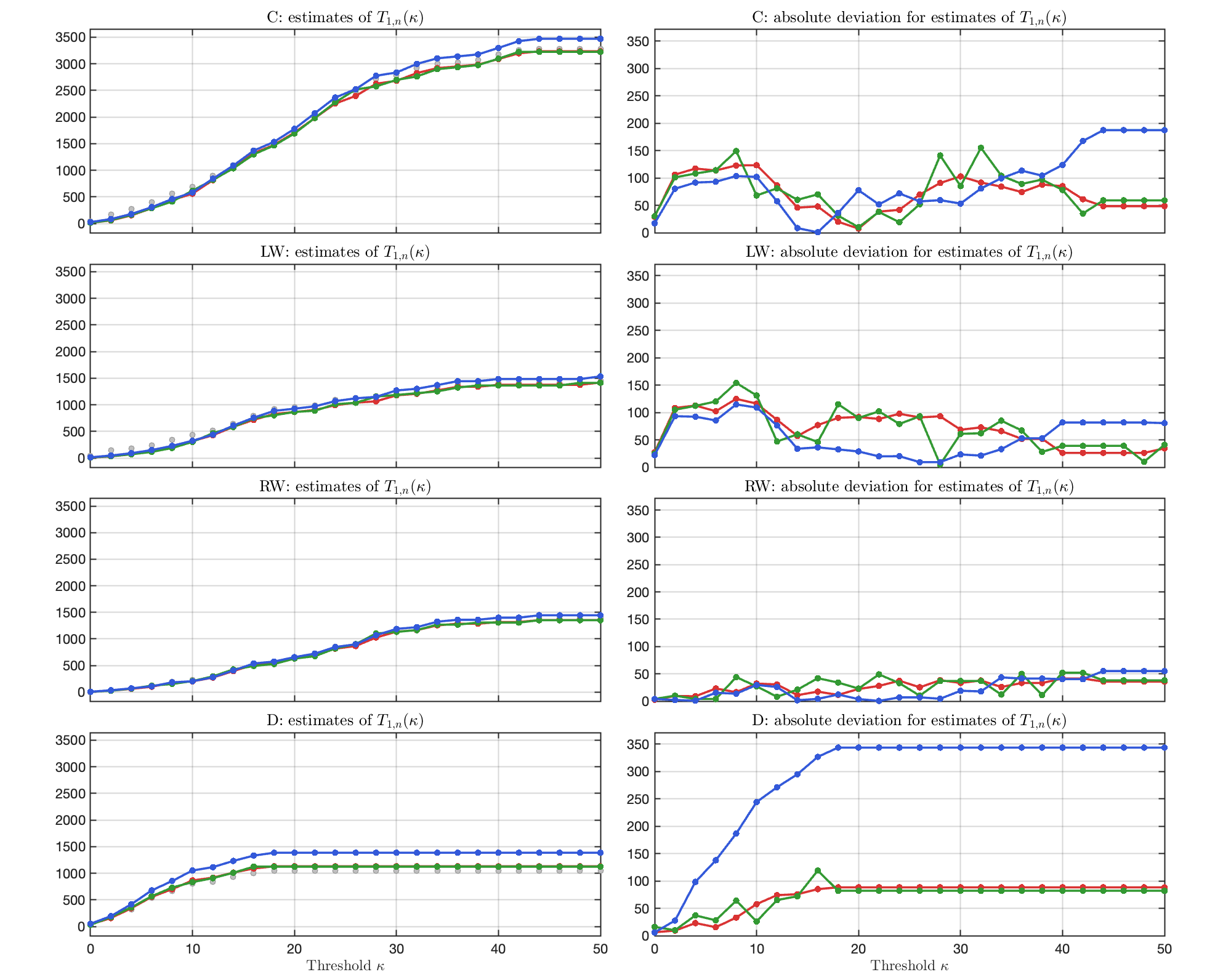}
\end{center}
\caption{\scriptsize{Left panels with centers (C, first row), left wings (LW, second row), right wings (RW, third row) and defenseman (D, fourth row): true values $T_{1,n}(\kappa)$ (Grey o-) and estimates $\hat{T}^{\text{\tiny{[ML]}}}_{1,n}(\kappa)$ (Blue .-), $\hat{T}^{\text{\tiny{[B]}}}_{1,n}(\kappa)$ (Cyan .-), $\hat{T}^{\text{\tiny[``u,v"]}}_{1,n}(\kappa)$ (Green .-) and $\hat{T}^{\text{\tiny{[Q-B]}}}_{1,n}(\kappa)$ (Red .-). Right panels with centers (C, first row), left wings (LW, second row), right wings (RW, third row) and defenseman (D, fourth row): AD of $\hat{T}^{\text{\tiny{[ML]}}}_{1,n}(\kappa)$ (Blue .-), $\hat{T}^{\text{\tiny{[B]}}}_{1,n}(\kappa)$ (Cyan .-), $\hat{T}^{\text{\tiny[``u,v"]}}_{1,n}(\kappa)$ (Green .-) and $\hat{T}^{\text{\tiny{[Q-B]}}}_{1,n}(\kappa)$ (Red .-)}}
\label{fig_realdata_s1_role}
\end{figure}

\begin{table}[ht]
\centering
\caption{Estimates of $T_{3,n}$ and AD}
{
\setlength{\tabcolsep}{0pt}
\begin{tabular}{@{}l@{\hspace{5.6cm}}*{3}{>{\centering\arraybackslash}p{2.15cm}}@{}}
\hline
\hline
& $\hat{T}^{\text{\tiny{[ML]}}}_{3,n}$ & $\hat{T}^{\text{\tiny{[B]}}}_{3,n}$ & $\hat{T}^{\text{\tiny{[Q-B]}}}_{3,n}$ \\[0.1cm]
\hline
\multicolumn{4}{@{}l}{\underline{Centers (C); True value = 109}} \\[0.05cm]
Estimates & 109.7222 & 109.7817 & \textbf{107.6732} \\
MAD      & 0.7222 & 0.7817 & \textbf{1.3268} \\[0.2cm]

\multicolumn{4}{@{}l}{\underline{Right wings (RW); True value = 40}} \\[0.05cm]
Estimates & 43.4283 & 43.4870 & \textbf{43.0061} \\
MAD      & 3.4283 & 3.4871 & \textbf{3.0061} \\[0.2cm]

\multicolumn{4}{@{}l}{\underline{Left wings (LW); True value = 51}} \\[0.05cm]
Estimates & 52.5085 & 52.5692 & \textbf{52.0260} \\
MAD      & 1.5085 & 1.5692 & \textbf{1.0260} \\[0.2cm]

\multicolumn{4}{@{}l}{\underline{Defensman (D); True value = 118}} \\[0.05cm]
Estimates & 101.7880 & 101.8582 & \textbf{100.5159}\\
MAD      & 16.2120 & 16.1418 & \textbf{17.4841} \\[0.1cm]

\hline
\hline
\end{tabular}
}
\label{tab_realdata_s3_role}
\end{table}


\section{Discussion and concluding remarks}\label{sec6}

\subsection{Examples with Gaussian kernel}

Although the examples in the paper are developed for the Poisson kernel, the proposed quasi-Bayes EB methodology can be applied to other kernels, a natural example being the Gaussian kernel. Let $n\geq1$ and consider real-valued random vectors $(X_1,\theta_1),\ldots,(X_n,\theta_n)$ distributed as follows:
\begin{align}\label{model_distr_gaussian}
X_i\mid\theta_i & \quad\simind\quad \mathrm{Gaussian}(\cdot\mid\theta_i,1),
\qquad i=1,\ldots,n,\\[-0.2cm]
\notag
\theta_i & \quad\simiid\quad G,
\end{align}
where $\mathrm{Gaussian}(\cdot\mid\theta,1)$ denotes the Gaussian kernel with mean $\theta$ and variance $1$, and $G$ is an unknown mixing distribution on $\Theta\subset\mathbb{R}$. Following \citet[Section 2.4]{Zha(05)}, in \ref{app_gauss} we consider the estimation of $S_{1,n}=\sum_{1\leq i\leq n}^n \theta_i I(X_i\leq\kappa)$ and $S_{3,n}=\sum_{1\leq i\leq n} I(X_i>\theta_i)$, where $\kappa\in\mathbb{R}$ is a fixed threshold. For these functionals, the quasi-Bayes EB estimate is compared with parametric EB and Bayes EB estimates, as well as with the ``$u,v$'' estimate, which is available for $S_{3,n}$ only. 

The numerical results reported in \ref{app_gauss} show broadly comparable performance across the competing estimates. For $S_{1,n}$, the quasi-Bayes EB estimate is competitive with the parametric estimates, while the classical ``$u,v$'' estimator is not available. For $S_{3,n}$, all estimates, including the ``$u,v$'' estimate, achieve similar empirical accuracy, with no method clearly dominating the others. Overall, the quasi-Bayes EB estimate remains competitive for both functionals.

\subsection{Multidimensional extension}

Throughout the paper, we have focused on univariate mixture models, where $\theta$ take values in a subset of $\mathbb{R}$. The proposed quasi-Bayes EB methodology extends naturally, with only minor modifications, to multidimensional settings under a coordinate-wise independence assumption. For $i=1,\ldots,n$, let $\Xb_i=(X_{i,1},\dots,X_{i,d})$ be a $d$-dimensional random vector with independent coordinates, where each $X_{i,j}$ has density function $k(\cdot\mid \theta_{i,j})$. Writing $\thetab_i=(\theta_{i,1},\dots,\theta_{i,d})\in\Theta^d$,
\begin{displaymath}
k(\xb_i\mid \thetab_i)=\prod_{j=1}^k k(x_{i,j}\mid \theta_{i,j}),
\end{displaymath}
and the corresponding marginal density is obtained by mixing over a distribution $G$ on $\Theta^d$. Given a measurable vector-valued utility function $\ub:\mathbb X^d\times\Theta^d\to\mathbb R^s$, we consider the estimation of
\begin{displaymath}
\Sb_n=\sum_{i=1}^n \ub(\Xb_i,\thetab_i).
\end{displaymath}
In this multidimensional framework, Newton’s algorithm extends verbatim, yielding a recursive update of the mixing distribution $G_n$ on $\Theta^d$. The quasi-Bayes EB estimate $\hat{\Sb}^{\text{\tiny{[Q-B]}}}_n$ of $\Sb_n$ is then obtained by replacing $G$ with $G_n$ in the posterior mean $\E_{G}[\Sb_n\,|\,\mathbf{X}_{1:n}]$. See \ref{app_mult} for details.

All the theoretical properties established in Sections \ref{sec2} and \ref{sec3} extend to the multidimensional framework with minor technical modifications. While the computational cost increases with the dimension $d$, Newton's algorithm preserves constant per-observation complexity.

\subsection{Alternative nonparametric EB strategies}

An important direction for future research is the development and theoretical analysis of alternative nonparametric $g$-modeling strategies for estimating sums of random variables $S_n$ under the mixture model \eqref{eq:model}. A natural approach is nonparametric maximum likelihood estimation of $G$ \citep{Lin(95)}, which yields a principled plug-in EB procedure, though it may be computationally demanding. Another promising alternative is nonparametric Bayes EB, which places a prior on $G$, for instance via Dirichlet process mixture models \citep{Fer(73),Lo(84)}. While such methods provide EB estimates of $S_n$ based on posterior means and offer substantial flexibility, they typically require MCMC or related sampling techniques for implementation \citep{Fav(13)}. Establishing rigorous theoretical guarantees for these nonparametric EB approaches and systematically comparing them with the quasi-Bayes EB methodology developed here remains an open and interesting problem.

From a purely empirical perspective, for the estimation of $S_{1,n}$ and $S_{3,n}$ under the Poisson mixture model \eqref{model_distr_poi}, we implemented a nonparametric EB strategy by replacing $\tilde{G}$ with its nonparametric maximum likelihood estimation, computed via the vertex direction algorithm \citep{Lin(95),Jan(24)}. Numerical results are reported in \ref{app_neb}, including a detailed comparison with respect to the quasi-Bayes EB and the ``$u,v$'' estimates from Sections \ref{sec4}. For $S_{1,n}$, the resulting nonparametric EB estimates are competitive with both the quasi-Bayes EB and the ``$u,v$'' estimates. In contrast, for $S_{3,n}$ the nonparametric EB estimates perform substantially worse than the quasi-Bayes EB estimates, while ``$u,v$'' estimates are not available in this case. We also report CPU times, which indicate that the nonparametric EB approach is noticeably computationally slower than the quasi-Bayes EB procedure.

\subsection{Specie sampling problems}

Another promising direction for future research is to adapt the quasi-Bayes EB methodology developed in this paper to species sampling problems \citep{Goo(53),Goo(56),Efr(76),Mao(02)}. In this setting, quantities of interest such as the number of species in the population, the missing mass, and the number of unseen species can be expressed as sums of random variables under Poisson mixture models. Unlike the examples considered in Section~\ref{sec4}, however, the associated ``$u,v$'' integral equations do not admit solutions, and corresponding $u,v$ estimates are therefore unavailable \citep[Section~3.2]{Zha(05)}. While parametric EB approaches remain feasible under parametric assumptions on the mixing distribution, developing quasi-Bayes EB estimates with theoretical guarantees in this setting would be of independent interest. Additional details on the species sampling formulation and its connection with EB estimation of sums are provided in \ref{app_ssp}.

\subsection{Further directions for future research}

As in \citet{Fav(24)}, establishing convergence rates for quasi-Bayes EB estimates currently relies on discretizing the parameter space $\Theta$, a standard and practically convenient assumption that would ideally be avoided; obtaining rate results in infinite-dimensional settings would require new tools to control stochastic approximation errors without finite-dimensional approximations. In addition, while this paper adopts a $g$-modeling strategy, an alternative $f$-modeling approach, directly modeling the marginal distribution of the observations, remains largely unexplored within the quasi-Bayes framework and may offer complementary insights. Finally, although the quasi-Bayes approach yields asymptotic credible intervals for the target sum, their frequentist coverage properties remain unknown; identifying conditions under which such intervals achieve valid coverage is an important open problem.



\section*{Appendix}
\renewcommand{\thesection}{\Alph{section}}
\renewcommand{\theequation}{\thesection.\arabic{equation}}
\setcounter{section}{1}
\setcounter{equation}{0}
\setcounter{thm}{0}

\section{Proofs of Section 2}

\subsection{Proof of Theorem \ref{th:convg}}\label{appB}
 Conditional to $\tilde G$, the random variables $X_n$ are i.i.d. with density function  $f_{\tilde G}(x)=\int_\Theta k(x\mid\theta)\tilde G(\ddr\theta)$ with respect to $\lambda$. Let $\mathcal H$ be a countable convergence determining class of bounded continuous functions on $\Theta$. \\
i) By Corollary 4.7 in \cite{MarTok(09)}, for every $h\in\mathcal H$,
$$
\P\left(\int_\Theta h(\theta)G_n(\ddr\theta)\rightarrow \int h(\theta)\tilde G(\ddr\theta)\mid \tilde G\right)=1,
$$
which implies, by countability of $\mathcal H$,  that
$$
\P\left(\int_\Theta h(\theta)G_n(\ddr\theta)\rightarrow \int h(\theta)\tilde G(\ddr\theta)\;\forall h\in\mathcal H\right)=1.
$$
\\
ii) By Corollary 4.6 in \cite{MarTok(09)}, 
$$
\P\left(
\lambda\{x:f_{G_n}(x)\not\rightarrow f_{\tilde G}(x)\}=0\mid\tilde G
\right)=1,
$$
and
$$
\P\left(\int |f_{G_n}(x)-f_{\tilde G}(x)|\lambda(\ddr x){\rightarrow}0\mid \tilde G\right)=1\quad \P\mbox{-a.s.}
$$
The above equations imply that
$$
\P\left(
\lambda\{x:f_{G_n}(x)\not\rightarrow f_{\tilde G}(x)\}=0
\right)=1.
$$
and
$$
\P\left(\int |f_{G_n}(x)-f_{\tilde G}(x)|\lambda(\ddr x){\rightarrow}0\right)=1.
$$
       \\
       iii) By i) and ii), for every $h\in\mathcal H$,
       $$\frac{\int_\Theta h(\theta)k(x\mid\theta)G_n(\ddr\theta)}{f_{G_n}(x)}
       \rightarrow \frac{\int_\Theta h(\theta)k(x\mid\theta)\tilde G(\ddr\theta)}{f_{\tilde G}(x)}\quad \P\mbox{-a.s.}$$
       Since $\mathcal H$ is countable, the above equation holds on a set of probability one for every $h\in\mathcal H$.


\subsection{Proof of  Theorem \ref{th:central}}\label{app:central}

\begin{lem}\label{lemma:sigman}
    Under the assumptions of Theorem \ref{th:central},
    \begin{align}
        &\frac 1 n  \sum_{j=1}^n \sigma_{n,j}^2 \rightarrow 
        \iint_\Theta (u(x,\theta)-\tilde u(x))^2\tilde G(\ddr\theta\mid x)f_{\tilde G}(\ddr x)\lambda(\ddr x)\quad \PP\mbox{-a.s.}\nonumber
    \end{align}
\end{lem}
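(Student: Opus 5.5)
The plan is to compare $n^{-1}\sum_j\sigma_{n,j}^2$ with the empirical average of the limiting per-observation variance, and then apply the strong law of large numbers conditionally on $\tilde G$. Set
\[
m_G(x)=\int_\Theta u(x,\theta)G(\ddr\theta\mid x),\qquad v_G(x)=\int_\Theta u^2(x,\theta)G(\ddr\theta\mid x),
\]
so that $\sigma_{n,j}^2=v_{G_n}(X_j)-m_{G_n}(X_j)^2$. Define $\tilde\sigma^2(x)=v_{\tilde G}(x)-\tilde u(x)^2$, which equals $\int_\Theta(u(x,\theta)-\tilde u(x))^2\tilde G(\ddr\theta\mid x)$. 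We then write
\[
\frac1n\sum_{j=1}^n\sigma_{n,j}^2=\frac1n\sum_{j=1}^n\tilde\sigma^2(X_j)+\frac1n\sum_{j=1}^n\bigl(\sigma_{n,j}^2-\tilde\sigma^2(X_j)\bigr).
\]

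\emph{First term.} Conditionally on $\tilde G$, the $X_j$ are i.i.d. with density $f_{\tilde G}$, and $\tilde\sigma^2$ is bounded because $u$ is bounded. The strong law of large numbers therefore gives convergence to $\int\tilde\sigma^2(x)f_{\tilde G}(x)\lambda(\ddr x)$ with conditional probability one. Integrating over $\tilde G$ upgrades this to $\PP$-a.s. convergence, exactly as in the proof of Theorem \ref{th:convg}. This limit is the stated double integral, read with $f_{\tilde G}(x)\lambda(\ddr x)$.

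\emph{Second term.} Since $u$ is bounded, say $|u|\le M$, we have $|\sigma_{n,j}^2-\tilde\sigma^2(X_j)|\le |v_{G_n}-v_{\tilde G}|(X_j)+2M|m_{G_n}-m_{\tilde G}|(X_j)$. So it suffices to show $\frac1n\sum_j|v_{G_n}-v_{\tilde G}|(X_j)\to0$, and likewise for $m$.

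\emph{Pointwise convergence.} By Theorem \ref{th:convg}(iii), $G_n(\cdot\mid x)\stackrel{w}{\to}\tilde G(\cdot\mid x)$ for $\lambda$-a.e. $x$. By assumption (i), $\theta\mapsto u(x,\theta)$ is bounded and $\tilde G$-a.s. continuous. Since $\tilde G(\cdot\mid x)\ll\tilde G$, it is also $\tilde G(\cdot\mid x)$-a.s. continuous. The continuous mapping theorem then gives $m_{G_n}(x)\to\tilde u(x)$ and $v_{G_n}(x)\to v_{\tilde G}(x)$ pointwise for $\lambda$-a.e. $x$, $\PP$-a.s.

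\emph{Main obstacle: uniformity over the data.} Pointwise convergence is not enough, because the evaluation points $X_j$ move with $n$ while $G_n$ depends on all of them. I would use the local equicontinuity in (ii): on a neighborhood of each point of the support, the families $\{m_G\}$ and $\{v_G\}$ with $G\in\mathbb G$ are equicontinuous and uniformly bounded by $M^2$. Together with pointwise convergence on a dense countable set (Arzelà--Ascoli style), this upgrades the convergence to uniform convergence on compact subsets $K$ of the support.

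Then, for $\varepsilon>0$, pick a compact $K$ with $\int_{K^c}f_{\tilde G}\,\ddr\lambda<\varepsilon$. Split the average:
\[
\frac1n\sum_j|v_{G_n}-v_{\tilde G}|(X_j)\le\sup_K|v_{G_n}-v_{\tilde G}|+2M^2\,\frac1n\sum_j I(X_j\notin K).
\]
The first term tends to $0$. By the strong law of large numbers, the second tends to at most $2M^2\varepsilon$. Letting $\varepsilon\downarrow0$ along a countable sequence finishes the argument.

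Two details need care. First, $G_n$ must lie in $\mathbb G$ (or its closure) for the equicontinuity to apply; this holds since $G_n\ll G_0\ll\mu$ by (A2). Second, in the discrete-kernel case compacts are finite sets, so the uniformity step reduces to pointwise convergence, matching the remark after Theorem \ref{th:central}.
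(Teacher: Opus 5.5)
Your proposal is correct and follows essentially the paper's route: pointwise convergence of the posterior moments from Theorem~\ref{th:convg} together with the a.s.\ continuity in (i), an upgrade to uniform convergence on a compact $K$ via the equicontinuity in (ii) and Ascoli--Arzel\`a, a split of the empirical average over $K$ and $K^c$, and the strong law applied conditionally on $\tilde G$. There are two minor differences. You center at $\tilde\sigma^2(X_j)$ rather than handling $v_n$ and $u_n^2$ separately, and you choose $K$ from the tightness of $f_{\tilde G}$ rather than from the uniform tightness of $\{k(\cdot\mid\theta)\lambda:\theta\in\Theta\}$; in that case take $K$ from a countable deterministic family such as $[-m,m]\cap{}$(support of $\lambda$), so the strong law on $n^{-1}\sum_j I(X_j\notin K)$ holds for all $m$ simultaneously. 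Your explicit strong-law control of that tail term is, if anything, cleaner than the paper's bound for the $K^c$ part.
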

\begin{proof}
Denoting 
$$
u_n(x)=\int_\Theta u(x,\theta)G_n(\ddr\theta\mid x),\quad\quad v_n(x)=\int_\Theta u^2(x,\theta)G_n(\ddr\theta\mid x),
$$
we can write
    \begin{align*}
        \frac 1 n  \sum_{j=1}^n \sigma_{n,j}^2
&=\frac 1 n \sum_{j=1}^n v_n(X_j)
-\frac 1 n \sum_{j=1}^n u_n^2(X_j),
      \end{align*}
      and the claim holds if
      $$
      \frac 1 n \sum_{j=1}^n v_n(X_j)\rightarrow\int\tilde v(x)f_{\tilde G}(x)\lambda(\ddr x)\quad \mbox{ and }\quad
      \frac 1 n \sum_{j=1}^n u_n^2(X_j)\rightarrow\int\tilde u^2(x)f_{\tilde G}(x)\lambda(\ddr x)\quad\PP\mbox{-a.s.}
      $$
      where
      $$
\quad\quad\tilde v(x)=\int_\Theta u^2(x,\theta)\tilde G(\ddr\theta\mid x),\quad\quad \tilde u(x)=\int_\Theta u(x,\theta)\tilde G(\ddr\theta\mid x).
$$
    We prove the convergence separately for the two terms.    
    \\
By assumption $(\mathcal A3)$,  $k(x\mid\theta)$ is continuous in $\theta$ for $\lambda$ a.e. $x$, and dominated by $h(x)\in L^1(\lambda)$.  Then $T(\theta)=\int_{\cdot}k(x\mid\theta)\lambda(\ddr x)$ is a continuous map defined on $\Theta$ and taking values in the space of probability measures on $\mathbb R$ with the weak topology. Indeed, if $\theta_n\rightarrow\theta$, then for every bounded continuous function $r(x)$, 
$$
\int r(x)k(x\mid\theta_n)\lambda(\ddr x)\rightarrow \int r(x)k(x\mid\theta)\lambda(\ddr\theta),$$
where the convergence holds since $r(x)k(x\mid\theta_n)$ converges to $r(x)k(x\mid\theta)$ for $\lambda$-a.e. $x$ and  is dominated by the function  $r(x)h(x)\in L^1(\lambda)$. 
Since $\Theta$ is compact and $T$ is continuous, then $T(\Theta)$ is compact. Therefore the class $\{\int_\cdot k(x\mid\theta)\lambda(\ddr x):\theta\in\Theta\}$ is tight. It follows that for every $\epsilon >0$ there exists a compact set $K\subset\mathbb R$ such that 
\begin{equation}
    \label{eq:compact}
\sup_{\theta\in\Theta}\int_{K^c}k(x\mid \theta)\lambda(\ddr x)<\frac{\epsilon}{||u||_\infty^2},
\end{equation}
where $||\cdot ||_\infty$ denotes the sup-norm, which implies that
$$
\PP[X_j\in K^c\mid\tilde G]=\int_{K^c}\int_\Theta k(x\mid\theta)\tilde G(\ddr\theta)\lambda(\ddr x)<\frac{\epsilon}{||u||_\infty^2}\quad\PP\mbox{-a.s.}
$$
Fix $\epsilon$ and let $K$ be such that \eqref{eq:compact} holds.
Then for $\omega$ in a set of probability one:
$\PP[X_j\in K^c\mid\tilde G](\omega)<\epsilon$; $G_n(\omega)$ converges weakly to $\tilde G(\omega)$; and the function $\theta\mapsto u^2(x,\theta)$ is $\tilde G(\omega)$-a.s. continuous. Fix $\omega$ in this set. In what follows, we omit the dependence on $\omega$. \\
The function $\theta\mapsto u^2(x,\theta)$ is bounded and $\tilde G$-a.s. continuous. Thus, by Theorem \ref{th:convg}, $v_n(x)$ converges to $\tilde v(x)$, $\lambda$-a.e. Moreover,   by assumption (ii) and $G_n\in\mathbb G$, the functions $v_n$ are equicontinuous on the compact set $K$. By the Ascoli-Arzel\`a theorem the sequence $(v_n)$ is relatively compact
in $(C(K),\|\cdot\|_\infty)$. Hence, every subsequence of $(v_n)$ admits a further
subsequence that converges uniformly on $K$ to some continuous limit function.
Since $v_n(x)\to \tilde v(x)$ uniformly for $\lambda$-almost every $x$ and the limit is unique,
it follows that the whole sequence $(v_n)$ converges uniformly to $\tilde v$ on $K$.
This implies that
$$
\frac 1 n \sum_{j=1}^n (v_n(X_j)-\tilde v(X_j))\indicator(X_j\in K)\rightarrow 0,
$$
as $n\rightarrow\infty$. 
Moreover, for every $n$,
$$
\frac 1 n \sum_{j=1}^n (v_n(X_j)-\tilde v(X_j))\indicator(X_j\in K^c)<\epsilon.
$$
Since $\epsilon$ is arbitrary, we conclude that
\begin{equation}\label{eq:vntov}
\frac 1 n \sum_{j=1}^n (v_n(X_j)-\tilde v(X_j))\rightarrow 0\quad\PP\mbox{-a.s.}
\end{equation}
On the other hand, since the $X_j$ are conditionally i.i.d., given $\tilde G$ then
$$
\PP\left[\frac 1 n \sum_{j=1}^n\tilde v(X_j)\rightarrow \int\tilde v(x)f_{\tilde G}(x)\lambda(\ddr x)\mid\tilde G\right]=1,
$$
which implies that
$$
\frac 1 n \sum_{j=1}^n\tilde v(X_j)\rightarrow \int\tilde v(x)f_{\tilde G}(x)\lambda(\ddr x)\quad\PP\mbox{-a.s.}
$$
This, together with \eqref{eq:vntov} implies that
$$
\frac 1 n \sum_{j=1}^n v_n(X_j)\rightarrow \int\tilde v(x)f_{\tilde G}(x)\lambda(\ddr x)\quad\PP\mbox{-a.s.}
$$
The same argument, applied to $u_n^2(x)$ leads to 
$$
\frac 1 n \sum_{j=1}^n u_n^2(X_j)\rightarrow \int\tilde u^2(x)f_{\tilde G}(x)\lambda(\ddr x)\quad \PP\mbox{-a.s.}
$$
\end{proof}


\begin{proof}[Proof of theorem \ref{th:central}]
With the same notations as in Lemma \ref{lemma:sigman}, we can write
$$
S_n-\hat{S}^{\text{\tiny{[Q-B]}}}_{n}=\sum_{j=1}^n\left( u(X_j,\theta_j)-u_n(X_j)\right).
$$
For every $n\geq 1$ and $j=1,\dots,n$, let
$$
Y_{n,j}=\frac{u(X_j,\theta_j)-u_n(X_j)}{\left(\sum_{j=1}^n \sigma_{n,j}^2\right)^{1/2}}.
$$
Since $\sigma_{n,j}^2$ is measurable with respect to the sigma-algebra generated by $X_{1:n}$, then, under $\Pn(\cdot\mid X_{1:n})$, the random variables $Y_{n,1},\dots,Y_{n,n}$ are independent and centered.   \\
By Lemma \ref{lemma:sigman},
$$
\frac 1 n\sum_{j=1}^n\sigma^2_{n,j}\rightarrow \tilde{\sigma}^2\quad\PP\mbox{-a.s.}
$$
with $\tilde \sigma^2>0$ by assumption (iii).
Thus,
\begin{align*}\label{appeq:ineq}
&\limsup_{{n}\rightarrow\infty}\sum_{j=1}^{n} \E^{(n)}(Y_{{n},j}^2 \indicator{(|Y_{{n},j}|>\epsilon)}\mid X_{1:n})\\
&=
\limsup_{{n}\rightarrow\infty}\frac{\sum_{j=1}^{n} 
\int (u(X_j,\theta)-u_{n}(X_j))^2 \indicator{((u(X_j,\theta)-u_{n}(X_j))^2>\epsilon^2\sum_{j=1}^{n} \sigma_{{n},j}^2)}G_{n}(\ddr\theta\mid X_j)}{\sum_{j=1}^{n} \sigma_{{n},j}^2}\nonumber \\
&= \frac{1}{\tilde{\sigma}^2}\int \limsup_{{n}\rightarrow\infty}\frac{1}{{n}} \sum_{j=1}^{n} 
 (u(X_j,\theta)-u_n(X_j))^2 \indicator((u(X_j,\theta)-u_n(X_j))^2>\epsilon^2 \sum_{j=1}^n\sigma^2_{n,j})G_n(\ddr\theta\mid X_j)\quad\PP\mbox{-a.s.}\nonumber
\end{align*}
Since the function $u$ is bounded and $\sum_{j=1}^n \sigma^2_{n,j}$ diverges $\PP$-a.s., then the limit is zero.
By a conditional version of the Lindeberg-Feller central limit theorem
$$
\mathcal L_{\P^{(n)}}\left(\sum_{j=1}^{n} Y_{n,j}\mid X_{1:n}\right)\rightarrow\mathcal N(0,1)\quad \P-a.s.
$$
\end{proof}
\section{Proofs and auxiliary results for Section 3}

\subsection{Proof of Theorem \ref{th:consist}}\label{app:proofconsistency}
Define
$
u_G(x)=\;\int_\Theta u(x,\theta)\,G(d\theta\mid x)
$ and $
f_G(x)=\int_\Theta k(x\mid\theta)\,G(d\theta),
$ for any $G\in{\mathbb G}$, and let 
\[
\mathcal U =\{\,u_G(\cdot):\, G\in{\mathbb G}\,\}.
\]
We first prove that $\mathcal U$ has an integrable envelope with respect to the measure $F_{G^*}(\ddr x):=f_{G^*}(x)\lambda(\ddr x)$. At this aim, denote
$
A= \sup_{\theta\in\Theta} a(\theta)$ and  $M(x) = A b(x).
$
Then, by \eqref{eq:cond1}, $$\sup_{G\in \mathbb G}|u_G(x)|=\sup_{G\in\mathbb G}\int |u(x,\theta)|G(\ddr\theta\mid x)\leq M(x)$$ and $$\int M(x)f_{G^*}(x)\lambda(\ddr x)<\infty,$$ 
proving the envelope property. 
Denoting for $n\geq 1$
$$
u_n(x)=\frac{\int u(x,\theta)k(x\mid\theta)G_n(\ddr\theta)}{f_{G_n}(x)},\quad e_n(\ddr x)=\frac 1 n \sum_{i=1}^n \delta_{X_i}(\ddr x),
$$
we can write
\begin{align*}
   & \frac 1 n |\hat{S}^{\text{\tiny{[Q-B]}}}_{n}-\hat S_n^{G^*}|=
   |\int u_n(x)e_n(\ddr x)-\int u_{G^*}(x)e_n(\ddr x)|\\
&\leq|\int u_n(x)e_n(\ddr x)-\int u_{G^*}(x)F_{G^*}(\ddr x)|+
|\int u_{G^*}(x)e_n(\ddr x)-\int u_{G^*}(x)F_{G^*}(\ddr x)|.
\end{align*}
Since $|u_{G^*}(x)|\leq M(x)$ for all $x$ and since $M$ is integrable with respect to $F_{G^*}$, then the last term converges to zero $\P^*$-a.s. as $n\rightarrow\infty$ by the strong law of large numbers. On the other hand, 
\begin{align*}
    &|\int u_n(x)e_n(\ddr x)-\int u_{G^*}(x)F_{G^*}(\ddr x)|\\
    &\leq 
\left|\int u_n(x)e_n(\ddr x)-\int u_n(x)F_{G^*}(\ddr x)\right|+
\int |u_n(x)-u_{G^*}(x)|F_{G^*}(\ddr x)
\end{align*}
To prove that both terms of the sum converge to zero $\P^*$-a.s. as $n\rightarrow\infty$, we first prove that $u_n(x)$ converges to $u_{G^*}(x)$, $\lambda$-a.e. as $n\rightarrow\infty$. 
By Corollary 4.6 in  \cite{MarTok(09)}
$$
f_{G_n}(x)=\int k(x\mid\theta)G_n(\ddr\theta)\rightarrow f_{G^*}(x)=\int k(x\mid\theta)G^*(\ddr\theta)\quad \lambda\mbox{-a.e. and in }L^1,
$$
as $n\rightarrow\infty$. Moreover, by Corollary 4.7 in \cite{MarTok(09)}, $G_n$ converges to $G^*$ $\PP^*$-a.s. in the weak topology. For a fixed $x$,  
$$
\sup_{G\in\mathbb G}\int_\Theta|u(x,\theta)|k(x\mid\theta)G(\ddr\theta)\leq A\,b(x)\, \sup_{x,\theta}k(x\mid\theta) .
$$
Hence, the function $u(x,\theta)k(x\mid\theta) $ is uniformly integrable with respect to $G_n$. Moreover, as a function of $\theta$, it is $G^*$-a.s.
 continuous by assumption. Therefore,
$$
\int u(x,\theta)k(x\mid\theta)G_n(\ddr\theta)\rightarrow \int u(x,\theta)k(x\mid\theta)G^*(\ddr\theta), \quad \lambda\mbox{-a.e.}
$$
It follows that $u_n(x)$ converges to $u_{G^*}(x)$, $\lambda$-a.e. as $n\rightarrow\infty$.\\ 
The map $G\mapsto u_G$ is continuous as a map
from ${\mathbb G}$ with the weak topology into $L^1(F_{G^*})$. Indeed, if $G_n'$ is a sequence in $\mathbb G$ converging weakly to $G'\in\mathbb G$, then
\begin{align*}
&\int \left| u_{G_n'}(x)  -u_{G'}(x) \right|F_{G^*}(\ddr x)\\&
=\int \left| \frac{\int_\Theta u(x,\theta)k(x\mid\theta)G_n'(\ddr\theta)}{\int_\Theta k(x\mid\theta)G_n'(\ddr\theta)}    -\frac{\int_\Theta u(x,\theta)k(x\mid\theta)G'(\ddr\theta)}{\int_\Theta k(x\mid\theta)G'(\ddr\theta)}   \right|F_{G^*}(\ddr x)\rightarrow 0,
\end{align*}
where the convergence holds since $\mathcal U$ has an integrable envelope with respect to $F_{G^*}$. \\  Since the map $G\mapsto u_G$ is continuous and ${\mathbb G}$ is precompact by ($\mathcal A$1),
its image $\mathcal U$ is precompact (hence totally bounded) in $L^1(F_{G^*})$ (continuous image of a precompact set is precompact).
Therefore $\mathcal U$ admits finite $L^1(F_{G^*})$-bracketing numbers at every
$\varepsilon>0$, and it follows that
$\mathcal U$ is a $F_{G^*}$-Glivenko--Cantelli class; see Theorem 2.4.1 in
van der Vaart and Wellner (1996).
Consequently,
\[
\sup_{v\in\mathcal U}\Big|\int v(x)e_n(\ddr x)-\int v(x)F_{G^*}(\ddr x)\Big|\;\longrightarrow\;0
\qquad P^*\text{-a.s.},
\]
and in particular,
\[
\Big|\int u_{G_n}(x)e_n(\ddr x)-\int u_{G_n}(x)F_{G^*}(\ddr x)\Big|
\;\le\;
\sup_{v\in\mathcal U}\Big|\int v(x)e_n(\ddr x)-\int v(x)F_{G^*}(\ddr x)\Big|
\;\longrightarrow\;0
\qquad P^*\text{-a.s.}
\]
On the other hand, since $|u_n(x)-u_{G^*}(x)|$ are dominated by $2M(x)\in L^1(F_{G^*})$, then by dominated convergence theorem
$$
\int |u_n(x)-u_{G^*}(x)|F_{G^*}(\ddr x)\rightarrow 0,
$$
as $n\rightarrow\infty$.

\subsection{Proof of Theorem \ref{th:regret}}\label{app:proofregret}

In this section, we use the same letter in lower case to denote the probability mass function associated with a given probability distribution. Moreover we will occasionally represent probability mass functions $g$ on $\Theta_\diamond=\{\vartheta_1,\dots,\vartheta_{d}\}$ as vectors $\gb$, taking values either in the simplex $\Delta_d=\{\vb=(v_1,\dots,v_{{d}-1}):v_i\geq 0 \;(i=1,\dots,{d}-1),\;\sum_{i=1}^{{d}-1}v_i\leq 1\}$,  or in $\mathbb R^{{d}}$, depending on the context, which will be specified when needed. Similarly, the kernel $k$ is represented as a vector in $\Delta_d$ or $\mathbb R^{{d}}$, by defining $[\kb_x]_j=k(x\mid\vartheta_j)$. We denote component-wise multiplication of vectors by $\circ$ and the Euclidean norm by $||\cdot||$. Let $\mu$ denote the counting measure on $\Theta_\diamond$.

The proof of Theorem \ref{th:regret} is based on the following Lemma \ref{lem:ineq*}, and on the rate of convergence of $g_n$ to $g_\diamond^*$.

\begin{lem}\label{lem:ineq*}
Under the assumptions of Theorem \ref{th:regret}, 
    \begin{equation}\label{eq:ineq*}
        \frac{1}{n}|\hat{S}^{\text{\tiny{[Q-B]}}}_{n}-\hat S_n(G^*_\diamond)| =\sup_{\theta\in{\Theta_\diamond}}|g_n(\theta)-g^*_\diamond(\theta)|\;O(1)\quad \PP^*\mbox{-a.s}.
    \end{equation}
    \end{lem}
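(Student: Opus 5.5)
The plan is to compare the two plug-in posterior means term by term and to bound each difference by a Lipschitz-type estimate in the mass functions. Write $\Theta_\diamond=\{\vartheta_1,\dots,\vartheta_d\}$. For any pmf $g$ on $\Theta_\diamond$ set
\[
u_g(x)=\frac{N_g(x)}{D_g(x)},\qquad N_g(x)=\sum_{j=1}^d u(x,\vartheta_j)k(x\mid\vartheta_j)g(\vartheta_j),\qquad D_g(x)=\sum_{j=1}^d k(x\mid\vartheta_j)g(\vartheta_j).
\]
Then
\[
\frac1n\bigl|\hat S^{\text{\tiny[Q-B]}}_n-\hat S_n(G^*_\diamond)\bigr|\le \frac1n\sum_{i=1}^n |u_{g_n}(X_i)-u_{g^*_\diamond}(X_i)|,
\]
so it suffices to bound $|u_{g_n}(x)-u_{g^*_\diamond}(x)|\le C(x)\sup_\theta|g_n(\theta)-g^*_\diamond(\theta)|$, where $n^{-1}\sum_i C(X_i)$ is $\P^*$-a.s.\ bounded.

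For the pointwise bound, use the identity
\[
u_{g}-u_{g'}=\frac{N_g-N_{g'}}{D_g}-u_{g'}\,\frac{D_g-D_{g'}}{D_g}.
\]
Since $u$ is bounded and $|u_{g'}|\le\|u\|_\infty$, this gives
\[
|u_g(x)-u_{g'}(x)|\le 2\|u\|_\infty \sum_{j}\frac{k(x\mid\vartheta_j)}{D_g(x)}\,|g(\vartheta_j)-g'(\vartheta_j)|\le 2\|u\|_\infty\, d\,\max_j\frac{k(x\mid\vartheta_j)}{D_g(x)}\,\sup_\theta|g-g'|.
\]
Take $g=g^*_\diamond$ in the denominator, which is deterministic and avoids needing a lower bound on $g_n$. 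Then
\[
D_{g^*_\diamond}(x)\ge \min_l g^*_\diamond(\vartheta_l)\,\max_l k(x\mid\vartheta_l)\ge \min_l g^*_\diamond(\vartheta_l)\,k(x\mid\vartheta_{l'})
\]
for every $l'$. This requires $g^*_\diamond$ to have full support on $\Theta_\diamond$, and that is the main obstacle: the KL minimizer need not charge every grid point. To avoid it, bound instead by
\[
C(x)=2d\|u\|_\infty\max_{j,l}\frac{k(x\mid\vartheta_j)}{k(x\mid\vartheta_l)}\Big/\min_{l:\,g^*_\diamond(\vartheta_l)>0}g^*_\diamond(\vartheta_l),
\]
using that $D_{g^*_\diamond}(x)\ge g^*_\diamond(\vartheta_l)k(x\mid\vartheta_l)$ for any single support point $\vartheta_l$. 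This bound only needs one positive atom. (One could alternatively swap the roles of $g_n$ and $g^*_\diamond$ in the identity for the terms off the support, but the single-atom bound is cleaner.)

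It remains to control $n^{-1}\sum_i C(X_i)$. Condition \eqref{eq:finitesecder} says each ratio $k(x\mid\vartheta_j)/k(x\mid\vartheta_l)$ is in $L^2(f_{G^*}\lambda)$, hence in $L^1(f_{G^*}\lambda)$, and a maximum over finitely many pairs is dominated by their sum. By the strong law of large numbers under $\P^*$, $n^{-1}\sum_i C(X_i)\to\int C f_{G^*}\,\ddr\lambda<\infty$ a.s. This is the $O(1)$ factor, and \eqref{eq:ineq*} follows. The identifiability assumption $(\mathcal A4)$ is only used upstream, for existence and uniqueness of $g^*_\diamond$ and for the a.s.\ convergence $g_n\to g^*_\diamond$.
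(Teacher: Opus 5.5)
Your argument is correct and is essentially the paper's proof: the same split of $u_{g_n}-u_{g^*_\diamond}$ into a numerator difference and a $u_{g_n}$-weighted denominator difference, both normalized by $f_{G^*_\diamond}(X_j)$, followed by the strong law of large numbers with integrability coming from \eqref{eq:finitesecder}. Two of your steps are more careful than the paper's last two steps. First, you bound the second term via $|u_{g_n}|\le\|u\|_\infty$, whereas the paper's passage to the constant $1+dC'$ implicitly treats $g_n(\theta)/f_{G_n}(X_j)$ as at most one. Second, you make integrability explicit through the lower bound $f_{G^*_\diamond}(x)\ge g^*_\diamond(\vartheta_l)\,k(x\mid\vartheta_l)$ at an atom charged by $g^*_\diamond$, whereas the paper asserts integrability of $k/f_{G^*_\diamond}$ under $f_{G^*}\,\lambda$ directly from \eqref{eq:finitesecder}.
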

\begin{proof}
    We can write that
    \begin{align*}
        &\frac 1 n |\hat{S}^{\text{\tiny{[Q-B]}}}_{n}-\hat S_n(G^*_\diamond)|
        =\left|\frac 1 n \sum_{j=1}^n\int_{\Theta_\diamond} u(X_j,\theta)(g_n(\theta\mid X_j)-g^*_\diamond(\theta\mid X_j))\mu(\ddr\theta)\right|\\
        &=\bigl|\int_{\Theta_\diamond} \frac 1 n \sum_{j=1}^n u(X_j,\theta)k(X_j\mid\theta)\left(\frac{g_n(\theta)}{f_{G_n}(X_j)}-\frac{g^*_\diamond(\theta)}{f_{G^*_\diamond}(X_j)}\right)\mu(\ddr\theta)\bigr|\\
        &\leq \int_{\Theta_\diamond} \frac 1 n \sum_{j=1}^n |u(X_j,\theta)|k(X_j\mid\theta)\frac{|g_n(\theta)-g^*_\diamond(\theta)|}{f_{G^*_\diamond}(X_j)}\mu(\ddr\theta)\\
        &\quad\quad +\int_{\Theta_\diamond} \frac 1 n \sum_{j=1}^n |u(X_j,\theta)|k(X_j\mid\theta)g_n(\theta)\bigl|\frac{1}{f_{G_n}(X_j)}-\frac{1}{f_{G^*_\diamond}(X_j)}\bigr|\mu(\ddr\theta)\\
        &\leq \sup_{\theta\in {\Theta_\diamond}}|g_n(\theta)-g^*_\diamond(\theta)|\int_{\Theta_\diamond} \frac 1 n \sum_{j=1}^n |u(X_j,\theta)|\frac{k(X_j\mid\theta)}{f_{G^*_\diamond}(X_j)}\mu(\ddr\theta)\\
        &\quad\quad +\int_{\Theta_\diamond}\frac 1 n \sum_{j=1}^n\frac{|u(X_j,\theta)|k(X_j\mid\theta)g_n(\theta)}{f_{G_n}(X_j)f_{G^*_\diamond}(X_j)}\\
        &\quad\quad \quad\quad \int_{\Theta_\diamond} k(X_j\mid\theta')|g_n(\theta')-g^*_\diamond(\theta')|\mu(\ddr\theta')\mu(\ddr\theta)\\
        &\leq \sup_{\theta\in {\Theta_\diamond}}|g_n(\theta)-g^*_\diamond(\theta)|\\
        &
       \quad\quad \cdot \frac 1 n \sum_{j=1}^n \bigl(\int_{\Theta_\diamond} |u(X_j,\theta)|\frac{k(X_j\mid\theta)}{f_{G^*_\diamond}(X_j)}\mu(\ddr\theta)\\
       &\quad\quad \quad\quad +
       \int_{\Theta_\diamond}\frac{|u(X_j,\theta)|k(X_j\mid\theta)g_n(\theta)}{f_{G_n}(X_j)f_{G^*_\diamond}(X_j)}\int_{\Theta_\diamond} k(X_j\mid\theta')\mu(\ddr\theta')\mu(\ddr\theta)\bigr)\\
       &\leq  (1+dC')\sup_{\theta\in {\Theta_\diamond}}|g_n(\theta)-g^*_\diamond(\theta)|
       \frac 1 n \sum_{j=1}^n \left(\int_{\Theta_\diamond} |u(X_j,\theta)|\frac{k(X_j\mid\theta)}{f_{G^*_\diamond}(X_j)}\mu(\ddr\theta)\right),
    \end{align*}
    with $C'=\sup_{x,\theta}k(x\mid\theta)$. The thesis follows from the strong law of large number, since
    \begin{align*}
        &\E^*\left[\int_{\Theta_\diamond} |u(X_j,\theta)|\frac{k(X_j\mid\theta)}{f_{G^*_\diamond}(X_j)}\mu(\ddr\theta)\right]
        =\int_{\mathbb R} \int_{\Theta_\diamond} |u(x,\theta)|\frac{k(x\mid\theta)}{f_{G^*_\diamond}(x)}f_{G^*}(x)\lambda(\ddr x)<\infty,
    \end{align*}
    by the assumption \eqref{eq:finitesecder} and boundedness of $u$.
\end{proof}


The proof for the rate of convergence of $\sup_\theta|g_n(\theta)-g_\diamond^*(\theta)|$
is based on stochastic approximation theory.



\begin{lem}\label{lem:stochapp3}
Under the assumptions of Theorem \ref{th:regret}, for every $\zeta<2-1/\gamma$
    $$
    ||\gb_n-\gb_\diamond^*||^2=o(n^{-\zeta})\quad P^*-a.s.
    $$
\end{lem}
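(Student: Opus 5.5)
We cast Newton's recursion on the grid $\Theta_\diamond$ as a Robbins--Monro stochastic approximation in $\R^{d}$ and then run a Lyapunov/supermartingale argument with polynomial weights. Writing $\gb_n\in\R^d$ for the mass vector of $G_n$, \eqref{eq:newton} becomes
\[
\gb_{n+1}=\gb_n+\alpha_{n+1}\bigl(h(\gb_n)+\epsilonb_{n+1}\bigr),\qquad h(\gb)=\E^*\Bigl[\frac{\gb\circ\kb_{X}}{\gb^\top\kb_X}\Bigr]-\gb ,
\]
where $\epsilonb_{n+1}=\gb_n\circ\kb_{X_{n+1}}/(\gb_n^\top\kb_{X_{n+1}})-\E^*[\,\cdot\mid\gb_n]$ is a martingale difference bounded in norm by $2$. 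The mean field $h$ is, up to the factor $\gb$, the negative gradient of $\ell(\gb)=\mathrm{KL}(f_{G^*}\|f_{G})$ restricted to the simplex. By \eqref{eq:finitesecder} and $(\mathcal A4)$ (strict convexity of $\ell$ in $\gb$ through identifiability), $\gb^*_\diamond$ is the unique zero of $h$ in the relevant face, and Corollary 4.7 of \cite{MarTok(09)} already gives $\gb_n\to\gb^*_\diamond$ $\P^*$-a.s.

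The first step is to obtain a local contraction. I would linearise $h$ around $\gb^*_\diamond$, with Jacobian $H=\nabla h(\gb^*_\diamond)$, and show that on the tangent space of the simplex $\ub^\top H\ub\le -c\|\ub\|^2$ for some $c>0$. For coordinates in the support of $\gb^*_\diamond$ this follows from the positive definiteness of the Fisher-type matrix $\E^*[\kb_X\kb_X^\top/(f_{G^*_\diamond}(X))^2]$, which is finite by \eqref{eq:finitesecder} and nondegenerate by $(\mathcal A4)$. For coordinates with $g^*_\diamond(\vartheta_j)=0$, the KKT conditions give $\E^*[k(X\mid\vartheta_j)/f_{G^*_\diamond}(X)]<1$ strictly (or, if equality holds, a weaker $-\|\ub\|^2$-type term from the $-\gb$ part), which produces the geometric decay needed. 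Since $\gb_n\to\gb^*_\diamond$ a.s., beyond a random time $N$ we then have $(\gb_n-\gb^*_\diamond)^\top h(\gb_n)\le -c\|\gb_n-\gb^*_\diamond\|^2$, and the remainder of the linearisation is absorbed because $\|\gb_n-\gb^*_\diamond\|\to0$.

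Next comes the Lyapunov recursion. With $V_n=\|\gb_n-\gb^*_\diamond\|^2$ and $n\ge N$,
\[
\E^*[V_{n+1}\mid\mathcal F_n]\le (1-2c\alpha_{n+1})V_n+C\alpha_{n+1}^2 .
\]
I then multiply by $n^{\zeta}$ with $\zeta<2-1/\gamma$ and set $W_n=n^\zeta V_n$. Because $(n+1)^\zeta/n^\zeta=1+O(1/n)$ and $1/n=o(\alpha_n)$ when $\gamma<1$, the contraction still dominates and
\[
\E^*[W_{n+1}\mid\mathcal F_n]\le W_n+C n^{\zeta}\alpha_{n+1}^2 .
\]
Since $\sum n^{\zeta-2\gamma}<\infty$ iff $\zeta<2\gamma-1$, and $2-1/\gamma\le 2\gamma-1$ for $\gamma\in(1/2,1)$, this sum is finite. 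The Robbins--Siegmund theorem then gives that $W_n$ converges a.s. and that $\sum\alpha_{n}W_n<\infty$; together with $\sum\alpha_n n^{-\epsilon}=\infty$ for small $\epsilon$, this forces the limit to vanish. Applying the argument with a slightly larger $\zeta'<2-1/\gamma$ yields $W_n=O(1)$ for $\zeta'$, and hence $o(n^{-\zeta})$ for $\zeta<\zeta'$. The random time $N$ is handled by localisation: intersect with the events $\{\sup_{m\ge N}\|\gb_m-\gb^*_\diamond\|<r\}$, whose union has probability one, and use stopped versions of the supermartingale.

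The main obstacle is the contraction step, specifically the boundary coordinates where $g^*_\diamond(\vartheta_j)=0$. The factor $\gb$ in $h$ degenerates there, so strong monotonicity can fail. If the KKT inequality is not strict, the decay of $g_n(\vartheta_j)$ is only driven by a term of order $-\alpha_n g_n(\vartheta_j)^2$. If that happens, I would first try a KL-type Lyapunov function $\sum_j g^*_j\log(g^*_j/g_{n,j})$ combined with $\ell(\gb_n)-\ell(\gb^*_\diamond)$, as in \cite{Mar(08)}, instead of the squared norm. The exponent $2-1/\gamma$, rather than $2\gamma-1$, suggests the intended argument tolerates this weaker control and loses a factor $n^{1/\gamma-1}$.
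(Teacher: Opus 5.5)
Your argument breaks at the summability step. The inequality $2-1/\gamma\le 2\gamma-1$ is false on $(1/2,1)$. In fact $(2-1/\gamma)-(2\gamma-1)=(2\gamma-1)(1-\gamma)/\gamma>0$; for $\gamma=3/4$ the two numbers are $2/3$ and $1/2$. Hence for $\zeta\in[2\gamma-1,\,2-1/\gamma)$ the perturbation series $\sum_n n^{\zeta}\alpha_{n+1}^2\asymp\sum_n n^{\zeta-2\gamma}$ diverges, and Robbins--Siegmund does not apply. What you actually prove is $\|\gb_n-\gb^*_\diamond\|^2=o(n^{-\zeta})$ only for $\zeta<2\gamma-1$. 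For that you must also keep the term $-c\alpha_{n+1}W_n$: you drop it from the displayed inequality, but you need it to get $\sum_n\alpha_nW_n<\infty$.

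No other polynomial weight repairs this. Any first-moment supermartingale argument that treats the $O(\alpha_{n+1}^2)$ conditional variance as a summable perturbation stops at $n^{-(2\gamma-1)}$. Your closing guess also has the direction reversed: $2-1/\gamma$ is the stronger exponent, not the weaker one. In the paper it comes from condition (B2) for Chen's Theorem 3.1.1, namely $\sum_k\alpha_k^{2-\zeta}<\infty\iff\gamma(2-\zeta)>1$. That theorem measures the error in powers of the step size, $\|\gb_n-\gb^*_\diamond\|=o(\alpha_n^{\zeta/2})$. Since $\alpha_n^{2-1/\gamma}=(\alpha+n)^{-(2\gamma-1)}$, this is exactly your threshold expressed on the $\alpha_n$ scale rather than the $n$ scale.

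So neither summability argument reaches the statement as literally written, with $n^{-\zeta}$ for every $\zeta<2-1/\gamma$. For that you need a genuinely sharper almost-sure rate. Two options:
\begin{itemize}
\item Localized moment bounds $\E^*[V_n^p\,\indicator(\tau>n)]=O(\alpha_n^p)$, with $\tau$ the exit time from the neighbourhood where the drift contracts, followed by Borel--Cantelli. These bounds are available because $\|\epsilonb_n\|\le 2$.
\item An LIL-type bound $\|\gb_n-\gb^*_\diamond\|^2=O(\alpha_n\log n)$ for stochastic approximation with a Hurwitz Jacobian.
\end{itemize}
Either gives $o(n^{-\zeta})$ for all $\zeta<\gamma$. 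This suffices because $\gamma-(2-1/\gamma)=(1-\gamma)^2/\gamma>0$.

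Two further points need fixing.
\begin{itemize}
\item The Jacobian $\Hb=-\mathrm{diag}(\gb^*_\diamond)\Mb$ is not symmetric. Positive definiteness of $\Mb$ therefore does not give $\ub^\top\Hb\ub\le-c\|\ub\|^2$ in the Euclidean norm. Use instead $V_n=(\gb_n-\gb^*_\diamond)^\top\mathrm{diag}(\gb^*_\diamond)^{-1}(\gb_n-\gb^*_\diamond)$, whose linearized drift is $-2\alpha_{n+1}\ub^\top\Mb\ub$. Alternatively use a matrix $P$ solving $P\Hb+\Hb^\top P=-I$; this is the role played by the paper's eigenvalue condition (B3).
\item You leave the boundary case $g^*_\diamond(\vartheta_j)=0$ open, and the paper does not treat it either. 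Its proof takes $\mathrm{diag}(\gb^*_\diamond)$ to be positive definite, i.e.\ it assumes $\gb^*_\diamond$ has full support on $\Theta_\diamond$. You should either state that assumption explicitly or actually carry out the alternative Lyapunov argument you only sketch.
\end{itemize}
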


\begin{proof}
 By Corollary 4.7 in \cite{MarTok(09)}, $\gb_n$ converges to $\gb_\diamond^*$. Moreover, $(\gb_n)$ satisfies the 
 stochastic approximation in $\mathbb R^{d}$:
\begin{align*}
\gb_{n+1}&=\gb_{n}+\alpha_{n+1} \gb_{n}\circ \left(\frac{\kb_{X_{n+1}}}{f_{g_{n}}(X_{n+1})}-\mathbf 1\right)\\
&=\gb_{n}+\alpha_{n+1} \hb (\gb_{n})+\alpha_{n+1}\epsilonb_{n+1},
\end{align*}
with initial value $\gb_0$, where
$$
\hb (\gb)=\gb\circ \left(
\int \frac{\kb_x}{f_{\gb}(x)}
f_{G^*}(x)\lambda(\ddr x)-\bf 1
\right),
$$
and
$$
\epsilonb_{n+1}=\gb_n\circ \left(
\frac{\kb_{X_{n+1}}}{f_{\gb_n}(X_{n+1})}
-
\int \frac{\kb_x}{f_{\gb_n}(x)}
f_{G^*}(x)\lambda(\ddr x)
\right).
$$
   The thesis follows from Theorem 3.1.1 in \cite{Chen(02)} if we can prove that the following conditions hold:\begin{itemize}
        \item[$(\mathcal B1)$] $\alpha_n\rightarrow 0$, $\sum_{n=1}^\infty \alpha_n=\infty$, $\alpha_{n+1}^{-1}-\alpha_n^{-1}\rightarrow 0$;
        \item[$(\mathcal B2)$] $\sum_{n=1}^\infty \alpha_n^{1-\zeta/2}\epsilon_n$ converges, $P^*$-a.s.;
        \item[$(\mathcal B3)$] $\hb $ is measurable and locally bounded, and is differentiable at $\gb_\diamond^*$. Let $\Hb $ be a matrix such that, as $\gb\rightarrow \gb_\diamond^*$,
        $$
        \hb (\gb)=\Hb (\gb-\gb_\diamond^*)+\rb(\gb),\quad \rb(\gb_\diamond^*)=\mathbf 0,\quad ||\rb(\gb)||=o(||\gb-\gb_\diamond^*||).
        $$
        All the eigenvalues of the matrix $\Hb $ have negative real parts.
    \end{itemize}

\noindent{\bf Remark:}        Theorem 3.1.1 in \cite{Chen(02)} also requests:\\
  $(\mathcal  B4)$  {\em   There exists a continuously differentiable function $\ell:\Delta^{d}\rightarrow \mathbb R$ such that $$\sup_{\delta_1\leq {\rm d}(\gb,\gb_\diamond^*)\leq \delta_2}\hb (\gb)^T\nabla_{\gb}\ell<0$$
    for every $\delta_2>\delta_1>0$.}
   However, a close inspection of the proof of Theorem 3.1.1 in \cite{Chen(02)} shows that $(\mathcal B4)$ is employed only to ensure convergence of $g_n$ towards $g^*_\diamond$, which holds in our case.

   \medskip

    $(\mathcal B1)$ is obvious. To prove $(\mathcal B2)$, notice that $(\sum_{k=1}^n\alpha_k^{1-\zeta/2}\epsilonb_k)_{n\geq 1}$ is a martingale. Moreover, there exists a constant $C$ such that, for every $i=1,\dots,{d}$,
    \begin{align*}
        \sup_n\E^*[(\sum_{k=1}^n\alpha_k^{1-\zeta/2}\epsilon_{k,i})^2]
        &= \sup_n\sum_{k=1}^n\alpha_k^{2-\zeta}\E^*[\epsilon_{k,i}^2]\\
        &\leq C \sum_{k=1}^\infty\left(\frac{1}{\alpha+k}\right)^{\gamma(2-\zeta)}<\infty,
        \end{align*}
        since $\gamma(2-\zeta)>1$ and $\E^*[\epsilonb_{n,i}^2]$ are uniformly bounded by \eqref{eq:finitesecder}.
        We now prove $(\mathcal B3)$. The function $\hb $ is continuously differentiable at $\gb_\diamond^*$ and
        $$
        \Hb_{i,j}=-g_{\diamond,i}^*\int_{\R} \frac{k(x\mid\vartheta_i)k(x\mid\vartheta_j)}{f_{\gb_\diamond^*}(x)^2}f_{G^*}(x)\lambda(\ddr x),
        $$
       which is finite by \eqref{eq:finitesecder}. The matrix $\Hb $ can be seen as the product of the diagonal matrix ${\rm diag}(\gb_\diamond^*)$, which is positive definite, times $-\Mb$, with $\Mb_{i,j}=\int_{\R} \frac{k(x\mid \vartheta_i)k(x\mid \vartheta_j)}{f_{\gb_\diamond^*}(x)^2}f_{G^*}(x)\lambda(\ddr x)$. We now prove that $\Mb$ is positive definite. Given a vector $\ab\in \mathbb R^{d}$ with $||\ab||=1$, we can write that
        \begin{align*}
        \ab^T \Mb\ab=&\sum_{i,j=1}^{d}a_ia_j\int_{\R} \frac{k(x\mid\vartheta_i)k(x\mid\vartheta_j)}{f_{\gb_\diamond^*}(x)^2}f_{G^*}(x)\lambda(\ddr x)\\&=
        \int_{\R} \frac{\left(\sum_{i=1}^{d} a_i k(x\mid\vartheta_i)\right)^2}{f_{g^*_\diamond}(x)^2}f_{G^*}(x)\lambda(\ddr x)\geq 0,
        \end{align*}
        with equality if and only if $\sum_{i=1}^{d} a_i k(x\mid\vartheta_i)=0$ for $\lambda$-almost every $x\in \R$. The only solution to the last equation in $\ab$ is the zero vector, which is inconsistent with $||\ab||=1$. Since the function $\ab^T\Mb\ab$ is continuous on the compact set $||\ab||=1$, then the minimum is attained, and by the above reasoning it is strictly positive. It follows that $\Mb$ is positive definite.
\end{proof}

The proof of Theorem \ref{th:regret} is an immediate consequence of lemmas \ref{lem:ineq*} and \ref{lem:stochapp3}.

\subsection{Proof of Proposition \ref{prp_grid}}\label{sec_ex_grid}
   


Existence and uniqueness of the KL minimizer hold for the Poisson and Gaussian kernels since the assumptions $(\mathcal A1)$, $(\mathcal A3)$ and $(\mathcal A4)$ hold (see Lemma 3.1 and Remark 3.2 in \cite{MarTok(09)}).
For both the Poisson and Gaussian kernel and $k\in\N$
$$
\int_\Theta |\theta|^k G^*(\ddr\theta)\leq \int_\Theta \int_{\mathbb R} |x|^kk(x\mid\theta)\lambda(\ddr x)G^*(\ddr\theta).
$$
    Let $J=\{i\in \mathbb \Z :(\vartheta_{i-1},\vartheta_i]\subset \Theta_\diamond\}$, $\underline\vartheta=\inf\{\vartheta_{i-1}:i\in J\}$ and $\overline\vartheta=\sup\{\vartheta_i:i\in J\}$. For every $i\in J$, let $w_i=G^*((\vartheta_{i-1},\vartheta_i])$. Let $G^\dagger$ be the  probability measure
    $$
    G^\dagger=G^*((-\infty,\underline\vartheta])\delta_{\underline\vartheta}+\sum_{i\in J}w_i \delta_{\vartheta_i}+G^*((\overline\vartheta,+\infty))\delta_{\overline\vartheta},
    $$
    (the first term being zero in the Poisson case).
    Then
    $$
    G^\dagger=\delta_{\underline\vartheta}\int_{-\infty}^{\underline\vartheta}G^*(\ddr\theta)+
    \sum_{i\in J}\delta_{\vartheta_i}\int_{I_i}
    G^*(\ddr\theta)+\delta_{\overline\vartheta}\int_{\overline\vartheta}^{+\infty}G^*(\ddr\theta),
    $$
    where $I_i=(\vartheta_{i-1},\vartheta_i]$ for $i\in J$.
  \begin{description}
        \item \textit{Poisson kernel}: $\underline\vartheta=0$, and denoting by $M=\overline\vartheta=d \eta/2$, we can write, by the convexity of the Kullback-Leibler divergence,
        \begin{align*}
        &\mathrm{KL}(f_{G^*}||f_{G^\dagger})\\&
        \leq \sum_{i=1}^{d}\int_{I_i}\sum_{y\in\mathbb N_0}\log\left(\frac{e^{-\theta}\theta^y}{e^{-\vartheta_i}\vartheta_i^y}\right)\frac{e^{-\theta}\theta^y}{y!}G^*(\ddr\theta)
        +\int_M^\infty \sum_{y\in\mathbb N_0}\log\left(\frac{e^{-\theta}\theta^y}{e^{-M}M^y}\right)\frac{e^{-\theta}\theta^y}{y!}G^*(\ddr\theta)\\
       &\leq \int_0^M\frac{\eta}{2} G^*(\ddr\theta)+\int_M^\infty \sum_{y\in\mathbb N_0}y\log \theta\frac{e^{-\theta}\theta^y}{y!}G^*(\ddr\theta)\\
       &\leq \eta/2 +\int_M^\infty \theta\log \theta \,G^*(\ddr\theta)\\
       &\leq \eta/2+\frac{\log M}{M^{k-1}}\int_M^\infty \theta^k\,G^*(\ddr\theta)\\
        &\leq \eta/2+\frac{m_k\log M}{M^{k-1}}\\
       &\leq \eta,
       \end{align*}
      where  we have used the fact that, for $\theta \in I_i := (\vartheta_{i-1},\vartheta_i]$,
\begin{align*}
\mathrm{KL}\big(\mathrm{Pois}(\theta)\,\|\,\mathrm{Pois}(\vartheta_i)\big)
&= \vartheta_i - \theta + \theta \log\!\left(\frac{\theta}{\vartheta_i}\right) \\
&\le \vartheta_i - \theta + \theta \frac{\theta-\vartheta_i}{\vartheta_i}
= \frac{(\vartheta_i-\theta)^2}{\vartheta_i}
\le \frac{(\eta/2)^2}{\eta/2}
= \frac{\eta}{2}.
\end{align*}
        \item \textit{Gaussian kernel}: denoting by $-M=\underline\vartheta=-d\sigma\sqrt{\eta}$ and by $M=\overline\vartheta=d\sigma\sqrt{\eta}$, we can write by the convexity of the Kullback-Leibler divergence, that
    \begin{align*}
        &\mathrm{KL}(f_{G^*}||f_{G^\dagger})\\
        &
        \leq \int_{-\infty}^{-M}\mathrm{KL}(k(\cdot \mid\theta)||k(\cdot\mid\! -\!M))G^*(\ddr \theta)        +\sum_{\vartheta_i\in\Theta_\diamond}\int_{I_i}\mathrm{KL}(k(\cdot\mid\theta)||k(\cdot\mid \vartheta_i))G^*(\ddr\theta)\\
        &+\int_{M}^{+\infty}\mathrm{KL}(k(\cdot\mid\theta)||k(\cdot\mid M))G^*(\ddr \theta)\\
             &\leq \int_{-\infty}^{-M}  \frac{(\theta+M)^2}{2\sigma^2}G^*(\ddr \theta)
        +\sum_{\vartheta_i\in\Theta_\diamond}\int_{I_i}
         \frac{(\theta-\vartheta_i)^2}{2\sigma^2}G^*(\ddr\theta)
        +\int_{M}^{+\infty}   \frac{(\theta-M)^2}{2\sigma^2}G^*(\ddr \theta)\\
        &\leq \int_{-\infty}^{-M}  \frac{\theta^2}{2\sigma^2}G^*(\ddr \theta)
        +\sum_{\vartheta_i\in\Theta_\diamond}\int_{I_i}
         \frac{\eta}{2}G^*(\ddr\theta)
        +\int_{M}^{+\infty}   \frac{\theta^2}{2\sigma^2}G^*(\ddr \theta)\\
         &\leq\frac{\eta}{2}+\frac{\int_{|\theta|\geq M}   |\theta|^kG^*(\ddr \theta)}{2\sigma^2M^{k-2}}\\
        &\leq \frac{\eta}{2}+\frac{m_k}{2\sigma^2M^{k-2}}\\
       &\leq \eta,
    \end{align*}
    where we have used that, for any $\theta \in I_i := (\vartheta_{i-1},\vartheta_i]$,
\begin{align*}
\mathrm{KL}\big(\mathcal N(\theta,\sigma^2)\,\|\,\mathcal N(\vartheta_i,\sigma^2)\big)
&= \frac{(\theta-\vartheta_i)^2}{2\sigma^2}
\le \frac{(\sigma\sqrt{\eta})^2}{2\sigma^2}
= \frac{\eta}{2}.
\end{align*}

    \end{description}
%
%
To establish the rate of convergence, we verify that condition \eqref{eq:finitesecder} is satisfied.
For the Poisson kernel, with $\lambda$ denoting the counting measure on $\mathbb N_0$, condition \eqref{eq:momentgf} implies that
\begin{align*}
    &\sup_{\vartheta_i,\vartheta_j\in\Theta_\diamond} \int_{\R}\frac{k(x\mid\vartheta_i)^2}{k(x\mid\vartheta_j)^2}f_{G^*}(x)\lambda(dx)\leq \exp((d-1)\eta)\int_{\R} d^{2x}f_{G^*}(x)\lambda(dx)<+\infty.
    \end{align*}
For the Gaussian kernel, with $\lambda$ denoting the Lebesgue measure on $\R$,  \eqref{eq:momentgf} implies that
\begin{align*}
    &\sup_{\vartheta_i,\vartheta_j\in\Theta_\diamond} \int_{\R}\frac{k(x\mid\vartheta_i)^2}{k(x\mid\vartheta_j)^2}f_{G^*}(x)\lambda(dx)\leq \exp(4d^2\eta)
    \int_{\R}\exp(4\frac{d\sqrt{\eta} |x|}{\sigma})f_{G^*}(x)\lambda(dx)<+\infty.
\end{align*}

\subsection{Grid construction for general kernels}\label{thm_grid}
The next proposition establishes, under suitable assumptions on the kernel $k$, a constructive discretization of the oracle prior $G^*$ achieving a prescribed Kullback–Leibler divergence.

\begin{prp}
    Let $G^{\ast}$ be the oracle prior on $\mathbb{R}$, and let $\eta>0$. Denote by $\underline \theta=\sup\{\theta:G^*(-\infty,\theta)=0\}$ and $\overline \theta=\inf\{\theta:G^*(\theta,+\infty)=0\}$. Assume that there exists $M>0$ and $\psi >0$ such that the following conditions hold:
\begin{equation}\label{eq:defM}
    \int_{(-\infty,M]}\!\!\!\!\!\!\!\!\!\!\!\!\!\mathrm{KL}(k(\cdot\mid \theta)\,||\,k(\cdot\mid - M))G^*(\ddr \theta)+ \int_{[M,+\infty)}\!\!\!\!\!\!\!\!\!\!\!\!\!\mathrm{KL}(k(\cdot\mid \theta)\,||\,k(\cdot\mid  M))G^*(\ddr \theta)<\frac{\eta}{2},
\end{equation}
\begin{equation}\label{eq:defeta}
  \sup_{|\theta|,|\theta'|\leq M, |\theta-\theta'|<\psi}\mathrm{KL}(k(\cdot\mid \theta)\,||\,k(\cdot\mid \theta'))<\frac \eta 2.
\end{equation}
Let $\Theta_\diamond=\{\vartheta_0,\dots,\vartheta_{d}\}$ (with $\vartheta_{i-1}<\vartheta_i $ for every $i=1,\dots,d$) be a grid of equally spaced points in $\Theta$ such that $\inf_{\Theta_\diamond}\theta=\max(-M,\underline\theta)$, $\sup_{\Theta_\diamond}\theta=\min (M,\overline\theta)$,   and $d$ is the smallest integer such that $\vartheta_i-\vartheta_{i-1}\leq \psi$ for every $i=1,\dots,d$. Then, $\min_{G \in \mathcal{S}_\diamond} \mathrm{KL}(f_{G^{\ast}} \,\|\, f_{G})\leq \eta $, where $ \mathcal{S}_\diamond $  is the set of all probability distributions supported on $ \Theta_\diamond $.
 \end{prp}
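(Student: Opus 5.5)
The plan follows the construction used for the Poisson and Gaussian kernels in the proof of Proposition \ref{prp_grid}. We exhibit an explicit competitor $G^\dagger\in\mathcal S_\diamond$ and bound $\mathrm{KL}(f_{G^*}\|f_{G^\dagger})$. Since the minimum over $\mathcal S_\diamond$ is at most this value, this suffices.

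\emph{Construction of $G^\dagger$.} Set $\underline\vartheta=\vartheta_0=\max(-M,\underline\theta)$ and $\overline\vartheta=\vartheta_d=\min(M,\overline\theta)$, and let $I_i=(\vartheta_{i-1},\vartheta_i]$ for $i=1,\dots,d$. Define
\[
G^\dagger=G^*((-\infty,\vartheta_0])\,\delta_{\vartheta_0}+\sum_{i=1}^d G^*(I_i)\,\delta_{\vartheta_i}+G^*((\vartheta_d,+\infty))\,\delta_{\vartheta_d}.
\]
This is a probability measure supported on $\Theta_\diamond$. The mass sent to $\vartheta_0$ from the left is nonzero only when $\vartheta_0=-M$, because if $\vartheta_0=\underline\theta$ then $G^*((-\infty,\underline\theta))=0$. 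The analogous statement holds at the right end.

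\emph{Convexity step.} Write $f_{G^*}=\int k(\cdot\mid\theta)G^*(\ddr\theta)$ and $f_{G^\dagger}=\int k(\cdot\mid\pi(\theta))G^*(\ddr\theta)$. Here $\pi$ is the map sending $\theta\le\vartheta_0$ to $\vartheta_0$, $\theta\in I_i$ to $\vartheta_i$, and $\theta>\vartheta_d$ to $\vartheta_d$. Joint convexity of the KL divergence, applied to mixtures over the same mixing measure $G^*$ (a Jensen/log-sum argument), gives
\[
\mathrm{KL}(f_{G^*}\|f_{G^\dagger})\le\int \mathrm{KL}\bigl(k(\cdot\mid\theta)\,\|\,k(\cdot\mid\pi(\theta))\bigr)G^*(\ddr\theta).
\]
We split this integral into a tail part over $\{\theta\le -M\}\cup\{\theta\ge M\}$ and a central part over $\bigcup_i I_i$ together with the atom at $\vartheta_0$.

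\emph{Central part.} For $\theta\in I_i$ we have $|\theta|,|\vartheta_i|\le M$ and $|\theta-\vartheta_i|\le\vartheta_i-\vartheta_{i-1}\le\psi$, so \eqref{eq:defeta} bounds the integrand by $\eta/2$. The point $\theta=\vartheta_0$ itself contributes zero. The central part is therefore at most $(\eta/2)\,G^*([-M,M])$.

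\emph{Tail part.} When $\vartheta_0=-M$, the points $\theta<-M$ are mapped to $-M$. When $\vartheta_0=\underline\theta>-M$, there is no $G^*$-mass below $\vartheta_0$. The same dichotomy holds at the right end. Hence the tail part is exactly bounded by the left-hand side of \eqref{eq:defM}, which is $<\eta/2$. I would read the first integral in \eqref{eq:defM} as being over $(-\infty,-M]$, since $(-\infty,M]$ is presumably a typo.

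Summing the two parts gives a bound of $\eta/2+\eta/2=\eta$. The main delicacy is the bookkeeping at the grid endpoints. Care is needed when $\underline\theta$ or $\overline\theta$ lies inside $(-M,M)$ and points are mapped to $\vartheta_0\neq -M$. There we must check that no $G^*$-mass falls outside $[\vartheta_0,\vartheta_d]$, which holds by the definitions of $\underline\theta$ and $\overline\theta$ as support bounds. A boundary atom of $G^*$ at $\underline\theta$ is harmless because $\pi(\underline\theta)=\underline\theta$. We must also check that the spacing condition $\le\psi$ delivers the strict inequality used in \eqref{eq:defeta}. If it does not, take the grid spacing slightly below $\psi$, or note that the overall bound $\le\eta$ is non-strict anyway.
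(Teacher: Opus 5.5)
Your proposal is correct and is essentially the paper's proof: you use the same competitor $G^\dagger$, obtained by pushing $G^*$ forward onto the grid (tails to the endpoints, each $(\vartheta_{i-1},\vartheta_i]$ to $\vartheta_i$), then joint convexity of the KL divergence, then the split into a tail term bounded by \eqref{eq:defM} and a central term bounded by \eqref{eq:defeta}; your endpoint bookkeeping only makes explicit what the paper leaves implicit. One simplification: the strict inequality required in \eqref{eq:defeta} holds automatically, because $\theta\in(\vartheta_{i-1},\vartheta_i]$ gives $\vartheta_i-\theta<\vartheta_i-\vartheta_{i-1}\le\psi$, so neither fallback is needed (and the second one, appealing to the non-strict final bound, would not by itself have closed that gap).
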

    
\begin{proof}
    For every $i=1,\dots,d$, let $w_i=G^*((\vartheta_{i-1},\vartheta_i])$ and let $G^\dagger$ be the  probability measure
    $$
    G^\dagger=G^*((-\infty,\vartheta_0])\delta_{\vartheta_{0}}+\sum_{i=1}^{d} w_i \delta_{\vartheta_i}+G^*((\vartheta_{{d}},+\infty))\delta_{\vartheta_{{d}}}.
    $$
        Then
    $$
    G^\dagger=\delta_{\vartheta_{0}}\int_{-\infty}^{\vartheta_0}G^*(\ddr\theta)+
    \sum_{i=1}^{d}\delta_{\vartheta_i}\int_{I_i}
    G^*(\ddr\theta)+\delta_{\vartheta_{{d}}}\int_{\vartheta_{d}}^{+\infty}G^*(\ddr\theta),
    $$
    where $I_i=(\vartheta_{i-1},\vartheta_i]$ for $i=1,\dots,{d}$. Since the KL divergence is convex, then
    \begin{align*}
        \mathrm{KL}(f_{G^*}||f_{G^\dagger})&
              \leq\int_{-\infty}^{\vartheta_0} \mathrm{KL}(k(\cdot\mid\theta)||k(\cdot\mid \vartheta_0))G^*(\ddr\theta)+\int_{\vartheta_{d}}^{+\infty} \mathrm{KL}(k(\cdot\mid\theta)||k(\cdot\mid \vartheta_{d}))G^*(\ddr\theta)\\&+
    \sum_{i=1}^{d}\int_{I_i}  \mathrm{KL}(k(\cdot\mid\theta)||k(\cdot\mid\vartheta_i))G^*(\ddr\theta)\leq \eta.
    \end{align*}
\end{proof}

\section{Synthetic-data illustrations: Poisson kernel}\label{app_num}

We present a detailed and exhaustive synthetic-data analysis for the estimation of $S_{r,n}$, $r=1,2,3$, under the Poisson mixture model. Beyond estimation, we also consider quasi-Bayes credible intervals, which, for a fixed leve $1-\alpha$, are constructed according to \eqref{cred_interval}. The quasi-Bayes credible interval is then compared with the corresponding oracle credible interval of level $1-\alpha$, which is the $(1-\alpha)$-level posterior quantile obtained under the ``true'', or oracle, prior $G^\ast$ that generate the $\theta_{i}$'s in the Poisson and Gaussian mixture models. For instance, consider the Poisson mixture model \eqref{model_distr_poi}. To define the oracle credible intervals, let
\begin{displaymath}
G^\ast(d\theta_i\mid X_i)=
\frac{\text{Poisson}(X_i\mid \theta_i)\,G^\ast(\ddr\theta_i)}
{\int_\Theta \text{Poisson}(X_i\mid \vartheta)\,G^\ast(\ddr\vartheta)},
\qquad i=1,\ldots,n,
\end{displaymath}
denote the posterior distribution of $\theta_i$ given $X_i$ under the prior $G^\ast$. Because of the definition of the model \eqref{model_distr_poi}, conditionally on $(X_1,\ldots,X_n)$, the posterior distribution of $(\theta_1,\ldots,\theta_n)$ factorizes as $\prod_{1\leq i\leq n} G^\ast(d\theta_i\mid X_i)$, the oracle credible interval for $S_{1,n}$ is defined as the equal-tail posterior credible interval of level $1-\alpha$ associated with the posterior distribution function
\begin{displaymath}
F^\ast_{S_{1,n}}(t\mid X_1,\ldots,X_n)=\PP_{G^\ast}(S_{1,n}\leq t\mid X_1,\ldots,X_n),\qquad t\in\mathbb R,
\end{displaymath}
that is $I^\ast_{S_{1,n}}=[L^\ast_{S_{1,n}},U^\ast_{S_{1,n}}]$, where
\begin{displaymath}
L^\ast_{T_{\kappa,n}}=\inf\{t\in\mathbb R:F^\ast_{S_{1,n}}(t\mid X_1,\ldots,X_n)\geq \alpha/2\},
\end{displaymath}
and
\begin{displaymath}
U^\ast_{T_{\kappa,n}}=\inf\{t\in\mathbb R:F^\ast_{S_{1,n}}(t\mid X_1,\ldots,X_n)\geq 1-\alpha/2\}.
\end{displaymath}
Similarly, we define the oracle credible interval for $S_{3,n}$. In particular, this is defined as the equal-tail posterior credible interval of level $1-\alpha$ associated with the posterior distribution function
\begin{displaymath}
F^\ast_{S_{3,n}}(t\mid X_1,\ldots,X_n)=\PP_{G^\ast}(S_{3,n}\leq t\mid X_1,\ldots,X_n),\qquad t\in\mathbb R,
\end{displaymath}
that is $I^\ast_{S_{3,n}}=[L^\ast_{S_{3,n}},U^\ast_{S_{3,n}}]$, where
\begin{displaymath}
L^\ast_{S_{3,n}}=\inf\{t\in\mathbb R:F^\ast_{S_{3,n}}(t\mid X_1,\ldots,X_n)\geq \alpha/2\}
\end{displaymath}
and
\begin{displaymath}
U^\ast_{S_{3,n}}=\inf\{t\in\mathbb R:F^\ast_{S_{3,n}}(t\mid X_1,\ldots,X_n)\geq 1-\alpha/2\}.
\end{displaymath}
In the numerical experiments, the oracle posterior distribution functions $F^\ast_{S_{1,n}}$ and $F^\ast_{S_{3,n}}$ are evaluated numerically, and the corresponding equal-tail quantiles are used to construct the oracle credible intervals.

\subsection{Classes of prior distributions}\label{app_num1}

Under the Poisson mixture model \eqref{model_distr_poi}, we consider various choices of $G$, First, we assume $G$ to be a Uniform distribution on $[a,b]$; precisely, we set $a=0$ and $b=10$. Then, we consider two examples of $G$ belonging to the class $\mathcal{G}$ of sub-exponential distribution of order $s$, which is defined as
\begin{displaymath}
\mathcal{G}=\left\{G\text{ on }\mathbb{R}^{+}:\;G([t,\infty)) \leq 2e^{-t/s}\ \text{for all } t>0\right\},\qquad s>0.
\end{displaymath}
In particular, we assume $G$ to be: i) a Weibull distribution with scale $a$ and shape $b$, which belongs to $\mathcal{G}$ for $b\geq1$; ii) a half-Gaussian distribution, namely the distribution of the positive part of a Gaussian random variable with mean $0$ and variance $\sigma^{2}$, which belongs $\mathcal{G}$ for $\sigma>0$. Precisely we set $a=5$ and $b=3$ for the Weibull, and $\sigma=1$ for the half-Gaussian; under this parameterization the Weibull tail is lighter than the half-Gaussian tail. Finally, we consider an example of $G$ belonging to the moment class $\mathcal{M}$ defined, for any real $M_{p}$, as
\begin{displaymath}
\mathcal{M}=\left\{G\text{ on }\mathbb{R}^{+}:\;\int_{\mathbb{R}^{+}}\theta^{p} G(\ddr\theta)<M_{p}\right\},\qquad p>0.
\end{displaymath}
In particular, we assume $G$ to be square-root of half-Cauchy distribution, namely the distribution of the square-root of the positive part of a standard Cauchy random variable. This distribution has heavier tail than both the Weibull distribution and the half-Gaussian distribution.

\subsubsection{Uniform prior}

For $i=1,\ldots,100$, let $\mathbf{X}_{i}=X_{1:100 i}$ denote a dataset of size $n=100i$ generated from the Poisson mixture model \eqref{model_distr_poi}, with Uniform prior  $G$ on the set $[0,10]$. The $\mathbf{X}_{i}$'s are nested, so that the sample size increases progressively: at stage $i$, we have $n=100i$ data, obtained by adding $100$ new data at each step, starting from $n=100$. For each dataset $\mathbf{X}_{i}$ of size $n=100i$, we apply the quasi-Bayes EB approach to estimate $S_{1,n}$ and $S_{2,n}$, with $\kappa=2$, and $S_{3,n}$.

The integral in Newton's algorithm \eqref{eq:newton} is evaluated numerically via the trapezoidal rule. To perform this evaluation on a dataset $X_{1:n}$ of size $n$, the density function of $G_{n}$ is represented through its values on a fixed grid of $d\in\{5,000;\,1,000;\,500;\,100;\,50\}$ quadrature points over $\Theta=(0,U_{\Theta})$, where $U_\Theta=\max\{\max\{X_{1:n}\},\lceil Q_{n,0.99}+4\sqrt{\max\{Q_{n,0.99},1\}}\rceil\}$, with $Q_{n,0.99}=\text{Quantile}(X_{1:n};0.99)$. Further, we set $G_{0}$ to be Uniform over $\Theta$, and set $\alpha_{n}=(1+n)^{-0.99}$.  Under this setting for Newton's algorithm, we obtain an estimate $G_n$ of the mixing distribution $G$, which, when substituted into \eqref{eq:est}, gives the quasi-Bayes EB estimates $\hat{S}^{\text{\tiny{[Q-B]}}}_{r,n}$ of $S_{r,n}$, for $r=1,2,3$. With regards to the estimation of $S_{1,n}$, Table~\ref{tab_unif_s1_sens_supp} provides the mean absolute deviations (MAD) and CPU times as $n$ and $d$ vary. The CPU time refers to the time (in seconds) for processing a new observation on a laptop MacBook Pro (M1 type processor).

\begin{table}[ht]
\centering
\caption{Uniform prior: MAD and CPU time (in seconds) of $\hat{S}^{\text{\tiny{[Q-B]}}}_{1,n}$ as $n$ and $d$ vary}
{
\setlength{\tabcolsep}{0pt}
\begin{tabular}{@{}l@{\hspace{1.2cm}}*{5}{>{\centering\arraybackslash}p{2.15cm}}@{}}
\hline
\hline
 & $d=5{,}000$ & $d=1{,}000$ & $d=500$ & $d=100$ & $d=50$ \\[0.1cm]
\hline
\multicolumn{6}{@{}l}{\underline{$n=1{,}000$}} \\[0.05cm]
MAD      & 0.0918 & 0.0918 & 0.0918 & 0.0917 & 0.0916 \\
CPU time & 0.0023 & 0.0009 & 0.0007 & 0.0004 & 0.0004 \\[0.2cm]

\multicolumn{6}{@{}l}{\underline{$n=2{,}000$}} \\[0.05cm]
MAD      & 0.0852 & 0.0852 & 0.0852 & 0.0852 & 0.0851 \\
CPU time & 0.0027 & 0.0009 & 0.0007 & 0.0004 & 0.0004 \\[0.2cm]

\multicolumn{6}{@{}l}{\underline{$n=3{,}000$}} \\[0.05cm]
MAD      & 0.0822 & 0.0822 & 0.0822 & 0.0821 & 0.0820 \\
CPU time & 0.0027 & 0.0009 & 0.0007 & 0.0004 & 0.0004 \\[0.2cm]

\multicolumn{6}{@{}l}{\underline{$n=4{,}000$}} \\[0.05cm]
MAD      & 0.0738 & 0.0738 & 0.0738 & 0.0738 & 0.0737 \\
CPU time & 0.0023 & 0.0009 & 0.0007 & 0.0004 & 0.0004 \\[0.2cm]

\multicolumn{6}{@{}l}{\underline{$n=5{,}000$}} \\[0.05cm]
MAD      & 0.0694 & 0.0694 & 0.0694 & 0.0694 & 0.0693 \\
CPU time & 0.0023 & 0.0009 & 0.0007 & 0.0004 & 0.0004 \\[0.2cm]

\multicolumn{6}{@{}l}{\underline{$n=6{,}000$}} \\[0.05cm]
MAD      & 0.0655 & 0.0655 & 0.0655 & 0.0655 & 0.0654 \\
CPU time & 0.0023 & 0.0009 & 0.0007 & 0.0004 & 0.0004 \\[0.2cm]

\multicolumn{6}{@{}l}{\underline{$n=7{,}000$}} \\[0.05cm]
MAD      & 0.0609 & 0.0609 & 0.0609 & 0.0609 & 0.0607 \\
CPU time & 0.0023 & 0.0009 & 0.0007 & 0.0004 & 0.0004 \\[0.2cm]

\multicolumn{6}{@{}l}{\underline{$n=8{,}000$}} \\[0.05cm]
MAD      & 0.0595 & 0.0595 & 0.0595 & 0.0595 & 0.0594 \\
CPU time & 0.0024 & 0.0009 & 0.0007 & 0.0004 & 0.0004 \\[0.2cm]

\multicolumn{6}{@{}l}{\underline{$n=9{,}000$}} \\[0.05cm]
MAD      & 0.0545 & 0.0545 & 0.0545 & 0.0544 & 0.0543 \\
CPU time & 0.0023 & 0.0009 & 0.0007 & 0.0004 & 0.0004 \\[0.2cm]

\multicolumn{6}{@{}l}{\underline{$n=10{,}000$}} \\[0.05cm]
MAD      & 0.0552 & 0.0552 & 0.0552 & 0.0552 & 0.0551 \\
CPU time & 0.0023 & 0.0009 & 0.0007 & 0.0004 & 0.0004 \\[0.1cm]
\hline
\hline
\end{tabular}
}
\label{tab_unif_s1_sens_supp}
\end{table}

Figure~\ref{fig_unif_s1_supp} and Table~\ref{tab_unif_s1_supp} compares the MADs for $\hat{S}^{\text{\tiny{[O]}}}_{1,n}$, $\hat{S}^{\text{\tiny{[ML]}}}_{1,n}$, $\hat{S}^{\text{\tiny{[B]}}}_{1,n}$, $\hat{S}^{\text{\tiny[``u,v"]}}_{1,n}$ and $\hat{S}^{\text{\tiny{[Q-B]}}}_{1,n}$. Here, $\hat{S}^{\text{\tiny{[O]}}}_{1,n}=\E_{G}[S_{1,n}\,|\,X_{1:n}]$ with $G$ being the Uniform distribution on the set $[0,10]$. In particular, for $\hat{S}^{\text{\tiny{[Q-B]}}}_{1,n}$ we consider: i) the fixed uniform grid of $d=1,000$ quadrature points over $\Theta=(0,U_{\Theta})$; ii) $G_{0}$ to be Uniform over $\Theta$; iii) the learning rate $\alpha_{n}=(1+n)^{-0.99}$. This is precisely the initialization considered in the second column of Table~\ref{tab_unif_s1_sens_supp}, which also provides CPU time.

\begin{figure}[h!]
\begin{center}
\includegraphics[width=1\linewidth,height=0.36\textheight,keepaspectratio]{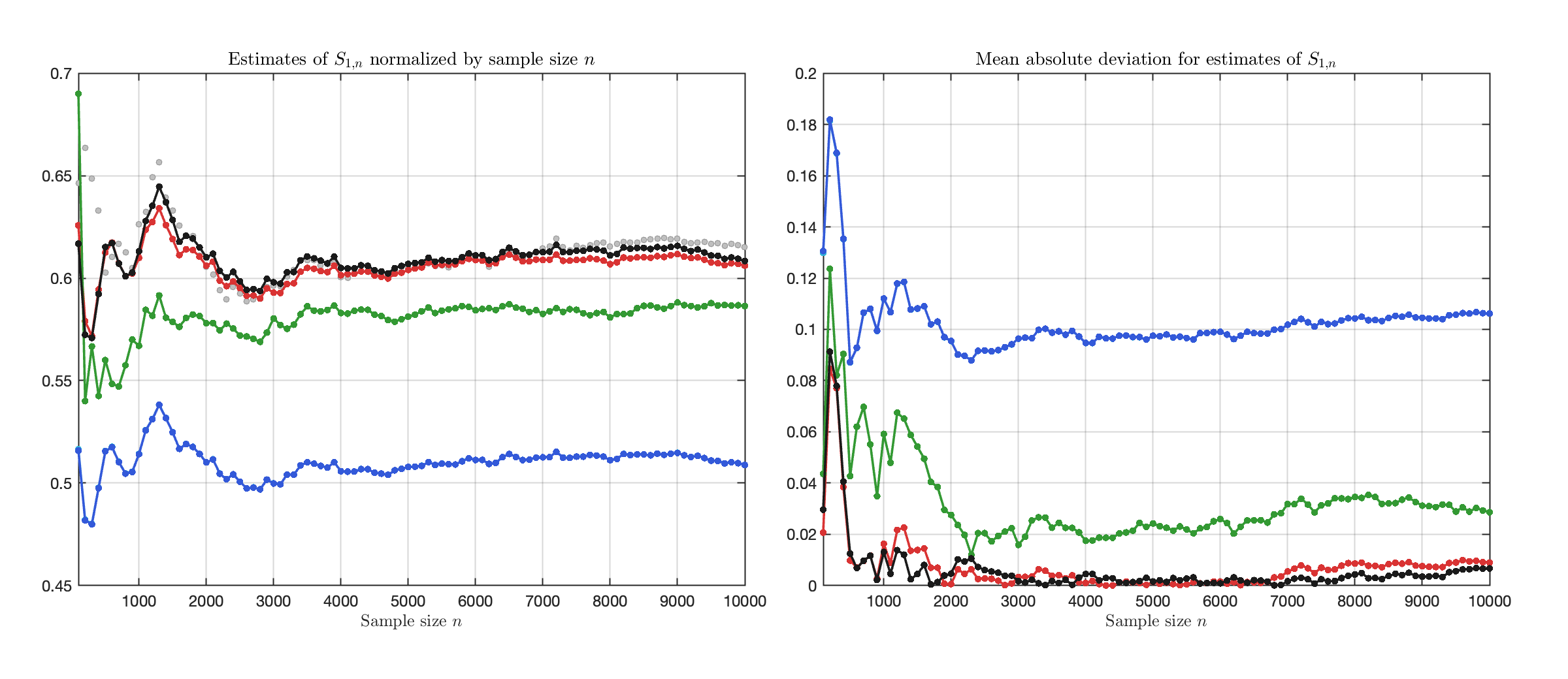}
\end{center}
\caption{\scriptsize{Uniform prior, $S_{1,n}$. Left panel: true values $n^{-1}S_{1,n}$ (Grey o-) and estimates $n^{-1}\hat{S}^{\text{\tiny{[O]}}}_{1,n}$ (Black .-), $n^{-1}\hat{S}^{\text{\tiny{[ML]}}}_{1,n}$ (Blue .-), $n^{-1}\hat{S}^{\text{\tiny{[B]}}}_{1,n}$ (Cyan .-), $n^{-1}\hat{S}^{\text{\tiny[``u,v"]}}_{1,n}$ (Green .-) and $n^{-1}\hat{S}^{\text{\tiny{[Q-B]}}}_{1,n}$ (Red .-). Right panel: MAD of $\hat{S}^{\text{\tiny{[O]}}}_{1,n}$ (Black .-), $\hat{S}^{\text{\tiny{[ML]}}}_{1,n}$ (Blue .-), $\hat{S}^{\text{\tiny{[B]}}}_{1,n}$ (Cyan .-), $\hat{S}^{\text{\tiny[``u,v"]}}_{1,n}$ (Green .-) and $\hat{S}^{\text{\tiny{[Q-B]}}}_{1,n}$ (Red .-)}}
\label{fig_unif_s1_supp}
\end{figure}

\begin{table}[ht]
\centering
\caption{Uniform prior, $S_{1,n}$: MAD as $n$ varies}
{
\setlength{\tabcolsep}{0pt}
\begin{tabular}{@{}l@{\hspace{1.2cm}}*{5}{>{\centering\arraybackslash}p{2.15cm}}@{}}
\hline
\hline
& $\hat{S}^{\text{\tiny{[O]}}}_{1,n}$ & $\hat{S}^{\text{\tiny{[ML]}}}_{1,n}$ & $\hat{S}^{\text{\tiny{[B]}}}_{1,n}$ & $\hat{S}^{\text{\tiny{[``u,v'']}}}_{1,n}$ & $\hat{S}^{\text{\tiny{[Q-B]}}}_{1,n}$ \\[0.1cm]
\hline
\multicolumn{6}{@{}l}{\underline{$n=1{,}000$}} \\[0.05cm]
MAD & 0.0248 & 0.1900 & 0.1901 & 0.0920 & \textbf{0.0918} \\[0.2cm]

\multicolumn{6}{@{}l}{\underline{$n=2{,}000$}} \\[0.05cm]
MAD & 0.0113 & 0.1803 & 0.1803 & 0.0848 & \textbf{0.0852} \\[0.2cm]

\multicolumn{6}{@{}l}{\underline{$n=3{,}000$}} \\[0.05cm]
MAD & 0.0075 & 0.1770 & 0.1770 & 0.0746 & \textbf{0.0822} \\[0.2cm]

\multicolumn{6}{@{}l}{\underline{$n=4{,}000$}} \\[0.05cm]
MAD & 0.0079 & 0.1659 & 0.1659 & 0.0458 & \textbf{0.0738} \\[0.2cm]

\multicolumn{6}{@{}l}{\underline{$n=5{,}000$}} \\[0.05cm]
MAD & 0.0081 & 0.1616 & 0.1616 & 0.0344 & \textbf{0.0694} \\[0.2cm]

\multicolumn{6}{@{}l}{\underline{$n=6{,}000$}} \\[0.05cm]
MAD & 0.0040 & 0.1594 & 0.1594 & 0.0193 & \textbf{0.0655} \\[0.2cm]

\multicolumn{6}{@{}l}{\underline{$n=7{,}000$}} \\[0.05cm]
MAD & 0.0022 & 0.1545 & 0.1546 & 0.0096 & \textbf{0.0609} \\[0.2cm]

\multicolumn{6}{@{}l}{\underline{$n=8{,}000$}} \\[0.05cm]
MAD & 0.0025 & 0.1536 & 0.1537 & 0.0040 & \textbf{0.0595} \\[0.2cm]

\multicolumn{6}{@{}l}{\underline{$n=9{,}000$}} \\[0.05cm]
MAD & 0.0005 & 0.1483 & 0.1483 & 0.0043 & \textbf{0.0545} \\[0.2cm]

\multicolumn{6}{@{}l}{\underline{$n=10{,}000$}} \\[0.05cm]
MAD & 0.0019 & 0.1495 & 0.1495 & 0.0072 & \textbf{0.0552} \\[0.1cm]
\hline
\hline
\end{tabular}
}
\label{tab_unif_s1_supp}
\end{table}

Figure~\ref{fig_unif_s3_supp} and Table~\ref{tab_unif_s3_supp} report corresponding results with respect to the estimation of $S_{3,n}$.

\begin{figure}[h!]
\begin{center}
\includegraphics[width=1\linewidth,height=0.36\textheight,keepaspectratio]{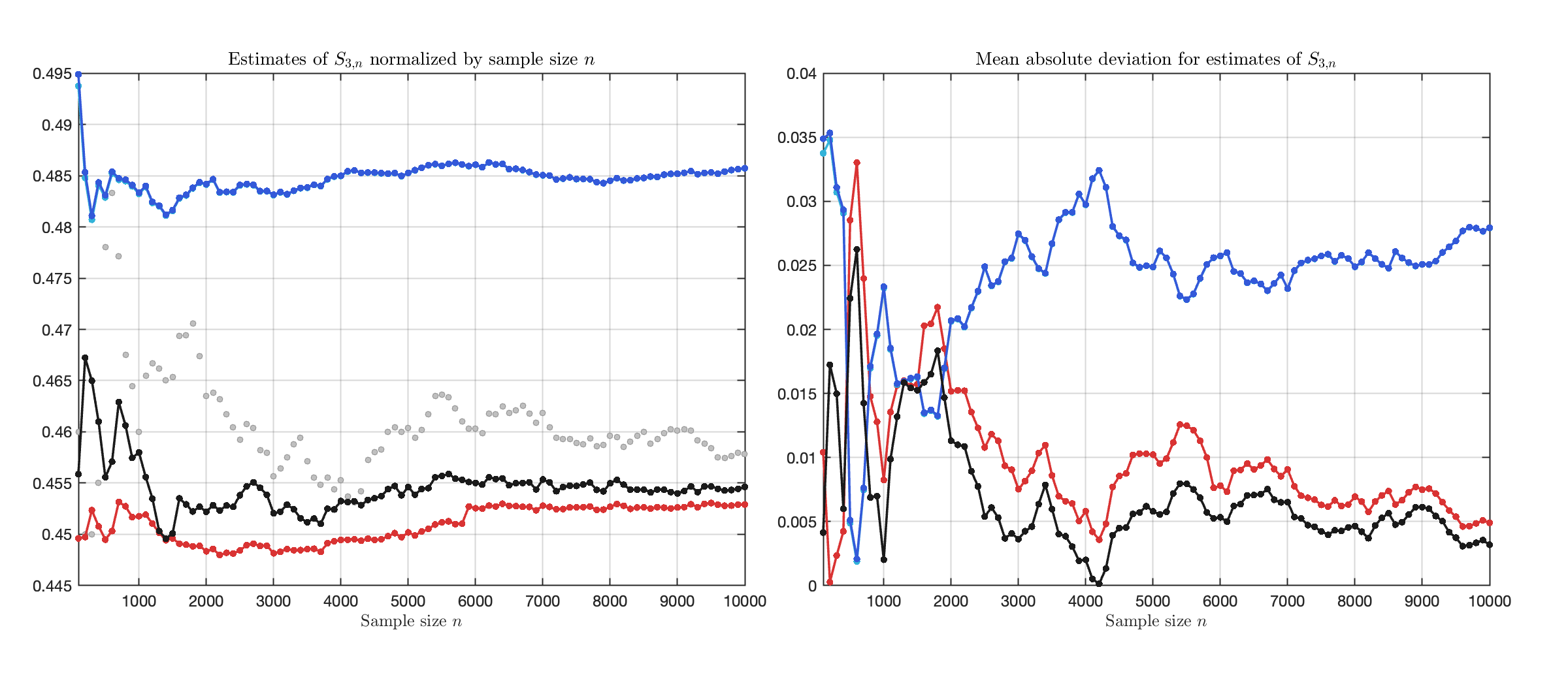}
\end{center}
\caption{\scriptsize{Uniform prior, $S_{3,n}$. Left panel: true values of $n^{-1}S_{3,n}$ (Grey o-) and estimates $n^{-1}\hat{S}^{\text{\tiny{[O]}}}_{3,n}$ (Black .-), $n^{-1}\hat{S}^{\text{\tiny{[ML]}}}_{3,n}$ (Blue .-), $n^{-1}\hat{S}^{\text{\tiny{[B]}}}_{3,n}$ (Cyan .-)  and $n^{-1}\hat{S}^{\text{\tiny{[Q-B]}}}_{3,n}$ (Red .-). Right panel: MAD of $\hat{S}^{\text{\tiny{[O]}}}_{3,n}$ (Black .-), $\hat{S}^{\text{\tiny{[ML]}}}_{3,n}$ (Blue .-), $\hat{S}^{\text{\tiny{[B]}}}_{3,n}$ (Cyan .-) and $\hat{S}^{\text{\tiny{[Q-B]}}}_{3,n}$ (Red .-)}}
\label{fig_unif_s3_supp}
\end{figure}

\begin{table}[ht]
\centering
\caption{Uniform prior, $S_{3,n}$: MAD as $n$ varies}
{
\setlength{\tabcolsep}{0pt}
\begin{tabular}{@{}l@{\hspace{1.2cm}}*{4}{>{\centering\arraybackslash}p{2.15cm}}@{}}
\hline
\hline
& $\hat{S}^{\text{\tiny{[O]}}}_{3,n}$ & $\hat{S}^{\text{\tiny{[ML]}}}_{3,n}$ & $\hat{S}^{\text{\tiny{[B]}}}_{3,n}$ & $\hat{S}^{\text{\tiny{[Q-B]}}}_{3,n}$ \\[0.1cm]
\hline
\multicolumn{5}{@{}l}{\underline{$n=1{,}000$}} \\[0.05cm]
MAD & 0.0016 & 0.0201 & 0.0202 & \textbf{0.0084} \\[0.2cm]

\multicolumn{5}{@{}l}{\underline{$n=2{,}000$}} \\[0.05cm]
MAD & 0.0048 & 0.0171 & 0.0171 & \textbf{0.0083} \\[0.2cm]

\multicolumn{5}{@{}l}{\underline{$n=3{,}000$}} \\[0.05cm]
MAD & 0.0003 & 0.0217 & 0.0218 & \textbf{0.0142} \\[0.2cm]

\multicolumn{5}{@{}l}{\underline{$n=4{,}000$}} \\[0.05cm]
MAD & 0.0015 & 0.0142 & 0.0142 & \textbf{0.0090} \\[0.2cm]

\multicolumn{5}{@{}l}{\underline{$n=5{,}000$}} \\[0.05cm]
MAD & 0.0003 & 0.0134 & 0.0134 & \textbf{0.0097} \\[0.2cm]

\multicolumn{5}{@{}l}{\underline{$n=6{,}000$}} \\[0.05cm]
MAD & 0.0006 & 0.0138 & 0.0138 & \textbf{0.0095} \\[0.2cm]

\multicolumn{5}{@{}l}{\underline{$n=7{,}000$}} \\[0.05cm]
MAD & 0.0014 & 0.0111 & 0.0111 & \textbf{0.0075} \\[0.2cm]

\multicolumn{5}{@{}l}{\underline{$n=8{,}000$}} \\[0.05cm]
MAD & 0.0011 & 0.0113 & 0.0113 & \textbf{0.0077} \\[0.2cm]

\multicolumn{5}{@{}l}{\underline{$n=9{,}000$}} \\[0.05cm]
MAD & 0.0014 & 0.0095 & 0.0095 & \textbf{0.0065} \\[0.2cm]

\multicolumn{5}{@{}l}{\underline{$n=10{,}000$}} \\[0.05cm]
MAD & 0.0009 & 0.0090 & 0.0090 & \textbf{0.0063} \\[0.1cm]
\hline
\hline
\end{tabular}
}
\label{tab_unif_s3_supp}
\end{table}

To conclude, we provide quasi-Bayes credible intervals at level $1-\alpha=0.95$. Specifically, credible intervals are constructed from \eqref{cred_interval} by relying on Newton's algorithm initialized as in Table~\ref{tab_unif_s1_supp}. Figure~\ref{uniform_fig_interval_s1}-\ref{uniform_fig_interval_s3} display the quasi-Bayes credible intervals for $S_{1,n}$ and $S_{3,n}$, respectively, with the corresponding oracle credible intervals $I^\ast_{S_{1,n}}$ and $I^\ast_{S_{3,n}}$ under the Uniform prior.

\begin{figure}[h!]
\begin{center}
\includegraphics[width=1\linewidth,height=0.36\textheight,keepaspectratio]{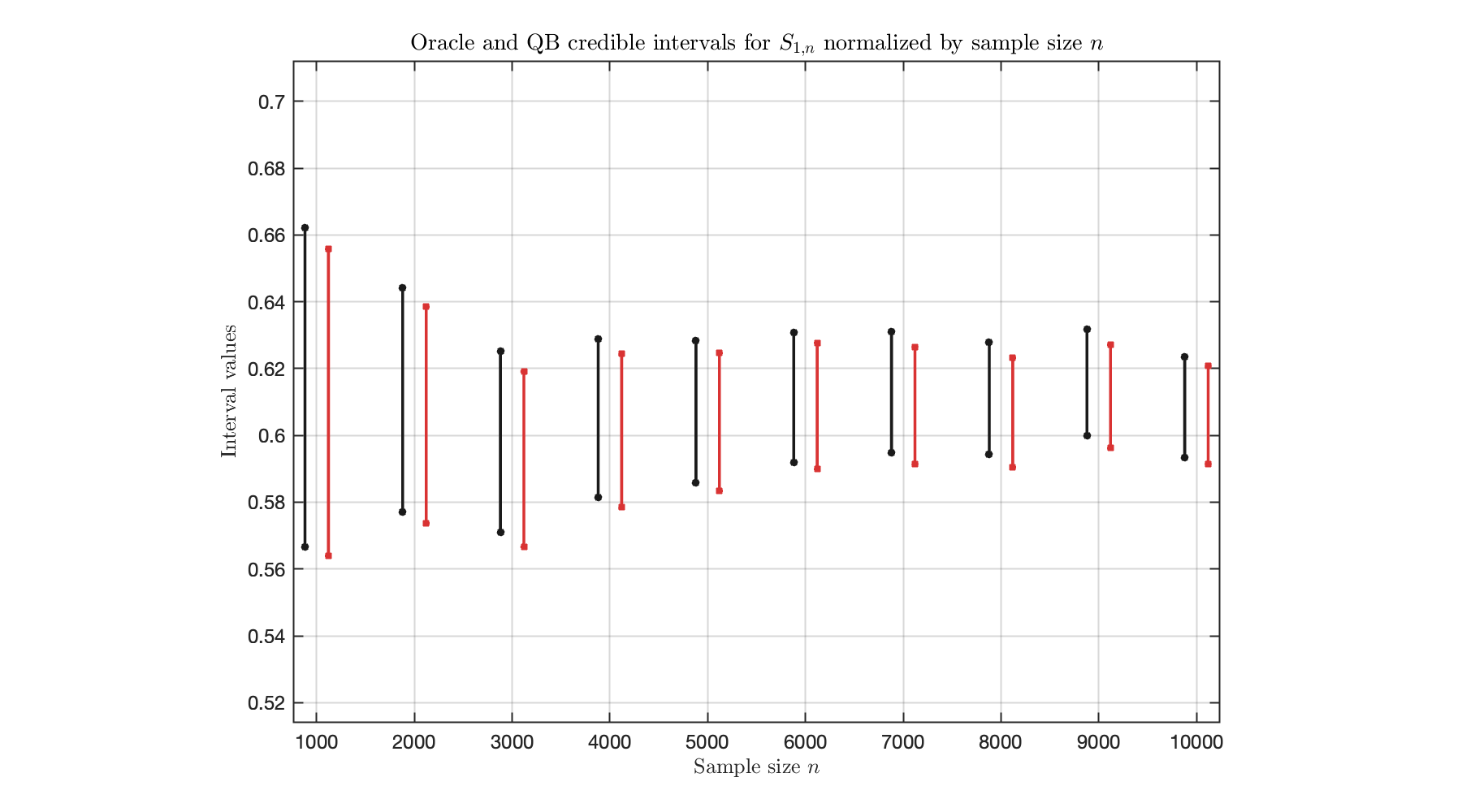}
\end{center}
\caption{\scriptsize{Uniform prior, $S_{1,n}$: oracle credible intervals (black) and quasi-Bayes credible intervals (red)}}
\label{uniform_fig_interval_s1}
\end{figure}

\begin{figure}[h!]
\begin{center}
\includegraphics[width=1\linewidth,height=0.36\textheight,keepaspectratio]{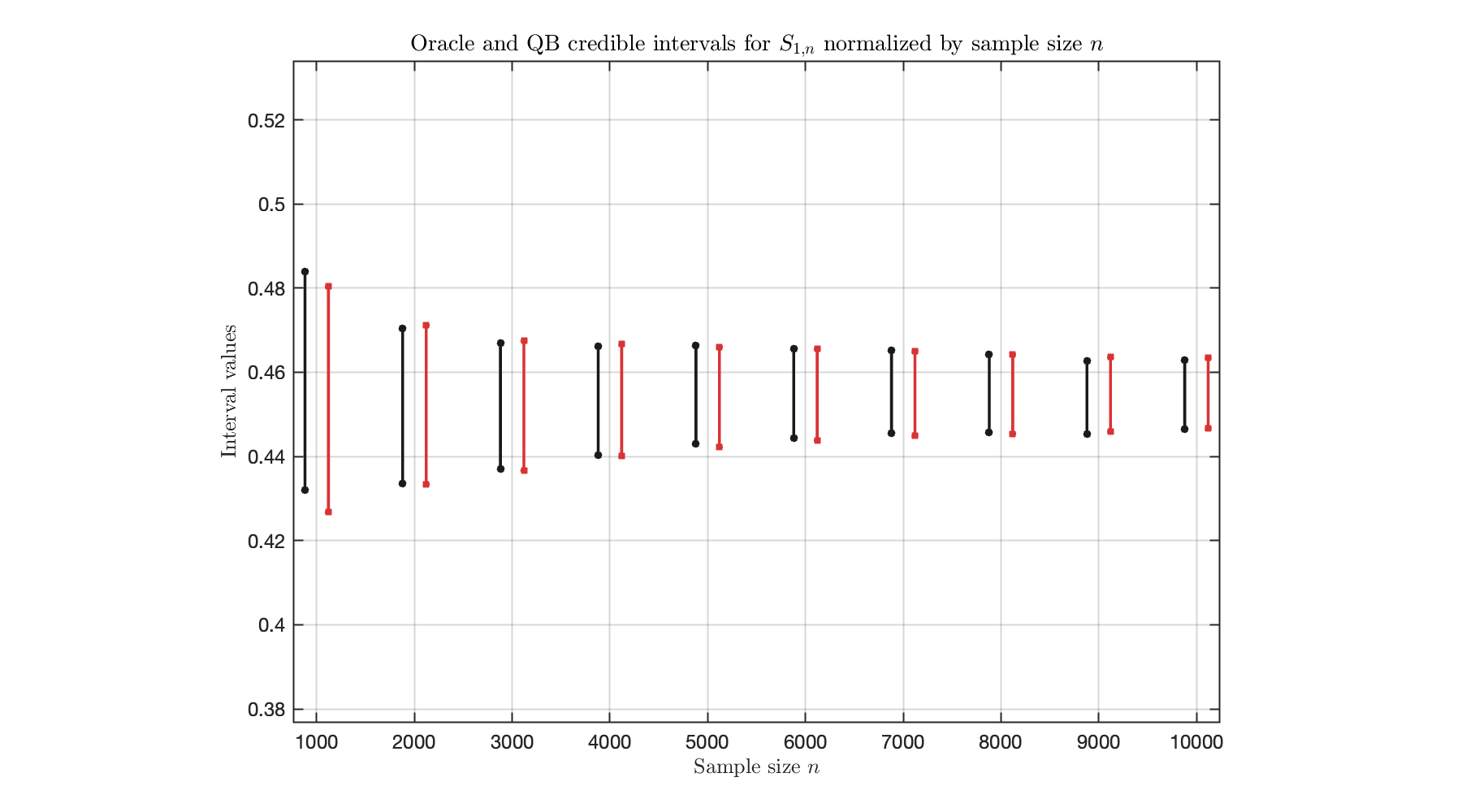}
\end{center}
\caption{\scriptsize{Uniform prior, $S_{3,n}$: oracle credible intervals (black) and quasi-Bayes credible intervals (red)}}
\label{uniform_fig_interval_s3}
\end{figure}

\subsubsection{Weibull prior}

For $i=1,\ldots,100$, let $\mathbf{X}_{i}=X_{1:100 i}$ denote a dataset of size $n=100i$ generated from the Poisson mixture model \eqref{model_distr_poi}, with Weibull prior  $G$ of scale $5$ and shape $3$. The $\mathbf{X}_{i}$'s are nested, so that the sample size increases progressively: at stage $i$, we have $n=100i$ data, obtained by adding $100$ new data at each step, starting from $n=100$. For each dataset $\mathbf{X}_{i}$ of size $n=100i$, we apply the quasi-Bayes EB approach to estimate $S_{1,n}$ and $S_{2,n}$, with $\kappa=2$, and $S_{3,n}$.

The integral in Newton's algorithm \eqref{eq:newton} is evaluated numerically via the trapezoidal rule. To perform this evaluation on a dataset $X_{1:n}$ of size $n$, the density function of $G_{n}$ is represented through its values on a fixed grid of $d\in\{5,000;\,1,000;\,500;\,100;\,50\}$ quadrature points over $\Theta=(0,U_{\Theta})$, where $U_\Theta=\max\{\max\{X_{1:n}\},\lceil Q_{n,0.99}+4\sqrt{\max\{Q_{n,0.99},1\}}\rceil\}$, with $Q_{n,0.99}=\text{Quantile}(X_{1:n};0.99)$. Further, we set $G_{0}$ to be Uniform over $\Theta$, and set $\alpha_{n}=(1+n)^{-0.99}$. Under this setting for Newton's algorithm, we obtain an estimate $G_n$ of the mixing distribution $G$, which, when substituted into \eqref{eq:est}, gives the quasi-Bayes EB estimates $\hat{S}^{\text{\tiny{[Q-B]}}}_{r,n}$ of $S_{r,n}$, for $r=1,2,3$. With regards to the estimation of $S_{1,n}$, Table~\ref{tab_weib_s1_sens_supp} provides the mean absolute deviations (MAD) and CPU times as $n$ and $d$ vary. The CPU time refers to the time (in seconds) for processing a new observation on a laptop MacBook Pro (M1 type processor).

\begin{table}[ht]
\centering
\caption{Weibull prior: MAD and CPU time (in seconds) of $\hat{S}^{\text{\tiny{[Q-B]}}}_{1,n}$ as $n$ and $d$ vary}
{
\setlength{\tabcolsep}{0pt}
\begin{tabular}{@{}l@{\hspace{1.2cm}}*{5}{>{\centering\arraybackslash}p{2.15cm}}@{}}
\hline
\hline
 & $d=5{,}000$ & $d=1{,}000$ & $d=500$ & $d=100$ & $d=50$ \\[0.1cm]
\hline
\multicolumn{6}{@{}l}{\underline{$n=1{,}000$}} \\[0.05cm]
MAD      & 0.0279 & 0.0279 & 0.0279 & 0.0279 & 0.0279 \\
CPU time & 0.0026 & 0.0010 & 0.0008 & 0.0005 & 0.0005 \\[0.2cm]

\multicolumn{6}{@{}l}{\underline{$n=2{,}000$}} \\[0.05cm]
MAD      & 0.0282 & 0.0282 & 0.0282 & 0.0282 & 0.0283 \\
CPU time & 0.0027 & 0.0011 & 0.0008 & 0.0005 & 0.0005 \\[0.2cm]

\multicolumn{6}{@{}l}{\underline{$n=3{,}000$}} \\[0.05cm]
MAD      & 0.0250 & 0.0250 & 0.0250 & 0.0250 & 0.0251 \\
CPU time & 0.0031 & 0.0011 & 0.0009 & 0.0005 & 0.0005 \\[0.2cm]

\multicolumn{6}{@{}l}{\underline{$n=4{,}000$}} \\[0.05cm]
MAD      & 0.0240 & 0.0240 & 0.0240 & 0.0240 & 0.0241 \\
CPU time & 0.0028 & 0.0011 & 0.0008 & 0.0005 & 0.0005 \\[0.2cm]

\multicolumn{6}{@{}l}{\underline{$n=5{,}000$}} \\[0.05cm]
MAD      & 0.0227 & 0.0227 & 0.0227 & 0.0227 & 0.0228 \\
CPU time & 0.0027 & 0.0010 & 0.0008 & 0.0005 & 0.0005 \\[0.2cm]

\multicolumn{6}{@{}l}{\underline{$n=6{,}000$}} \\[0.05cm]
MAD      & 0.0179 & 0.0179 & 0.0179 & 0.0180 & 0.0180 \\
CPU time & 0.0028 & 0.0011 & 0.0008 & 0.0005 & 0.0005 \\[0.2cm]

\multicolumn{6}{@{}l}{\underline{$n=7{,}000$}} \\[0.05cm]
MAD      & 0.0193 & 0.0193 & 0.0193 & 0.0193 & 0.0194 \\
CPU time & 0.0029 & 0.0011 & 0.0008 & 0.0005 & 0.0005 \\[0.2cm]

\multicolumn{6}{@{}l}{\underline{$n=8{,}000$}} \\[0.05cm]
MAD      & 0.0211 & 0.0211 & 0.0211 & 0.0212 & 0.0212 \\
CPU time & 0.0027 & 0.0011 & 0.0008 & 0.0005 & 0.0005 \\[0.2cm]

\multicolumn{6}{@{}l}{\underline{$n=9{,}000$}} \\[0.05cm]
MAD      & 0.0173 & 0.0173 & 0.0173 & 0.0173 & 0.0174 \\
CPU time & 0.0027 & 0.0011 & 0.0008 & 0.0005 & 0.0005 \\[0.2cm]

\multicolumn{6}{@{}l}{\underline{$n=10{,}000$}} \\[0.05cm]
MAD      & 0.0206 & 0.0206 & 0.0206 & 0.0207 & 0.0207 \\
CPU time & 0.0030 & 0.0011 & 0.0008 & 0.0005 & 0.0005 \\[0.1cm]
\hline
\hline
\end{tabular}
}
\label{tab_weib_s1_sens_supp}
\end{table}

Table~\ref{tab_weib_s1_supp} compares the MADs of $\hat{S}^{\text{\tiny{[O]}}}_{1,n}$, $\hat{S}^{\text{\tiny{[ML]}}}_{1,n}$, $\hat{S}^{\text{\tiny{[B]}}}_{1,n}$, $\hat{S}^{\text{\tiny[``u,v"]}}_{1,n}$ and $\hat{S}^{\text{\tiny{[Q-B]}}}_{1,n}$. Here, $\hat{S}^{\text{\tiny{[O]}}}_{1,n}=\E_{G}[S_{1,n}\,|\,X_{1:n}]$ with $G$ being the Weibull distribution with scale $5$ and shape $3$. In particular, for $\hat{S}^{\text{\tiny{[Q-B]}}}_{1,n}$ we consider: i) the fixed uniform grid of $d=1,000$ quadrature points over $\Theta=(0,U_{\Theta})$; ii) $G_{0}$ to be Uniform over $\Theta$; iii) the learning rate $\alpha_{n}=(1+n)^{-0.99}$. This is precisely the initialization considered in the second column of Table~\ref{tab_weib_s1_sens_supp}, which also provides CPU time. Furthermore, Table~\ref{tab_weib_s1ml_supp} reports the CPU times of $\hat{S}^{\text{\tiny{[Q-B]}}}_{1,n}$ as the sample size $n$ varies.

\begin{table}[ht]
\centering
\caption{Weibull prior, $S_{1,n}$: MAD as $n$ varies}
{
\setlength{\tabcolsep}{0pt}
\begin{tabular}{@{}l@{\hspace{1.2cm}}*{5}{>{\centering\arraybackslash}p{2.15cm}}@{}}
\hline
\hline
& $\hat{S}^{\text{\tiny{[O]}}}_{1,n}$ & $\hat{S}^{\text{\tiny{[ML]}}}_{1,n}$ & $\hat{S}^{\text{\tiny{[B]}}}_{1,n}$ & $\hat{S}^{\text{\tiny{[``u,v'']}}}_{1,n}$ & $\hat{S}^{\text{\tiny{[Q-B]}}}_{1,n}$ \\[0.1cm]
\hline
\multicolumn{6}{@{}l}{\underline{$n=1{,}000$}} \\[0.05cm]
MAD & 0.0118 & 0.3002 & 0.3002 & 0.0331 & \textbf{0.0279} \\[0.2cm]

\multicolumn{6}{@{}l}{\underline{$n=2{,}000$}} \\[0.05cm]
MAD & 0.0004 & 0.3060 & 0.3061 & 0.0124 & \textbf{0.0282} \\[0.2cm]

\multicolumn{6}{@{}l}{\underline{$n=3{,}000$}} \\[0.05cm]
MAD & 0.0022 & 0.3099 & 0.3099 & 0.0012 & \textbf{0.0250} \\[0.2cm]

\multicolumn{6}{@{}l}{\underline{$n=4{,}000$}} \\[0.05cm]
MAD & 0.0023 & 0.3094 & 0.3095 & 0.0066 & \textbf{0.0240} \\[0.2cm]

\multicolumn{6}{@{}l}{\underline{$n=5{,}000$}} \\[0.05cm]
MAD & 0.0048 & 0.3126 & 0.3126 & 0.0007 & \textbf{0.0227} \\[0.2cm]

\multicolumn{6}{@{}l}{\underline{$n=6{,}000$}} \\[0.05cm]
MAD & 0.0074 & 0.3071 & 0.3071 & 0.0118 & \textbf{0.0179} \\[0.2cm]

\multicolumn{6}{@{}l}{\underline{$n=7{,}000$}} \\[0.05cm]
MAD & 0.0051 & 0.3085 & 0.3085 & 0.0188 & \textbf{0.0193} \\[0.2cm]

\multicolumn{6}{@{}l}{\underline{$n=8{,}000$}} \\[0.05cm]
MAD & 0.0019 & 0.3112 & 0.3112 & 0.0107 & \textbf{0.0211} \\[0.2cm]

\multicolumn{6}{@{}l}{\underline{$n=9{,}000$}} \\[0.05cm]
MAD & 0.0038 & 0.3079 & 0.3079 & 0.0124 & \textbf{0.0173} \\[0.2cm]

\multicolumn{6}{@{}l}{\underline{$n=10{,}000$}} \\[0.05cm]
MAD & 0.0002 & 0.3114 & 0.3114 & 0.0058 & \textbf{0.0206} \\[0.1cm]
\hline
\hline
\end{tabular}
}
\label{tab_weib_s1_supp}
\end{table}

\begin{table}[ht]
\centering
\caption{Weibull prior: CPU time (in seconds) of $\hat{S}^{\text{\tiny{[Q-B]}}}_{1,n}$ for processing $n$ data}
{
\setlength{\tabcolsep}{0pt}
\begin{tabular}{@{}l@{\hspace{1.2cm}}>{\centering\arraybackslash}p{2.15cm}@{}}
\hline
\hline
& $\hat{S}^{\text{\tiny{[Q-B]}}}_{1,n}$ \\[0.1cm]
\hline
\multicolumn{2}{@{}l}{\underline{$n=1{,}000$}} \\[0.05cm]
CPU time & 1.0238 \\[0.2cm]

\multicolumn{2}{@{}l}{\underline{$n=2{,}000$}} \\[0.05cm]
CPU time & 2.1108 \\[0.2cm]

\multicolumn{2}{@{}l}{\underline{$n=3{,}000$}} \\[0.05cm]
CPU time & 3.3306 \\[0.2cm]

\multicolumn{2}{@{}l}{\underline{$n=4{,}000$}} \\[0.05cm]
CPU time & 2.1589 \\[0.2cm]

\multicolumn{2}{@{}l}{\underline{$n=5{,}000$}} \\[0.05cm]
CPU time & 4.3178 \\[0.2cm]

\multicolumn{2}{@{}l}{\underline{$n=6{,}000$}} \\[0.05cm]
CPU time & 3.3566 \\[0.2cm]

\multicolumn{2}{@{}l}{\underline{$n=7{,}000$}} \\[0.05cm]
CPU time & 6.7132 \\[0.2cm]

\multicolumn{2}{@{}l}{\underline{$n=8{,}000$}} \\[0.05cm]
CPU time & 8.5050 \\[0.2cm]

\multicolumn{2}{@{}l}{\underline{$n=9{,}000$}} \\[0.05cm]
CPU time & 9.6956 \\[0.2cm]

\multicolumn{2}{@{}l}{\underline{$n=10{,}000$}} \\[0.05cm]
CPU time & 10.9712 \\[0.1cm]
\hline
\hline
\end{tabular}
}
\label{tab_weib_s1ml_supp}
\end{table}

To conclude, we provide quasi-Bayes credible intervals at level $1-\alpha=0.95$. Specifically, credible intervals are constructed from \eqref{cred_interval} by relying on Newton's algorithm initialized as in Table~\ref{tab_weib_s1_supp}. Figure  \ref{weibull_fig_interval_s1}-\ref{weibull_fig_interval_s3} display the quasi-Bayes credible intervals for $S_{1,n}$ and $S_{3,n}$, respectively, with the corresponding oracle credible intervals $I^\ast_{S_{1,n}}$ and $I^\ast_{S_{3,n}}$ under the Weibull prior.

\begin{figure}[h!]
\begin{center}
\includegraphics[width=1\linewidth,height=0.36\textheight,keepaspectratio]{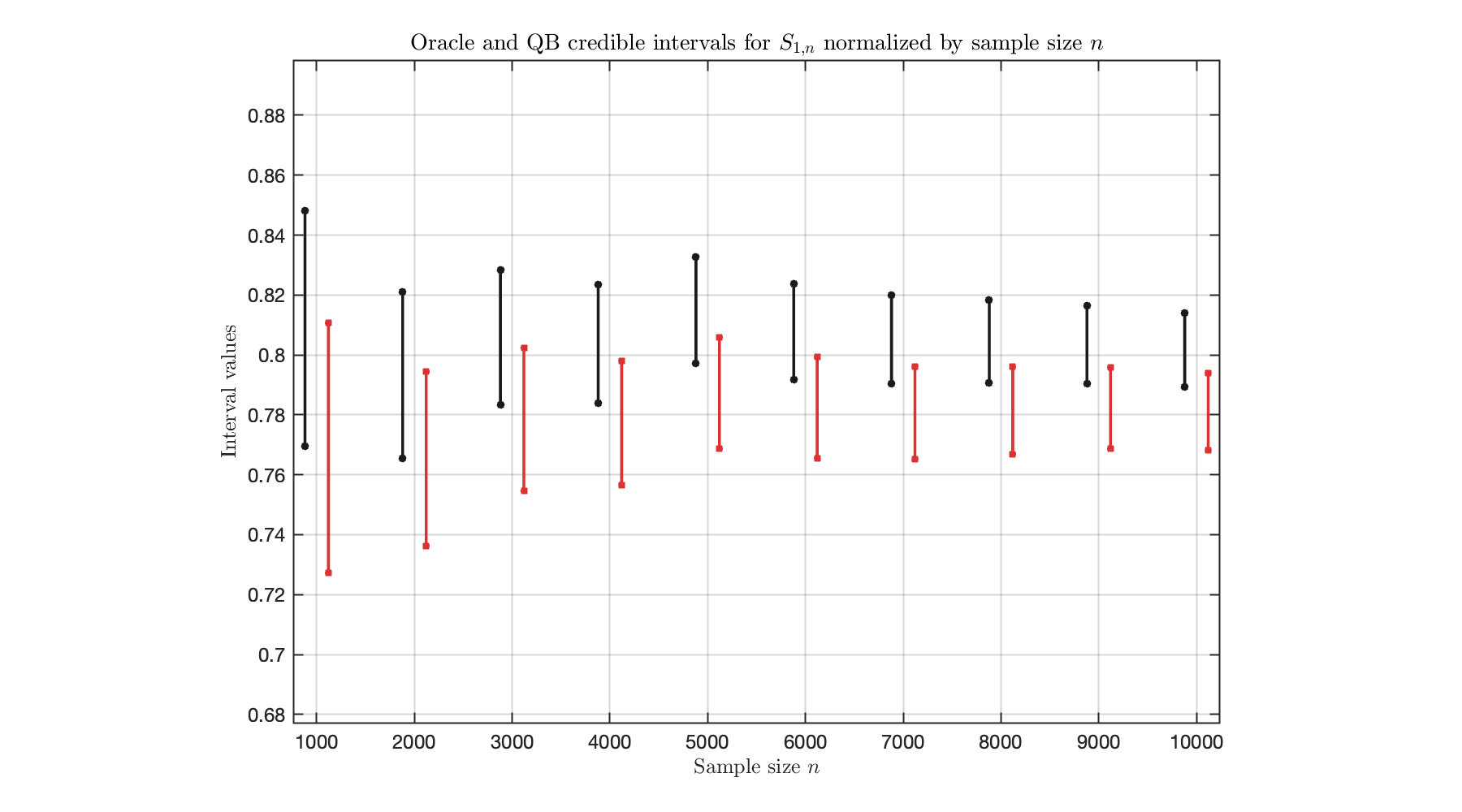}
\end{center}
\caption{\scriptsize{Weibull prior, $S_{1,n}$: oracle credible intervals (black) and quasi-Bayes credible intervals (red)}}
\label{weibull_fig_interval_s1}
\end{figure}

\begin{figure}[h!]
\begin{center}
\includegraphics[width=1\linewidth,height=0.36\textheight,keepaspectratio]{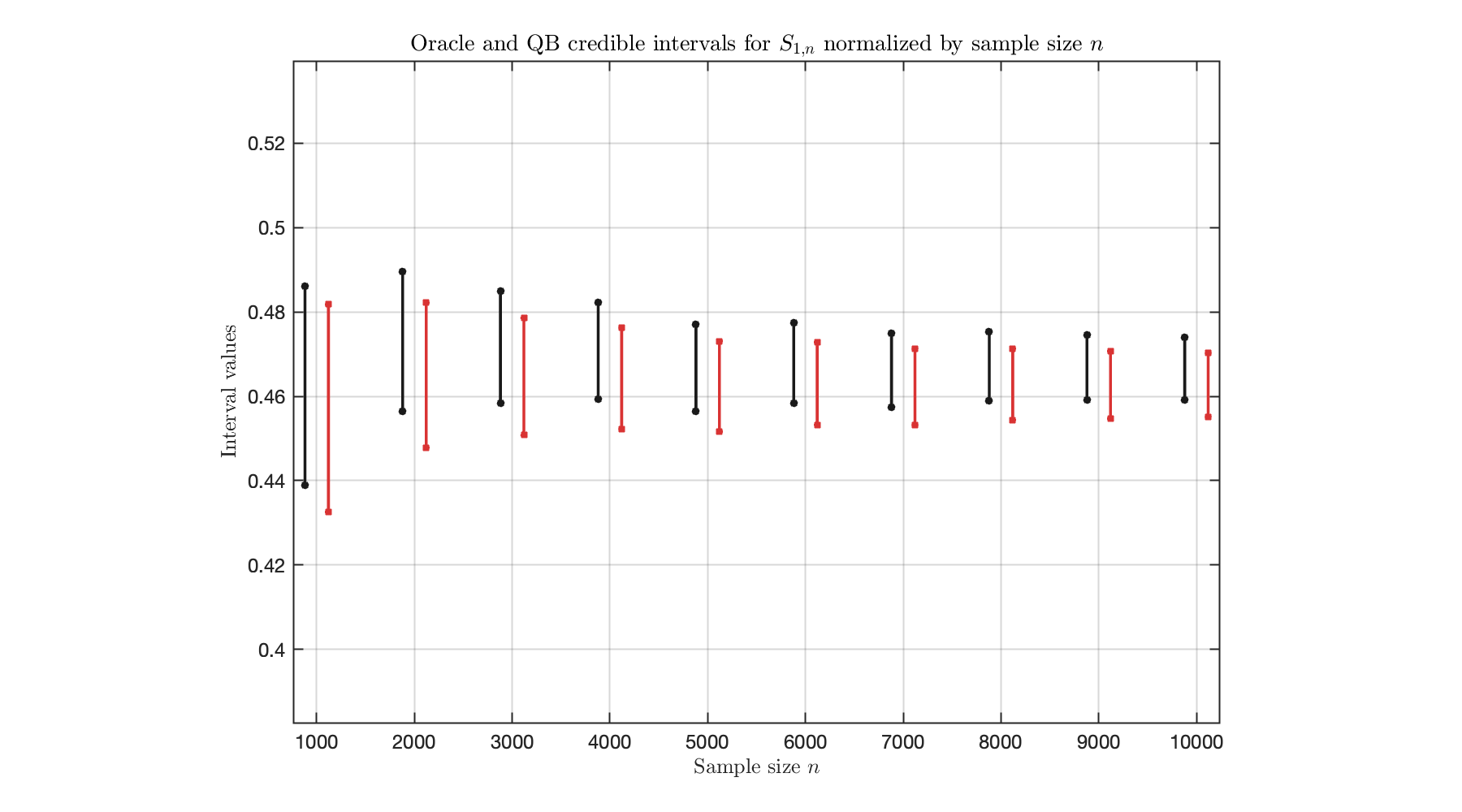}
\end{center}
\caption{\scriptsize{Weibull prior, $S_{3,n}$: oracle credible intervals (black) and quasi-Bayes credible intervals (red)}}
\label{weibull_fig_interval_s3}
\end{figure}

\subsubsection{Half-Gaussian prior}

For $i=1,\ldots,100$, let $\mathbf{X}_{i}=X_{1:100 i}$ denote a dataset of size $n=100i$ generated from the Poisson mixture model \eqref{model_distr_poi}, with a half-Gaussian prior $G$, namely the distribution of the positive part of a Gaussian random variable with mean $0$ and variance $1$. The $\mathbf{X}_{i}$'s are nested, so that the sample size increases progressively: at stage $i$, we have $n=100i$ data, obtained by adding $100$ new data at each step, starting from $n=100$. For each dataset $\mathbf{X}_{i}$ of size $n=100i$, we apply the quasi-Bayes EB approach to estimate $S_{1,n}$ and $S_{2,n}$, with $\kappa=2$, and $S_{3,n}$.

The integral in Newton's algorithm \eqref{eq:newton} is evaluated numerically via the trapezoidal rule. To perform this evaluation on a dataset $X_{1:n}$ of size $n$, the density function of $G_{n}$ is represented through its values on a fixed grid of $d\in\{5,000;\,1,000;\,500;\,100;\,50\}$ quadrature points over $\Theta=(0,U_{\Theta})$, where $U_\Theta=\max\{\max\{X_{1:n}\},\lceil Q_{n,0.99}+4\sqrt{\max\{Q_{n,0.99},1\}}\rceil\}$, with $Q_{n,0.99}=\text{Quantile}(X_{1:n};0.99)$. Further, we set $G_{0}$ to be Uniform over $\Theta$, and set $\alpha_{n}=(1+n)^{-0.99}$. Under this setting for Newton's algorithm, we obtain an estimate $G_n$ of the mixing distribution $G$, which, when substituted into \eqref{eq:est}, gives the quasi-Bayes EB estimates $\hat{S}^{\text{\tiny{[Q-B]}}}_{r,n}$ of $S_{r,n}$, for $r=1,2,3$. With regards to the estimation of $S_{1,n}$, Table~\ref{tab_halfgauss_s1_sens_supp} provides the mean absolute deviations (MAD) and CPU times as $n$ and $d$ vary. The CPU time refers to the time (in seconds) for processing a new observation on a laptop MacBook Pro (M1 type processor).

\begin{table}[ht]
\centering
\caption{Half-Gaussian prior: MAD and CPU time (in seconds) of $\hat{S}^{\text{\tiny{[Q-B]}}}_{1,n}$ as $n$ and $d$ vary}
{
\setlength{\tabcolsep}{0pt}
\begin{tabular}{@{}l@{\hspace{1.2cm}}*{5}{>{\centering\arraybackslash}p{2.15cm}}@{}}
\hline
\hline
 & $d=5{,}000$ & $d=1{,}000$ & $d=500$ & $d=100$ & $d=50$ \\[0.1cm]
\hline
\multicolumn{6}{@{}l}{\underline{$n=1{,}000$}} \\[0.05cm]
MAD      & 0.0108 & 0.0108 & 0.0108 & 0.0108 & 0.0109 \\
CPU time & 0.0021 & 0.0005 & 0.0004 & 0.0003 & 0.0003 \\[0.2cm]

\multicolumn{6}{@{}l}{\underline{$n=2{,}000$}} \\[0.05cm]
MAD      & 0.0098 & 0.0098 & 0.0098 & 0.0098 & 0.0100 \\
CPU time & 0.0021 & 0.0005 & 0.0004 & 0.0003 & 0.0003 \\[0.2cm]

\multicolumn{6}{@{}l}{\underline{$n=3{,}000$}} \\[0.05cm]
MAD      & 0.0037 & 0.0037 & 0.0037 & 0.0038 & 0.0040 \\
CPU time & 0.0022 & 0.0007 & 0.0005 & 0.0004 & 0.0004 \\[0.2cm]

\multicolumn{6}{@{}l}{\underline{$n=4{,}000$}} \\[0.05cm]
MAD      & 0.0043 & 0.0043 & 0.0043 & 0.0043 & 0.0046 \\
CPU time & 0.0023 & 0.0009 & 0.0007 & 0.0004 & 0.0004 \\[0.2cm]

\multicolumn{6}{@{}l}{\underline{$n=5{,}000$}} \\[0.05cm]
MAD      & 0.0031 & 0.0031 & 0.0031 & 0.0031 & 0.0033 \\
CPU time & 0.0023 & 0.0009 & 0.0007 & 0.0004 & 0.0004 \\[0.2cm]

\multicolumn{6}{@{}l}{\underline{$n=6{,}000$}} \\[0.05cm]
MAD      & 0.0049 & 0.0049 & 0.0049 & 0.0050 & 0.0052 \\
CPU time & 0.0025 & 0.0008 & 0.0006 & 0.0004 & 0.0004 \\[0.2cm]

\multicolumn{6}{@{}l}{\underline{$n=7{,}000$}} \\[0.05cm]
MAD      & 0.0024 & 0.0024 & 0.0024 & 0.0024 & 0.0027 \\
CPU time & 0.0023 & 0.0009 & 0.0007 & 0.0004 & 0.0004 \\[0.2cm]

\multicolumn{6}{@{}l}{\underline{$n=8{,}000$}} \\[0.05cm]
MAD      & 0.0024 & 0.0025 & 0.0025 & 0.0026 & 0.0032 \\
CPU time & 0.0024 & 0.0009 & 0.0007 & 0.0004 & 0.0004 \\[0.2cm]

\multicolumn{6}{@{}l}{\underline{$n=9{,}000$}} \\[0.05cm]
MAD      & 0.0002 & 0.0002 & 0.0003 & 0.0004 & 0.0010 \\
CPU time & 0.0026 & 0.0008 & 0.0006 & 0.0004 & 0.0003 \\[0.2cm]

\multicolumn{6}{@{}l}{\underline{$n=10{,}000$}} \\[0.05cm]
MAD      & 0.0018 & 0.0017 & 0.0552 & 0.0552 & 0.0551 \\
CPU time & 0.0023 & 0.0009 & 0.0017 & 0.0016 & 0.0010 \\[0.1cm]
\hline
\hline
\end{tabular}
}
\label{tab_halfgauss_s1_sens_supp}
\end{table}

Figure~\ref{fig_halfgauss_s1_supp} and Table~\ref{tab_halfgauss_s1_supp} compares the MADs for $\hat{S}^{\text{\tiny{[O]}}}_{1,n}$, $\hat{S}^{\text{\tiny{[ML]}}}_{1,n}$, $\hat{S}^{\text{\tiny{[B]}}}_{1,n}$, $\hat{S}^{\text{\tiny[``u,v"]}}_{1,n}$ and $\hat{S}^{\text{\tiny{[Q-B]}}}_{1,n}$. Here, $\hat{S}^{\text{\tiny{[O]}}}_{1,n}=\E_{G}[S_{1,n}\,|\,X_{1:n}]$ with $G$ being the half-Gaussian distribution. In particular, for $\hat{S}^{\text{\tiny{[Q-B]}}}_{1,n}$ we consider: i) the fixed uniform grid of $d=1,000$ quadrature points over $\Theta=(0,U_{\Theta})$; ii) $G_{0}$ to be Uniform over $\Theta$; iii) the learning rate to be $\alpha_{n}=(1+n)^{-0.99}$. This is precisely the initialization considered in the second column of Table~\ref{tab_halfgauss_s1_sens_supp}, which also provides CPU time. 

\begin{figure}[h!]
\begin{center}
\includegraphics[width=1\linewidth,height=0.36\textheight,keepaspectratio]{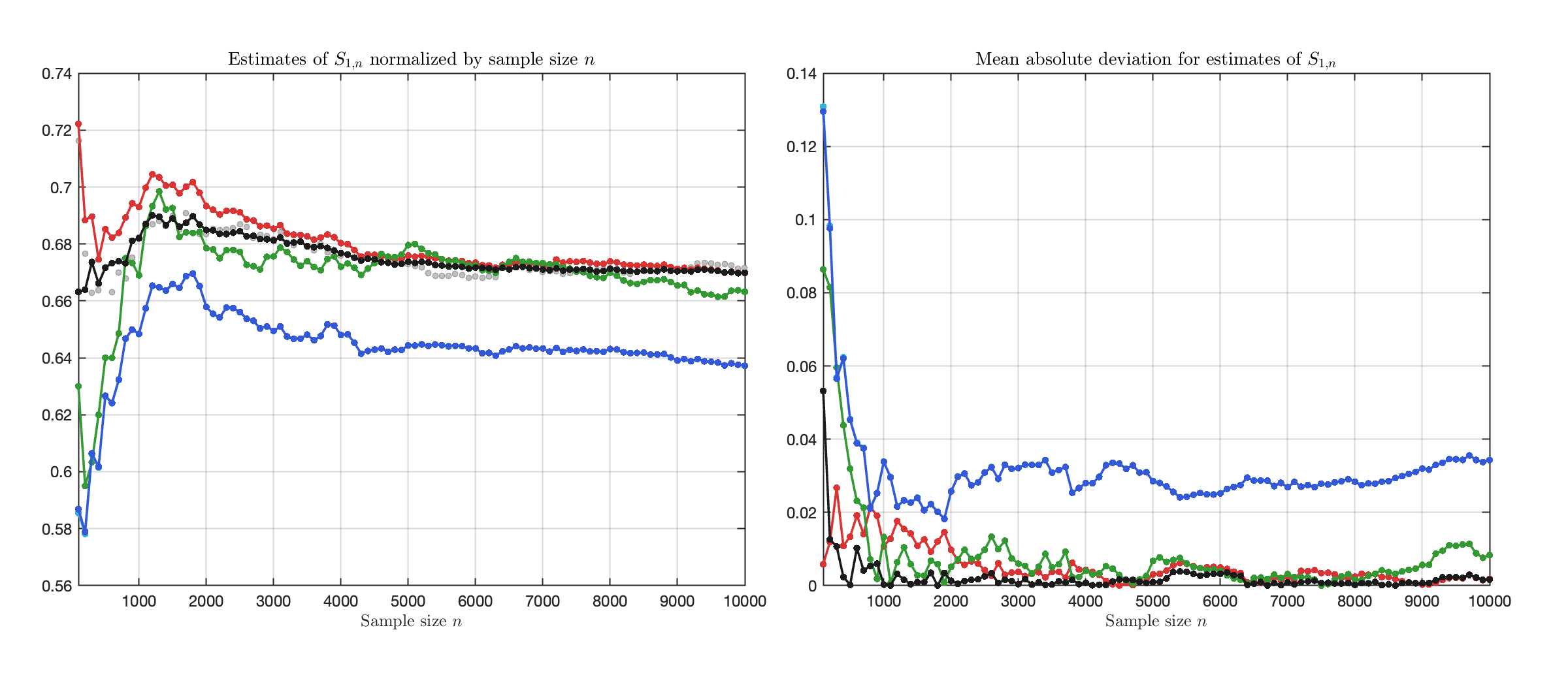}
\end{center}
\caption{\scriptsize{Half-Gaussian prior, $S_{1,n}$. Left panel: true values $n^{-1}S_{1,n}$ (Grey o-) and estimates $n^{-1}\hat{S}^{\text{\tiny{[O]}}}_{1,n}$ (Black .-), $n^{-1}\hat{S}^{\text{\tiny{[ML]}}}_{1,n}$ (Blue .-), $n^{-1}\hat{S}^{\text{\tiny{[B]}}}_{1,n}$ (Cyan .-), $n^{-1}\hat{S}^{\text{\tiny[``u,v"]}}_{1,n}$ (Green .-) and $n^{-1}\hat{S}^{\text{\tiny{[Q-B]}}}_{1,n}$ (Red .-). Right panel: MAD of $\hat{S}^{\text{\tiny{[O]}}}_{1,n}$ (Black .-), $\hat{S}^{\text{\tiny{[ML]}}}_{1,n}$ (Blue .-), $\hat{S}^{\text{\tiny{[B]}}}_{1,n}$ (Cyan .-), $\hat{S}^{\text{\tiny[``u,v"]}}_{1,n}$ (Green .-) and $\hat{S}^{\text{\tiny{[Q-B]}}}_{1,n}$ (Red .-)}}
\label{fig_halfgauss_s1_supp}
\end{figure}

\begin{table}[ht]
\centering
\caption{Half-Gaussian prior, $S_{1,n}$: MAD as $n$ varies}
{
\setlength{\tabcolsep}{0pt}
\begin{tabular}{@{}l@{\hspace{1.2cm}}*{5}{>{\centering\arraybackslash}p{2.15cm}}@{}}
\hline
\hline
& $\hat{S}^{\text{\tiny{[O]}}}_{1,n}$ & $\hat{S}^{\text{\tiny{[ML]}}}_{1,n}$ & $\hat{S}^{\text{\tiny{[B]}}}_{1,n}$ & $\hat{S}^{\text{\tiny{[``u,v'']}}}_{1,n}$ & $\hat{S}^{\text{\tiny{[Q-B]}}}_{1,n}$ \\[0.1cm]
\hline
\multicolumn{6}{@{}l}{\underline{$n=1{,}000$}} \\[0.05cm]
MAD & 0.0002 & 0.0339 & 0.0338 & 0.0132 & \textbf{0.0108} \\[0.2cm]

\multicolumn{6}{@{}l}{\underline{$n=2{,}000$}} \\[0.05cm]
MAD & 0.0013 & 0.0258 & 0.0257 & 0.0051 & \textbf{0.0098} \\[0.2cm]

\multicolumn{6}{@{}l}{\underline{$n=3{,}000$}} \\[0.05cm]
MAD & 0.0004 & 0.0322 & 0.0321 & 0.0060 & \textbf{0.0037} \\[0.2cm]

\multicolumn{6}{@{}l}{\underline{$n=4{,}000$}} \\[0.05cm]
MAD & 0.0007 & 0.0280 & 0.0280 & 0.0040 & \textbf{0.0043} \\[0.2cm]

\multicolumn{6}{@{}l}{\underline{$n=5{,}000$}} \\[0.05cm]
MAD & 0.0009 & 0.0285 & 0.0285 & 0.0067 & \textbf{0.0031} \\[0.2cm]

\multicolumn{6}{@{}l}{\underline{$n=6{,}000$}} \\[0.05cm]
MAD & 0.0032 & 0.0252 & 0.0252 & 0.0040 & \textbf{0.0049} \\[0.2cm]

\multicolumn{6}{@{}l}{\underline{$n=7{,}000$}} \\[0.05cm]
MAD & 0.0009 & 0.0269 & 0.0269 & 0.0031 & \textbf{0.0024} \\[0.2cm]

\multicolumn{6}{@{}l}{\underline{$n=8{,}000$}} \\[0.05cm]
MAD & 0.0002 & 0.0284 & 0.0284 & 0.0016 & \textbf{0.0024} \\[0.2cm]

\multicolumn{6}{@{}l}{\underline{$n=9{,}000$}} \\[0.05cm]
MAD & 0.0006 & 0.0320 & 0.0320 & 0.0056 & \textbf{0.0002} \\[0.2cm]

\multicolumn{6}{@{}l}{\underline{$n=10{,}000$}} \\[0.05cm]
MAD & 0.0016 & 0.0343 & 0.0343 & 0.0083 & \textbf{0.0018} \\[0.1cm]
\hline
\hline
\end{tabular}
}
\label{tab_halfgauss_s1_supp}
\end{table}

Figure~\ref{fig_halfgauss_s3_supp} and Table~\ref{tab_halfgauss_s3_supp} report corresponding results with respect to the estimation of $S_{3,n}$.

\begin{figure}[h!]
\begin{center}
\includegraphics[width=1\linewidth,height=0.36\textheight,keepaspectratio]{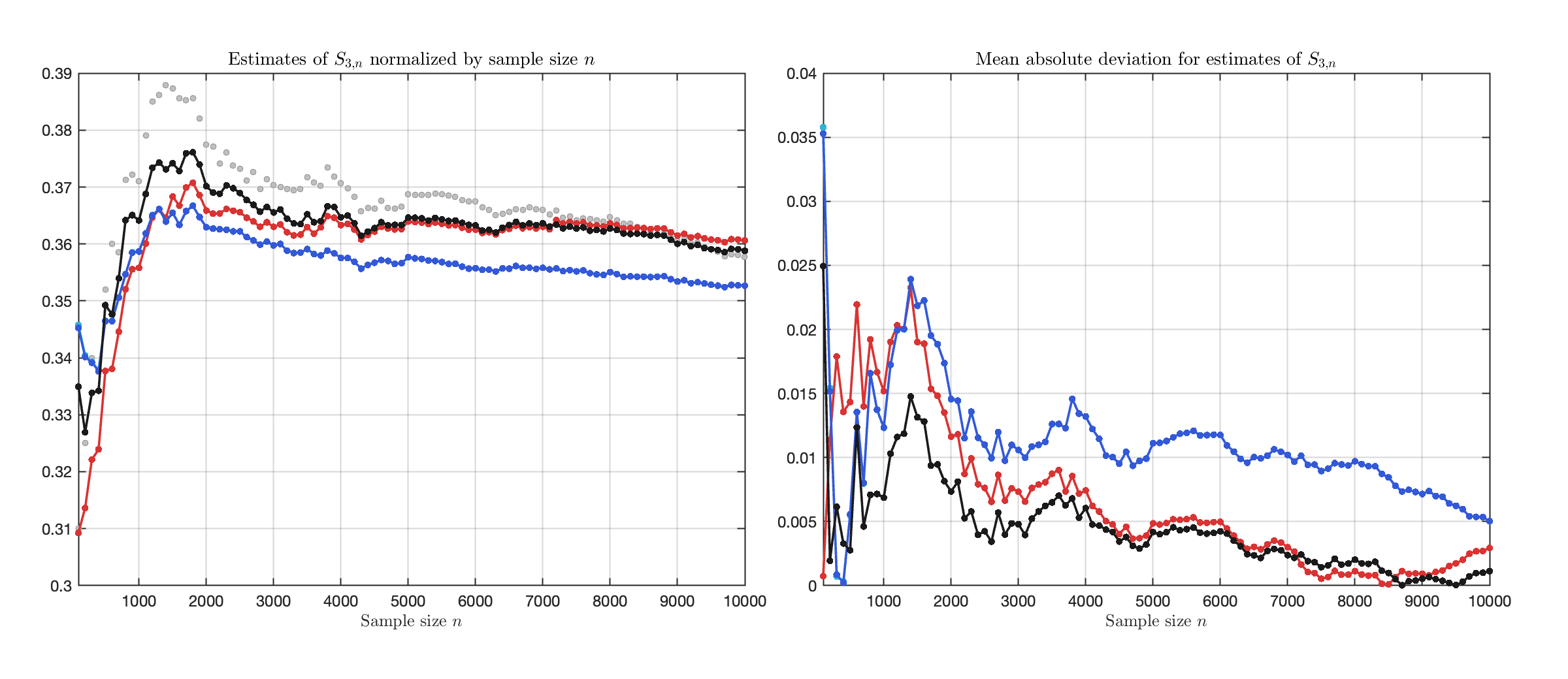}
\end{center}
\caption{\scriptsize{Half-Gaussian prior, $S_{3,n}$. Left panel: true values of $n^{-1}S_{3,n}$ (Grey o-) and estimates $n^{-1}\hat{S}^{\text{\tiny{[O]}}}_{3,n}$ (Black .-), $n^{-1}\hat{S}^{\text{\tiny{[ML]}}}_{3,n}$ (Blue .-), $n^{-1}\hat{S}^{\text{\tiny{[B]}}}_{3,n}$ (Cyan .-)  and $n^{-1}\hat{S}^{\text{\tiny{[Q-B]}}}_{3,n}$ (Red .-). Right panel: MAD of $\hat{S}^{\text{\tiny{[O]}}}_{3,n}$ (Black .-), $\hat{S}^{\text{\tiny{[ML]}}}_{3,n}$ (Blue .-), $\hat{S}^{\text{\tiny{[B]}}}_{3,n}$ (Cyan .-) and $\hat{S}^{\text{\tiny{[Q-B]}}}_{3,n}$ (Red .-)}}
\label{fig_halfgauss_s3_supp}
\end{figure}

\begin{table}[ht]
\centering
\caption{Half-Gaussian prior, $S_{3,n}$: MAD as $n$ varies}
{
\setlength{\tabcolsep}{0pt}
\begin{tabular}{@{}l@{\hspace{1.2cm}}*{4}{>{\centering\arraybackslash}p{2.15cm}}@{}}
\hline
\hline
& $\hat{S}^{\text{\tiny{[O]}}}_{3,n}$ & $\hat{S}^{\text{\tiny{[ML]}}}_{3,n}$ & $\hat{S}^{\text{\tiny{[B]}}}_{3,n}$ & $\hat{S}^{\text{\tiny{[Q-B]}}}_{3,n}$ \\[0.1cm]
\hline
\multicolumn{5}{@{}l}{\underline{$n=1{,}000$}} \\[0.05cm]
MAD & 0.0069 & 0.0123 & 0.0123 & \textbf{0.0152} \\[0.2cm]

\multicolumn{5}{@{}l}{\underline{$n=2{,}000$}} \\[0.05cm]
MAD & 0.0073 & 0.0145 & 0.0146 & \textbf{0.0116} \\[0.2cm]

\multicolumn{5}{@{}l}{\underline{$n=3{,}000$}} \\[0.05cm]
MAD & 0.0048 & 0.0106 & 0.0106 & \textbf{0.0073} \\[0.2cm]

\multicolumn{5}{@{}l}{\underline{$n=4{,}000$}} \\[0.05cm]
MAD & 0.0060 & 0.0132 & 0.0132 & \textbf{0.0074} \\[0.2cm]

\multicolumn{5}{@{}l}{\underline{$n=5{,}000$}} \\[0.05cm]
MAD & 0.0042 & 0.0111 & 0.0111 & \textbf{0.0049} \\[0.2cm]

\multicolumn{5}{@{}l}{\underline{$n=6{,}000$}} \\[0.05cm]
MAD & 0.0042 & 0.0117 & 0.0118 & \textbf{0.0050} \\[0.2cm]

\multicolumn{5}{@{}l}{\underline{$n=7{,}000$}} \\[0.05cm]
MAD & 0.0024 & 0.0102 & 0.0102 & \textbf{0.0030} \\[0.2cm]

\multicolumn{5}{@{}l}{\underline{$n=8{,}000$}} \\[0.05cm]
MAD & 0.0020 & 0.0097 & 0.0097 & \textbf{0.0011} \\[0.2cm]

\multicolumn{5}{@{}l}{\underline{$n=9{,}000$}} \\[0.05cm]
MAD & 0.0005 & 0.0071 & 0.0071 & \textbf{0.0009} \\[0.2cm]

\multicolumn{5}{@{}l}{\underline{$n=10{,}000$}} \\[0.05cm]
MAD & 0.0011 & 0.0050 & 0.0050 & \textbf{0.0029} \\[0.1cm]
\hline
\hline
\end{tabular}
}
\label{tab_halfgauss_s3_supp}
\end{table}

To conclude, we provide quasi-Bayes credible intervals at level $1-\alpha=0.95$. Specifically, credible intervals are constructed from \eqref{cred_interval} by relying on Newton's algorithm initialized as in Table~\ref{tab_halfgauss_s1_supp}. Figure~\ref{halfgauss_fig_interval_s1}-\ref{halfgauss_fig_interval_s3} display the quasi-Bayes credible intervals for $S_{1,n}$ and $S_{3,n}$, respectively, with the corresponding oracle credible intervals $I^\ast_{S_{1,n}}$ and $I^\ast_{S_{3,n}}$ under the half-Gaussian prior.

\begin{figure}[h!]
\begin{center}
\includegraphics[width=1\linewidth,height=0.36\textheight,keepaspectratio]{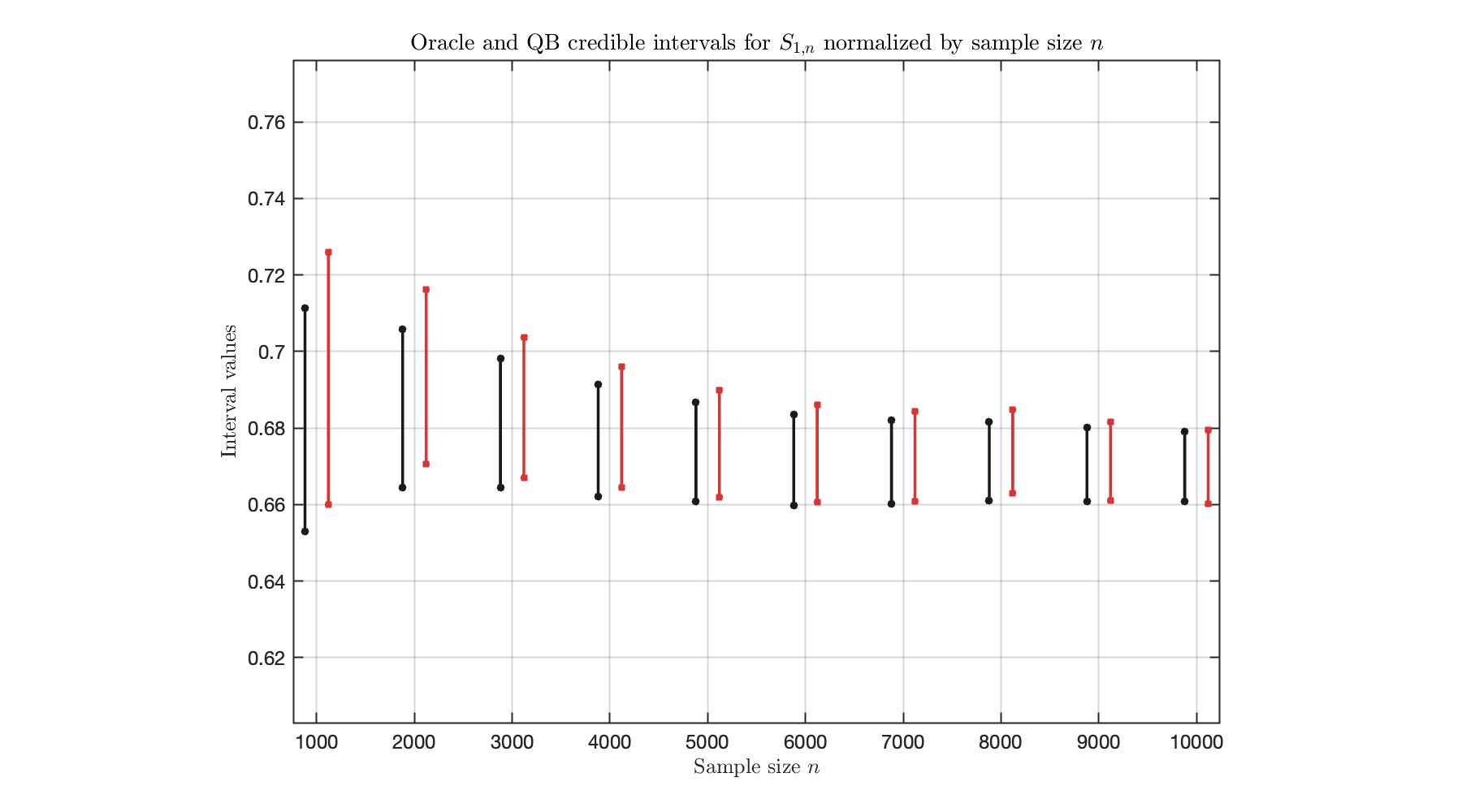}
\end{center}
\caption{\scriptsize{Half-Gaussian prior, $S_{1,n}$: oracle credible intervals (black) and quasi-Bayes credible intervals (red)}}
\label{halfgauss_fig_interval_s1}
\end{figure}

\begin{figure}[h!]
\begin{center}
\includegraphics[width=1\linewidth,height=0.36\textheight,keepaspectratio]{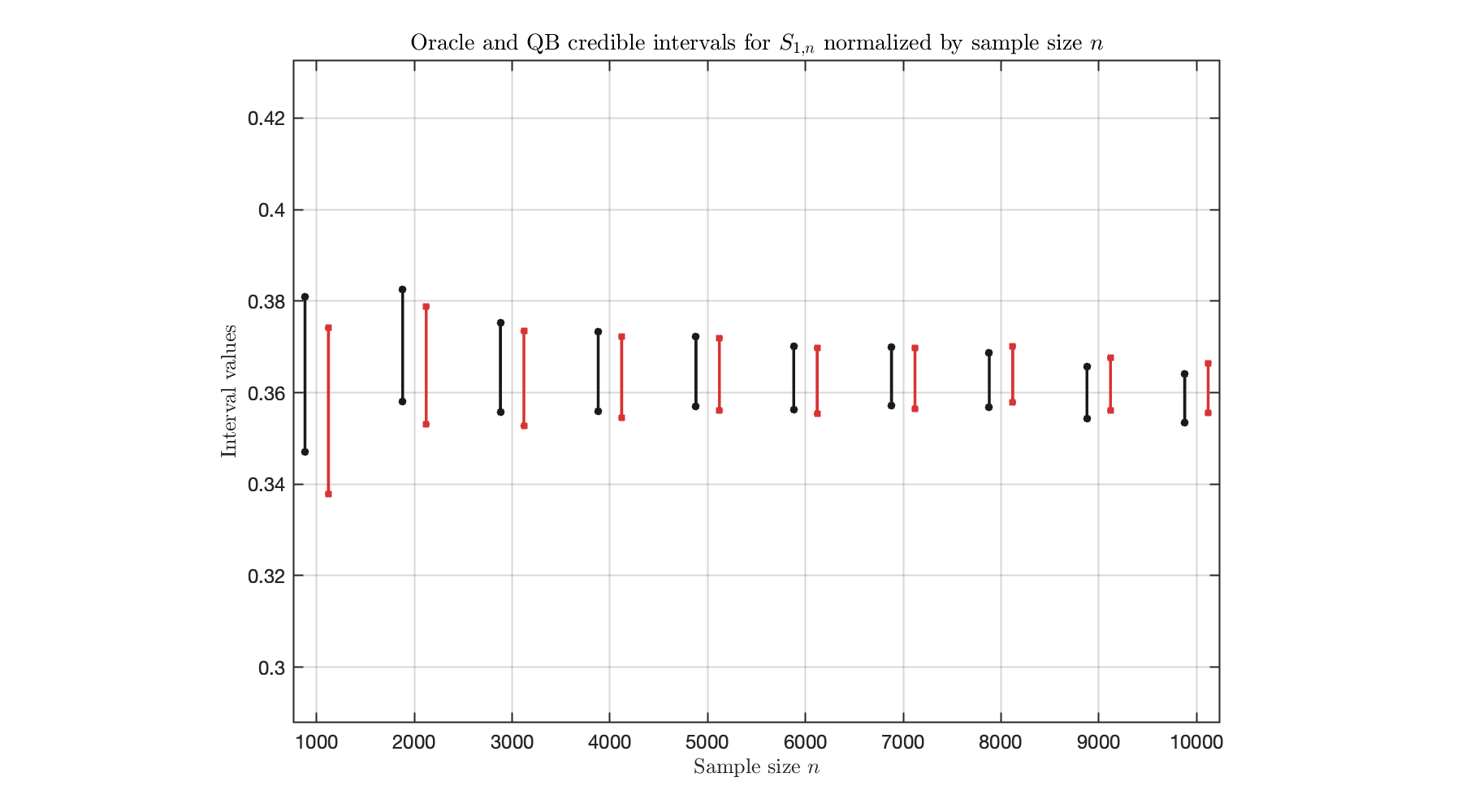}
\end{center}
\caption{\scriptsize{Half-Gaussian prior, $S_{3,n}$: oracle credible intervals (black) and quasi-Bayes credible intervals (red)}}
\label{halfgauss_fig_interval_s3}
\end{figure}

\subsubsection{Square-root of half-Cauchy prior}

For $i=1,\ldots,100$, let $\mathbf{X}_{i}=X_{1:100 i}$ denote a dataset of size $n=100i$ generated from the Poisson mixture model \eqref{model_distr_poi}, with a square-root of half-Cauchy prior $G$, namely the distribution of the square-root of the positive part of a standard Cauchy random variable. The $\mathbf{X}_{i}$'s are nested, so that the sample size increases progressively: at stage $i$, we have $n=100i$ data, obtained by adding $100$ new data at each step, starting from $n=100$. For each dataset $\mathbf{X}_{i}$ of size $n=100i$, we apply the quasi-Bayes EB approach to estimate $S_{1,n}$ and $S_{2,n}$, with $\kappa=2$, and $S_{3,n}$.

The integral in Newton's algorithm \eqref{eq:newton} is evaluated numerically via the trapezoidal rule. To perform this evaluation on a dataset $X_{1:n}$ of size $n$, the density function of $G_{n}$ is represented through its values on a fixed grid of $d\in\{5,000;\,1,000;\,500;\,100;\,50\}$ quadrature points over $\Theta=(0,U_{\Theta})$, where $U_\Theta=\max\{\max\{X_{1:n}\},\lceil Q_{n,0.99}+4\sqrt{\max\{Q_{n,0.99},1\}}\rceil\}$, with $Q_{n,0.99}=\text{Quantile}(X_{1:n};0.99)$. Further, we set $G_{0}$ to be Uniform over $\Theta$, and set $\alpha_{n}=(1+n)^{-0.99}$. Under this setting for Newton's algorithm, we obtain an estimate $G_n$ of the mixing distribution $G$, which, when substituted into \eqref{eq:est}, gives the quasi-Bayes EB estimates $\hat{S}^{\text{\tiny{[Q-B]}}}_{r,n}$ of $S_{r,n}$, for $r=1,2,3$. With regards to the estimation of $S_{1,n}$, Table~\ref{tab_hafcauchy_s1_sens_supp}  provides the mean absolute deviations (MAD) and CPU times as $n$ and $d$ vary. The CPU time refers to the time (in seconds) for processing a new observation on a laptop MacBook Pro (M1 type processor).

\begin{table}[ht]
\centering
\caption{Square-root of half-Cauchy prior: MAD and CPU time (in seconds) of $\hat{S}^{\text{\tiny{[Q-B]}}}_{1,n}$ as $n$ and $d$ vary}
{
\setlength{\tabcolsep}{0pt}
\begin{tabular}{@{}l@{\hspace{1.2cm}}*{5}{>{\centering\arraybackslash}p{2.15cm}}@{}}
\hline
\hline
 & $d=5{,}000$ & $d=1{,}000$ & $d=500$ & $d=100$ & $d=50$ \\[0.1cm]
\hline
\multicolumn{6}{@{}l}{\underline{$n=1{,}000$}} \\[0.05cm]
MAD      & 0.0410 & 0.0412 & 0.0420 & 0.1476 & 0.3713 \\
CPU time & 0.0022 & 0.0008 & 0.0007 & 0.0004 & 0.0004 \\[0.2cm]

\multicolumn{6}{@{}l}{\underline{$n=2{,}000$}} \\[0.05cm]
MAD      & 0.0352 & 0.0354 & 0.0360 & 0.1350 & 0.3521 \\
CPU time & 0.0028 & 0.0009 & 0.0007 & 0.0004 & 0.0004 \\[0.2cm]

\multicolumn{6}{@{}l}{\underline{$n=3{,}000$}} \\[0.05cm]
MAD      & 0.0433 & 0.0434 & 0.0440 & 0.1388 & 0.3671 \\
CPU time & 0.0027 & 0.0008 & 0.0006 & 0.0004 & 0.0004 \\[0.2cm]

\multicolumn{6}{@{}l}{\underline{$n=4{,}000$}} \\[0.05cm]
MAD      & 0.0327 & 0.0328 & 0.0333 & 0.1239 & 0.3596 \\
CPU time & 0.0024 & 0.0009 & 0.0007 & 0.0004 & 0.0004 \\[0.2cm]

\multicolumn{6}{@{}l}{\underline{$n=5{,}000$}} \\[0.05cm]
MAD      & 0.0273 & 0.0274 & 0.0279 & 0.1173 & 0.3472 \\
CPU time & 0.0023 & 0.0007 & 0.0006 & 0.0004 & 0.0004 \\[0.2cm]

\multicolumn{6}{@{}l}{\underline{$n=6{,}000$}} \\[0.05cm]
MAD      & 0.0280 & 0.0281 & 0.0285 & 0.1150 & 0.3452 \\
CPU time & 0.0025 & 0.0007 & 0.0007 & 0.0004 & 0.0004 \\[0.2cm]

\multicolumn{6}{@{}l}{\underline{$n=7{,}000$}} \\[0.05cm]
MAD      & 0.0229 & 0.0233 & 0.0253 & 0.1917 & 0.8017 \\
CPU time & 0.0021 & 0.0009 & 0.0007 & 0.0004 & 0.0004 \\[0.2cm]

\multicolumn{6}{@{}l}{\underline{$n=8{,}000$}} \\[0.05cm]
MAD      & 0.0223 & 0.0226 & 0.0245 & 0.1914 & 0.8055 \\
CPU time & 0.0024 & 0.0008 & 0.0007 & 0.0004 & 0.0004 \\[0.2cm]

\multicolumn{6}{@{}l}{\underline{$n=9{,}000$}} \\[0.05cm]
MAD      & 0.0208 & 0.0212 & 0.0229 & 0.1904 & 0.8083 \\
CPU time & 0.0024 & 0.0009 & 0.0006 & 0.0004 & 0.0004 \\[0.2cm]

\multicolumn{6}{@{}l}{\underline{$n=10{,}000$}} \\[0.05cm]
MAD      & 0.0195 & 0.0198 & 0.0216 & 0.1888 & 0.8088 \\
CPU time & 0.0023 & 0.0008 & 0.0007 & 0.0004 & 0.0004 \\[0.1cm]
\hline
\hline
\end{tabular}
}
\label{tab_hafcauchy_s1_sens_supp}
\end{table}

Figure~\ref{fig_hafcauchy_s1_supp} and Table~\ref{tab_hafcauchy_s1_supp} compares the MADs for $\hat{S}^{\text{\tiny{[O]}}}_{1,n}$, $\hat{S}^{\text{\tiny{[ML]}}}_{1,n}$, $\hat{S}^{\text{\tiny{[B]}}}_{1,n}$, $\hat{S}^{\text{\tiny[``u,v"]}}_{1,n}$ and $\hat{S}^{\text{\tiny{[Q-B]}}}_{1,n}$. Here, $\hat{S}^{\text{\tiny{[O]}}}_{1,n}=\E_{G}[S_{1,n}\,|\,X_{1:n}]$ with $G$ being the square-root of half-Cauchy distribution. In particular, for $\hat{S}^{\text{\tiny{[Q-B]}}}_{1,n}$ we consider: i) the fixed uniform grid of $d=1,000$ quadrature points over $\Theta=(0,U_{\Theta})$; ii) $G_{0}$ to be Uniform over $\Theta$; iii) the learning rate to be $\alpha_{n}=(1+n)^{-0.99}$. This is precisely the initialization considered in the second column of Table~ \ref{tab_hafcauchy_s1_sens_supp}, which also provides CPU time. 

\begin{figure}[h!]
\begin{center}
\includegraphics[width=1\linewidth,height=0.36\textheight,keepaspectratio]{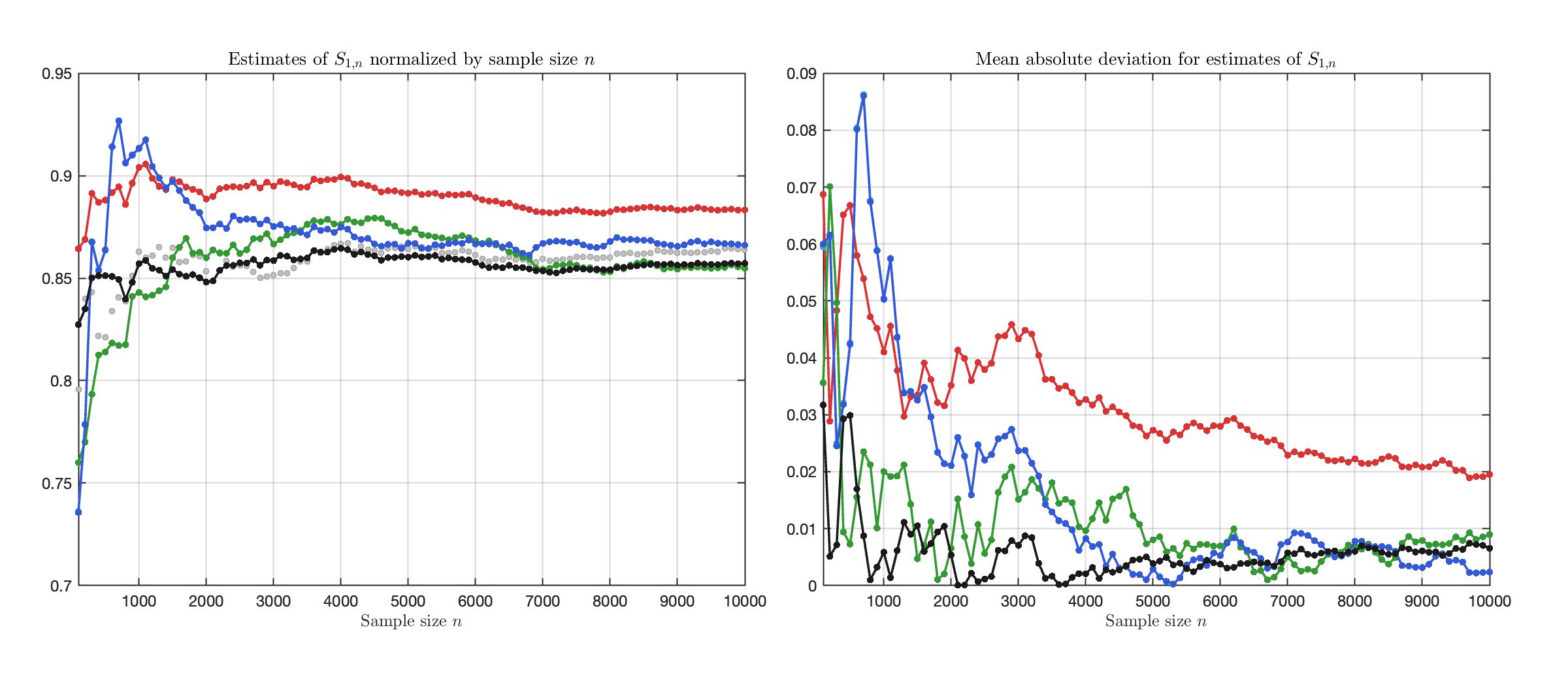}
\end{center}
\caption{\scriptsize{Square-root of half-Cauchy prior, $S_{1,n}$. Left panel: true values $n^{-1}S_{1,n}$ (Grey o-) and estimates $n^{-1}\hat{S}^{\text{\tiny{[O]}}}_{1,n}$ (Black .-), $n^{-1}\hat{S}^{\text{\tiny{[ML]}}}_{1,n}$ (Blue .-), $n^{-1}\hat{S}^{\text{\tiny{[B]}}}_{1,n}$ (Cyan .-), $n^{-1}\hat{S}^{\text{\tiny[``u,v"]}}_{1,n}$ (Green .-) and $n^{-1}\hat{S}^{\text{\tiny{[Q-B]}}}_{1,n}$ (Red .-). Right panel: MAD of $\hat{S}^{\text{\tiny{[O]}}}_{1,n}$ (Black .-), $\hat{S}^{\text{\tiny{[ML]}}}_{1,n}$ (Blue .-), $\hat{S}^{\text{\tiny{[B]}}}_{1,n}$ (Cyan .-), $\hat{S}^{\text{\tiny[``u,v"]}}_{1,n}$ (Green .-) and $\hat{S}^{\text{\tiny{[Q-B]}}}_{1,n}$ (Red .-)}}
\label{fig_hafcauchy_s1_supp}
\end{figure}

\begin{table}[ht]
\centering
\caption{Square-root of half-Cauchy prior, $S_{1,n}$: MAD as $n$ varies}
{
\setlength{\tabcolsep}{0pt}
\begin{tabular}{@{}l@{\hspace{1.2cm}}*{5}{>{\centering\arraybackslash}p{2.15cm}}@{}}
\hline
\hline
& $\hat{S}^{\text{\tiny{[O]}}}_{1,n}$ & $\hat{S}^{\text{\tiny{[ML]}}}_{1,n}$ & $\hat{S}^{\text{\tiny{[B]}}}_{1,n}$ & $\hat{S}^{\text{\tiny{[``u,v'']}}}_{1,n}$ & $\hat{S}^{\text{\tiny{[Q-B]}}}_{1,n}$ \\[0.1cm]
\hline
\multicolumn{6}{@{}l}{\underline{$n=1{,}000$}} \\[0.05cm]
MAD & 0.0059 & 0.0504 & 0.0503 & 0.0200 & \textbf{0.0410} \\[0.2cm]

\multicolumn{6}{@{}l}{\underline{$n=2{,}000$}} \\[0.05cm]
MAD & 0.0054 & 0.0211 & 0.0211 & 0.0066 & \textbf{0.0352} \\[0.2cm]

\multicolumn{6}{@{}l}{\underline{$n=3{,}000$}} \\[0.05cm]
MAD & 0.0070 & 0.0237 & 0.0237 & 0.0151 & \textbf{0.0433} \\[0.2cm]

\multicolumn{6}{@{}l}{\underline{$n=4{,}000$}} \\[0.05cm]
MAD & 0.0021 & 0.0083 & 0.0083 & 0.0096 & \textbf{0.0327} \\[0.2cm]

\multicolumn{6}{@{}l}{\underline{$n=5{,}000$}} \\[0.05cm]
MAD & 0.0038 & 0.0029 & 0.0029 & 0.0080 & \textbf{0.0273} \\[0.2cm]

\multicolumn{6}{@{}l}{\underline{$n=6{,}000$}} \\[0.05cm]
MAD & 0.0037 & 0.0053 & 0.0053 & 0.0070 & \textbf{0.0280} \\[0.2cm]

\multicolumn{6}{@{}l}{\underline{$n=7{,}000$}} \\[0.05cm]
MAD & 0.0057 & 0.0077 & 0.0076 & 0.0050 & \textbf{0.0229} \\[0.2cm]

\multicolumn{6}{@{}l}{\underline{$n=8{,}000$}} \\[0.05cm]
MAD & 0.0060 & 0.0078 & 0.0078 & 0.0070 & \textbf{0.0223} \\[0.2cm]

\multicolumn{6}{@{}l}{\underline{$n=9{,}000$}} \\[0.05cm]
MAD & 0.0061 & 0.0032 & 0.0031 & 0.0079 & \textbf{0.0208} \\[0.2cm]

\multicolumn{6}{@{}l}{\underline{$n=10{,}000$}} \\[0.05cm]
MAD & 0.0065 & 0.0024 & 0.0023 & 0.0090 & \textbf{0.0195} \\[0.1cm]
\hline
\hline
\end{tabular}
}
\label{tab_hafcauchy_s1_supp}
\end{table}

Figure~\ref{fig_hafcauchy_s3_supp} and Table~\ref{tab_hafcauchy_s3_supp} report corresponding results with respect to the estimation of $S_{3,n}$.

\begin{figure}[h!]
\begin{center}
\includegraphics[width=1\linewidth,height=0.36\textheight,keepaspectratio]{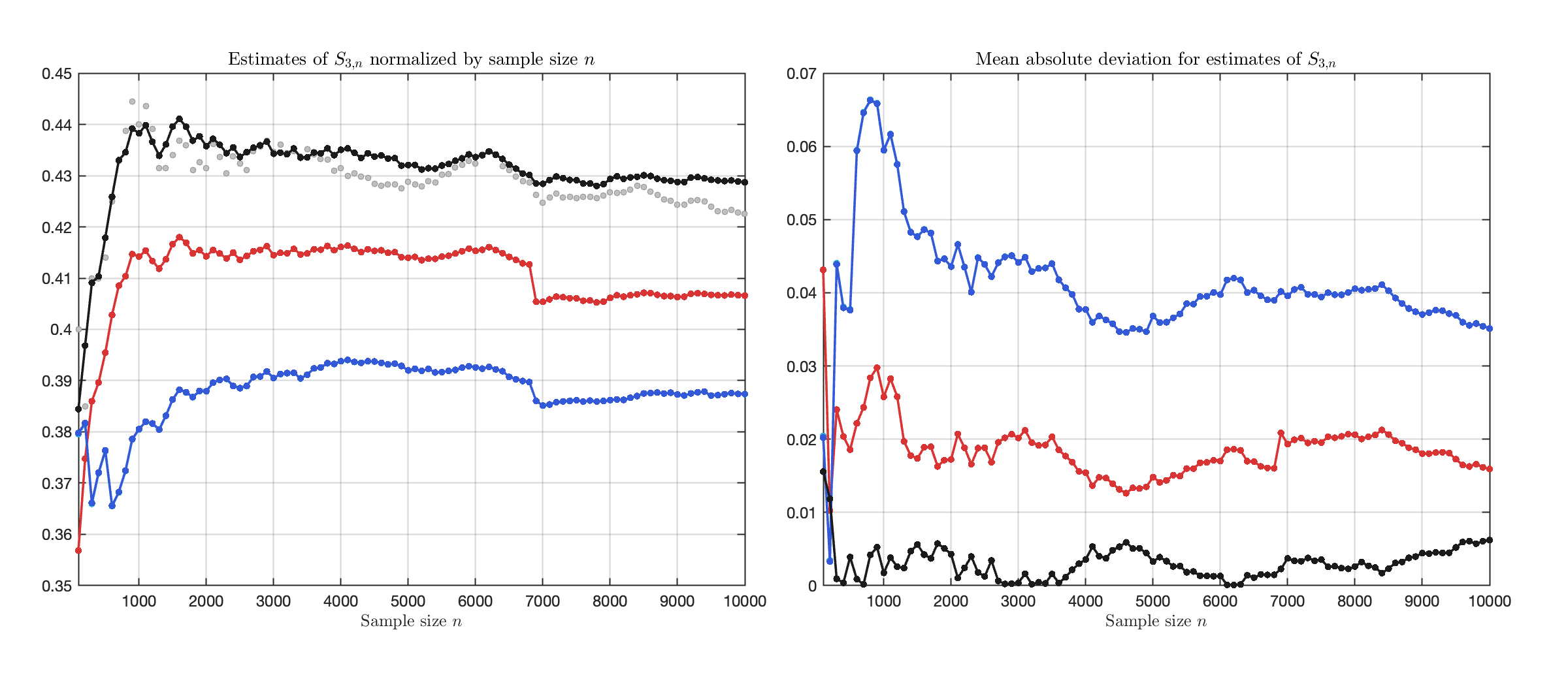}
\end{center}
\caption{\scriptsize{Square-root of half-Cauchy prior, $S_{3,n}$. Left panel: true values of $n^{-1}S_{3,n}$ (Grey o-) and estimates $n^{-1}\hat{S}^{\text{\tiny{[O]}}}_{3,n}$ (Black .-), $n^{-1}\hat{S}^{\text{\tiny{[ML]}}}_{3,n}$ (Blue .-), $n^{-1}\hat{S}^{\text{\tiny{[B]}}}_{3,n}$ (Cyan .-)  and $n^{-1}\hat{S}^{\text{\tiny{[Q-B]}}}_{3,n}$ (Red .-). Right panel: MAD of $\hat{S}^{\text{\tiny{[O]}}}_{3,n}$ (Black .-), $\hat{S}^{\text{\tiny{[ML]}}}_{3,n}$ (Blue .-), $\hat{S}^{\text{\tiny{[B]}}}_{3,n}$ (Cyan .-) and $\hat{S}^{\text{\tiny{[Q-B]}}}_{3,n}$ (Red .-)}}
\label{fig_hafcauchy_s3_supp}
\end{figure}

\begin{table}[ht]
\centering
\caption{Square-root of half-Cauchy prior, $S_{3,n}$: MAD as $n$ varies}
{
\setlength{\tabcolsep}{0pt}
\begin{tabular}{@{}l@{\hspace{1.2cm}}*{4}{>{\centering\arraybackslash}p{2.15cm}}@{}}
\hline
\hline
& $\hat{S}^{\text{\tiny{[O]}}}_{3,n}$ & $\hat{S}^{\text{\tiny{[ML]}}}_{3,n}$ & $\hat{S}^{\text{\tiny{[B]}}}_{3,n}$ & $\hat{S}^{\text{\tiny{[Q-B]}}}_{3,n}$ \\[0.1cm]
\hline
\multicolumn{5}{@{}l}{\underline{$n=1{,}000$}} \\[0.05cm]
MAD & 0.0017 & 0.0595 & 0.0595 & \textbf{0.0258} \\[0.2cm]

\multicolumn{5}{@{}l}{\underline{$n=2{,}000$}} \\[0.05cm]
MAD & 0.0043 & 0.0436 & 0.0436 & \textbf{0.0172} \\[0.2cm]

\multicolumn{5}{@{}l}{\underline{$n=3{,}000$}} \\[0.05cm]
MAD & 0.0004 & 0.0442 & 0.0441 & \textbf{0.0202} \\[0.2cm]

\multicolumn{5}{@{}l}{\underline{$n=4{,}000$}} \\[0.05cm]
MAD & 0.0036 & 0.0377 & 0.0377 & \textbf{0.0154} \\[0.2cm]

\multicolumn{5}{@{}l}{\underline{$n=5{,}000$}} \\[0.05cm]
MAD & 0.0033 & 0.0368 & 0.0368 & \textbf{0.0148} \\[0.2cm]

\multicolumn{5}{@{}l}{\underline{$n=6{,}000$}} \\[0.05cm]
MAD & 0.0013 & 0.0398 & 0.0397 & \textbf{0.0170} \\[0.2cm]

\multicolumn{5}{@{}l}{\underline{$n=7{,}000$}} \\[0.05cm]
MAD & 0.0037 & 0.0396 & 0.0396 & \textbf{0.0087} \\[0.2cm]

\multicolumn{5}{@{}l}{\underline{$n=8{,}000$}} \\[0.05cm]
MAD & 0.0026 & 0.0406 & 0.0406 & \textbf{0.0099} \\[0.2cm]

\multicolumn{5}{@{}l}{\underline{$n=9{,}000$}} \\[0.05cm]
MAD & 0.0044 & 0.0370 & 0.0370 & \textbf{0.0074} \\[0.2cm]

\multicolumn{5}{@{}l}{\underline{$n=10{,}000$}} \\[0.05cm]
MAD & 0.0062 & 0.0351 & 0.0351 & \textbf{0.0053} \\[0.1cm]
\hline
\hline
\end{tabular}
}
\label{tab_hafcauchy_s3_supp}
\end{table}

To conclude, we provide quasi-Bayes credible intervals at level $1-\alpha=0.95$. Specifically, credible intervals are constructed from \eqref{cred_interval} by relying on Newton's algorithm initialized as in Table~\ref{tab_hafcauchy_s1_supp}. Figure~\ref{halfcauchy_fig_interval_s1}-\ref{halfcauchy_fig_interval_s3} display the quasi-Bayes credible intervals for $S_{1,n}$ and $S_{3,n}$, respectively, with the corresponding oracle credible intervals $I^\ast_{S_{1,n}}$ and $I^\ast_{S_{3,n}}$ under the square-root of half-Cauchy prior.

\begin{figure}[h!]
\begin{center}
\includegraphics[width=1\linewidth,height=0.36\textheight,keepaspectratio]{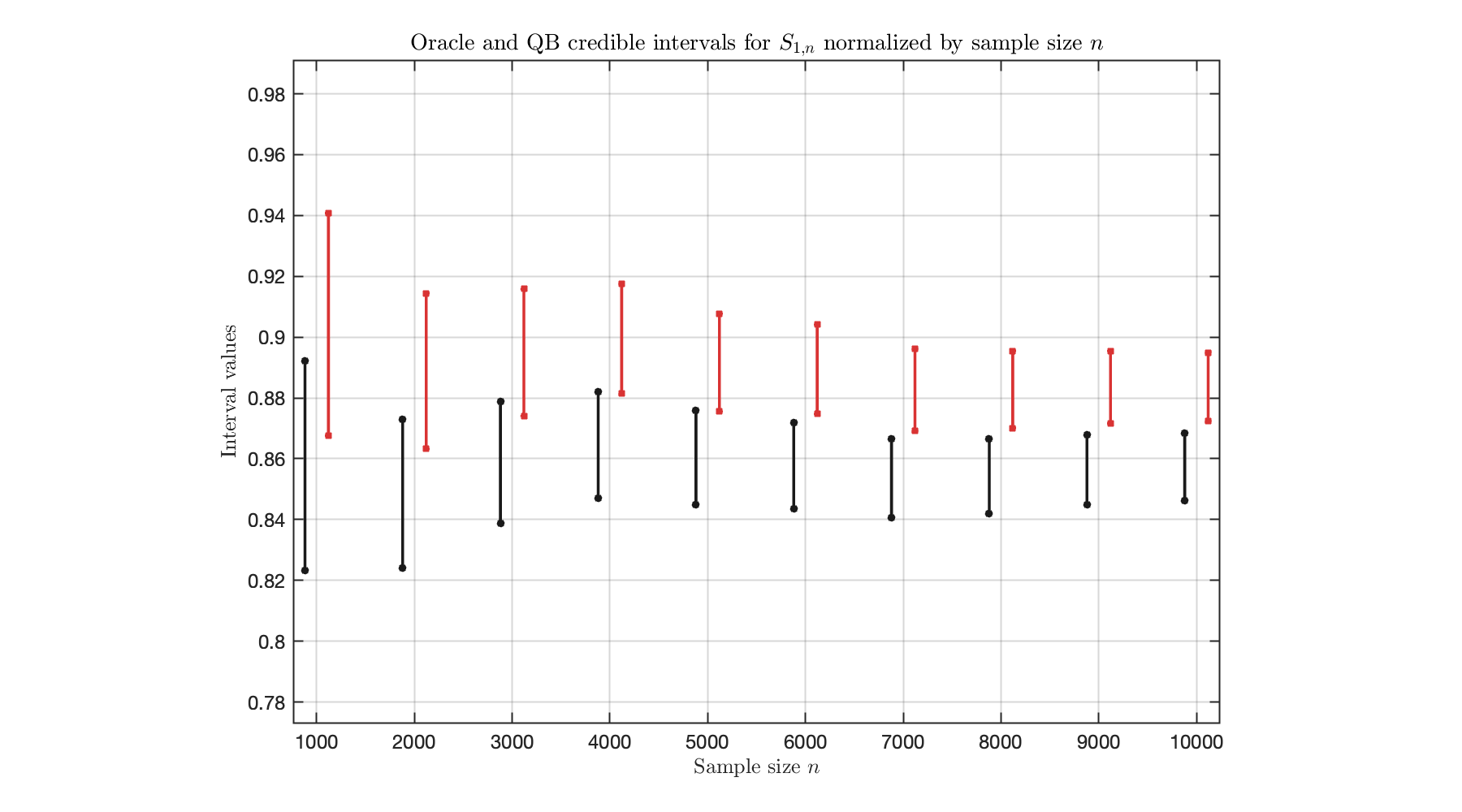}
\end{center}
\caption{\scriptsize{Square-root of half-Cauchy prior, $S_{1,n}$: oracle credible intervals (black) and quasi-Bayes credible intervals (red)}}
\label{halfcauchy_fig_interval_s1}
\end{figure}

\begin{figure}[h!]
\begin{center}
\includegraphics[width=1\linewidth,height=0.36\textheight,keepaspectratio]{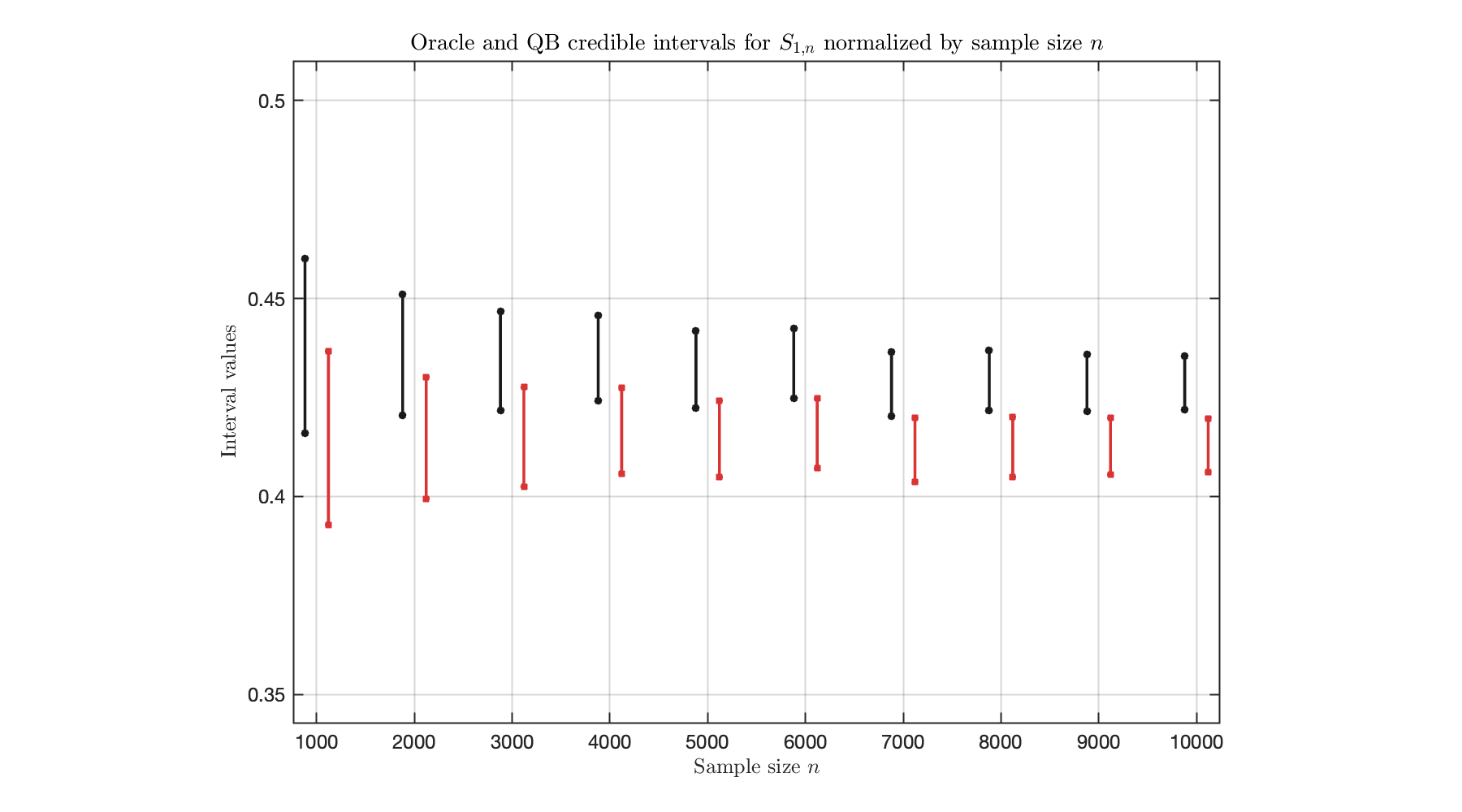}
\end{center}
\caption{\scriptsize{Square-root of half-Cauchy prior, $S_{3,n}$: oracle credible intervals (black) and quasi-Bayes credible intervals (red)}}
\label{halfcauchy_fig_interval_s3}
\end{figure}


\section{Additional real-data experiments}\label{supp_real}

\subsection{European automobile insurance data}

We apply the quasi-Bayes EB approach to the benchmark automobile insurance claims dataset \citep[Table 6.1]{Efr(21)}. The data consist of one year of claims from a European automobile insurance company on $n=9,461$ policyholders. From the first two rows of Table~\ref{efron_tab}, $7840$ of the $n_{0}=9461$ policy holders made no claims ($y=0$) during the year, $n_{1}=1317$ made a single claim ($y=1$), $n_{2}=239$ made two claims $y=2$ each, etc., continuing to the one person who made seven claims. In the EB literature, the European automobile insurance dataset is commonly analyzed under a Poisson mixture model \eqref{model_distr_poi}, where the number of claims $X_i$ for policyholder $i$ is modeled conditionally on an individual risk parameter $\theta_i$, with $\theta_i$ drawn from an unknown mixing distribution $G$.  Table~\ref{efron_tab} reports estimates of $S_{1,n}$ (with $\kappa=2$) and $S_{3,n}$, obtained by applying the same procedures described in the synthetic experiments: i) parametric EB estimates (for $S_{1,n}$ and $S_{3,n}$); ii) the nonparametric ``$u,v$" estimate (only for $S_{1,n}$); the quasi-Bayes EB estimate (for $S_{1,n}$ and $S_{3,n}$).

\begin{table}[ht]
\centering
\caption{Counts $n_{y}$ of number of claims $y$ made in a single year, for $y=0,1,\ldots,7$}
\begin{tabular}[t]{lccccccccccc}
\hline
$y$\hspace{0.5cm} &&& 0\hspace{0.75cm} & 1\hspace{0.75cm} & 2\hspace{0.75cm} & 3\hspace{0.75cm} & 4\hspace{0.75cm} & 5\hspace{0.75cm} & 6\hspace{0.75cm} & 7\hspace{0.75cm}   \\[0.1cm]
$n_{y}$\hspace{0.5cm} &&& 7,840\hspace{0.75cm} & 1,317\hspace{0.75cm} & 239\hspace{0.75cm} & 42\hspace{0.75cm} & 14\hspace{0.75cm} & 4\hspace{0.75cm} & 4\hspace{0.75cm} & 1\\[0.1cm]
\hline 
\hline\\[0.05cm]
\end{tabular}
\label{efron_tab}
\end{table}

\begin{table}[ht]
\centering
\caption{European automobile insurance data: estimates of $S_{1,n}$}
\begin{tabular}[t]{lccccc}
\hline
\hline
$\hat{S}^{\text{\tiny{[ML]}}}_{1,n}$ &\hspace{1cm}$\hat{S}^{\text{\tiny{[B]}}}_{1,n}$&\hspace{1cm}$\hat{S}^{\text{\tiny[``u,v"]}}_{1,n}$  &&\hspace{1cm} $\hat{S}^{\text{\tiny{[Q-B]}}}_{1,n}$ \\[0.2cm]
\hline
976.03&\hspace{1cm}976.03  &\hspace{1cm}  1,721&&\hspace{1cm} \textbf{1,361.44}\\
\hline
\hline
\end{tabular}
\label{tab_efron_s1}
\end{table}%

\begin{table}[ht]
\centering
\caption{European automobile insurance data: estimates of $S_{3,n}$}
\begin{tabular}[t]{lcccc}
\hline
\hline
$\hat{S}^{\text{\tiny{[ML]}}}_{3,n}$ &\hspace{1cm}$\hat{S}^{\text{\tiny{[B]}}}_{3,n}$ &&\hspace{1cm} $\hat{S}^{\text{\tiny{[Q-B]}}}_{3,n}$ \\[0.1cm]
\hline
1,890.32&\hspace{1cm}1,890.32  &&\hspace{1cm} \textbf{1,588.14}\\
\hline
\hline
\end{tabular}
\label{tab_efron_s3}
\end{table}%

\subsection{Twitter data}\label{app5_num2}

We apply the quasi-Bayes EB approach to a dataset of tweets that received at least $50$ retweets on Twitter between October 7 and November 6, 2011, available at \url{https://snap.stanford.edu/seismic/}. The dataset contains $166{,}783$ tweets and $34{,}617{,}704$ retweets, together with metadata such as tweet ID, posting time, retweet time, and follower counts. For each tweet, we focus on the number of early retweets, namely retweets received within the first $30$ seconds after posting, since unusually rapid retweeting may indicate automated or bot-driven activity. From the first two rows of Table~\ref{tweet_tab},  $n_{0}=40{,}259$ tweets received no early retweet $(y=0)$, $n_{1}=28{,}339$ received one early retweet ($y=1$), $n_{2}=21{,}581$ received two early retweets $y=2$, and so on. We analyze these data under the Poisson mixture model \eqref{model_distr_poi}, where the number of early retweets $X_{i}$ for tweet $i$ is modeled conditionally on an individual intensity parameter $\theta_i$, with $\theta_{i}$ drawn from an unknown mixing distribution $G$. Here, $\theta_i$ can be interpreted as a latent measure of instantaneous virality. Identifying tweets with unusually high intensity may help detect automated activity or early virality. Table~\ref{tweet_tab} reports estimates of $S_{1,n}$ (with $\kappa=2$) and $S_{3,n}$, obtained by applying the same procedures described in the synthetic experiments: i) parametric EB estimates (for $S_{1,n}$ and $S_{3,n}$); ii) the nonparametric ``$u,v$" estimate (only for $S_{1,n}$); the quasi-Bayes EB estimate (for $S_{1,n}$ and $S_{3,n}$).

\begin{table}[ht]
\centering
\caption{Counts $n_{y}$ of the number of tweets that have $y$ retweets within $30$ seconds from the tweet post time, for $y=0,1,\ldots, 77$, and corresponding estimates of the number of retweets expected}
\begin{tabular}[t]{lccccccccccc}
\hline
\hline
$y$\hspace{0.5cm} &&& 0 \hspace{0.75cm}& 1\hspace{0.75cm} & 2\hspace{0.75cm} & 3\hspace{0.75cm} & 4\hspace{0.75cm} & 5\hspace{0.75cm} & 6\hspace{0.5cm} & 7   \\[0.1cm]
$n_{y}$\hspace{0.5cm} &&& 40,259\hspace{0.75cm} & 28,339\hspace{0.75cm} & 21,581\hspace{0.75cm} & 16,479\hspace{0.75cm} & 12,130\hspace{0.75cm} & 9,238\hspace{0.75cm} & 7,193\hspace{0.75cm} & 5,464\\[0.1cm]
\hline 
\hline\\[0.05cm]
\end{tabular}
\begin{tabular}[t]{lccccccccccc}
\hline
$y$\hspace{0.5cm} &&& 8\hspace{0.75cm} & 11\hspace{0.75cm} & 14\hspace{0.75cm} & 17\hspace{0.75cm} & 20\hspace{0.75cm} & 23\hspace{0.75cm} & 26\hspace{0.75cm} & 29\hspace{0.75cm}   \\[0.1cm]
$n_{y}$\hspace{0.5cm} &&& 4,319\hspace{0.75cm} & 2,291\hspace{0.75cm} & 1,352\hspace{0.75cm} & 817\hspace{0.75cm} & 475\hspace{0.75cm} & 293\hspace{0.75cm} & 188\hspace{0.75cm} & 129\\[0.1cm]
\hline 
\hline\\[0.05cm]
\end{tabular}
\begin{tabular}[t]{lccccccccccc}
\hline
$y$\hspace{0.5cm} &&& 32\hspace{0.75cm} & 35\hspace{0.75cm} & 38\hspace{0.75cm} & 41\hspace{0.75cm} & 44\hspace{0.75cm} & 47\hspace{0.75cm} & 50\hspace{0.75cm} & 53   \\[0.1cm]
$n_{y}$\hspace{0.5cm} &&& 89\hspace{0.75cm} & 69\hspace{0.75cm} & 60\hspace{0.75cm} & 61\hspace{0.75cm} & 36\hspace{0.75cm} & 36\hspace{0.75cm} & 32\hspace{0.75cm} & 21\\[0.1cm]
\hline 
\hline\\[0.05cm]
\end{tabular}
\begin{tabular}[t]{lccccccccccc}
\hline
$y$\hspace{0.5cm} &&& 56\hspace{0.75cm} & 59\hspace{0.75cm} & 62\hspace{0.75cm} & 65\hspace{0.75cm} & 68\hspace{0.75cm} & 71\hspace{0.75cm} & 74\hspace{0.75cm} & 77   \\[0.1cm]
$n_{y}$\hspace{0.5cm} &&& 24\hspace{0.75cm} & 16\hspace{0.75cm} & 14\hspace{0.75cm} & 13\hspace{0.75cm} & 14\hspace{0.75cm} & 11\hspace{0.75cm} & 2\hspace{0.75cm} & 5\\[0.1cm]
\hline 
\hline
\end{tabular}
\label{tweet_tab}
\end{table}

\begin{table}[ht]
\centering
\caption{Twitter data: estimates of $S_{1,n}$}
\begin{tabular}[t]{lccccc}
\hline
\hline
$\hat{S}^{\text{\tiny{[ML]}}}_{1,n}$ &\hspace{1cm}$\hat{S}^{\text{\tiny{[B]}}}_{1,n}$&\hspace{1cm}$\hat{S}^{\text{\tiny[``u,v"]}}_{1,n}$  &&\hspace{1cm} $\hat{S}^{\text{\tiny{[Q-B]}}}_{1,n}$ \\[0.2cm]
\hline
70,214.73&\hspace{1cm}70,214.73  &\hspace{1cm}  120,938&&\hspace{1cm} \textbf{117,942}\\
\hline
\hline
\end{tabular}
\label{tab_tweet_s1}
\end{table}%

\begin{table}[ht]
\centering
\caption{Twitter data: estimates of $S_{3,n}$}
\begin{tabular}[t]{lcccc}
\hline
\hline
$\hat{S}^{\text{\tiny{[ML]}}}_{3,n}$ &\hspace{1cm}$\hat{S}^{\text{\tiny{[B]}}}_{3,n}$ &&\hspace{1cm} $\hat{S}^{\text{\tiny{[Q-B]}}}_{3,n}$ \\[0.2cm]
\hline
77,426.24&\hspace{1cm}77,426.24  &&\hspace{1cm} \textbf{65,998.14}\\
\hline
\hline
\end{tabular}
\label{tab_tweet_s3}
\end{table}%


\section{Examples: Gaussian kernel}\label{app_gauss}

For the Gaussian mixture model, we present examples of $S_{n}$ drawn from \citet[Section 2.4]{Zha(05)}. Let $n\geq1$ and consider the real-valued random vectors $(X_{1},\theta_{1}),\ldots,(X_{n},\theta_{n})$ distributed as follows:
\begin{align}\label{model_distr_gaussian}
X_i\mid\theta_{i} & \quad\simind\quad \text{Gaussian}(\cdot\mid\theta_{i},\,1)\qquad i=1,\ldots,n\\[-0.2cm]
\notag\theta_{i}& \quad\simiid\quad G,
\end{align}
where $\text{Gaussian}(\cdot\mid\theta,\,1)$ is the Gaussian kernel of mean $\theta$ and variance $1$, and $G$ is the unknown mixing distribution on $\Theta\subset\mathbb{R}$. Given the observed $X_{i}$'s, we address the estimation of $S_{1,n}=\sum_{1\leq i\leq n}\theta_{i}I(X_{i}\leq\kappa)$ and $S_{3,n}=\sum_{1\leq i\leq n}I(X_{i}>\theta_{i})$, where $\kappa\in\mathbb{R}$ is a given threshold.

\subsection{Parametric $g$-modeling and ``$u,v$" method}

Let $G$ in \eqref{model_distr_gaussian} be a Gaussian distribution with mean $\mu$ and variance $\sigma^{2}$. Under this assumption, we apply parametric $g$-modeling to obtain EB and Bayes EB estimates of $S_{r,n}$, $r=1,3$.

\begin{prp}\label{prop:gaussian}
Let $n\geq1$ and let $(X_{1},\theta_{1}),\ldots,(X_{n},\theta_{n})$ be random vectors as in \eqref{model_distr_gaussian}, with $G$ be a Gaussian distribution with mean $\mu$ and variance $\sigma^{2}$, here denote by $G_{\mu,\sigma^{2}}$. Then
\begin{equation}\label{s1_gauss}
\hat{S}_{1,n}(G_{\mu,\sigma^{2}})=\E_{G_{\mu,\sigma^{2}}}[S_{1,n}\mid X_{1:n}]=\frac{1}{n}\sum_{i=1}^{n}\frac{\mu+\sigma^{2}X_{i}}{1+\sigma^{2}}I(X_{i}\leq\kappa)
\end{equation}
and
\begin{equation}\label{s3_gauss}
\hat{S}_{3,n}(G_{\mu,\sigma^{2}})=\E_{G_{\mu,\sigma^{2}}}[S_{3,n}\mid X_{1:n}]=\frac{1}{2}\text{Erfc}\left(\frac{\mu-X_{i}}{\sqrt{2\sigma^{2}(1+\sigma^{2})}}\right),
\end{equation}
where $\text{Erfc}(x)=(2\pi)^{-1/2}\int_{(x,+\infty)}\text{e}^{-y^{2}}\ddr y$. EB estimates of $S_{1,n}$ and $S_{3,n}$ are obtained as plug-in estimates by replacing $\mu$ and $\sigma$ in \eqref{s1_gauss} and \eqref{s3_gauss} with the maximum likelihood estimate
\begin{displaymath}
\hat{\mu}^{\text{\tiny{[ML]}}}_{n}=\frac{1}{n}\sum_{i=1}^{n}X_{i}
\end{displaymath}
and
\begin{displaymath}
\hat{\sigma}^{\text{\tiny{[ML]}}}_{n}=\sqrt{\frac{1}{n}\sum_{i=1}^{n}\left(X_{i}-\frac{1}{n}\sum_{i=1}^{n}X_{i}\right)^{2}-1}.
\end{displaymath}
By placing on $(1+\sigma^{2})^{-1}$ a Gamma prior with (hyper)-parameter $(a_{\sigma^{2}},b_{\sigma^{2}})$ and on $\mu$ a Gaussian prior with (hyper)-parameter $(a_{\mu},((1+\sigma^{2})^{-1}b_{\mu})^{-1/2})$, the Bayes EB estimates of $S_{1,n}$ and $S_{3,n}$ are obtained as plug-in estimates by replacing $\mu$ and $\sigma$ in \eqref{s1_gauss} and \eqref{s3_gauss} with the Bayesian estimate
\begin{displaymath}
\hat{\mu}^{\text{\tiny{[B]}}}_{n}=\frac{a_{\mu}b_{\mu}+n\hat{\mu}^{\text{\tiny{[ML]}}}_{n}}{n+b_{\mu}}
\end{displaymath}
and
\begin{displaymath}
\hat{\sigma}^{\text{\tiny{[B]}}}_{n}=\sqrt{\frac{b_{\sigma^{2}}+2^{-1}\sum_{i=1}^{n}(X_{i}-\hat{\mu}^{\text{\tiny{[ML]}}}_{n})^{2}+\frac{nb_{\mu}}{2(n+b_{\mu})}(\hat{\mu}^{\text{\tiny{[ML]}}}_{n}-a_{\mu})^{2}}{a_{\sigma^{2}}+n/2}-1}.
\end{displaymath}
The resulting EB and Bayes EB estimates of $S_{r,n}$, for $r=1,3$, here denoted by $\hat{S}^{\text{\tiny{[ML]}}}_{r,n}$ and $\hat{S}^{\text{\tiny{[B]}}}_{r,n}$ respectively, are asymptotically efficient \citep[Theorem 2.2]{Zha(05)}.
\end{prp}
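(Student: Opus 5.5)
The plan is to reduce everything to the conjugate Gaussian--Gaussian structure and then verify the hypotheses of \citet[Theorem 2.2]{Zha(05)}. First, under \eqref{model_distr_gaussian} with $G=G_{\mu,\sigma^2}$, I complete the square in $\theta$ in $k(X_i\mid\theta)G_{\mu,\sigma^2}(\ddr\theta)$. This gives that $\theta_i\mid X_i$ is Gaussian with mean $m(X_i)=(\mu+\sigma^2X_i)/(1+\sigma^2)$ and variance $s^2=\sigma^2/(1+\sigma^2)$. By \eqref{eq:post_exp}, $\E_{G_{\mu,\sigma^2}}[S_{1,n}\mid X_{1:n}]$ is the sum over $i$ of $m(X_i)I(X_i\le\kappa)$, which yields \eqref{s1_gauss} with the normalization used in the statement.

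For $S_{3,n}$, I write $\E[I(X_i>\theta_i)\mid X_i]=\PP(\theta_i<X_i\mid X_i)=\Phi((X_i-m(X_i))/s)$. Since $X_i-m(X_i)=(X_i-\mu)/(1+\sigma^2)$, the argument of $\Phi$ is $(X_i-\mu)/\sqrt{\sigma^2(1+\sigma^2)}$. Rewriting $\Phi$ through the complementary error function, and summing over $i$, gives \eqref{s3_gauss}.

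For the plug-in estimates, I use that marginally the $X_i$ are i.i.d.\ $\mathcal N(\mu,1+\sigma^2)$.
\begin{itemize}
\item \emph{Maximum likelihood.} The likelihood in $(\mu,1+\sigma^2)$ is maximized at the sample mean and the sample variance. Subtracting $1$ gives $\hat\sigma^{2,\text{\tiny[ML]}}_n$; this is valid on the event that the sample variance exceeds $1$, which holds eventually almost surely whenever $\sigma^2>0$.
\item \emph{Bayes.} I reparametrize the marginal by the precision $\tau=(1+\sigma^2)^{-1}$. The stated prior is then Normal--Gamma, so the posterior is again Normal--Gamma by the usual updating. Its parameters are: $\mu$-mean $(a_\mu b_\mu+n\bar X)/(n+b_\mu)$; Gamma shape $a_{\sigma^2}+n/2$; Gamma rate $b_{\sigma^2}+\tfrac12\sum_i(X_i-\bar X)^2+\tfrac{nb_\mu}{2(n+b_\mu)}(\bar X-a_\mu)^2$. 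Taking the rate/shape ratio as the estimate of $1+\sigma^2$ and subtracting $1$ gives $\hat\sigma^{\text{\tiny[B]}}_n$.
\end{itemize}

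Efficiency comes from \citet[Theorem 2.2]{Zha(05)}, which needs two things. First, the parameter estimate must be asymptotically efficient. The ML estimator of $(\mu,1+\sigma^2)$ is efficient in this regular exponential family. The Bayes estimates differ from the ML ones by $O_P(n^{-1})$, since the hyperparameters only enter through bounded shifts of numerator and denominator, so they are also efficient. Second, the maps $(\mu,\sigma^2)\mapsto m(x)I(x\le\kappa)$ and $(\mu,\sigma^2)\mapsto\Phi((x-\mu)/\sqrt{\sigma^2(1+\sigma^2)})$ must satisfy the theorem's smoothness and moment conditions. Both are continuously differentiable in the parameter for $\sigma^2>0$, and their derivatives are dominated by polynomials in $|x|$, which are integrable under the Gaussian marginal.

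The main obstacle I expect is exactly this regularity check: uniform domination of the derivatives in a neighbourhood of the true parameter, together with the boundary issue when $\sigma^2$ is near $0$ or the sample variance is below $1$. I would handle it by restricting to a compact neighbourhood bounded away from $\sigma^2=0$ and showing that the estimators fall in it eventually almost surely.
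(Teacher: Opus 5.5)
Your route is the one the paper relies on implicitly. It uses the conjugate Gaussian posterior and $\Phi(z)=\tfrac12\,\mathrm{erfc}(-z/\sqrt2)$. It uses Normal--Gamma updating for the marginal $N(\mu,1+\sigma^{2})$. It then applies Zhang's Theorem 2.2, with the Bayes estimates shown to lie within $O_P(n^{-1})$ of the ML ones. The paper gives no proof beyond the citation. Your regularity and boundary discussion therefore already goes further than the paper: you work away from $\sigma^{2}=0$, on the event that the sample variance or the rate/shape ratio exceeds $1$. You also correctly identify that $\hat{\sigma}^{\text{\tiny{[B]}}}_{n}$ comes from rate/shape, i.e.\ the reciprocal of the posterior mean of $(1+\sigma^{2})^{-1}$, not the posterior mean of $1+\sigma^{2}$.

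What does not hold is your claim that these computations reproduce the displayed formulas; as printed, the displays are wrong. For $S_{1,n}$ your calculation gives
\[
\E_{G_{\mu,\sigma^{2}}}[S_{1,n}\mid X_{1:n}]=\sum_{i=1}^{n}\frac{\mu+\sigma^{2}X_{i}}{1+\sigma^{2}}\,I(X_{i}\leq\kappa),
\]
with no factor $1/n$. The phrase ``with the normalization used in the statement'' conceals that \eqref{s1_gauss} is the conditional mean of $S_{1,n}/n$, not of $S_{1,n}$. For $S_{3,n}$ you obtain
\[
\E_{G_{\mu,\sigma^{2}}}[S_{3,n}\mid X_{1:n}]=\sum_{i=1}^{n}\Phi\Bigl(\frac{X_{i}-\mu}{\sqrt{\sigma^{2}(1+\sigma^{2})}}\Bigr)=\sum_{i=1}^{n}\frac{1}{2}\,\mathrm{erfc}\Bigl(\frac{\mu-X_{i}}{\sqrt{2\sigma^{2}(1+\sigma^{2})}}\Bigr).
\]
This holds only with the standard normalization $\mathrm{erfc}(x)=\frac{2}{\sqrt{\pi}}\int_{x}^{\infty}e^{-y^{2}}\,\mathrm{d}y$. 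With the constant $(2\pi)^{-1/2}$ used in the statement, $\frac12\text{Erfc}$ equals $\frac{1}{2\sqrt2}$ times the correct probability; at $X_{i}=\mu$ it gives $1/(4\sqrt2)$ instead of $1/2$. In addition, \eqref{s3_gauss} is missing the sum over $i$.

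You should state the corrected identities explicitly and flag these three misprints, rather than asserting that your derivation yields the displays. With that done, your argument is complete at the level of detail the statement warrants.
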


Now, we consider $G$ in \eqref{model_distr_gaussian} completely unknown, namely we do not specify any parametric assumption. In this nonparametric setting, the ``$u,v$" method of \citet{Rob(88)} estimates $S_{n}$ in \eqref{eq:sum} with $\hat{S}^{\text{\tiny[``u,v"]}}_{n}=\sum_{1\leq i\leq n}v(X_{i})$ if there exists a function $v$ that solve the ``$u,v$" equation
\begin{equation}\label{int_eq2}
\int_{-\infty}^{+\infty}(v(x)-u(x,\theta))\frac{1}{\sqrt{2\pi}}\text{e}^{-\frac{1}{2}(x-\theta)^{2}}\ddr x=0\qquad\forall\, \theta\in\Theta.
\end{equation}
For the quantity $S_{3,n}$, Equation \eqref{int_eq2} has the unique solution $v(x)=2^{-1}$. Hence, the ``$u,v$" estimate of $S_{3,n}$ is
\begin{equation}\label{uv_gauss}
\hat{S}^{\text{\tiny[``u,v"]}}_{3,n}=\frac{n}{2}.
\end{equation}
The estimate $\hat{S}^{\text{\tiny[``u,v"]}}_{3,n}$ is asymptotically efficient, as $n\rightarrow+\infty$ \citep[Theorem 2.5]{Zha(05)}. With regards to $S_{1,n}$, the ``$u,v$" equation \eqref{int_eq2} does not admit an explicit solution $v$, which makes not possible the derivation of a corresponding ``$u,v$" estimates \citep[Section 2.4]{Zha(05)}.


\subsection{Synthetic-data analysis}

For $i=1,\ldots,100$, let $\mathbf{X}_{i}=X_{1:100 i}$ denote a dataset of size $n=100i$ generated from the Gaussian mixture model \eqref{model_distr_gaussian}, with a Gaussian prior $G$ with mean $2$ and variance $1$. The $\mathbf{X}_{i}$'s are nested, so that the sample size increases progressively: at stage $i$, we have $n=100i$ data, obtained by adding $100$ new data at each step, starting from $n=100$. For each dataset $\mathbf{X}_{i}$ of size $n=100i$, we apply the quasi-Bayes EB approach to estimate $S_{1,n}$ and $S_{2,n}$, with $\kappa=2$, and $S_{3,n}$.

The implementation of Newton's algorithm \eqref{eq:newton} requires the numerical evaluation of an integral, i.e. the marginal likelihood, which we approximate via the trapezoidal rule. To this end, the density function of $G_{n}$ is represented through its values on a fixed grid of $d$ quadrature points over $\Theta$, with $d$ controlling the numerical resolution of the integration. This representation is used solely for numerical evaluation and does not impose any modeling restriction on $\Theta$. For a dataset $X_{1:n}$ of size $n$, letting $L_\Theta=\min\{\min\{X_{1:n}\},\lfloor Q_{n,0.01}-4\rfloor\}$ and $U_\Theta=\max\{\max\{X_{1:n}\},\lceil Q_{n,0.99}+4\rceil\}$, with $Q_{n,\delta}=\text{Quantile}(X_{1:n};\delta)$, we employ a uniform grid with $d=1{,}000$ over $\Theta=(L_{\Theta},U_\Theta)$, set $G_0$ to be Uniform on $\Theta$, and set $\alpha_n=(1+n)^{-0.99}$. Under this setting for Newton's algorithm, we obtain an estimate $G_n$ of the mixing distribution $G$, which, when substituted into \eqref{eq:est}, gives the quasi-Bayes EB estimates $\hat{S}^{\text{\tiny{[Q-B]}}}_{r,n}$ of $S_{r,n}$, for $r=1,2,3$.

The estimates $\hat{S}^{\text{\tiny{[Q-B]}}}_{r,n}$, normalized by the sample size $n$, are reported in Figure~\ref{fig_gauss_s1}-\ref{fig_gauss_s3}, together with their mean absolute deviations (MAD) from the true values of $S_{r,n}$. See also Table~\ref{tab_gauss_s1}-\ref{tab_gauss_s3}. In this analysis, the estimate $\hat{S}^{\text{\tiny{[Q-B]}}}_{r,n}$ is compared with: i) the EB estimate $\hat{S}^{\text{\tiny{[ML]}}}_{r,n}$ and Bayes EB estimate $\hat{S}^{\text{\tiny{[B]}}}_{r,n}$ from Proposition \ref{prop:gaussian}; ii) the ``$u,v$" estimate $\hat{S}^{\text{\tiny[``u,v"]}}_{r,n}$ in \eqref{uv_poiss}, which is available only for $S_{3,n}$. We also report the oracle Bayes estimate of $S_{r,n}$, namely $\hat{S}^{\text{\tiny{[O]}}}_{r,n}=\E_{G}[S_{r,n}\,|\,X_{1:n}]$ with $G$ being the Gaussian distribution with mean $2$ and variance $1$. 

\begin{figure}[h!]
\begin{center}
\includegraphics[width=1\linewidth,height=0.36\textheight,keepaspectratio]{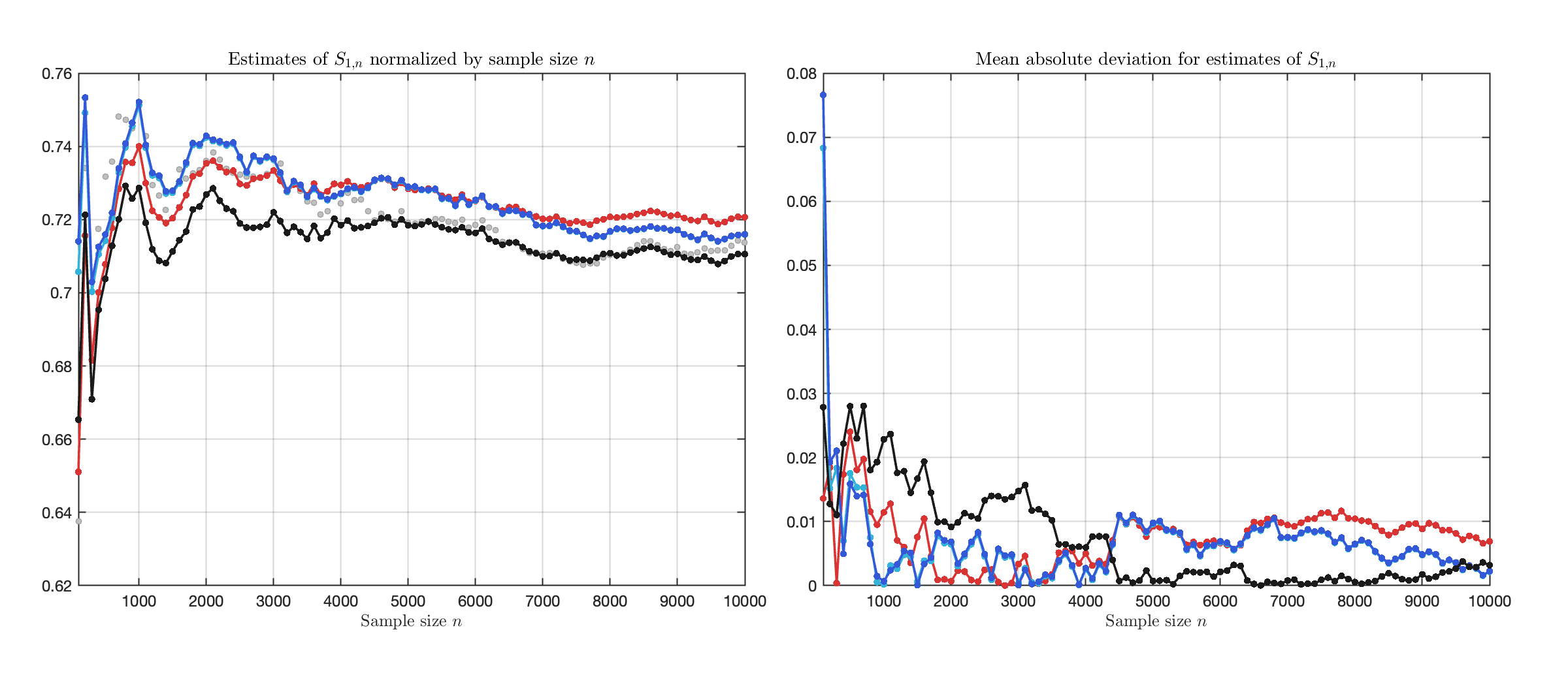}
\end{center}
\caption{\scriptsize{Gaussian prior, $S_{1,n}$. Left panel: true values of $n^{-1}S_{1,n}$ (Grey o-) and estimates $n^{-1}\hat{S}^{\text{\tiny{[O]}}}_{1,n}$ (Black .-), $n^{-1}\hat{S}^{\text{\tiny{[ML]}}}_{1,n}$ (Blue .-), $n^{-1}\hat{S}^{\text{\tiny{[B]}}}_{1,n}$ (Cyan .-) and $n^{-1}\hat{S}^{\text{\tiny{[Q-B]}}}_{1,n}$ (Red .-). Right panel: MAD of $\hat{S}^{\text{\tiny{[O]}}}_{1,n}$ (Black .-), $\hat{S}^{\text{\tiny{[ML]}}}_{1,n}$ (Blue .-), $\hat{S}^{\text{\tiny{[B]}}}_{1,n}$ (Cyan .-) and $\hat{S}^{\text{\tiny{[Q-B]}}}_{1,n}$ (Red .-)}}
\label{fig_gauss_s1}
\end{figure}

\begin{table}[ht]
\centering
\caption{Gaussian prior, $S_{1,n}$: MAD as $n$ varies}
{
\setlength{\tabcolsep}{0pt}
\begin{tabular}{@{}l@{\hspace{1.2cm}}*{4}{>{\centering\arraybackslash}p{2.15cm}}@{}}
\hline
\hline
& $\hat{S}^{\text{\tiny{[O]}}}_{1,n}$ & $\hat{S}^{\text{\tiny{[ML]}}}_{1,n}$ & $\hat{S}^{\text{\tiny{[B]}}}_{1,n}$ & $\hat{S}^{\text{\tiny{[Q-B]}}}_{1,n}$ \\[0.1cm]
\hline
\multicolumn{5}{@{}l}{\underline{$n=1{,}000$}} \\[0.05cm]
MAD & 0.0228 & 0.0002 & 0.0007 & \textbf{0.0114} \\[0.2cm]

\multicolumn{5}{@{}l}{\underline{$n=3{,}000$}} \\[0.05cm]
MAD & 0.0147 & 0.0003 & 0.0001 & \textbf{0.0033} \\[0.2cm]

\multicolumn{5}{@{}l}{\underline{$n=5{,}000$}} \\[0.05cm]
MAD & 0.0007 & 0.0096 & 0.0098 & \textbf{0.0092} \\[0.2cm]

\multicolumn{5}{@{}l}{\underline{$n=7{,}000$}} \\[0.05cm]
MAD & 0.0008 & 0.0074 & 0.0076 & \textbf{0.0094} \\[0.2cm]

\multicolumn{5}{@{}l}{\underline{$n=9{,}000$}} \\[0.05cm]
MAD & 0.0017 & 0.0047 & 0.0048 & \textbf{0.0088} \\[0.1cm]
\hline
\hline
\end{tabular}
}
\label{tab_gauss_s1}
\end{table}

\begin{figure}[h!]
\begin{center}
\includegraphics[width=1\linewidth,height=0.36\textheight,keepaspectratio]{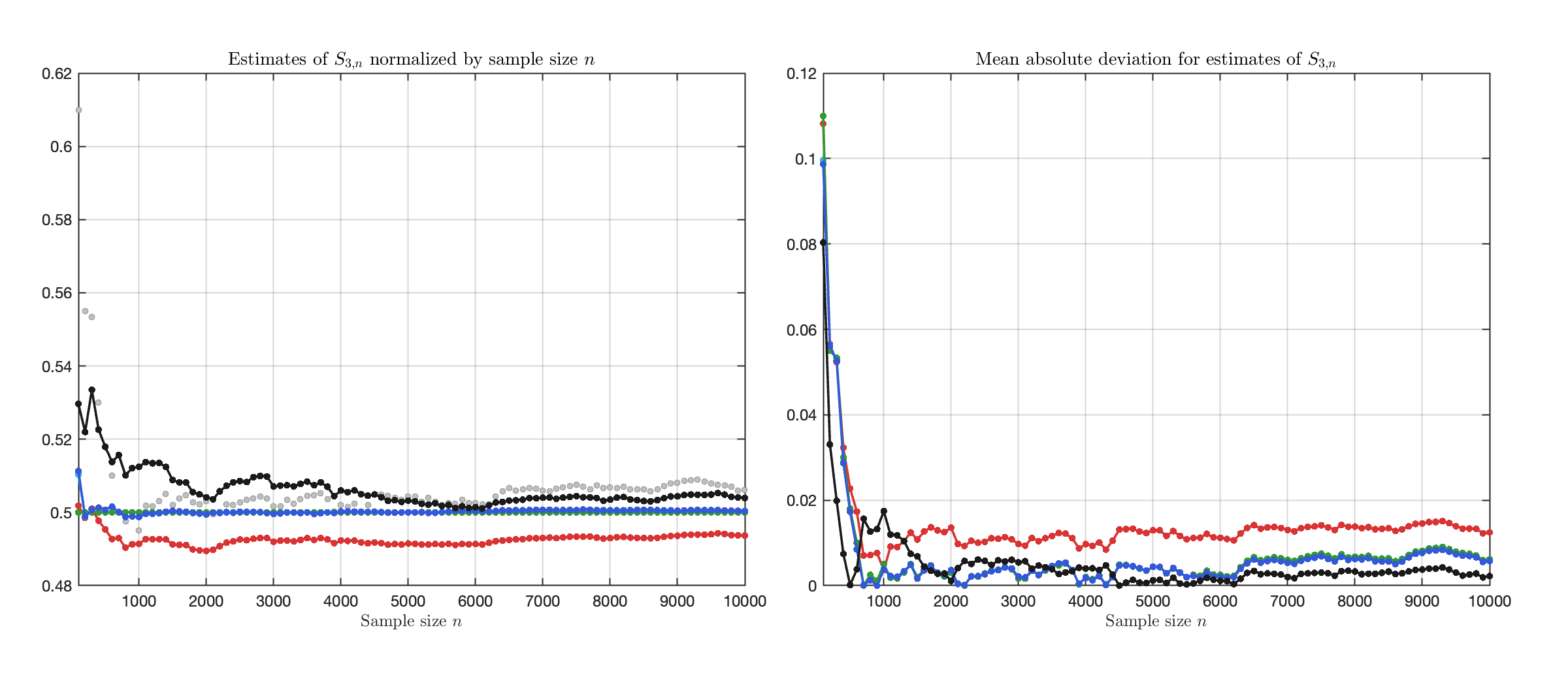}
\end{center}
\caption{\scriptsize{Gaussian prior, $S_{3,n}$. Left panel: true values of $n^{-1}S_{3,n}$ (Grey o-) and estimates $n^{-1}\hat{S}^{\text{\tiny{[O]}}}_{1,n}$ (Black .-), $n^{-1}\hat{S}^{\text{\tiny{[ML]}}}_{1,n}$ (Blue .-), $n^{-1}\hat{S}^{\text{\tiny{[B]}}}_{3,n}$ (Cyan .-), $n^{-1}\hat{S}^{\text{\tiny[``u,v"]}}_{3,n}$ (Green .-) and $n^{-1}\hat{S}^{\text{\tiny{[Q-B]}}}_{3,n}$ (red .-). Right panel: MAD of $\hat{S}^{\text{\tiny{[O]}}}_{1,n}$ (Black .-), $\hat{S}^{\text{\tiny{[ML]}}}_{3,n}$ (Blue .-), $\hat{S}^{\text{\tiny{[B]}}}_{3,n}$ (Cyan .-), $\hat{S}^{\text{\tiny[``u,v"]}}_{3,n}$ (Green .-) and $\hat{S}^{\text{\tiny{[Q-B]}}}_{3,n}$ (red .-)}}
\label{fig_gauss_s3}
\end{figure}

\begin{table}[ht]
\centering
\caption{Gaussian prior, $S_{3,n}$: MAD as $n$ varies}
{
\setlength{\tabcolsep}{0pt}
\begin{tabular}{@{}l@{\hspace{1.2cm}}*{5}{>{\centering\arraybackslash}p{2.15cm}}@{}}
\hline
\hline
& $\hat{S}^{\text{\tiny{[O]}}}_{3,n}$ & $\hat{S}^{\text{\tiny{[ML]}}}_{3,n}$ & $\hat{S}^{\text{\tiny{[B]}}}_{3,n}$ & $\hat{S}^{\text{\tiny{[``u,v'']}}}_{3,n}$ & $\hat{S}^{\text{\tiny{[Q-B]}}}_{3,n}$ \\[0.1cm]
\hline
\multicolumn{6}{@{}l}{\underline{$n=1{,}000$}} \\[0.05cm]
MAD & 0.0174 & 0.0037 & 0.0037 & 0.0050 & \textbf{0.0036} \\[0.2cm]

\multicolumn{6}{@{}l}{\underline{$n=3{,}000$}} \\[0.05cm]
MAD & 0.0054 & 0.0021 & 0.0021 & 0.0017 & \textbf{0.0097} \\[0.2cm]

\multicolumn{6}{@{}l}{\underline{$n=5{,}000$}} \\[0.05cm]
MAD & 0.0012 & 0.0044 & 0.0044 & 0.0044 & \textbf{0.0129} \\[0.2cm]

\multicolumn{6}{@{}l}{\underline{$n=7{,}000$}} \\[0.05cm]
MAD & 0.0020 & 0.0053 & 0.0053 & 0.0060 & \textbf{0.0130} \\[0.2cm]

\multicolumn{6}{@{}l}{\underline{$n=9{,}000$}} \\[0.05cm]
MAD & 0.0037 & 0.0077 & 0.0077 & 0.0081 & \textbf{0.0145} \\[0.1cm]
\hline
\hline
\end{tabular}
}
\label{tab_gauss_s3}
\end{table}

The results in Figure~\ref{fig_gauss_s1}--\ref{fig_gauss_s3} and Table~\ref{tab_gauss_s1}--\ref{tab_gauss_s3} show that the competing estimates exhibit broadly comparable performance. Focusing first on the estimation of $S_{1,n}$, the quasi-Bayes EB estimate performs competitively with the parametric EB and Bayes EB estimates; the ``$u,v$'' estimate is not available in this case. With regards to the estimation of $S_{3,n}$, all the estimates, including the ``$u,v$'' estimate, which is available for $S_{3,n}$ exhibit similar levels of empirical accuracy, with no method clearly dominating the others. Overall, the results indicate that the quasi-Bayes EB estimate remains competitive across  $S_{1,n}$ and $S_{3,n}$. 

A sensitivity analysis of the estimate $\hat{S}^{\text{\tiny{[Q-B]}}}_{1,n}$, with respect to the grid resolution $d\in\{5{,}000,1{,}000,500,100,50\}$, is reported in Table~\ref{tab_gauss_s1_sens_supp}, showing the mean absolute deviations (MAD) and CPU times as both $n$ and $d$ vary. The CPU time refers to the time (in seconds) for processing a new observation on a laptop MacBook Pro (M1 type processor). Results in Table~\ref{tab_gauss_s1_sens_supp} show that the empirical performance of $\hat{S}^{\text{\tiny{[Q-B]}}}_{1,n}$ is robust to the choice of the grid resolution $d$.

\begin{table}[ht]
\centering
\caption{Gaussian prior: MAD and CPU time (in seconds) of $\hat{S}^{\text{\tiny{[Q-B]}}}_{1,n}$ as $n$ and $d$ vary}
{
\setlength{\tabcolsep}{0pt}
\begin{tabular}{@{}l@{\hspace{1.2cm}}*{5}{>{\centering\arraybackslash}p{2.15cm}}@{}}
\hline
\hline
 & $d=5{,}000$ & $d=1{,}000$ & $d=500$ & $d=100$ & $d=50$ \\[0.1cm]
\hline
\multicolumn{6}{@{}l}{\underline{$n=1{,}000$}} \\[0.05cm]
MAD      & 0.0114 & 0.0114 & 0.0114 & 0.0115 & 0.0114 \\
CPU time & 0.0002 & 9.1523e-05 & 5.2018e-05 & 1.9854e-05 & 1.7194e-05 \\[0.2cm]

\multicolumn{6}{@{}l}{\underline{$n=2{,}000$}} \\[0.05cm]
MAD      & 0.0007 & 0.0007 & 0.0007 & 0.0007 & 0.0008 \\
CPU time & 0.0002 & 8.936e-05 & 5.7565e-05 & 1.9696e-05 & 1.7219e-05 \\[0.2cm]

\multicolumn{6}{@{}l}{\underline{$n=3{,}000$}} \\[0.05cm]
MAD      & 0.0033 & 0.0033 & 0.0033 & 0.0034 & 0.0033 \\
CPU time & 0.0003 & 7.8414e-05 & 5.3328e-05 & 1.9429e-05 & 1.7024e-05 \\[0.2cm]

\multicolumn{6}{@{}l}{\underline{$n=4{,}000$}} \\[0.05cm]
MAD      & 0.0050 & 0.0050 & 0.0050 & 0.0050 & 0.0051 \\
CPU time & 0.0002 & 8.7816e-05 & 5.4923e-05 & 1.9846e-05 & 1.7595e-05 \\[0.2cm]

\multicolumn{6}{@{}l}{\underline{$n=5{,}000$}} \\[0.05cm]
MAD      & 0.0092 & 0.0092 & 0.0092 & 0.0092 & 0.0093 \\
CPU time & 0.0002 & 7.5229e-05 & 5.9706e-05 & 2.1562e-05 & 2.2959e-05 \\[0.2cm]

\multicolumn{6}{@{}l}{\underline{$n=6{,}000$}} \\[0.05cm]
MAD      & 0.0066 & 0.0066 & 0.0066 & 0.0066 & 0.0067 \\
CPU time & 0.0002 & 8.5247e-05 & 5.2317e-05 & 2.3341e-05 & 1.7645e-05 \\[0.2cm]

\multicolumn{6}{@{}l}{\underline{$n=7{,}000$}} \\[0.05cm]
MAD      & 0.0094 & 0.0094 & 0.0094 & 0.0095 & 0.0095 \\
CPU time & 0.0002 & 8.0833e-05 & 5.4014e-05 & 1.9908e-05 & 1.8113e-05 \\[0.2cm]

\multicolumn{6}{@{}l}{\underline{$n=8{,}000$}} \\[0.05cm]
MAD      & 0.0104 & 0.0104 & 0.0104 & 0.0104 & 0.0102 \\
CPU time & 0.0002 & 8.3144e-05 & 5.3945e-05 & 1.9919e-05 & 1.7194e-05 \\[0.2cm]

\multicolumn{6}{@{}l}{\underline{$n=9{,}000$}} \\[0.05cm]
MAD      & 0.0088 & 0.0088 & 0.0088 & 0.0088 & 0.0087 \\
CPU time & 0.0002 & 8.8492e-05 & 5.2426e-05 & 1.9771e-05 & 1.7087e-05 \\[0.2cm]

\multicolumn{6}{@{}l}{\underline{$n=10{,}000$}} \\[0.05cm]
MAD      & 0.0069 & 0.0069 & 0.0069 & 0.0069 & 0.0067 \\
CPU time & 0.0002 & 7.9214e-05 & 4.8358e-05 & 2.051e-05 & 1.7363e-05 \\[0.1cm]
\hline
\hline
\end{tabular}
}
\label{tab_gauss_s1_sens_supp}
\end{table}

To conclude, we provide quasi-Bayes credible intervals at level $1-\alpha=0.95$. Specifically, credible intervals are constructed from \eqref{cred_interval} by relying on Newton's algorithm initialized as in Table~\ref{tab_gauss_s1}. Figure~\ref{gaussgauss_fig_interval_s1}-\ref{gaussgauss_fig_interval_s3} display the quasi-Bayes credible intervals for $S_{1,n}$ and $S_{3,n}$, respectively, with the corresponding oracle credible intervals $I^\ast_{S_{1,n}}$ and $I^\ast_{S_{3,n}}$ under the Gaussian prior. Under the Gaussian mixture model \eqref{model_distr_gaussian}, oracle credible intervals for both $S_{1,n}$ and $S_{3,n}$ are obtained by following the same construction as in the Poisson mixture model, with the corresponding Gaussian conditional distribution replacing the Poisson distribution.

 \begin{figure}[h!]
\begin{center}
\includegraphics[width=1\linewidth,height=0.36\textheight,keepaspectratio]{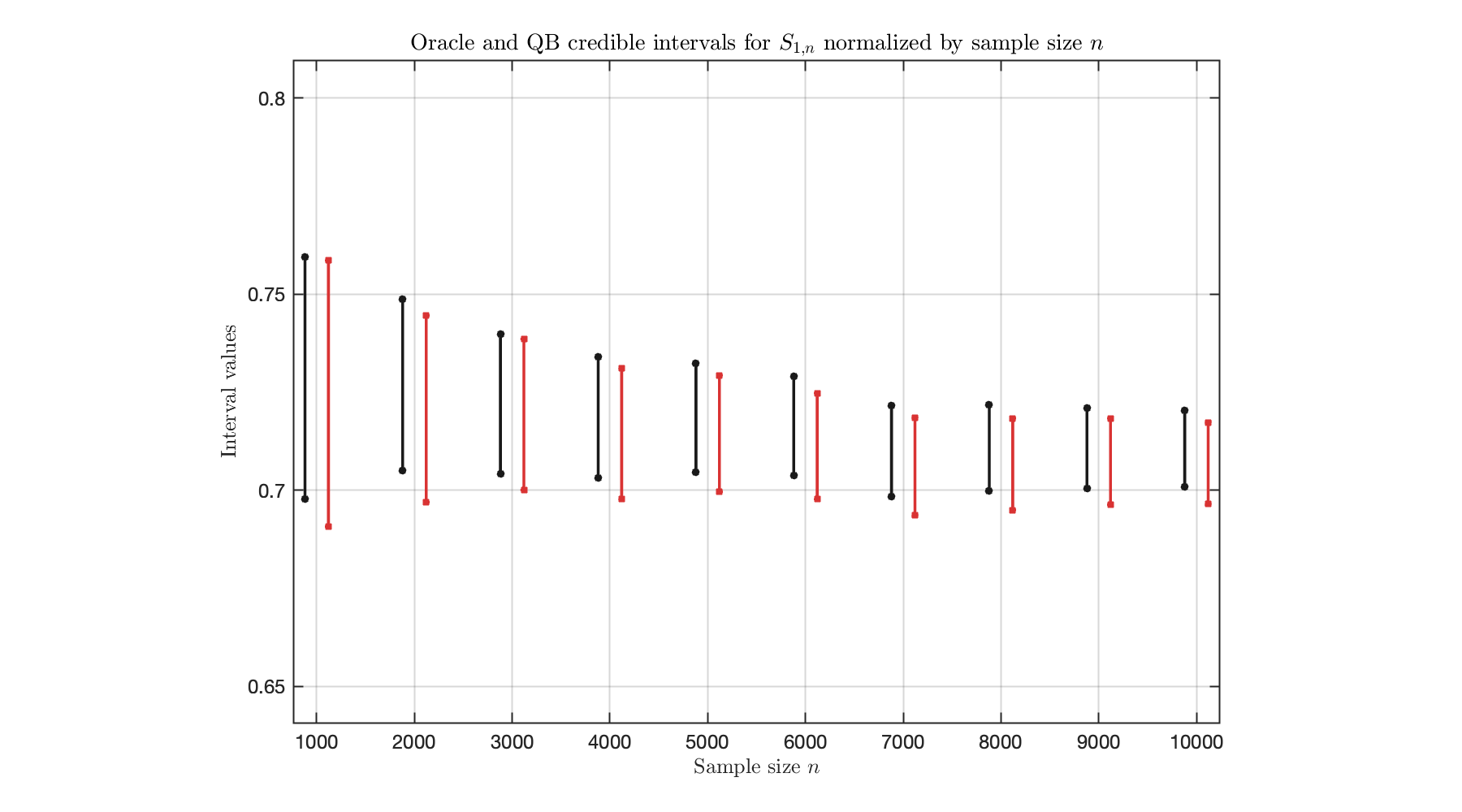}
\end{center}
\caption{\scriptsize{Gaussian prior, $S_{1,n}$: oracle credible intervals (black) and quasi-Bayes credible intervals (red)}}
\label{gaussgauss_fig_interval_s1}
\end{figure}

\begin{figure}[h!]
\begin{center}
\includegraphics[width=1\linewidth,height=0.36\textheight,keepaspectratio]{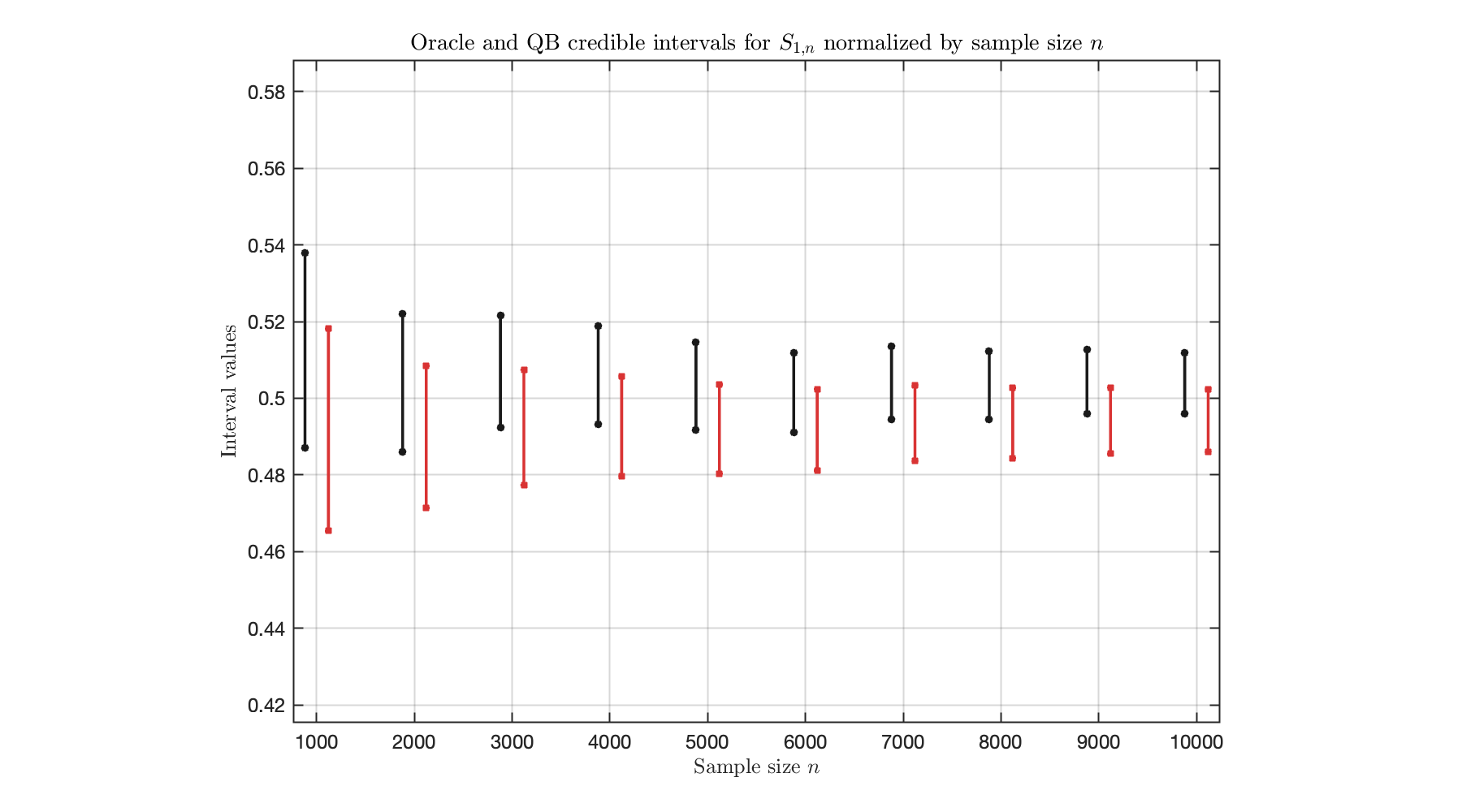}
\end{center}
\caption{\scriptsize{Gaussian prior, $S_{3,n}$: oracle credible intervals (black) and quasi-Bayes credible intervals (red)}}
\label{gaussgauss_fig_interval_s3}
\end{figure}

\section{Multidimensional extension}\label{app_mult}

The proposed quasi-Bayes EB methodology extends naturally, with only minor modifications, to multidimensional settings under a coordinate-wise independence assumption. Let $\Xb_i=(X_{i,1},\dots,X_{i,d})$ be a $d$-dimensional random vector with independent coordinates, $i=1,\ldots,n$, where each $X_{i,j}$ has density function $k(\cdot\mid \theta_{i,j})$. In particular, writing $\thetab_i=(\theta_{i,1},\dots,\theta_{i,d})\in\Theta^d$,
\begin{displaymath}
k(\xb_i\mid \thetab_i)=\prod_{j=1}^k k(x_{i,j}\mid \theta_{i,j}),
\end{displaymath}
and the corresponding marginal density function of the $\Xb_i$'s is obtained by means of mixing over a distribution $G$ on $\Theta^d$. Specifically, if $\thetab_{i}=(\theta_{i,1},\dots,\theta_{i,d})$ has the probability distribution $G$ on ${\Theta}^d$, then the marginal density function of $\Xb_i=(X_{i,1},\dots,X_{i,d})$ with respect to $\lambda^k$ is
\begin{equation}\label{eq:marginalmultiple}
f_G(\xb)=\int_{{\Theta}^d}k(\xb\mid\thetab)G(\ddr\thetab).
\end{equation}
Given a measurable vector-valued utility function $\ub:\mathbb X^d\times\Theta^d\to\mathbb R^s$, we consider the estimation of
\begin{displaymath}
\Sb_n=\sum_{i=1}^n \ub(\Xb_i,\thetab_i).
\end{displaymath}
In this multidimensional framework, Newton’s algorithm extends verbatim, yielding a recursive update of the mixing distribution $G_n$ on $\Theta^d$. Specifically, the Newton's algorithm becomes
\begin{equation}\label{eq:newtonmult}
G_{n+1}(\ddr\thetab)=(1-\alpha_{n+1})G_n(\ddr\thetab)+\alpha_{n+1}
\frac{k(\Xb_{n+1}\mid\thetab) G_n(\ddr\thetab)}{f_{G_n}(\Xb_{n+1})},
\end{equation}
where, for every $G$, $f_{G}(\xb)=\int_{{\Theta}^d} k(\xb\mid\thetab)G(\ddr\thetab)$, we denote by $G_0$ a fixed probability mass function on ${\Theta}^d$ and $(\alpha_n)$ is a sequence of positive numbers (namely the learning rate) satisfying $\sum_{n=1}^\infty \alpha_n=+\infty$ and $\sum_{n=1}^\infty \alpha_n^2<+\infty$.  Note that, while the computational cost increases with the dimension $d$, Newton's algorithm \eqref{eq:newtonmult} preserves constant per-observation complexity.

Based on Newton’s algorithm \eqref{eq:newtonmult}, the quasi-Bayes EB estimate $\hat{\Sb}^{\text{\tiny{[Q-B]}}}_{n}$ of $\Sb_n$ is defined as  
\begin{displaymath}
    \hat{\Sb}^{\text{\tiny{[Q-B]}}}_{n}=\E^{(n)}(\Sb_n)=\sum_{j=1}^n \int_{{\Theta}^d} \ub(\Xb_j,\thetab)\,G_n(\ddr\thetab\mid \Xb_j).
\end{displaymath}
All the properties discussed in Section~\ref{sec2} extend to the multidimensional $\hat{\Sb}^{\text{\tiny{[Q-B]}}}_{n}$, and the corresponding proofs follow essentially the same arguments. In particular,  Theorem~\ref{th:central} becomes
\begin{displaymath}
\Pn\left(B_n^{-1/2}\big(\hat{\Sb}^{\text{\tiny{[Q-B]}}}_{n}- \Sb_n\big)\in\cdot\;\middle|\;\Xb_{1:n}\right)
{\longrightarrow}\mathcal N_s(0,I)\quad \P\mbox{-a.s.},
\end{displaymath}
where $\mathcal N_s(0,I)$ denotes the $s$-dimensional standard Gaussian distribution, and  $B_n=\sum_{j=1}^n \Sigma_{n,j}$, with  
\begin{displaymath}
\Sigma_{n,j}=\int_{{\Theta}^d} \ub(\Xb_j,\thetab)\ub(\Xb_j,\thetab)^{T} G_n(\ddr\thetab\mid \Xb_j)
-\ub_n(\Xb_j)\ub_n(\Xb_j)^{T},
\end{displaymath}
and  
\begin{displaymath}
\ub_n(\Xb_j)=\int_{{\Theta}^d}\ub(\Xb_j,\thetab)\,G_n(\ddr\thetab\mid \Xb_j)\mu(d\thetab).
\end{displaymath}
Very similarly, all the properties established in Section~\ref{sec3} extend to the multidimensional $\hat{\Sb}^{\text{\tiny{[Q-B]}}}_{n}$.


\section{Nonparametric EB}\label{app_neb}

Consider the Poisson mixture model \eqref{model_distr_poi} with $G$ completely unknown, namely without any parametric assumption on $G$, and consider the problem of estimating $S_{1,n}$ and $S_{3,n}$. We implement a nonparametric EB approach by replacing $G$ with its nonparametric maximum likelihood estimate, i.e.
\begin{equation}\label{ml}
\hat{G}_{n}=\argmax_{G}\frac{1}{n}\sum_{i=1}^{n}\log p_{G}(y_{i}),
\end{equation}
where 
\begin{displaymath}
p_{G}(y_{i})=\P(Y_{i}=y_{i})=\int_{\Theta}\frac{\theta^{y_{i}}\text{e}^{-\theta}}{y_{i}!}G(\ddr\theta)
\end{displaymath}
and the maximization \eqref{ml} is taken over all distributions on $\Theta$ \citep{Lin(95)}. In particular, $\hat{G}_{n}$ is computed via the vertex direction algorithm specialized to the Poisson setting \citep[Section 5]{Jan(24)}. Hence, the corresponding nonparametric EB estimates of $S_{1,n}$ and $S_{3,n}$ are
\begin{equation}\label{nml_est_s1}
\hat{S}^{\text{\tiny{[N-ML]}}}_{1,n}=\sum_{i=1}^{n}\E_{\hat{G}_{n}}[\theta_{i}\,|\,X_{i}]I(X_{i}\leq\kappa).
\end{equation}
and
\begin{equation}\label{nml_est_s3}
\hat{S}^{\text{\tiny{[N-ML]}}}_{1,n}=\sum_{i=1}^{n}\E_{\hat{G}_{n}}[I(X_{i}>\theta_{i})\,|\,X_{i}],
\end{equation}
respectively. As an alternative to nonparametric maximum likelihood one may also consider minimum distance estimates, as proposed in \citep{Jan(24)} for the Poisson compound decision problem.

We apply the nonparametric EB estimates \eqref{nml_est_s1} and \eqref{nml_est_s3} to the synthetic data of Section~\ref{sec4}. The vertex direction algorithm is based on an initial discretization of $G$ over a finite grid over $\Theta$. Following \citep[Section 5]{Jan(24)}, we consider a uniform grid of $d=1,000$ points over the interval $(0,\max{X_{1},\ldots,X_{n}})$, with a uniform initial probability assignment. Figure~\ref{fig_weib_s1_new} reports estimates $\hat{S}^{\text{\tiny{[O]}}}_{1,n}$, $\hat{S}^{\text{\tiny{[ML]}}}_{1,n}$, $\hat{S}^{\text{\tiny{[B]}}}_{1,n}$, $\hat{S}^{\text{\tiny[``u,v"]}}_{1,n}$, $\hat{S}^{\text{\tiny{[N-ML]}}}_{1,n}$  and $\hat{S}^{\text{\tiny{[Q-B]}}}_{1,n}$, normalized by the sample size $n$, as well as the mean absolute deviation (MAD) for the estimates of $S_{1,n}$. See also Table~\ref{tab_weib_s1_supp_new}. Table~\ref{tab_weib_s1ml_supp} reports the CPU times of $\hat{S}^{\text{\tiny{[N-ML]}}}_{1,n}$ and $\hat{S}^{\text{\tiny{[Q-B]}}}_{1,n}$ as the sample size $n$ varies. Figure~\ref{fig_weib_s3_new} reports estimates $\hat{S}^{\text{\tiny{[O]}}}_{3,n}$, $\hat{S}^{\text{\tiny{[ML]}}}_{3,n}$, $\hat{S}^{\text{\tiny{[B]}}}_{3,n}$, $\hat{S}^{\text{\tiny{[N-ML]}}}_{3,n}$  and $\hat{S}^{\text{\tiny{[Q-B]}}}_{3,n}$, normalized by the sample size $n$, as well as the mean absolute deviation (MAD) for the estimates of $S_{3,n}$. See also Table~\ref{tab_weib_s3_supp_new}.

\begin{figure}[h!]
\begin{center}
\includegraphics[width=1\linewidth,height=0.36\textheight,keepaspectratio]{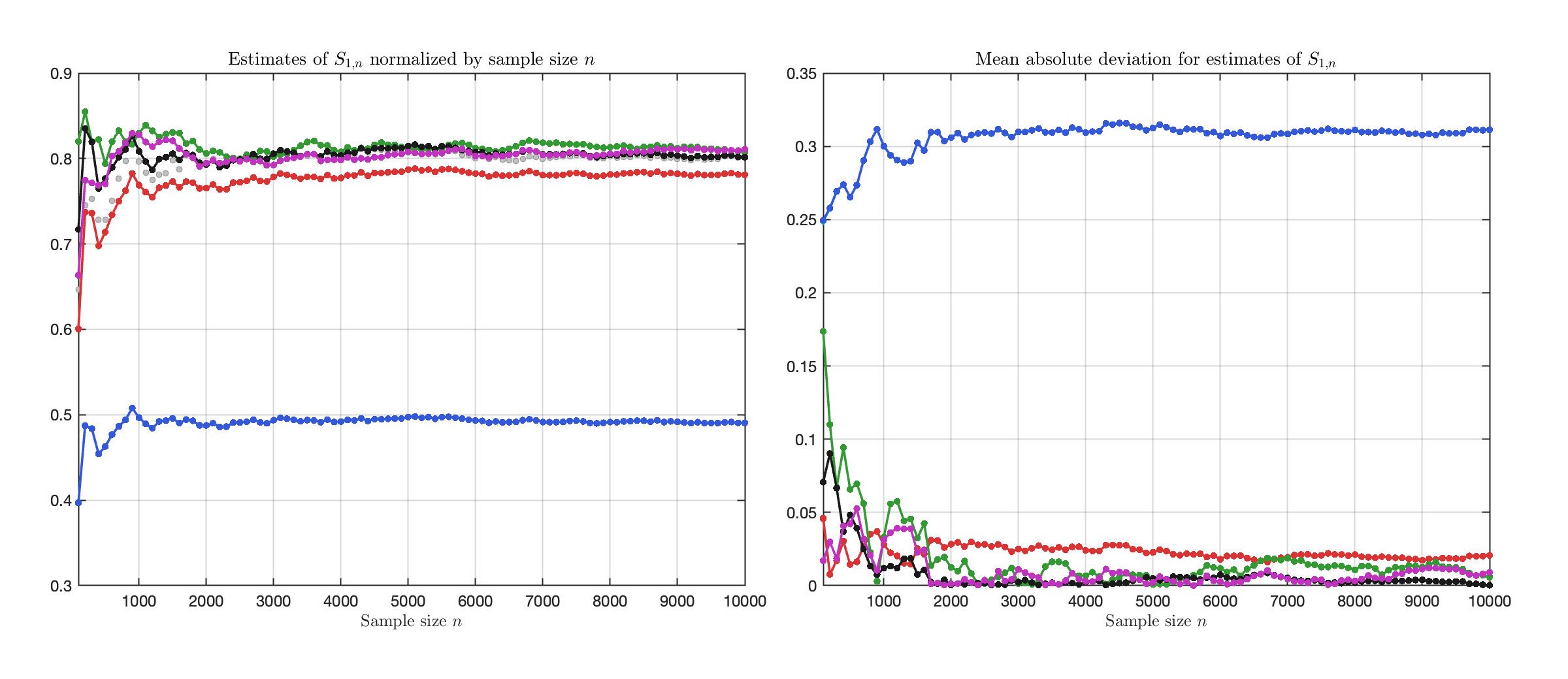}
\end{center}
\caption{\scriptsize{Weibull prior, $S_{1,n}$. Left panel: true values $n^{-1}S_{1,n}$ (Grey o-) and estimates $n^{-1}\hat{S}^{\text{\tiny{[O]}}}_{1,n}$ (Black .-), $n^{-1}\hat{S}^{\text{\tiny{[ML]}}}_{1,n}$ (Blue .-), $n^{-1}\hat{S}^{\text{\tiny{[B]}}}_{1,n}$ (Cyan .-), $\hat{S}^{\text{\tiny{[N-ML]}}}_{1,n}$ (Magenta .-) and $n^{-1}\hat{1}^{\text{\tiny{[Q-B]}}}_{1,n}$ (Red .-). Right panel: MAD of $\hat{S}^{\text{\tiny{[O]}}}_{1,n}$ (Black .-), $\hat{S}^{\text{\tiny{[ML]}}}_{1,n}$ (Blue .-), $\hat{S}^{\text{\tiny{[B]}}}_{1,n}$ (Cyan .-), $\hat{S}^{\text{\tiny{[N-ML]}}}_{1,n}$ (Magenta .-) and $\hat{S}^{\text{\tiny{[Q-B]}}}_{1,n}$ (Red .-)}}
\label{fig_weib_s1_new}
\end{figure}

\begin{table}[ht]
\centering
\caption{Weibull prior, $S_{1,n}$: MAD as $n$ varies}
{
\setlength{\tabcolsep}{0pt}
\begin{tabular}{@{}l@{\hspace{1.2cm}}*{6}{>{\centering\arraybackslash}p{2.15cm}}@{}}
\hline
\hline
& $\hat{S}^{\text{\tiny{[O]}}}_{1,n}$ & $\hat{S}^{\text{\tiny{[ML]}}}_{1,n}$ & $\hat{S}^{\text{\tiny{[B]}}}_{1,n}$ & $\hat{S}^{\text{\tiny{[``u,v'']}}}_{1,n}$ & $\hat{S}^{\text{\tiny{[N-ML]}}}_{1,n}$ & $\hat{S}^{\text{\tiny{[Q-B]}}}_{1,n}$ \\[0.1cm]
\hline
\multicolumn{7}{@{}l}{\underline{$n=1{,}000$}} \\[0.05cm]
MAD & 0.0118 & 0.3002 & 0.3002 & 0.0331 & 0.0312 & \textbf{0.0279} \\[0.2cm]

\multicolumn{7}{@{}l}{\underline{$n=2{,}000$}} \\[0.05cm]
MAD & 0.0004 & 0.3060 & 0.3061 & 0.0124 & 0.0012 & \textbf{0.0282} \\[0.2cm]

\multicolumn{7}{@{}l}{\underline{$n=3{,}000$}} \\[0.05cm]
MAD & 0.0022 & 0.3099 & 0.3099 & 0.0012 & 0.0109 & \textbf{0.0250} \\[0.2cm]

\multicolumn{7}{@{}l}{\underline{$n=4{,}000$}} \\[0.05cm]
MAD & 0.0023 & 0.3094 & 0.3095 & 0.0066 & 0.0029 & \textbf{0.0240} \\[0.2cm]

\multicolumn{7}{@{}l}{\underline{$n=5{,}000$}} \\[0.05cm]
MAD & 0.0048 & 0.3126 & 0.3126 & 0.0007 & 0.0022 & \textbf{0.0227} \\[0.2cm]

\multicolumn{7}{@{}l}{\underline{$n=6{,}000$}} \\[0.05cm]
MAD & 0.0074 & 0.3071 & 0.3071 & 0.0118 & 0.0026 & \textbf{0.0179} \\[0.2cm]

\multicolumn{7}{@{}l}{\underline{$n=7{,}000$}} \\[0.05cm]
MAD & 0.0051 & 0.3085 & 0.3085 & 0.0188 & 0.0045 & \textbf{0.0193} \\[0.2cm]

\multicolumn{7}{@{}l}{\underline{$n=8{,}000$}} \\[0.05cm]
MAD & 0.0019 & 0.3112 & 0.3112 & 0.0107 & 0.0031 & \textbf{0.0211} \\[0.2cm]

\multicolumn{7}{@{}l}{\underline{$n=9{,}000$}} \\[0.05cm]
MAD & 0.0038 & 0.3079 & 0.3079 & 0.0124 & 0.0113 & \textbf{0.0173} \\[0.2cm]

\multicolumn{7}{@{}l}{\underline{$n=10{,}000$}} \\[0.05cm]
MAD & 0.0002 & 0.3114 & 0.3114 & 0.0058 & 0.0090 & \textbf{0.0206} \\[0.1cm]
\hline
\hline
\end{tabular}
}
\label{tab_weib_s1_supp_new}
\end{table}

\begin{table}[ht]
\centering
\caption{Weibull prior: CPU time (in seconds) of $\hat{S}^{\text{\tiny{[N-ML]}}}_{1,n}$ and $\hat{S}^{\text{\tiny{[Q-B]}}}_{1,n}$ for processing $n$ data}
{
\setlength{\tabcolsep}{0pt}
\begin{tabular}{@{}l@{\hspace{1.2cm}}*{2}{>{\centering\arraybackslash}p{2.15cm}}@{}}
\hline
\hline
& $\hat{S}^{\text{\tiny{[N-ML]}}}_{1,n}$ & $\hat{S}^{\text{\tiny{[Q-B]}}}_{1,n}$ \\[0.1cm]
\hline
\multicolumn{3}{@{}l}{\underline{$n=1{,}000$}} \\[0.05cm]
CPU time & 2.2478 & 1.0238 \\[0.2cm]

\multicolumn{3}{@{}l}{\underline{$n=2{,}000$}} \\[0.05cm]
CPU time & 4.2345 & 2.1108 \\[0.2cm]

\multicolumn{3}{@{}l}{\underline{$n=3{,}000$}} \\[0.05cm]
CPU time & 5.9023 & 3.3306 \\[0.2cm]

\multicolumn{3}{@{}l}{\underline{$n=4{,}000$}} \\[0.05cm]
CPU time & 10.879 & 2.1589 \\[0.2cm]

\multicolumn{3}{@{}l}{\underline{$n=5{,}000$}} \\[0.05cm]
CPU time & 8.0596 & 4.3178 \\[0.2cm]

\multicolumn{3}{@{}l}{\underline{$n=6{,}000$}} \\[0.05cm]
CPU time & 13.171 & 3.3566 \\[0.2cm]

\multicolumn{3}{@{}l}{\underline{$n=7{,}000$}} \\[0.05cm]
CPU time & 13.087 & 6.7132 \\[0.2cm]

\multicolumn{3}{@{}l}{\underline{$n=8{,}000$}} \\[0.05cm]
CPU time & 13.246 & 8.5050 \\[0.2cm]

\multicolumn{3}{@{}l}{\underline{$n=9{,}000$}} \\[0.05cm]
CPU time & 15.983 & 9.6956 \\[0.2cm]

\multicolumn{3}{@{}l}{\underline{$n=10{,}000$}} \\[0.05cm]
CPU time & 17.756 & 10.9712 \\[0.1cm]
\hline
\hline
\end{tabular}
}
\label{tab_weib_s1ml_supp_new}
\end{table}

\begin{figure}[h!]
\begin{center}
\includegraphics[width=1\linewidth,height=0.36\textheight,keepaspectratio]{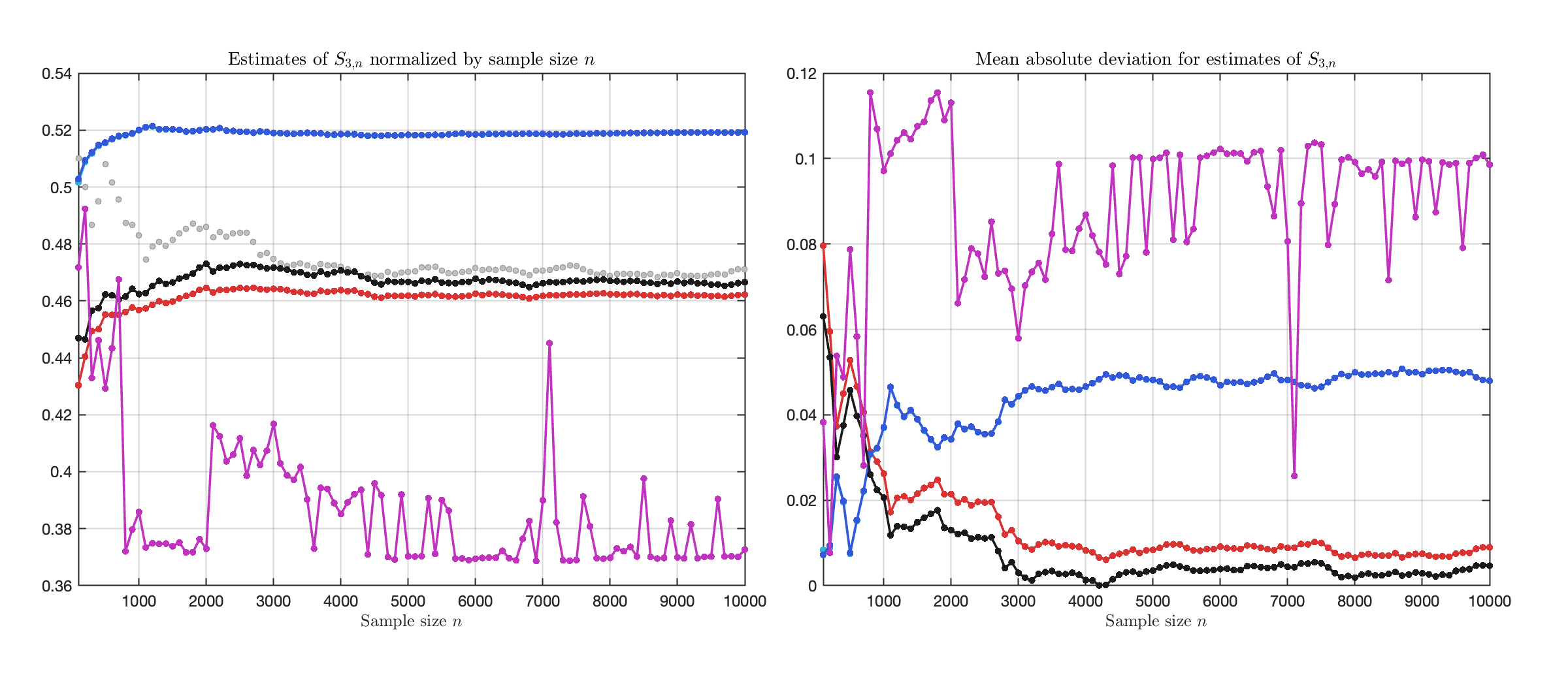}
\end{center}
\caption{\scriptsize{Weibull prior, $S_{3,n}$. Left panel: true values $n^{-1}S_{3,n}$ (Grey o-) and estimates $n^{-1}\hat{S}^{\text{\tiny{[O]}}}_{3,n}$ (Black .-), $n^{-1}\hat{S}^{\text{\tiny{[ML]}}}_{3,n}$ (Blue .-), $n^{-1}\hat{S}^{\text{\tiny{[B]}}}_{3,n}$ (Cyan .-), $\hat{S}^{\text{\tiny{[N-ML]}}}_{3,n}$ (Magenta .-) and $n^{-1}\hat{S}^{\text{\tiny{[Q-B]}}}_{3,n}$ (Red .-). Right panel: MAD of $\hat{S}^{\text{\tiny{[O]}}}_{3,n}$ (Black .-), $\hat{S}^{\text{\tiny{[ML]}}}_{3,n}$ (Blue .-), $\hat{S}^{\text{\tiny{[B]}}}_{3,n}$ (Cyan .-), $\hat{S}^{\text{\tiny{[N-ML]}}}_{3,n}$ (Magenta .-) and $\hat{S}^{\text{\tiny{[Q-B]}}}_{3,n}$ (Red .-)}}
\label{fig_weib_s3_new}
\end{figure}

\begin{table}[ht]
\centering
\caption{Weibull prior, $S_{3,n}$: MAD as $n$ varies}
{
\setlength{\tabcolsep}{0pt}
\begin{tabular}{@{}l@{\hspace{1.2cm}}*{5}{>{\centering\arraybackslash}p{2.15cm}}@{}}
\hline
\hline
& $\hat{S}^{\text{\tiny{[O]}}}_{3,n}$ & $\hat{S}^{\text{\tiny{[ML]}}}_{3,n}$ & $\hat{S}^{\text{\tiny{[B]}}}_{3,n}$ & $\hat{S}^{\text{\tiny{[N-ML]}}}_{3,n}$ & $\hat{S}^{\text{\tiny{[Q-B]}}}_{3,n}$ \\[0.1cm]
\hline
\multicolumn{6}{@{}l}{\underline{$n=1{,}000$}} \\[0.05cm]
MAD & 0.0206 & 0.0369 & 0.0371 & 0.0972 & \textbf{0.0262} \\[0.2cm]

\multicolumn{6}{@{}l}{\underline{$n=2{,}000$}} \\[0.05cm]
MAD & 0.0130 & 0.0342 & 0.0343 & 0.1131 & \textbf{0.0214} \\[0.2cm]

\multicolumn{6}{@{}l}{\underline{$n=3{,}000$}} \\[0.05cm]
MAD & 0.0030 & 0.0443 & 0.0444 & 0.0579 & \textbf{0.0104} \\[0.2cm]

\multicolumn{6}{@{}l}{\underline{$n=4{,}000$}} \\[0.05cm]
MAD & 0.0013 & 0.0466 & 0.0466 & 0.0869 & \textbf{0.0082} \\[0.2cm]

\multicolumn{6}{@{}l}{\underline{$n=5{,}000$}} \\[0.05cm]
MAD & 0.0035 & 0.0482 & 0.0482 & 0.0999 & \textbf{0.0084} \\[0.2cm]

\multicolumn{6}{@{}l}{\underline{$n=6{,}000$}} \\[0.05cm]
MAD & 0.0039 & 0.0469 & 0.0469 & 0.1023 & \textbf{0.0091} \\[0.2cm]

\multicolumn{6}{@{}l}{\underline{$n=7{,}000$}} \\[0.05cm]
MAD & 0.0044 & 0.0481 & 0.0481 & 0.0806 & \textbf{0.0088} \\[0.2cm]

\multicolumn{6}{@{}l}{\underline{$n=8{,}000$}} \\[0.05cm]
MAD & 0.0019 & 0.0499 & 0.0500 & 0.0992 & \textbf{0.0065} \\[0.2cm]

\multicolumn{6}{@{}l}{\underline{$n=9{,}000$}} \\[0.05cm]
MAD & 0.0028 & 0.0495 & 0.0495 & 0.0998 & \textbf{0.0074} \\[0.2cm]

\multicolumn{6}{@{}l}{\underline{$n=10{,}000$}} \\[0.05cm]
MAD & 0.0046 & 0.0480 & 0.0480 & 0.0986 & \textbf{0.0090} \\[0.1cm]
\hline
\hline
\end{tabular}
}
\label{tab_weib_s3_supp_new}
\end{table}


\section{Species-sampling problems}\label{app_ssp}

Consider $N\geq1$ random samples (with replacement) from a population of $d$ species. Given observed species' frequencies in the samples, species sampling problems aim to estimate unobserved characteristics of the population, as well as to predict the species composition of $M\geq1$ additional unobservable samples from the same population. Examples include the number $d$ of species in the population, the probability of discovering a new species, and the number of new species in the additional samples. Letting $X_{i}$ and $Y_{i}$ denote the frequencies of the $i$-th species in the initial and additional samples, respectively, for $i=1,\ldots,d$, we consider the model
\begin{align}\label{model_species}
Y_i\mid X_{i},\,\lambda_{i} & \quad\simind\quad\text{Poisson}(\lambda_{i})\qquad i=1,\ldots,d\\[-0.2cm]
\notag X_i\mid\lambda_{i} & \quad\simind\quad\text{Poisson}(\lambda_{i})\\[-0.2cm]
\notag\lambda_{i}\mid G & \quad\simiid\quad G,
\end{align}
where $G$ is the unknown mixing distribution on the set $\mathbb{R}^{+}$ \citep[Section 3]{Zha(05)}. Now, let
\begin{displaymath}
p_{i}=\frac{\lambda_{i}}{\sum_{j=1}^{d}\lambda_{j}},
\end{displaymath}
be the probability of the $i$-th species in the population, for $i=1,\ldots,d$. In particular, recall that conditionally on the sample size $N=\sum_{1\leq i\leq n}X_{i}\sim\text{Poisson}(\sum_{1\leq i\leq d}\lambda_{i})$, the random variables $X_{i}$'s are distributed according to a Multinomial distribution with parameter $(N,p_{1},\ldots,p_{d})$.

We let $\P_{G}$ the probability measure that specifies the joint distribution of the $(Y_{i},X_{i},\lambda_{i})$'s in the model Poisson mixture model \eqref{model_species}. In particular, under $\P_{g}$, the likelihood function is 
\begin{equation}\label{like_species}
L_{\P_{G}}(G;\,n_{1},\ldots,n_{N})=\prod_{k=1}^{N}\left(\frac{\int_{\Lambda}\frac{\text{e}^{-\lambda}}{k!}\lambda^{k}G(\ddr\lambda)}{1-\int_{\Lambda}\text{e}^{-\lambda}G(\ddr\lambda)}\right)^{n_{k}},
\end{equation}
viewed as a function of the unknown mixing distribution $G$, where $n_{k}=\sum_{1\leq i\leq d}I(X_{i}=k)$ denotes the number of species with frequency $k$ in the sample, for $k=1,\ldots,N$ (frequencies of frequencies). That is, $\sum_{1\leq k\leq N}n_{k}$ is the number of species in the sample and $\sum_{1\leq k\leq N}kn_{k}=N$.

Hereafter, we interpret the quantity $d$ as the number of species represented in the population out of a total of $n$ species in a hypothetical superpopulation, assuming $p_{i}=0$ if the $i$-th species is not present in the population. Therefore, under this assumption, for a given
\begin{displaymath}
p^{\ast}<\P_{G}(X>0)=\E_{G}\left[\sum_{i=1}^{d}p_{i}I(X_{i}>0)\right],
\end{displaymath}
we denote by $\P_{p^{\ast},G}$ the probability measure that specifies the joint distribution of the random variables $(Y_{i},X_{i},\lambda_{i})$'s in the Poisson mixture model \eqref{model_species} when $d$ is replaced by $n$, and such that
\begin{equation}\label{model_species2}
\P_{(p^{\ast},G)}(X_{i}=x)=(1-p^{\ast})I(x=0)+p^{\ast}\frac{\P_{G}(X_{i}=x)}{\P_{G}(X>0)}I(x>0).
\end{equation}
The probability measure $\P_{(p^{\ast},g)}$ takes into account the probability $p^{\ast}$ that a species is represented in the population, merging with $\P_{G}$ as $p^{\ast}\rightarrow1$. It is reasonable to take $p^{\ast}$ close to $1$ if $N$ is sufficiently large. Under $\P_{(p^{\ast},G)}$, the likelihood function is from \eqref{like_species}-\eqref{model_species2}, and denoted by $L_{\P_{(p^{\ast},G})}$.  

The Poisson mixture model \eqref{model_species} with $d$ replaced by $n$, and probability measure $\P_{(p^{\ast},G)}$ in \eqref{model_species2}, allows to write the number $d$ of species in the population as a sum $S_{n}$. That is, we write
\begin{displaymath}
D_{n}=\sum_{i=1}^{n}I(p_{i}>0).
\end{displaymath}
Similarly, other species sampling problems can written as sums $S_{n}$, for a suitable choice of the utility $u$. For instance, the probability of discovering a new species can be written as follows:
\begin{displaymath}
M_{n}=\sum_{i=1}^{n}p_{i}I(p_{i}>0,X_{i}=0),
\end{displaymath}
i.e. the missing mass \citep{Goo(53),Efr(76),Mao(02)}. Furthermore, the number of new (hitherto unseen) species in the additional (unobserved) sample is
\begin{displaymath}
U_{n}=\sum_{i=1}^{n}I(p_{i}>0,X_{i}=0,Y_{i}>0)
\end{displaymath}
i.e. the unseen \citep{Goo(56),Efr(76)}. While many other examples of species sampling problems could be considered, their estimation follows similar lines.

Parametric EB and Bayes EB estimates of $D_{n}$, $M_{n}$ and $U_{n}$ can be derived as in Proposition \ref{prop:poisson}, by assuming the unknown mixing distribution $G$ to be parameterized by $\tau\in\mathcal{T}$, for an Euclidean space $\mathcal{T}$, here denoted by $G_{\tau}$. For instance, to estimate $D_{n}$ one has the following
\begin{equation}\label{species_d}
\hat{D}_{n}(G_{\tau})=\E_{(p^{\ast},G_{\tau})}[D_{n}\mid X_{1:n}]=\sum_{i=1}^{n}\P_{(p^{\ast},G_{\tau})}\left(\frac{\lambda_{i}}{\sum_{i=1}^{n}\lambda_{i}}>0\right),
\end{equation}
from which an EB estimate follows by replacing $\tau$ in \eqref{species_d} with the maximum likelihood estimate
\begin{displaymath}
\hat{\tau}_{n}=\text{Argmax}_{\tau\in\mathcal{T}}L_{\P_{(p^{\ast},G_{\tau}})}(G_{\tau};\,n_{1},\ldots,n_{N}).
\end{displaymath}
Similarly, a parametric Bayes EB estimate of $D_{n}$ follows by replacing $\tau$ in \eqref{species_d} with a Bayesian estimate of $\tau$, under suitable prior assumptions for $\tau$. Parametric EB and Bayes EB estimates of the quantities $M_{n}$ and $U_{n}$ are obtained along the very same lines \citep{Zha(05)}.

No ``$u,v$" estimates can be derived for $D_{n}$, $M_{n}$ and $U_{n}$, as the corresponding ``$u,v$"  (characteristic) integral equations do no admit solutions for $v$; see \citet[Section 3.2]{Zha(05)}.



\begin{thebibliography}{9}
\bibitem[Battiston and Cappello(2025)]{Bat(25)}
\textsc{Battiston, M. and Cappello, L.} (2025). New (and old) predictive schemes with a.c.i.d. sequences. \textit{Preprint arXiv:2507.21874}.

\bibitem[Bethlehem et al.(1990)]{Bet(90)}
\textsc{Bethlehem, J.G., Keller, W.J. and Pannekoek, J.} (1990). Disclosure control of microdata. \textit{J. Am. Statist. Assoc.} \textbf{85} 38--45.

\bibitem[Bissiri et al.(2016)]{Bis(16)}
\textsc{Bissiri, P.G., Holmes, C.C., and Walker, S.G.} (2007). A general framework for updating belief distributions. \textit{J. R. Statist. Soc. B} \textbf{78}, 1103--1130.

\bibitem[Brown et al.(2013)]{Bro(13)}
\textsc{Brown, L.D., Greenshtein, E. and Ritov, Y.} (2013). The Poisson compound decision problem revisited. \textit{J. Am. Statist. Assoc.} \textbf{108}, 741--749.

\bibitem[Bunge and Fitzpatrick(1993)]{Bun(93)}
\textsc{Bunge, J. and Fitzpatrick, M.} (1993) Estimating the number of species: a review. \textit{J. Am. Statist. Assoc.} \textbf{88}, 364-373.

\bibitem[Cannella et al.(2026)]{Can(26)}
\textsc{Cannella, N., Teh, A., Han, Y. and Polyanskiy, Y.} (2026) Universal priors: solving empirical Bayes via Bayesian inference and pretraining. \textit{Preprint arXiv:2602.15136}.

\bibitem[Deely and Lindley(1981)]{Dee(81)}
\textsc{Deely, J.J. and Lindley, D.V.} (1981). Bayes empirical Bayes. \textit{J. Am. Statist. Assoc.} \textbf{76}, 833--841.

\bibitem[Efron(2014)]{Efr(14)}
\textsc{Efron, B.} (2014). Two modeling strategies for empirical Bayes estimation. \textit{Statist. Sci.} \textbf{29}, 285--301.

\bibitem[Efron(2019)]{Efr(19)}
\textsc{Efron, B.} (2019). Bayes, oracle Bayes and empirical Bayes. \textit{Statist. Sci.} \textbf{34}, 177--201.

\bibitem[Efron and Hastie(2021)]{Efr(21)}
\textsc{Efron. B. and Hastie, T.} (2021). \textit{Computer age statistical inference: algorithms, evidence, and data science}. Cambridge University Press.

\bibitem[Efron and Thisted(1976)]{Efr(76)}  
\textsc{Efron, B. and Thisted, R.} (1976). Estimating the number of unseen species: How many words did Shakespeare know? \textit{Biometrika} \textbf{63}, 435--447.

\bibitem[Favaro and Fortini(2024)]{Fav(24)}
\textsc{Favaro, S. and Fortini, S.} (2024). Quasi-Bayes empirical Bayes: a sequential approach to the Poisson compound decision problem. \textit{Preprint arXiv:2411.07651}.

\bibitem[Favaro and Teh(2013)]{Fav(13)}
\textsc{Favaro, S. and Teh, Y.W.} (2013). MCMC for normalized random measure mixture models. \textit{Statist. Sci.} \textbf{28}, 335--359.

\bibitem[Ferguson(1973)]{Fer(73)}
\textsc{Ferguson, T.S.} (1973). A Bayesian analysis of some nonparametric problems. \textit{Ann. Statist.} \textbf{1}, 209--230.

\bibitem[Fong et al.(2023)]{Fon(23)}
\textsc{Fong, E., Holmes, C. and Walker, S. G.} (2023). Martingale posterior distributions. \textit{J. R. Statist. Soc. B} \textbf{85}, 1357--1391.

\bibitem[Fortini and Petrone(2020)]{For(20)}
\textsc{Fortini, S. and Petrone, S.} (2020). Quasi-Bayesian properties of a procedure for sequential learning in mixture models. \textit{J. R. Statist. Soc. B} \textbf{82}, 1087--1114.

\bibitem[Fortini and Petrone(2025)]{For(25)}
\textsc{Fortini, S. and Petrone, S.} (2025). Exchangeability, Prediction and Predictive Modeling in Bayesian Statistics. \textit{Statist. Sci.} \textbf{40}, 40--67.

\bibitem[Good(1953)]{Goo(53)}
\textsc{Good, I.J.}(1953). The population frequencies of species and the estimation of population parameters. \textit{Biometrika} \textbf{40}, 237-264.

\bibitem[Good and Toulmin(1956)]{Goo(56)}  
\textsc{Good, I.J. and Toulmin, G.H.} (1956). The number of new species, and the increase in population coverage, when a sample is increased. \textit{Biometrika} \textbf{43}, 45--63.

\bibitem[Hahn et al.(2018)]{Han(18)}  
\textsc{Hahn, P.R., Martin, R. and Walker, S.G.} (2018). On recursive Bayesian predictive distributions. \textit{J. Am. Statist. Assoc.} \textbf{113}, 1085--1093.

\bibitem[Ignatiadis and Kankanala(2026)]{Ign(26)}
\textsc{Ignatiadis, N. and Kankanala, S.} (2026). Compound decisions and empirical Bayes via Bayesian nonparametrics. \textit{Preprint  	arXiv:2602.20115}.

\bibitem[Jana et al.(2025)]{Jan(24)}
\textsc{Jana, S., Polyanskiy, Y. and Wu, Y.} (2025). Optimal empirical Bayes estimation for the Poisson model via minimum-distance methods. \textit{Inf. Inference} \textbf{14}, 1--42.

\bibitem[Knoblauch et al.(2022)]{Kno(22)}
\textsc{Knoblauch, J., Jewson, J., and Damoulas, T.} (2022). An optimization-centric view on Bayes’ rule: reviewing and generalizing variational inference. \textit{J. Mach. Learn. Res.} \textbf{23}, 1--109.

\bibitem[Lindsay(1995)]{Lin(95)}
\textsc{Lindsay, B.G.} (1995). \textit{Mixture models: theory, geometry and applications}. NSF-CBMS Regional Conference Series in Probability and Statistics.

\bibitem[Lo(1984)]{Lo(84)}   
\textsc{Lo, A.Y.} (1984). On a class of Bayesian nonparametric estimates. I. Density estimates \textit{Ann. Statist.} \textbf{12}, 351--357.

\bibitem[Mao and Lindsay(2002)]{Mao(02)}   
\textsc{Mao, C.X. and Lindsay, B.G.} (2004). A Poisson model for the coverage problem with a genomic application. \textit{Biometrika} \textbf{89}, 669--682.

\bibitem[Martin(2012)]{Mar(12)}
\textsc{Martin, R.} (2012). Convergence rate for predictive recursion estimation of finite mixtures. \textit{Stat. Probab. Lett.} \textbf{82}, 378--384.

\bibitem[Martin and Ghosh(2008)]{Mar(08)}
\textsc{Martin, R. and Ghosh, J.K.} (2008). Stochastic approximation and Newton’s estimate of a mixing distribution. \textit{Statist. Sci.} \textbf{23}, 365--382.

\bibitem[Martin and Tokdar(2009)]{MarTok(09)}
\textsc{Martin, R. and  Tokdar, S.T.} (2009). Asymptotic properties of predictive recursion: robustness and rate of convergence. \textit{Electron. J. Stat.} \textbf{3}, 1455--1472.

\bibitem[Newton et al.(1998)]{New(98)}
\textsc{Newton, M.A., Quintana, F.A. and Zhang, Y.} (1998). Nonparametric Bayes methods using predictive updating. In \textit{Practical Nonparametric and Semiparametric Bayesian Statistics}, Springer.

\bibitem[Robbins(1951)]{Rob(51)}
\textsc{Robbins, H.} (1951). Asymptotically subminimax solutions of compound decision problems. In \textit{Proceedings of the Second Berkeley Symposium} \textbf{2}, 131--148.

\bibitem[Robbins(1956)]{Rob(56)}
\textsc{Robbins, H.} (1956). An empirical Bayes approach to statistics. In \textit{Proc. Third Berkeley Symp. Math. Statist. Probab.} \textbf{3}, 157--164.

\bibitem[Robbins(1977)]{Rob(77)}
\textsc{Robbins, H.} (1977). Prediction and estimation for the compound Poisson distribution. \textit{Proc. Natl. Acad. Sci. U.S.A.} \textbf{74 }, 2670--2671.

\bibitem[Robbins(1988)]{Rob(88)}
\textsc{Robbins, H.} (1988). The $u,\,v$ method of estimation. In \textit{Statistical Decision Theory and Related Topics IV}. Springer, New York.

\bibitem[Robbins and Zhang(1988)]{Robz(88)}
\textsc{Robbins, H. and Zhang, C.-H.} (1988). Estimating a treatment effect under biased sampling. \textit{Proc. Natl. Acad. Sci. U.S.A.} \textbf{85}, 3670--3672.

\bibitem[Robbins and Zhang(1989)]{Rob(89)}
\textsc{Robbins, H. and Zhang, C.-H.} (1989). Estimating the superiority of a drug to a placebo when all and only those patients at risk are treated with the drug. \textit{Proc. Natl. Acad. Sci. U.S.A.} \textbf{86}, 3003--3005.

\bibitem[Robbins and Zhang(1991)]{Rob(91)}
\textsc{Robbins, H. and Zhang, C.-H.} (1991). Estimating a multiplicative treatment effect under biased allocation. \textit{Biometrika} \textbf{78}, 349--354.

\bibitem[Robbins and Zhang(2000)]{Rob(00)}
\textsc{Robbins, H. and Zhang, C.-H.} (2000). Efficiency of the $u,\,v$ method of estimation. \textit{Proc. Natl. Acad. Sci. U.S.A.} \textbf{97}, 12976--12979.

\bibitem[Rocher et al.(2019)]{Roc(19)}
\textsc{Rocher, L., Hendrickx, J. M., and de Montjoye, Y. A.} (2019). Estimating the success of re-identifications in incomplete datasets using generative models. \textit{Nat. Commun.} \textbf{10}, 3069.

\bibitem[Shen and Wu(2024)]{She(24)}
\textsc{Shen, Y. and Wu, Y.} (2024). Empirical Bayes estimation: When does $g$-modeling beat $g$-modeling in theory (and in practice)? \textit{Preprint arXiv:2211.12692}.

\bibitem[Skinner and Elliot(2002)]{Ski(02)}
\textsc{Skinner, and Elliot, M.J.} (2002). A measure of disclosure risk for microdata. \textit{J. R. Statist. Soc. B} \textbf{64}, 855--867.

\bibitem[Smith and Makov(1978)]{Smi(78)}
\textsc{Smith, A.F.M. and Makov, U.E.} (1978). A quasi-Bayes sequential procedure for mixtures. \textit{J. R. Statist. Soc. B} \textbf{40}, 106--112.

\bibitem[Tebaldi and West(1998)]{Teb(98)}
\textsc{Tebaldi, C. and West, M.} (1998). Bayesian inference on network traffic using link count data. \textit{J. Amer. Statist. Assoc.} \textbf{93}, 557--573.

\bibitem[Vardi(1996)]{Var(96)}
\textsc{Vardi, Y.} (1996). Network tomography: Estimating source-destination traffic intensities from link data. \textit{J. Amer. Statist. Assoc.} \textbf{91}, 365--377.

\bibitem[Zhang(2005)]{Zha(05)}
\textsc{Zhang, C.-H.} (2005). Estimation of sums of random variables: examples and information bounds. \textit{Ann. Statist.} \textbf{33}, 2022--2041.


\end{thebibliography}

\begin{thebibliography}{9}

\bibitem[Chen(2002)]{Chen(02)}
\textsc{Chen, H.F} (2002). \textit{Stochastic approximation and its applications.} Springer New York, NY.

\bibitem[Efron and Hastie(2021)]{Efr(21)}
\textsc{Efron. B. and Hastie, T.} (2021). \textit{Computer age statistical inference: algorithms, evidence, and data science}. Cambridge University Press.

\bibitem[Efron and Thisted(1976)]{Efr(76)}  
\textsc{Efron, B. and Thisted, R.} (1976). Estimating the number of unseen species: How many words did Shakespeare know? \textit{Biometrika} \textbf{63}, 435--447.


\bibitem[Good(1953)]{Goo(53)}
\textsc{Good, I.J.}(1953). The population frequencies of species and the estimation of population parameters. \textit{Biometrika} \textbf{40}, 237-264.

\bibitem[Good and Toulmin(1956)]{Goo(56)}  
\textsc{Good, I.J. and Toulmin, G.H.} (1956). The number of new species, and the increase in population coverage, when a sample is increased. \textit{Biometrika} \textbf{43}, 45--63.

\bibitem[Jana et al.(2025)]{Jan(24)}
\textsc{Jana, S., Polyanskiy, Y. and Wu, Y.} (2025). Optimal empirical Bayes estimation for the Poisson model via minimum-distance methods. \textit{Inf. Inference} \textbf{14}, 1--42.

\bibitem[Lindsay(1995)]{Lin(95)}
\textsc{Lindsay, B.G.} (1995). \textit{Mixture models: theory, geometry and applications}. NSF-CBMS Regional Conference Series in Probability and Statistics.

\bibitem[Mao and Lindsay(2002)]{Mao(02)}   
\textsc{Mao, C.X. and Lindsay, B.G.} (2004). A Poisson model for the coverage problem with a genomic application. \textit{Biometrika} \textbf{89}, 669--682.

\bibitem[Martin and Tokdar(2009)]{MarTok(09)}
\textsc{Martin, R. and  Tokdar, S.T.} (2009). Asymptotic properties of predictive recursion: robustness and rate of convergence. \textit{Electron. J. Stat.} \textbf{3}, 1455--1472.

\bibitem[Zhang(2005)]{Zha(05)}
\textsc{Zhang, C.-H.} (2005). Estimation of sums of random variables: examples and information bounds. \textit{Ann. Statist.} \textbf{33}, 2022--2041.


\end{thebibliography}
\end{document}